\documentclass[11pt,letterpaper]{article}
\usepackage{standalone}

\usepackage[letterpaper,margin=1in]{geometry}
\usepackage{amsthm}
\usepackage{amsmath}
\usepackage{amssymb}
\usepackage{thmtools}
\usepackage{thm-restate}
\usepackage{times}
\usepackage{mathtools}

\usepackage[pdfpagelabels,bookmarks=false,hyperfootnotes=false]{hyperref}
\hypersetup{colorlinks, linkcolor=darkblue, citecolor=darkgreen, urlcolor=darkblue}

\DeclareFontFamily{U}{BOONDOX-calo}{\skewchar\font=45 }
\DeclareFontShape{U}{BOONDOX-calo}{m}{n}{
  <-> s*[1.05] BOONDOX-r-calo}{}
\DeclareFontShape{U}{BOONDOX-calo}{b}{n}{
  <-> s*[1.05] BOONDOX-b-calo}{}
\DeclareMathAlphabet{\link}{U}{BOONDOX-calo}{m}{n}
\DeclareMathAlphabet{\blink}{U}{BOONDOX-calo}{b}{n}

\usepackage[utf8]{inputenc}

\usepackage[
backend=biber,
style=alphabetic,
citestyle=alphabetic,
maxalphanames=4,
maxcitenames=99,
mincitenames=98,
maxbibnames=99,
giveninits=true,
]{biblatex}



\usepackage{xpatch}

\makeatletter
\patchcmd\blx@bblinput{\blx@blxinit}
                      {\blx@blxinit
                      }{}{\fail}
\makeatother

\usepackage[capitalize,nameinlink]{cleveref}

\usepackage{lmodern}

\usepackage{bbm}
\usepackage{thm-restate}

\newtheoremstyle{light} {\topsep}                    {\topsep}                    {\itshape}                   {}                           {\scshape}                   {.}                          {.5em}                       {}  

\newtheorem{theorem}{Theorem}[section]
\newtheorem{lemma}[theorem]{Lemma}

\newtheorem{definition}[theorem]{Definition}

\newtheorem{corollary}[theorem]{Corollary}

\theoremstyle{light}
\newtheorem{claiminproof}[theorem]{Claim}
\makeatletter
\if@cref@capitalise
\crefname{claiminproof}{Claim}{Claims}
\else
\crefname{claiminproof}{claim}{claims}
\fi

\if@cref@capitalise
\crefname{algocf}{Algorithm}{Algorithms}
\else
\crefname{claiminproof}{algorithm}{algorithm}
\fi
\makeatother

\usepackage{url}
\urlstyle{same}

\usepackage[table]{xcolor}
\usepackage{array}

\usepackage{setspace}

\usepackage{wrapfig}

\makeatletter
\newcommand{\labeltarget}[1]{\Hy@raisedlink{\hypertarget{#1}{}}}
\makeatother

\usepackage{upgreek}

\usepackage{complexity}

\usepackage[inline]{enumitem}
\usepackage{moreenum}

\usepackage[super]{nth}

\usepackage{bm}

\usepackage{xspace}

\usepackage{ifthen}

\usepackage[immediate]{silence}
\WarningFilter[tmp]{latex}{Command}
\usepackage{sectsty}
\DeactivateWarningFilters[tmp]
\allsectionsfont{\boldmath}

\setlist[enumerate]{nosep,topsep=0.1em}
\setlist[enumerate,1]{label=(\roman*), leftmargin=2.2em}
\setlist[itemize]{nosep,topsep=0.3em}

\usepackage[textsize=footnotesize, color=blue!30!white]{todonotes}
\setlength{\marginparwidth}{2cm}

\usepackage{xfrac}

\usepackage{tikz}
\usetikzlibrary{calc}
\usetikzlibrary{math}
\usetikzlibrary{shapes.geometric}
\usetikzlibrary{arrows}
\usetikzlibrary{decorations.pathreplacing, decorations.pathmorphing}

\usepackage{tcolorbox}
\tcbuselibrary{skins}

\usepackage{float}
\usepackage{graphicx}
\graphicspath{{graphics/}}
\makeatletter
\newcommand\appendtographicspath[1]{\g@addto@macro\Ginput@path{#1}}
\makeatother

\usepackage[margin=10pt,font=small,labelfont=bf,skip=-5pt]{caption}
\usepackage{subcaption}

\usepackage[vlined,ruled,algo2e]{algorithm2e}
\SetAlgoSkip{bigskip}
\SetAlgoInsideSkip{smallskip}

\usepackage{mdframed}

\let\truehypersetup\hypersetup
\renewcommand\hypersetup[1]{}
\usepackage{bigfoot}
\let\hypersetup\truehypersetup
\interfootnotelinepenalty=10000

\definecolor{darkblue}{rgb}{0,0,0.38}
\definecolor{darkred}{rgb}{0.8,0,0}
\definecolor{darkgreen}{rgb}{0.1,0.35,0}

\DeclareMathOperator{\argmin}{argmin}
\DeclareMathOperator{\slack}{slack}
\DeclareMathOperator{\lca}{lca}
\DeclareMathOperator{\leftp}{left}
\DeclareMathOperator{\rightp}{right}

\newcommand\OPT{\ensuremath{\mathrm{OPT}}}
\newcommand\Drop{\ensuremath{\mathrm{Drop}}}
\newcommand\shadows{\ensuremath{\mathrm{shadows}}}
\renewcommand{\epsilon}{\varepsilon}

\def\cupp{\stackrel{.}{\cup}}
\def\bigcupp{\stackrel{.}{\bigcup}}

\makeatletter
\let\@@pmod\pmod
\DeclareRobustCommand{\pmod}{\@ifstar\@pmods\@@pmod}
\def\@pmods#1{\mkern8mu({\operator@font mod}\mkern 6mu#1)}
\makeatother

\makeatletter
\let\@@mod\mod
\DeclareRobustCommand{\mod}{\@ifstar\@mods\@@mod}
\def\@mods#1{\mkern8mu{\operator@font mod}\mkern 6mu#1}
\makeatother

\def\Cscr{\mathcal{C}}

\def\Kscr{\mathcal{K}}
\def\Lscr{\mathcal{L}}

\def\Rscr{\mathcal{R}}

\def\Xscr{\mathcal{X}}
 
\makeatletter
\def\@fnsymbol#1{\ensuremath{\ifcase#1\or *\or \ddagger\or
    \mathsection\or \mathparagraph\or \|\or **\or \dagger\dagger
    \or \ddagger\ddagger \else\@ctrerr\fi}}
\makeatother

\title{A $(1.5+\epsilon)$-Approximation Algorithm for Weighted Connectivity Augmentation\thanks{This project received funding from Swiss National Science Foundation grant 200021\_184622 and the European Research Council (ERC) under the European Union's Horizon 2020 research and innovation programme (grant agreement No 817750).}
}

\author{
Vera Traub\thanks{
Research Institute for Discrete Mathematics and Hausdorff Center for Mathematics, University of Bonn.
Part of this work was done at ETH Zurich.
Email: \href{mailto:traub@dm.uni-bonn.de}{traub@dm.uni-bonn.de}.
}
\and
Rico Zenklusen\thanks{
Department of Mathematics, ETH Zurich, Zurich, Switzerland.
Email: \href{mailto:ricoz@ethz.ch}{ricoz@ethz.ch}.}
}

\date{}

\begin{document}

\maketitle
\thispagestyle{empty}
\addtocounter{page}{-1}
\enlargethispage{-1cm}

\begin{abstract}
Connectivity augmentation problems are among the most elementary questions in Network Design.
Many of these problems admit natural $2$-approximation algorithms, often through various classic techniques, whereas it remains open whether approximation factors below $2$ can be achieved.
One of the most basic examples thereof is the Weighted Connectivity Augmentation Problem (WCAP).
In WCAP, one is given an undirected graph together with a set of additional weighted candidate edges, and the task is to find a cheapest set of candidate edges whose addition to the graph increases its edge-connectivity.
We present a $(1.5+\varepsilon)$-approximation algorithm for WCAP, showing for the first time that factors below $2$ are achievable.

On a high level, we design a well-chosen local search algorithm, inspired by recent advances for Weighted Tree Augmentation.
To measure progress, we consider a directed weakening of WCAP and show that it has highly structured planar solutions.
Interpreting a solution of the original problem as one of this directed weakening allows us to describe local exchange steps in a clean and algorithmically amenable way.
Leveraging these insights, we show that we can efficiently search for good exchange steps within a component class for link sets that is closely related to bounded treewidth subgraphs of circle graphs.
Moreover, we prove that an optimum solution can be decomposed into smaller components, at least one of which leads to a good local search step as long as we did not yet achieve the claimed approximation guarantee.
\end{abstract}

\begin{tikzpicture}[overlay, remember picture, shift = {(current page.south east)}]
\begin{scope}[shift={(-1.1,2.5)}]
\def\hd{2.5}
\node at (-2.15*\hd,0) {\includegraphics[height=0.7cm]{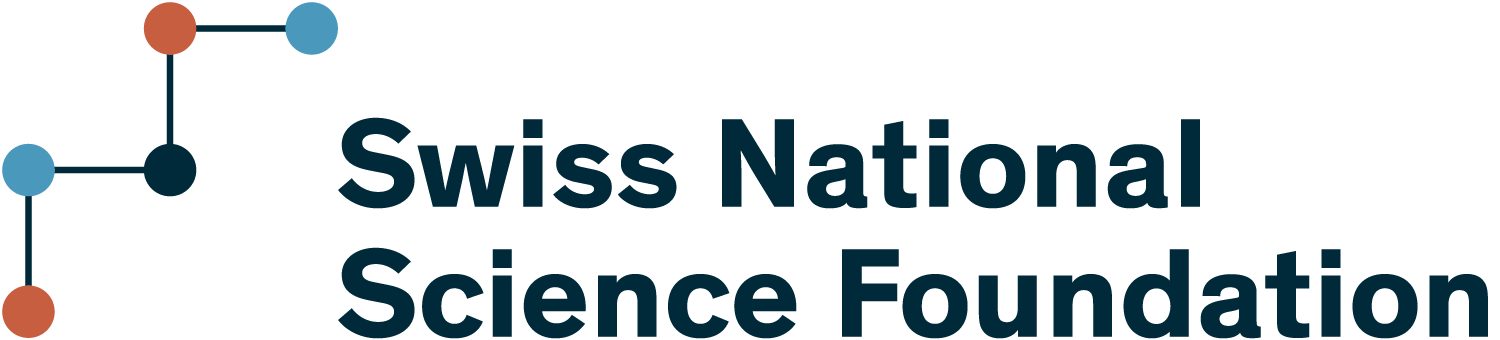}};
\node at (-\hd,0) {\includegraphics[height=1.0cm]{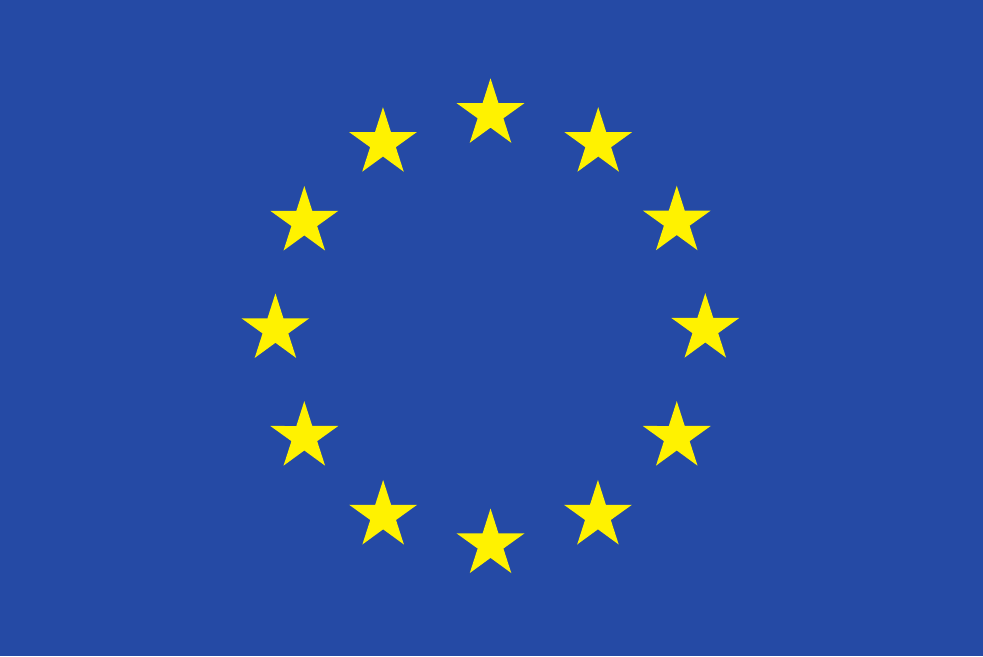}};
\node at (-0.2*\hd,0) {\includegraphics[height=1.2cm]{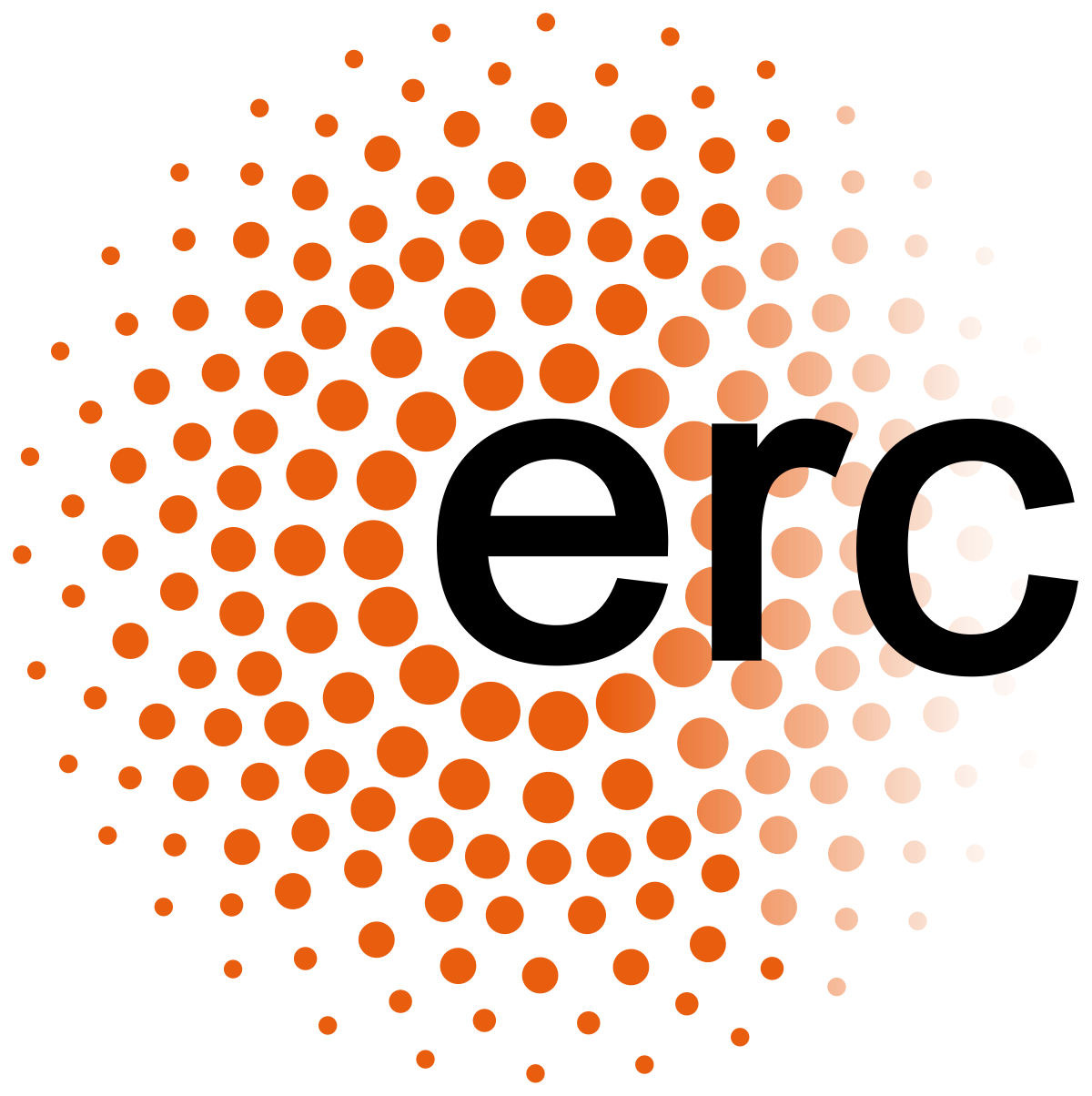}};
\end{scope}
\end{tikzpicture}

\clearpage

\section{Introduction}\label{sec:intro}

The edge-connectivity of a graph is a key network parameter and a large number of classical and modern network design problems revolve around it.
Prominent examples are the minimum spanning tree problem, the Steiner tree problem, creating cheap graphs with high edge-connectivity, or augmenting the edge-connectivity of a graph in the most economical way, which leads to the class of augmentation problems.
Most of the above network design problems are \APX-hard and have become prominent problems in the field of approximation algorithms.
Interestingly, despite extensive research, there remain very elementary problems in this class that have proved surprisingly hard to make any progress beyond some easy-to-obtain canonical approximation factors.
This holds in particular for one of the arguably most natural and central augmentation problems, namely the Weighted Connectivity Augmentation Problem (WCAP), which is the topic of this paper.
In WCAP, one is given an undirected graph $G=(V,E)$ together with a set $L \subseteq \begin{psmallmatrix} V\\ 2 \end{psmallmatrix}$ of candidate edges, also called links, which have non-negative costs $c\colon L \to \mathbb{R}_{\geq 0}$.
The task is to find a cheapest link set $S\subseteq L$ whose addition to $G$ will increase its edge-connectivity.
Formally, if we denote by $k$ the edge-connectivity of $G$, then the task is to find an (approximate) minimizer of
\begin{equation}
\min\left\{c(S) \colon S\subseteq T \text{ s.t.~}(V,E\cup S) \text{ is $(k+1)$-edge-connected}\right\}.\tag{WCAP}
\end{equation}
The problem is well-known to be \APX-hard.
This follows for example from \APX-hardness of the (unweighted) Tree Augmentation Problem (TAP), which is a heavily studied special case of WCAP where the underlying graph $G$ is a spanning tree and all links have unit costs.
Prior to this work, nothing better than $2$-approximations were known for WCAP.
Moreover, a $2$-approximation is arguably easy to obtain and many approaches are known to achieve this factor, including a simple reduction to a directed problem (see~\cite{khuller_1993_approximation,khuller_1994_biconnectivity} for a description of the reduction, where the directed problem can then be solved by a result of~\cite{frank_1989_application}) and by now well-established primal-dual and iterative rounding approaches~\cite{goemans_1994_improved,jain_2001_factor}.

So far, progress beyond the factor $2$ has only been achieved for special cases of WCAP.
This includes in particular the aforementioned heavily studied (Unweighted) Tree Augmentation Problem, where a long sequence of work~\cite{frederickson_1981_approximation,khuller_1993_approximation,nagamochi_2003_approximation,cheriyan_2008_integrality,even_2009_approximation,cohen_2013_approximation,kortsarz_2016_simplified,nutov_2017_tree,cheriyan_2018_approximating_a,cheriyan_2018_approximating_b,adjiashvili_2018_beating,fiorini_2018_approximating,grandoni_2018_improved,kortsarz_2018_lp-relaxations,}
led to an approximation factor of $1.393$~\cite{cecchetto_2021_bridging}.
This same factor can also be achieved for the unweighted Connectivity Augmentation Problem~\cite{cecchetto_2021_bridging}, which is a problem for which first better-than-$2$ approximation have only been obtained very recently~\cite{byrka_2020_breaching} (see also~\cite{nutov_2021_approximation}).
Also very recently, first approximation factors below $2$ have been obtained for the Weighted Tree Augmentation Problem~\cite{traub_2021_better,traub_2022_local} through the identification and analysis of a class of local improvement steps.

In this work, we aim at making the first progress on the approximability of WCAP below factor~$2$.

\subsection{Our results}
Our main result is the following.
\begin{theorem}\label{thm:main}
For every $\epsilon >0$, there is a polynomial-time $(1.5+\epsilon)$-approximation algorithm for the Weighted Connectivity Augmentation Problem.
\end{theorem}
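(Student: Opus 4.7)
The plan is to design a local search algorithm in the spirit of the recent Weighted Tree Augmentation results of Traub and Zenklusen~\cite{traub_2021_better,traub_2022_local}, but tailored to the richer cut structure of WCAP. First I would reduce WCAP to augmenting a \emph{cactus} by one unit of edge-connectivity: by the classical Dinitz--Karzanov--Lomonosov theorem~\cite{dinitz_1976_structure}, the minimum cuts of $G$ form a cactus $\Cscr$, and a link set $S$ augments the connectivity of $G$ if and only if $S$ covers every $2$-cut of $\Cscr$. This reduction lets me work with a clean combinatorial object while inheriting all link costs and the covering requirement. From a starting feasible solution $A\subseteq L$ of cost at most $2\cdot\OPT$ (obtainable e.g.\ via~\cite{frank_1989_application,jain_2001_factor}), the algorithm repeatedly looks for a small cost-improving ``exchange'', swapping a bounded-size subset of $A$ for a subset of $L$ while maintaining feasibility, and terminates when no such exchange exists.

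To identify structure in candidate exchanges, I would introduce a \emph{directed weakening} of WCAP, obtained by orienting the cactus $\Cscr$ and relaxing the covering requirement to directed cuts only. The key structural claim to establish is that feasible solutions of the directed weakening have a highly structured planar representation: after an appropriate embedding of $\Cscr$, every link can be drawn as a chord of a disk, so that two links ``conflict'' precisely when their chords cross. The conflict graph is therefore a \emph{circle graph}. Any solution $A$ of the original undirected problem can be interpreted as a solution of the directed weakening (by choosing an orientation of $\Cscr$ well with respect to $A$), which gives a concrete planar combinatorial encoding on which to define and reason about local exchanges.

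For the algorithmic side, I would use the circle-graph encoding to reduce the search for an improving exchange of bounded size to optimization over bounded-size vertex subsets of circle graphs. Since circle graphs have treewidth at most linear in the clique number~\cite{kloks_1996_treewidth} and are well behaved with respect to several classical covering/packing problems~\cite{gavril_1973_algorithms,keil_1993_complexity,nash_2010_output}, dynamic programming on a tree decomposition can find the best exchange within each bounded-complexity ``component class'' in polynomial time. A constant size threshold depending on $1/\epsilon$ is what I expect to be sufficient.

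The main obstacle, and the heart of the proof, is the approximation analysis: showing that whenever $c(A)>(1.5+\epsilon)\cdot\OPT$, some bounded-size improving exchange must exist. The plan is to fix an optimum solution $\OPT$ and analyse $A\cup\OPT$ inside the directed weakening, using the circle-graph structure to decompose $A\symdiff \OPT$ into small windows, each representing a candidate exchange between $A$ and $\OPT$. A weighted averaging argument over these windows — where the weights reflect how much of $A$ and of $\OPT$ each window touches — should then yield that the average ratio of $A$-cost to $\OPT$-cost over the windows is at most $1.5+\epsilon$, so at least one window yields an exchange improving $A$. The delicate part will be designing the decomposition so that the windows have bounded size, are disjoint enough for the averaging to go through, and deliver the tight factor $\tfrac{3}{2}$; this is where I expect the analysis to be considerably more intricate than the relative-greedy framework of~\cite{traub_2021_better,traub_2022_local}, because the cactus (as opposed to tree) cut structure forces one to handle cycle components of $\Cscr$ and the orientation choices simultaneously.
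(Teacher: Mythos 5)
Your outline identifies the right high-level ingredients---a directed weakening, circle-graph structure, bounded-treewidth dynamic programming, a decomposition-and-averaging argument, and a final local-search phase \`a la~\cite{traub_2022_local}---but two of the steps would fail as sketched. First, stopping at the cactus does not by itself give a chord representation. To make the $2$-cuts into intervals of a circular vertex order and the links into chords (so that the conflict graph is a circle graph), one further ``unfolds'' the cactus along an Eulerian walk into a \emph{ring}, inserting zero-cost links at the split vertices~\cite{galvez_2019_cycle}; this known reduction is missing from your plan and is not automatic from an arbitrary cactus embedding. Relatedly, ``orienting the cactus'' is not quite the directed weakening you want: what gets oriented (and then \emph{shortened}) are the links, and a non-shortenable directed solution is provably an $r$-arborescence of a specific planar form. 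That arborescence structure---responsibility sets, $v$-good vertices, least-common-ancestor arguments---is what drives the whole analysis, and your formulation does not produce it.

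Second, and more fundamentally, the decomposition/averaging step would break in the form you describe. The WTAP analogue takes, for each starting-solution link, a minimal subset of $\OPT$ making it droppable; these subsets knit together into a \emph{branching}, which is exactly what lets one cut the dependency graph along cheap ``remainder'' levels and charge each $\OPT$-link only a bounded number of times. In the ring/cactus setting the same minimal \emph{link} subsets need not form a branching at all, so that argument collapses. The paper's fix is an intermediate aggregation: partition $\OPT$ into \emph{festoons} (link sets whose intersection graph is a left-to-right-ordered path with laminar festoon intervals) and choose minimal sets of festoons---not links---to drop each arc; the dependency graph on festoons is then again a branching, and festoons are ``$4$-thin everywhere'' so merging $O(1/\epsilon)$ of them keeps components $O(1/\epsilon)$-thin. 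Without some such aggregation your ``windows'' overlap in uncontrolled ways and the averaging bound does not close. A smaller but real issue: the right measure for a component is bounded \emph{thinness} (equivalently, bounded treewidth of its link-intersection subgraph), not bounded cardinality; and the DP must search the entire link set for a best constant-thin subset, which is not the same as dynamic programming on a single given bounded-treewidth graph, so the optimization subroutine also needs its own argument rather than an off-the-shelf treewidth DP.
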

Our result is inspired by recent progress for WTAP~\cite{traub_2021_better,traub_2022_local}, and, on a high level, follows a similar approach.
However, there are substantial new hurdles faced in connectivity augmentation that do not exist in tree augmentation. (The higher complexity of connectivity augmentation versus tree augmentation is also reflected in the unweighted case, where first progress on unweighted connectivity augmentation happened two decades after first progress on unweighted TAP.)
We expand on our technical contributions and differences to prior approaches in \cref{sec:overview}.

\subsection{Organization of paper}
We finish the introduction by highlighting some further related work in \cref{sec:furtherPrior}.
\cref{sec:overview} provides an overview of our approach, highlighting the main steps and providing a full high-level picture, with links to later sections for further details on specific steps.
The overview and \cref{sec:wrap,sec:directed-solutions,sec:drop,sec:thin-components,sec:decomposition-thm,sec:dp} provide all details to get a better-than-$2$ approximation for WCAP.
However, for a better exposition, in this part we show a weaker result than our main result, \cref{thm:main}, namely a $(1+\ln(2)+\epsilon)$-approximation.
In \cref{sec:local-search}, we then show how to obtain the claimed $(1.5+\epsilon)$-approximation through a further strengthening.

\subsection{Further related work}\label{sec:furtherPrior}

The body on related literature is vast.
A related problem class is to design cheap $k$-edge-connected subgraphs under different assumptions.
This includes the $k$-edge-connected spanning subgraph problem ($k$-ECSS).
In its unweighted version, the task is to find a $k$-edge-connected subgraph with a smallest number of edges that spans all vertices.
The currently strongest approximation algorithms for unweighted $2$-ECSS achieve an approximation factor of $\sfrac{4}{3}$~\cite{sebo_2014_shorter,hunkenschroder_2019_approximation}.
However, for the weighted counterpart of $2$-ECSS (and more generally weighted $k$-ECSS for $k \ge 2$), an approximation factor of $2$ \cite{khuller_1994_biconnectivity} remains the best known approximation guarantee.

Moreover, for unweighted $k$-ECSS, approximation factors approaching $1$ as $k$ grows can be achieved, even in directed settings (\cite{cheriyan_2000_approximating,gabow_2004_special,gabow_2009_approximating}).
The special case of $2$-ECSS with costs restricted to $0$ and $1$ leads to the Forest Augmentation Problem (FAP), for which a first better-than-$2$ approximation has been obtained very recently~\cite{grandoni_2022_breaching}, where a $1.9973$-approximation was presented.
Better approximations are known for the Matching Augmentation Problem (MAP), which is the special case of FAP where the $0$-weight edges form a matching~\cite{cheriyan_2020_improved,cheriyan_2020_matching,bamas_2022_simple}.

 \section{Overview of Techniques}\label{sec:overview}

The key of our approach is to define a particular type of very structured WCAP solutions, and then derive results on how such solutions can be improved under well-defined circumstances.
These structured WCAP solutions are actually solutions to an auxiliary \emph{directed} version of WCAP, based on directed links.
Our goal is to replace a (potentially large) subset of directed links with a (potentially large) subset of original, undirected links of smaller cost.
This leads to \emph{mixed solutions}, which contain both directed and undirected links.
We can then either continue our improvement steps by considering only the leftover directed links of the mixed solution, which leads to a relative greedy algorithm, or by reinterpreting the mixed solution as a directed one and then continue with improvement steps, which leads to a stronger local search procedure.

A central contribution of this paper is to present a way how this high-level idea can be realized. 
To this end, we introduce a sufficient and structure-rich condition of when a directed link can be dropped after adding a set of undirected links.
This allows us to use a carefully defined component class for undirected links within which we can, in particular, find a link set that maximizes the cost of directed links we can drop due to our sufficient condition minus the cost of the undirected links to be added.
Moreover, using the same component class, we can show through a decomposition theorem that improving replacement steps are possible under well-defined conditions.

Superficially, these last steps of identifying a proper component class, for which both an optimization algorithm and a structural decomposition theorem can be found, have already been followed in other local search and relative greedy contexts, including for the Weighted Tree Augmentation Problem.
Indeed, they are a well-known recipe.
The challenge lies in finding a way to realize them, and, on this level, we differ substantially from prior work in multiple ways, which we highlight in the remaining part of this overview when expanding on the various steps of our approach.

This overview is organized as follows.
We start in \cref{sec:ovReducRing} with the simple and well-known observation that the underlying graph of a WCAP instance can be assumed to be a ring/cycle.
In \cref{sec:ovDirSols}, we discuss the auxiliary directed version of WCAP that underlies our approach.
\cref{sec:ovReplByUndir} discusses under what circumstances we drop directed links after adding a set of undirected ones.
We achieve this by assigning to each directed link a family of $2$-cuts for which it is responsible.
This notion of responsibility, despite being a simplification, leads to various structural properties that are crucially exploited in later steps.
Such properties, which also give clean descriptions of when links can be dropped, are discussed in \cref{sec:ovWhenToDropLinks}.
In \cref{sec:ovThinComps}, we introduce the component class used in our procedures, and discuss its key properties, namely that we can efficiently solve relevant optimization problems over that class and that there exists a component of the class that allows for making progress (under well-defined conditions), which follows from a decomposition theorem.
Finally, \cref{sec:commentsOptThmDecThm} provides a brief discussion on some key aspects of our proofs of the optimization and decomposition result.

\subsection{Reduction to ring instances}\label{sec:ovReducRing}

We start with the well-known fact that WRAP can be reduced to the case where the underlying graph is a ring/cycle, which has been previously observed in~\cite{galvez_2019_cycle}.
This leads to the Weighted Ring Augmentation Problem (WRAP).
\begin{definition}[Weighted Ring Augmentation Problem (WRAP)]
A WCAP instance $(G,L,c)$ is an instance of the Weighted Ring Augmentation Problem if $G$ is a cycle.
\end{definition}
The following \lcnamecref{lem:reduction-to-WRAP} summarizes the reduction statement.
\begin{restatable}[]{lemma}{redToWRAP}
\label{lem:reduction-to-WRAP}
Let $\alpha > 1$.
If there is an $\alpha$-approximation algorithm for WRAP, then there is an $\alpha$-approximation algorithm for WCAP.
\end{restatable}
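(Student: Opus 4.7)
The plan is to exploit the cactus representation of minimum cuts~\cite{dinitz_1976_structure}. Let $(G,L,c)$ be a WCAP instance with $G$ of edge-connectivity $k$. The family of minimum $k$-edge-cuts of $G$ is encoded by a cactus $\mathcal{C}$: a connected graph in which every edge lies on at most one simple cycle. Each minimum cut of $G$ corresponds bijectively to a $2$-cut of $\mathcal{C}$, which is either a single bridge-edge of $\mathcal{C}$ or an unordered pair of edges on a common cycle. Each link $\ell\in L$ naturally projects to a pair of cactus-vertices, and $\ell$ covers a given min-cut iff the induced $\mathcal{C}$-path crosses the defining bridge or separates the two defining cycle-edges. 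Hence $S\subseteq L$ is WCAP-feasible iff its image covers every $2$-cut of $\mathcal{C}$.

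First, I would construct $\mathcal{C}$ in polynomial time and decompose it into its $2$-edge-connected blocks $C_1,\ldots,C_t$, which for a cactus are exactly its simple cycles (each bridge-edge being treated as a degenerate length-$2$ cycle consisting of two parallel copies). For each block $C_i$ I would build a WRAP sub-instance on the ring $C_i$ by contracting $\mathcal{C}\setminus E(C_i)$ into the vertices of $C_i$ and taking the non-trivial projections of the links in $L$, each keeping its original cost. Feasibility in this sub-instance amounts exactly to covering every pair of $C_i$-edges, i.e., the min-cuts of $G$ localized on $C_i$. I would then invoke the assumed polynomial-time $\alpha$-approximation for WRAP on each sub-instance, obtaining $S_i\subseteq L$ with $c(S_i)\le \alpha\cdot \mathrm{OPT}_{C_i}$, and output $S=\bigcup_i S_i$. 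Feasibility of $S$ is immediate since every min-cut of $G$ lives on some $C_i$ and is therefore covered by $S_i$.

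The main obstacle is the cost analysis: a single link may project non-trivially onto several blocks at once, so charging its cost once per block naively could produce a $\Theta(t\alpha)$-factor rather than the desired $\alpha$. I would overcome this, following the observation of \cite{galvez_2019_cycle}, by exploiting the structural fact that the cycles of a cactus are pairwise edge-disjoint and that every link traces a simple path in $\mathcal{C}$. This allows one to take an optimum WCAP solution $S^\star$ and assign each link to a single block along its $\mathcal{C}$-path in such a way that the resulting block-wise sets $S_i^\star\subseteq S^\star$ are each feasible for the sub-instance on $C_i$, and hence $\sum_i \mathrm{OPT}_{C_i}\le \sum_i c(S_i^\star)\le c(S^\star)=\mathrm{OPT}_{\mathrm{WCAP}}$. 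Combining this with the per-block approximation gives $c(S)\le \sum_i c(S_i)\le \alpha\sum_i \mathrm{OPT}_{C_i}\le \alpha\cdot \mathrm{OPT}_{\mathrm{WCAP}}$, as required. Since the cactus construction, the per-block contractions, and the $t$ calls to the WRAP approximation are all polynomial, the whole reduction is polynomial.
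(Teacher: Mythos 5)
Your proposal has the right starting point (the cactus representation of min cuts) and the right reference, but the key step of your cost analysis is false, and I believe the gap is not repairable with the per-block approach.

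The claim that one can take an optimum WCAP solution $S^\star$ and assign each link to a \emph{single} cycle of the cactus so that the induced block-wise sets $S_i^\star$ remain feasible on each cycle $C_i$ (and hence $\sum_i \mathrm{OPT}_{C_i} \le c(S^\star)$) does not hold. The issue is exactly the one you flagged as the ``main obstacle'': a link whose cactus-path traverses several cycles may be genuinely \emph{necessary} in each of those cycles, so it cannot be charged to only one of them. Concretely, take a cactus that is a path of $t$ triangles $C_i$ on vertices $\{a_i,b_i,a_{i+1}\}$, where $a_{i+1}$ is the cut vertex shared with $C_{i+1}$. Let $L$ consist of one ``long'' unit-cost link $\link{l}_0=\{a_1,a_{t+1}\}$ together with one unit-cost link $\link{l}_i=\{b_i,a_i\}$ per triangle. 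In $C_i$, the projection of $\link{l}_0$ is the chord $\{a_i,a_{i+1}\}$ (covering the cuts $\{a_i\}$ and $\{a_{i+1}\}$ but not $\{b_i\}$), and the projection of $\link{l}_i$ is $\{b_i,a_i\}$ (covering $\{b_i\}$ and $\{a_i\}$ but not $\{a_{i+1}\}$); every other $\link{l}_j$ projects trivially on $C_i$. So the unique feasible set for the sub-instance on $C_i$ is $\{\link{l}_0,\link{l}_i\}$, hence $\mathrm{OPT}_{C_i}=2$ for all $i$ and $\sum_i \mathrm{OPT}_{C_i}=2t$, while $\mathrm{OPT}_{\mathrm{WCAP}}=c(\{\link{l}_0,\link{l}_1,\dots,\link{l}_t\})=t+1$. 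Any attempt to assign $\link{l}_0$ to a single block leaves the remaining $t-1$ blocks with an infeasible $S_i^\star$. Consequently the chain $c(S)\le\sum_i c(S_i)\le\alpha\sum_i\mathrm{OPT}_{C_i}\le\alpha\cdot\mathrm{OPT}_{\mathrm{WCAP}}$ breaks at the last step, and the per-block scheme as analyzed only gives a bound of roughly $2\alpha$ here (and can be made worse with more elaborate sharing). Note that the output $S=\bigcup_i S_i$ might still be cheap in this example because the $S_i$ overlap, but your analysis goes through $\sum_i c(S_i)$ and does not capture this.

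The paper sidesteps this entirely by \emph{not} decomposing into per-cycle subproblems. After projecting to the cactus, it takes an Eulerian walk of the cactus (which exists because a $2$-edge-connected cactus is an edge-disjoint union of cycles), splits each vertex visited $k>1$ times into $k$ copies joined by zero-cost links, and thereby ``unfolds'' the whole cactus into a \emph{single} ring. This yields a single WRAP instance whose $2$-cuts are exactly the cactus $2$-cuts plus the trivially-coverable cuts separated by a zero-cost link, and solutions transfer back and forth with the same cost (after always including all zero-cost links). One call to the WRAP algorithm on this instance then gives an $\alpha$-approximation with no double counting, which is why the factor $\alpha$ is preserved exactly. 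If you want to salvage your write-up, I would switch to this single-ring reduction rather than trying to repair the per-block charging.
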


The idea behind the reduction is that one can first reduce to the graph being a cactus, i.e., an undirected graph where every edge lies in precisely one cycle, using the fact that minimum cuts of a graph can be represented by a cactus~\cite{dinitz_1976_structure}.
In a second step, one can ``unfold'' the cactus into a ring by adding appropriate links of weight zero.

\subsection{Structured solutions through directed WRAP}\label{sec:ovDirSols}

We now discuss the directed auxiliary problem that we use to obtain structured solutions.
Passing through directed settings to approximate undirected augmentation problems is a classical idea that has been used in related contexts (see, e.g., \cite{frederickson_1981_approximation,khuller_1993_approximation,khuller_1994_biconnectivity}).
In particular, very recently, the same directed problem we employ here has been used in the context of unweighted WCAP; however, for a very different purpose that relied on properties of the dual of a linear programming formulation of the used directed problem setting~\cite{cecchetto_2021_bridging}.

The directed version we use is defined with respect to an arbitrary root.
We therefore first introduce \emph{rooted WRAP} instances, which have a fixed root $r$ and, moreover, also a fixed edge $e_r$ among the two edges of the ring incident to $r$.
Fixing the edge $e_r$ will be convenient later, e.g., for introducing in an unambiguous way a left/right notion along the cycle.
\begin{definition}[Rooted Weighted Ring Augmentation Problem (rooted WRAP)]
An instance $(G=(V,E),L,c,r,e_r)$ of a \emph{rooted} Weighted Ring Augmentation Problem (rooted WRAP) consists of a WRAP instance $(G,L,c)$ together with a root $r\in V$ and an edge $e_r\in \delta_E(r)$.
\end{definition}

Here and in the following we denote by $\delta_F(U)$, for an edge/link set $F$ and a vertex set $U$, the set of edges/links in $F$ that have exactly one endpoint in $U$.
Given a rooted WRAP instance $(G,L,c,r,e_r)$, let
\begin{equation*}
\mathcal{C}_G\coloneqq \left\{C\subseteq V\setminus \{r\} \colon |\delta_E(C)| = 2\right\}
\end{equation*}
the set of all $2$-cuts of $G$, where we choose the $2$-cuts such that they do not contain the root.
Note that a set $S\subseteq L$ of links is a WRAP solution if and only if every cut $C\in \Cscr_G$ is covered by a link from $S$, i.e., we have $\delta_S(C) \neq \emptyset$ for all $C\in \Cscr_G$.

A directed link $\vec{\link{l}}$ in a rooted WRAP instance is a tuple $\vec{\link{l}}=(u,v)\in V\times V$ with $u \neq v$.
We think of $\vec{\link{l}}=(u,v)$ as covering all cuts of $\mathcal{C}_G$ that $\vec{\link{l}}$ is entering.
This is strictly weaker than an undirected link between $u$ and $v$, which also covers the cuts in $\mathcal{C}_G$ that $\vec{\link{l}}$ is leaving.
This notion of which $2$-cuts are covered by directed links readily leads to the definition below of a directed WRAP solution.
(See \cref{fig:dirWRAPSol}.)
\begin{definition}[Directed WRAP solution]
Let $\mathcal{I}=(G,L,c,r,e_r)$ be a rooted WRAP instance and let $\vec{F}\subseteq V\times V$ be a family of directed links.
We say that $\vec{F}$ is a solution to $\mathcal{I}$ if, for every $C\in \mathcal{C}_G$, there is a directed link $\vec{\link{l}}\in \vec{F}$ that enters $C$.
\end{definition}
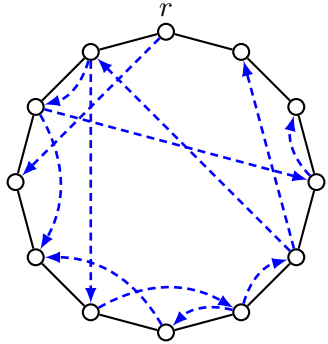
\begin{figure}[!ht]
\begin{center}
\begin{tikzpicture}[scale=1,
ns/.style={thick,draw=black,fill=white,circle,minimum size=6,inner sep=2pt},
es/.style={thick},
lks/.style={line width=1pt, blue, densely dashed},
dlks/.style={lks, -latex},
ts/.style={every node/.append style={font=\scriptsize}}
]

\def\rad{2}
\def\num{12}

\begin{scope}[every node/.style={ns}]
\foreach \i in {1,...,\num} {
  \pgfmathsetmacro\r{90+(\i-1)*360/\num}
  \node (\i) at (\r:\rad) {};
}
\end{scope}

\begin{scope}
\node at (1)[above=2pt] {$r$};
\end{scope}

\begin{scope}[es]
\foreach \i in {1,...,\num} {
\pgfmathtruncatemacro\j{1+mod(\i,\num)}
\draw (\i) -- (\j);
}
\end{scope}

\begin{scope}[dlks]
\draw (1) -- (4);
\draw (2) -- (6);
\draw (2) to[bend left] (3);
\draw (3) to[bend left] (5);
\draw (3) -- (10);
\draw (6) to[bend left] (8);
\draw (7) to[bend right] (5);
\draw (8) to[bend left] (9);
\draw (8) to[bend right] (7);
\draw (9) to (2);
\draw (9) -- (12);
\draw (10) to[bend left] (11);
\end{scope}

\end{tikzpicture}
 \end{center}
\caption{Example of a directed WRAP solution depicted as dashed blue arrows.}
\label{fig:dirWRAPSol}
\end{figure}

It turns out that directed WRAP instances can be solved efficiently.
An easy way to transform a WRAP instance into a directed one is by bidirecting each link $\link{l}\in L$ and assigning to the two directed links corresponding to $\link{l}$ the cost $c(\link{l})$.
By solving the resulting bidirected instance, we thus get a directed WRAP solution of cost at most $2c(\OPT)$, where $\OPT\subseteq L$ is an optimal undirected WRAP solution.
(This follows by observing that bidirecting $\OPT$ leads to a directed WRAP solution.)
To obtain highly structured directed WRAP solutions, we allow for shortening directed links $\vec{\link{l}}$, which means replacing $\vec{\link{l}}$ by an equal-cost directed link that covers a subset of the $2$-cuts in $\mathcal{C}_G$ covered by $\vec{\link{l}}$.
(See \cref{fig:ov_shortenings} for an example and \cref{sec:directed-solutions} for more details.)
\begin{figure}[!ht]
\begin{center}
\begin{tikzpicture}[scale=1,
ns/.style={thick,draw=black,fill=white,circle,minimum size=6,inner sep=2pt},
es/.style={thick},
lks/.style={line width=1pt, blue, densely dashed},
dlks/.style={lks, -latex},
ts/.style={every node/.append style={font=\scriptsize}}
]

\def\rad{2}
\def\num{12}

\begin{scope}[every node/.style={ns}]
\foreach \i in {1,...,\num} {
  \pgfmathsetmacro\r{90+(\i-1)*360/\num}
  \node (\i) at (\r:\rad) {};
}
\end{scope}

\begin{scope}
\node at (1)[above=2pt] {$r$};
\path (\num) to node[above=-2pt] {$e_r$} (1);
\end{scope}

\begin{scope}[es]
\foreach \i in {1,...,\num} {
\pgfmathtruncatemacro\j{1+mod(\i,\num)}
\draw (\i) -- (\j);
}
\end{scope}

\begin{scope}[dlks]

\colorlet{col_l1}{cyan!70!black}
\colorlet{col_l2}{blue!50!black}

\draw[col_l1] (1)  -- node[right=3pt,pos=0.3] {$\vec{\link{l}}_1$} (4);
\draw[col_l2] (10) -- node[above, pos=0.15] {$\vec{\link{f}}_1$} (6);

\begin{scope}[densely dotted]
\draw[col_l1] (2)  -- node[above left=0pt,pos=0.2] {$\vec{\link{l}}_2$} (4);
\draw[col_l2] (8)  -- node[above,pos=0.1] {$\vec{\link{f}}_2$} (6);
\end{scope}
\end{scope}

\end{tikzpicture}
 \end{center}
\caption{Examples of shortenings (dotted arrows) of directed links (dashed arrows).
The directed links $\vec{\link{l}_2}$ and $\vec{\link{f}_2}$ are shortenings of the directed links $\vec{\link{l}_1}$ and $\vec{\link{f}_1}$, respectively.}
\label{fig:ov_shortenings}
\end{figure}
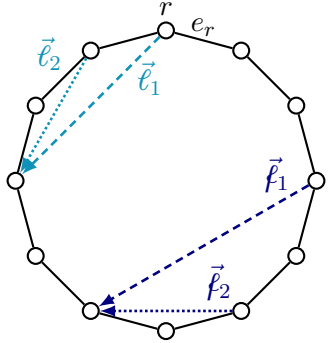
We call a directed and possibly shortened version of a link $\link{l}\in L$ a \emph{shadow} of $\link{l}$, and denote by $\shadows(L)\subseteq V\times V$ the set of all shadows of links in $L$.

Our algorithm starts by computing a WRAP solution of cost at most $2c(\OPT)$, by determining an optimal directed WRAP solution.
We then shorten its links until we obtain a \emph{non-shortenable} solution, i.e., a directed WRAP solution such that replacing any of its links by a distinct shorter version of it would lead to a link set that is not a directed WRAP solution anymore.
\begin{lemma}\label{lem:goodNonShortenableSol}
Given a rooted WRAP instance $(G,L,c,r,e_r)$, one can efficiently find a non-shorten\-able directed WRAP solution $\vec{F}\subseteq \shadows(L)$ with $c(\vec{F})\leq 2c(\OPT)$.
\end{lemma}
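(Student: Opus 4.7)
The plan is to combine two ingredients: (i) an efficient algorithm for optimal \emph{directed} WRAP, which gives a directed solution of cost at most $2c(\OPT)$; and (ii) an iterative shortening procedure that, starting from any directed WRAP solution, reaches a non-shortenable one without increasing the cost.

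For (i), first bidirect the undirected link set: for each $\link{l} = \{u,v\} \in L$, introduce two directed links $(u,v)$ and $(v,u)$, each of cost $c(\link{l})$. Observe that bidirecting an optimal undirected solution $\OPT$ produces a feasible directed WRAP solution, because every $2$-cut $C \in \mathcal{C}_G$ that is covered by some $\link{l} \in \OPT$ is entered by exactly one of the two orientations of $\link{l}$. Thus the optimal directed WRAP solution on the bidirected instance has cost at most $2c(\OPT)$. I would then invoke the efficient algorithm for directed WRAP alluded to in the overview (which ultimately relies on the polynomial-time algorithm for minimum-cost arborescence-type covering by Frank and Tardos~\cite{frank_1989_application}) to produce such an optimal directed WRAP solution $\vec{F}_0 \subseteq V \times V$ with $c(\vec{F}_0) \leq 2c(\OPT)$, where each arc in $\vec{F}_0$ is associated with an underlying link of $L$ having the same cost, i.e.\ $\vec{F}_0 \subseteq \shadows(L)$.

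For (ii), I would run the following loop: while there exist $\vec{\link{l}} \in \vec{F}_0$ and a strict shortening $\vec{\link{l}}'$ of $\vec{\link{l}}$ (i.e.\ another shadow of the same underlying undirected link $\link{l} \in L$ whose covered $2$-cuts form a proper subset of those covered by $\vec{\link{l}}$) such that $(\vec{F}_0 \setminus \{\vec{\link{l}}\}) \cup \{\vec{\link{l}}'\}$ is still a directed WRAP solution, replace $\vec{\link{l}}$ by $\vec{\link{l}}'$ in $\vec{F}_0$. Since shortenings have equal cost by definition, the cost bound $c(\vec{F}_0) \leq 2c(\OPT)$ is preserved throughout. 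Termination and polynomial running time are easy: $|\shadows(L)| = O(|V|^2)$, and the relation ``is a shortening of'' induces a partial order on $\shadows(L)$ in which each iteration strictly decreases one element; so the total number of iterations is at most $|\shadows(L)|^2 = O(|V|^4)$. Feasibility of a candidate replacement is checked by verifying that each of the $O(|V|^2)$ cuts in $\mathcal{C}_G$ is entered by some arc of the modified set.

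The main potential obstacle is step (i), namely an efficient algorithm for \emph{optimal} directed WRAP. This is handled by the fact, recorded in the overview, that directed WRAP reduces to a polynomial-time solvable directed network design question on the cycle $G$ (the $2$-cuts $\mathcal{C}_G$ have a very restricted laminar-like structure since $G$ is a cycle, so covering them by cheapest directed arcs is tractable). Given this, the above two-phase algorithm terminates within polynomial time and returns a non-shortenable directed WRAP solution $\vec{F} \subseteq \shadows(L)$ with $c(\vec{F}) \leq 2c(\OPT)$, as claimed.
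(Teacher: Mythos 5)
Your two-phase plan — compute an optimal directed WRAP solution (whose cost is at most $2c(\OPT)$ because bidirecting $\OPT$ yields a feasible directed solution), then iteratively shorten — is precisely the route the paper takes, via \cref{cor:dirSolTwoApprox} and \cref{lem:shortening_efficiently}. The cost-bound argument in part~(i) is correct, and the fact that minimum-cost directed WRAP is polynomially solvable is the only nontrivial ingredient (the paper invokes the LP-integrality result of \cite{cecchetto_2021_bridging} rather than reducing to Frank--Tardos submodular flows as you suggest, but the latter is also a legitimate route and is the one used classically for the $2$-approximation).

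There is, however, a gap in part~(ii). The paper defines a directed WRAP solution $\vec{F}$ to be non-shortenable when \emph{deleting} any of its links, as well as replacing any link by a strict shortening, destroys feasibility. Your while-loop only replaces a link by a strict shortening, and by \cref{def:shortening} a shortening is always a nonempty directed link (with the same head as the original) — there is no way to ``shorten to nothing.'' Hence your loop can terminate on a set that still contains a redundant link whose outright removal would leave a feasible directed solution; such a set is not non-shortenable in the paper's sense. This is not cosmetic: the structural properties of \cref{thm:ov_StructureNonShortenable} — most visibly that $(V,\vec{F})$ is an $r$-arborescence with $\delta^-_{\vec{F}}(r)=\emptyset$ — are proved using the deletion condition (e.g., any link entering $r$ covers no cut of $\mathcal{C}_G$ and could always be deleted). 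The fix is immediate: add a deletion step to the loop, removing any $\vec{\link{l}}$ from $\vec{F}_0$ whose removal leaves a directed WRAP solution, exactly as in \cref{lem:shortening_efficiently}. Since deletion can only decrease cost, the bound $c(\vec{F}_0)\leq 2c(\OPT)$ and your polynomial termination argument both survive unchanged.
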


Most importantly, non-shortenable WRAP solutions turn out to be highly structured, which we heavily exploit to later replace directed link sets by cheaper undirected ones.
The structure of non-shortenable WRAP solutions is described in \cref{thm:ov_StructureNonShortenable} below, where, for a ring graph $G$, we say that a directed link set $\vec{F}$ is \emph{$G$-planar} if a planar graph is obtained, when drawing $G$ in a planar way on a circle and adding the links in $\vec{F}$ as straight lines.
Moreover, an \emph{$r$-arborescence} is an arborescence rooted at $r$.
See \cref{fig:ov_nonShortenableExample} for an example of a non-shortenable directed WRAP solution.
(Moreover, see \cref{sec:directed-solutions} for an equivalent, non-geometric definition of $G$-planarity.)
\begin{restatable}[]{theorem}{propNonShortenable}
\label{thm:ov_StructureNonShortenable}
Let $(G,L,c,r,e_r)$ be a rooted WRAP instance and $\vec{F} \subseteq V\times V$ be a non-shortenable directed solution thereof.
Then $\vec{F}$ fulfills
\begin{enumerate}[label=(\roman*)]
\item $(V,\vec{F})$ is an $r$-arborescence;
\item\label{item:Gplanar} $\vec{F}$ is $G$-planar;
\item\label{item:noTwoSameDir} for any $v\in V$, no two directed links of $\delta^+_{\vec{F}}(v)$ go in the same direction (along the ring).
\end{enumerate}
\end{restatable}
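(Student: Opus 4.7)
The approach is to first linearize the ring by cutting at $e_r$: label $r = v_0, v_1, \ldots, v_n$ so that the 2-cuts in $\Cscr_G$ are precisely the intervals $[i,j] = \{v_i, \ldots, v_j\}$ with $1 \le i \le j \le n$. A rightward arc $(v_a, v_b)$ with $a < b$ then covers exactly the 2-cuts $[i,j]$ with $a < i \le b \le j$, which I view as the combinatorial rectangle $(a, b] \times [b, n]$ in $(i,j)$-space; a leftward arc covers a symmetric rectangle. From this description I would first verify that, for a rightward arc $(v_a, v_b)$ with $a \ge 1$, every distinct strict shortening is a rightward arc of the form $(v_{a'}, v_b)$ with $a < a' < b$: the head is forced to remain $v_b$, and the tail may only slide toward it. A symmetric statement holds for leftward arcs.

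With this preparation, parts (iii) and (ii) follow from one-step shortening arguments in which a second arc of $\vec{F}$ serves as a witness that the cuts lost in the shortening remain covered. For (iii), suppose $v_a$ has two rightward outgoing arcs $(v_a, v_{b_1})$ and $(v_a, v_{b_2})$ with $b_1 < b_2$; then $(v_{b_1}, v_{b_2})$ is a valid strict shortening of $(v_a, v_{b_2})$, and the lost cuts $(a, b_1] \times [b_2, n]$ lie inside $(a, b_1] \times [b_1, n]$ and are therefore covered by $(v_a, v_{b_1})$, contradicting non-shortenability; the leftward case is symmetric. For (ii), if two rightward arcs $(v_a, v_b)$ and $(v_c, v_d)$ with $a < c < b < d$ cross, the same trick applies: shortening $(v_c, v_d)$ to $(v_b, v_d)$ loses only the cuts $(c, b] \times [d, n] \subseteq (a, b] \times [b, n]$, covered by $(v_a, v_b)$. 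The mixed rightward--leftward and both-leftward crossings are handled by analogous one-step shortenings.

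For (i), I first observe that an arc $(v_a, r)$ enters no member of $\Cscr_G$ (since $r$ is in none), so any other shadow of its underlying link would be a valid shortening; non-shortenability then forces $r$ to have in-degree zero. Each non-root $v_k$ has in-degree at least one because the 2-cut $\{v_k\}$ must be entered. To establish in-degree exactly one, suppose two arcs enter $v_k$. If both are rightward, say from $v_{a_1}$ and $v_{a_2}$ with $a_1 < a_2 < k$ and $a_2 \le k-2$, then $(v_{a_2}, v_k)$ strictly shortens to $(v_{a_2+1}, v_k)$, whose only lost cut $\{a_2+1\} \times [k, n]$ lies in $(a_1, k] \times [k, n]$ and is covered by $(v_{a_1}, v_k)$, a contradiction. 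The borderline case $a_2 = k-1$ and the case of two incoming arcs from opposite sides require a more global argument: I would shorten the longer arc down to the minimal $(v_{k-1}, v_k)$ or $(v_{k+1}, v_k)$ and use the in-degree-one property, proved inductively at vertices between $a_1+1$ and $k-1$, together with (ii) and (iii), to show that the residual cuts $(a_1, k-1] \times [k, n]$ are in fact re-covered by the chain of unique incoming arcs already forced at those intermediate vertices. Once each non-root has in-degree exactly one, $|\vec{F}| = n$; then following the unique predecessor from any vertex terminates at $r$, ruling out cycles, so $(V, \vec{F})$ is an $r$-arborescence. The main obstacle is exactly this borderline case in (i): the minimal incoming arcs admit no strict local shortening, so one must step beyond the purely local replacement picture used for (iii) and (ii) and exploit the arborescence structure that is being simultaneously built in order to witness global re-coverage.
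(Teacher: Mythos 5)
Your rectangle picture and the one-step ``witness'' shortening argument are correct, and for same-direction crossings (the paper's Type~3) and for part~(iii) you are essentially re-deriving the paper's argument. The proposal breaks at the claim that the mixed rightward--leftward crossings are ``handled by analogous one-step shortenings.'' This fails in general: take a rightward arc $(v_a,v_b)$ with $a\ge 1$ and an interleaving leftward arc $(v_c,v_d)$ with $d<c$ (the paper's Type~1 or Type~2). In your notation $(v_a,v_b)$ covers the rectangle $(a,b]\times[b,n]$ and $(v_c,v_d)$ covers $[1,d]\times[d,c)$. Shortening $(v_a,v_b)$ by one step loses the cut $[a{+}1,n]$, which $(v_c,v_d)$ cannot cover because its second coordinate stays strictly below $c\le n$; shortening $(v_c,v_d)$ by one step loses the cut $[1,c{-}1]$, which $(v_a,v_b)$ cannot cover because its first coordinate exceeds $a\ge 1$. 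So neither arc is a local witness for the other's shortening, and non-shortenability gives no contradiction this way. The paper closes these cases with a different device: non-shortenability yields, for each arc $\vec{\link{l}}_i$, a \emph{private} $2$-cut $C_i$ with $\delta^-_{\vec{F}}(C_i)=\{\vec{\link{l}}_i\}$ that moreover is not entered by any strict shortening of $\vec{\link{l}}_i$. The crossing geometry then forces all four endpoints of the two arcs into the interval $C_1\cup C_2\in\Cscr_G$, so no arc of $\vec{F}$ enters $C_1\cup C_2$, contradicting feasibility. This same union-of-private-cuts mechanism is exactly what your ``two incoming arcs from opposite sides'' case in~(i) needs (there it amounts to the paper's Type~1 argument in the degenerate case $v_1=v_2$); the inductive sketch you give neither produces a concrete witness for the lost cuts nor avoids relying on the very arborescence structure it is meant to establish.

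There is a second, independent gap at the end of~(i): in-degree~$0$ at $r$ and in-degree~$1$ at every other vertex does not by itself make $(V,\vec{F})$ an $r$-arborescence. That degree profile is also realized by a tree component through $r$ together with one or more vertex-disjoint directed cycles, and ``following the unique predecessor terminates at $r$'' is precisely the acyclicity you still have to prove, not a consequence of the degrees. The paper rules out cycles with a separate $G$-planarity argument: the vertex set of any cycle $\vec{Q}\subseteq\vec{F}$ spans an interval $C\in\Cscr_G$; $\vec{Q}$ must contain the chord between the leftmost and rightmost vertices of $C$; and the arc of $\vec{F}$ entering $C$, whose head cannot lie on $\vec{Q}$ because those in-degrees are already used, necessarily crosses that chord.
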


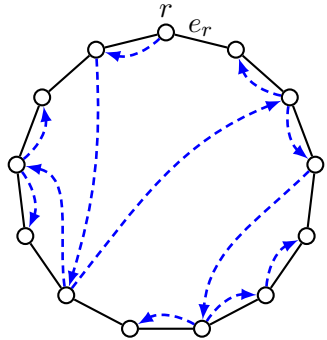
\begin{figure}[!ht]
\begin{center}
\begin{tikzpicture}[scale=1,
ns/.style={thick,draw=black,fill=white,circle,minimum size=6,inner sep=2pt},
es/.style={thick},
lks/.style={line width=1pt, blue, densely dashed},
dlks/.style={lks, -latex},
ts/.style={every node/.append style={font=\scriptsize}}
]

\def\rad{2}
\def\num{13}

\begin{scope}[every node/.style={ns}]
\foreach \i in {1,...,\num} {
  \pgfmathsetmacro\r{90+(\i-1)*360/\num}
  \node (\i) at (\r:\rad) {};
}
\end{scope}

\begin{scope}
\node at (1)[above=2pt] {$r$};
\path (\num) to node[above=-2pt] {$e_r$} (1);
\node at (4) [left=0.5pt] {};
\node at (5) [left=0.5pt] {};
\node at (6) [below=2pt] {};
\node at (8) [below=2pt] {};
\node at (11) [right=0.5pt] {};
\end{scope}

\begin{scope}[es]
\foreach \i in {1,...,\num} {
\pgfmathtruncatemacro\j{1+mod(\i,\num)}
\draw (\i) -- (\j);
}
\end{scope}

\begin{scope}[dlks]
\draw (1) to[bend left] (2);
\draw (2) to[bend left=10] (6);
\draw (6) to[bend right, in=-120, out=-10] (4);
\draw (4) to[bend right] (3);
\draw (4) to[bend left] (5);
\draw (6) to[bend left, in=160, out=10] (12);
\draw (12) to[bend left] (13);
\draw (12) to[bend right] (11);
\draw (11) to[bend right, out=-10] (8);
\draw (8) to[bend right] (7);
\draw (8) to[bend left] (9);
\draw (9) to[bend left] (10);
\end{scope}

\end{tikzpicture}
 \end{center}
\caption{Example of a non-shortenable WRAP solution depicted as dashed blue arrows.}
\label{fig:ov_nonShortenableExample}
\end{figure}
We highlight that the tree structure of the arborescence does not provide any clear path to connect the problem to Weighted Tree Augmentation.
However, leveraging the rich structure of non-shortenable directed WRAP solutions, we now expand on how we replace directed links by undirected link sets.

\subsection{Replacing directed links by undirected components}\label{sec:ovReplByUndir}

As mentioned, we think of directed links as only covering the $2$-cuts that they enter.
Hence, this leads to the following natural notion of \emph{mixed WRAP solution}.
\begin{definition}[mixed WRAP solution]
Let $\mathcal{I}=(G,L,c,r,e_r)$ be a rooted WRAP instance, $F\subseteq L$, and $\vec{F}\subseteq \shadows(L)$.
For $C\in \mathcal{C}_G$, we say that
\begin{itemize}
\item a directed link $\vec{\link{l}} \in \vec{F}$ \emph{covers} $C$ if $\vec{\link{l}} \in \delta_{\vec{F}}^-(C)$, and 
\item an undirected link $\link{l} \in F$ \emph{covers} $C$ if $\link{l} \in \delta_F(C)$.
\end{itemize}
We call $\vec{F}\cupp F$ a \emph{(mixed) solution} to $\mathcal{I}$ if, for every $C\in \mathcal{C}_G$, there is a link in $\vec{F}\cupp F$ covering $C$.
\end{definition}
The notion of mixed WRAP solution leads to a canonical way to define directed link sets that can be dropped form a directed WRAP solution after adding a set of undirected links.
Namely, when the resulting mixed link set is a mixed WRAP solution.
However, we use a more restrictive (and thus sufficient) condition of when a directed link can be dropped, by assigning to each directed $\vec{\link{l}}$ a sub-family $\Rscr_{\vec{F}}(\vec{\link{l}})$ of all $2$-cuts for which it is responsible, such that each $2$-cut is assigned to precisely one directed link.\footnote{See \cref{def:responsible} for a formal definition of responsibility and \cref{lem:responsibility_unique} for a proof that, for each $2$-cut, there is precisely one directed link responsible for it.}

Hence, whenever we add  an undirected link set $K\subseteq L$ to a non-shortenable directed WRAP solution $\vec{F}\subseteq \shadows(L)$, we can remove from $\vec{F}\cup K$ every link $\vec{\link{l}}\in \vec{F}$ for which all cuts in $\mathcal{R}_{\vec{F}}(\vec{\link{l}})$ are covered by $K$, leading to another mixed solution.
We denote this droppable link set of $\vec{F}$ when adding $K$ by
\begin{equation*}
   \Drop_{\vec{F}}(K) \coloneqq \Bigl\{ \vec{\link{l}}\in \vec{F}: \text{ for all }C\in \Rscr_{\vec{F}}(\vec{\link{l}})\text{ we have } \delta_K(C) \ne \emptyset \Bigr\}.
\end{equation*}

To improve a directed WRAP solution $\vec{F}$ through the addition of a set of undirected links $K\subseteq L$ and subsequent removal of $\Drop_{\vec{F}}(K)$, we aim at finding a cheap link set $K$ for which $\Drop_{\vec{F}}(K)$ is expensive.
In what follows, we identify a good class $\mathfrak{K}\subseteq 2^L$ of components/link sets, over which we can find such a cheap link set $K$ with expensive drop if it exists.

Note that the drop is always with respect to a directed non-shortenable solution $\vec{F}$.
This corresponds to a first step in improving a fully directed WRAP solution.
However, as we see later, we can also use it to improve the directed part of mixed solutions, even if the directed part is not a full directed solution by itself.
Nevertheless, in what follows it is helpful to think about improving a directed WRAP solution instead of a mixed one.
We recall that \cref{lem:goodNonShortenableSol} guarantees that we can start with a $2$-approximate non-shortenable directed WRAP solution.
Hence, a small constant-factor improvement of such a solution already leads to a better-than-$2$ algorithm for WCAP.

\subsection{Understanding when links can be dropped}\label{sec:ovWhenToDropLinks}

To obtain a suitable link class for components, we rely on a better understanding of what properties an undirected link set $K\subseteq L$ needs to fulfill for a link $\vec{\link{l}}\in \vec{F}$ to be part of $\Drop_K(\vec{F})$.
The key structure in this context is the link intersection graph.
It is defined as follows, where two links $\link{l},\link{f}\in L$ are called \emph{intersecting} if either their endpoints interleave on the ring---i.e., when drawing both links as straight lines on a planar circular embedding of the ring, they are crossing---or they share an endpoint.
\begin{definition}\label{def:linkIntersectionGraph}
Let $(G,L,c)$ be a WRAP instance.
The \emph{link intersection graph} $H$ is the graph with vertex set $L$ and edge set
\[
\Bigl\{ \{\link{l},\link{f}\}\in
\begin{psmallmatrix}
L\\
2
\end{psmallmatrix}
 : \link{l} \text{ and } \link{f}\text{ intersect}\Bigr\}.
\]
For a link set $K\subseteq L$, the \emph{link intersection graph of $K$} is the subgraph $H[K]$ of the link intersection graph $H$ induced by $K$.
\end{definition}
See \cref{fig:ovlinkIntersectionGraph} for an example.
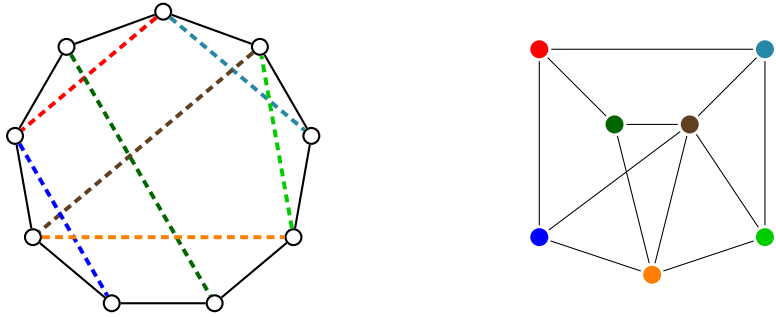
\begin{figure}[!ht]
\begin{center}
\begin{tikzpicture}[scale=1,
ns/.style={thick,draw=black,fill=white,circle,minimum size=6,inner sep=2pt},
LIGns/.style={thick,draw,fill,circle,minimum size=6,inner sep=2pt, outer sep=1.5pt},
es/.style={thick},
lks/.style={line width=1.5pt, densely dashed},
dlks/.style={lks, -latex},
ts/.style={every node/.append style={font=\scriptsize}}
]

\def\rad{2}
\def\num{9}

\begin{scope}[every node/.style={ns}]
\foreach \i in {1,...,\num} {
  \pgfmathsetmacro\r{90+(\i-1)*360/\num}
  \node (\i) at (\r:\rad) {};
}
\end{scope}

\begin{scope}[es]
\foreach \i in {1,...,\num} {
\pgfmathtruncatemacro\j{1+mod(\i,\num)}
\draw (\i) -- (\j);
}
\end{scope}

\begin{scope}[lks]
\draw[red] (1) -- (3);
\draw[cyan!60!black] (1) -- (8);
\draw[brown!50!black] (9) -- (4);
\draw[green!40!black] (2) -- (6);
\draw[green!80!black](7) -- (9);
\draw[blue] (3) -- (5);
\draw[orange] (4) -- (7);
\end{scope}

\begin{scope}[shift={(5,0)}]
\begin{scope}[every node/.style={LIGns}]
\node[red] (v) at (0,1.5) {};
\node[cyan!60!black] (c) at (3,1.5) {};
\node[brown!50!black] (br) at (2,0.5) {};
\node[green!40!black] (g) at (1,0.5) {};
\node[green!80!black] (r) at (3,-1) {};
\node[blue] (bl) at (0,-1) {};
\node[orange] (o) at (1.5,-1.5) {};
\end{scope}

\begin{scope}

\end{scope}[es]
\draw (v) -- (g);
\draw (v) -- (bl);
\draw (v) -- (c);
\draw (g) -- (br);
\draw (g) -- (o);
\draw (bl) -- (br);
\draw (bl) -- (o);
\draw (c) -- (br);
\draw (c) -- (r);
\draw (r) -- (o);
\draw (r) -- (br);
\draw (o) -- (br);
\end{scope}

\end{tikzpicture}
 \end{center}
\caption{Example of a WRAP instance (left) and its link intersection graph (right).
Every link in the left picture corresponds to a vertex of the link intersection graph, which is shown in the same color in the right picture.}
\label{fig:ovlinkIntersectionGraph}
\end{figure}
Graphs arising as intersection graphs of segments of a circle, like link intersection graphs, are known as \emph{circle graphs} in Graph Theory, and have been heavily studied under various aspects. (See \cref{sec:drop} for some further information and references.)

Whether a link $(u,v)\in \vec{F}$ is contained in a set $\Drop_{\vec{F}}(K)$ for $K\subseteq L$ can be characterized by how the intersection graph connects vertices of the ring, according to the following notion.
\begin{definition}\label{def:ConnectedInLinkIntersectionGraph}
Let $(G=(V,E),L,c)$ be a WRAP instance and $K\subseteq L$. A vertex $u\in V$ is \emph{connected to} $v\in V$ \emph{in the link intersection graph $H[K]$ of $K$} if there is a path in $H[K]$ from a link incident to $u$ to a link incident to $v$.
\end{definition}

We will show that a directed link $(u,v)\in \vec{F}$ is contained in $\Drop_{\vec{F}}(K)$ if and only if $v$ is connected, in the link intersection graph of $K$, to a vertex that is not a descendant of $v$ in the arborescence $(V,\vec{F})$.
We call such vertices \emph{$v$-good vertices}, and all other vertices of $V$, i.e., $v$'s descendants in $(V,\vec{F})$, including $v$ itself, are called \emph{$v$-bad}. (See \cref{fig:characterizeDrop}, which also illustrates \cref{lem:ov_characterize_drop} stated below.)
\begin{figure}[!ht]
\begin{center}
\begin{tikzpicture}[scale=1.0,
ns/.style={thick,draw=black,fill=white,circle,minimum size=6,inner sep=2pt},
es/.style={thick},
lks/.style={line width=1pt, blue, densely dashed},
dlks/.style={lks, -latex},
ts/.style={every node/.append style={font=\scriptsize}}
]

\def\rad{2}
\def\num{13}

\coordinate (c1) at (0,0);

\begin{scope}[every node/.style={ns}]
\foreach \i in {1,...,\num} {
  \pgfmathsetmacro\r{90+(\i-1)*360/\num}
  \ifthenelse{\i < 5 \OR \i > 10}
  { 
      \node[fill=blue!70!green, fill opacity=0.6] (\i) at (\r:\rad) {};
  }{
      \node[fill=orange, fill opacity=0.6] (\i) at (\r:\rad) {};
  }
}
\end{scope}

\begin{scope}
\node at (1)[above=2pt] {$r$};
\path (\num) to node[above=-2pt] {$e_r$} (1);
\node at (8) [below=2pt] {$v$};
\end{scope}

\begin{scope}[es]
\foreach \i in {1,...,\num} {
\pgfmathtruncatemacro\j{1+mod(\i,\num)}
\draw (\i) -- (\j);
}
\end{scope}

\begin{scope}[dlks]
\draw (1) to[bend left] (2);
\draw (2) to[bend left=45] (4);
\draw (4) to[bend right] (3);
\draw (4) to[bend right] (12);
\draw (12) to[bend left] (13);
\draw (12) to[bend right] (11);
\draw[red] (11) to[bend right, out=-10] node[above=2pt] () {$\vec{\link{l}}$} (8);
\draw (8) to[bend right=40] (6);
\draw (6) to[bend right] (5);
\draw (6) to[bend left] (7);
\draw (8) to[bend left] (9);
\draw (9) to[bend left] (10);
\end{scope}

\begin{scope}[shift={(7,0)}]

\coordinate (c2) at (0,0);

\def\rad{2}
\def\num{13}

\begin{scope}[every node/.style={ns}]
\foreach \i in {1,...,\num} {
  \pgfmathsetmacro\r{90+(\i-1)*360/\num}
  \ifthenelse{\i < 5 \OR \i > 10}
  { 
      \node[fill=blue!70!green, fill opacity=0.6] (\i) at (\r:\rad) {};
  }{
      \node[fill=orange, fill opacity=0.6] (\i) at (\r:\rad) {};
  }
}
\end{scope}

\begin{scope}
\node at (1)[above=2pt] {$r$};
\path (\num) to node[above=-2pt] {$e_r$} (1);
\node at (8) [below=2pt] {$v$};
\node[blue!70!green] at ($(c1)!0.5!(c2) + (0,0.2)$) {$v$-good};
\node[orange] at ($(c1)!0.5!(c2) + (0,-0.6)$) {$v$-bad};
\end{scope}

\begin{scope}[es]
\foreach \i in {1,...,\num} {
\pgfmathtruncatemacro\j{1+mod(\i,\num)}
\draw (\i) -- (\j);
}
\end{scope}

\begin{scope}[lks]
\draw (8) to node[above=2pt] () {$\link{l}_0$} (6);
\draw (7) to node[above=2pt] () {$\link{l}_1$} (10);
\draw (10) to node[above=2pt] () {$\link{l}_2$} (3);
\end{scope}

\end{scope}

\end{tikzpicture}
 \end{center}
\caption{
The left picture shows a non-shortenable directed WRAP solution $\vec{F}$ with a link $\vec{\link{l}}$ highlighted in red. 
For the head $v$ of $\vec{\link{l}}$, the $v$-good vertices are colored blue, while the $v$-bad vertices are colored orange.
The right picture shows a link set $\{\link{l}_0, \link{l}_1, \link{l}_2\}$ that induces a path in the link intersection graph. 
This path connects $v$ to a $v$-good vertex.
Thus, by \cref{lem:ov_characterize_drop}, we have $\vec{\link{l}} \in \Drop_{\vec{F}}(\{\link{l}_0, \link{l}_1, \link{l}_2\})$.
}
\label{fig:characterizeDrop}
\end{figure}
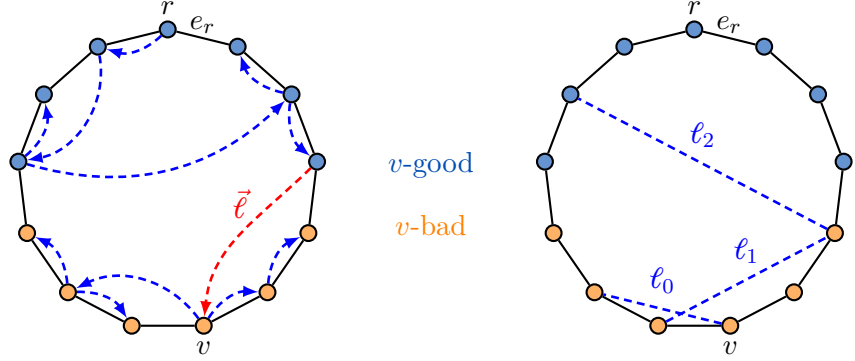
\begin{restatable}[]{lemma}{ovCharDrop}\label{lem:characterize_drop}
\label{lem:ov_characterize_drop}
Let $(G,L,c,r, e_r)$ be a rooted WRAP instance, $\vec{F} \subseteq \shadows(L)$ be a non-shortenable directed solution thereof, and  $K\subseteq L$.
Then $(u,v)\in \vec{F}$ is contained in $\Drop_{\vec{F}}(K)$ if and only if $v$ is connected to a $v$-good vertex in the link intersection graph of $K$.
\end{restatable}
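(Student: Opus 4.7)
The plan is to lean on the earlier formal definition of responsibility $\Rscr_{\vec{F}}$, combined with the structural theorem for non-shortenable solutions. Since $\vec{F}$ is a $G$-planar $r$-arborescence, for each non-root vertex $v$ the set $A_v$ of $v$-bad vertices (i.e., descendants of $v$) forms a contiguous arc of the ring $G$: planarity prevents any subtree from nesting inside another, and the parent edge $(u,v)$ acts as a chord whose two endpoints delimit $A_v$. Consequently, as a consequence of the definition of responsibility, the cuts in $\Rscr_{\vec{F}}((u,v))$ are exactly the arcs $C \in \Cscr_G$ with $v \in C \subseteq A_v$.

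For the ($\Leftarrow$) direction, I would take a chain $\link{l}_0, \link{l}_1, \dots, \link{l}_k \in K$ of consecutively intersecting links with $\link{l}_0$ incident to $v$ and $\link{l}_k$ incident to a $v$-good vertex $w$, and verify that every $C \in \Rscr_{\vec{F}}((u,v))$ is covered by $K$. Since $v \in C \subseteq A_v$ and $w \notin A_v$, there is a smallest index $i$ such that $\link{l}_i$ has an endpoint outside $C$. If $i = 0$, then $\link{l}_0$ already has one endpoint at $v \in C$ and one endpoint outside $C$, so $\link{l}_0 \in \delta_K(C)$. Otherwise $\link{l}_{i-1}$ lies entirely inside the arc $C$; combining this with the fact that $\link{l}_{i-1}$ and $\link{l}_i$ intersect (either sharing an endpoint or interleaving on the ring), a short case analysis rules out $\link{l}_i$ having both endpoints outside $C$, so $\link{l}_i$ crosses $\partial C$ and lies in $\delta_K(C)$.

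For the ($\Rightarrow$) direction, I would prove the contrapositive. Suppose no $v$-good vertex is connected to $v$ in $H[K]$, let $W \subseteq V$ be the set of vertices connected to $v$ in $H[K]$, and let $K_W \subseteq K$ be the connected component of $H[K]$ containing the links incident to $v$. Then $v \in W \subseteq A_v$. A key observation is that any link $\link{l} \in K$ incident to a vertex of $W$ has both endpoints in $W$: such an $\link{l}$ shares an endpoint with some link of $K_W$, hence itself lies in $K_W$. Choose $C$ to be the smallest subarc of $A_v$ containing $W$; then $v \in C \subseteq A_v$, so $C \in \Rscr_{\vec{F}}((u,v))$. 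It then suffices to show $\delta_K(C) = \emptyset$, which contradicts $(u,v) \in \Drop_{\vec{F}}(K)$.

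To prove $\delta_K(C) = \emptyset$, suppose for contradiction that some $\link{l} \in K$ has endpoints $x \in C$ and $y \notin C$. The observation forces $x \notin W$, so $x$ lies in $C \setminus W$ strictly between two extreme endpoints of $W$ on the arc $A_v$. If some $\link{l}' \in K_W$ has its two endpoints on opposite sides of $x$ within $C$, then $\link{l}'$ and $\link{l}$ interleave on the ring, so $\link{l} \in K_W$ and $x \in W$, a contradiction. Otherwise, every link of $K_W$ lies entirely on one side of $x$; since links entirely on different sides of $x$ neither share endpoints nor interleave, $H[K_W]$ splits into disconnected pieces, contradicting that $K_W$ contains links incident to the two extreme endpoints of $W$, which lie on opposite sides of $x$. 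The main obstacle is precisely this planarity/ring-geometry argument, which requires the interleaving characterization of intersection in circle-graph-type settings together with the endpoint-containment observation for $K_W$.
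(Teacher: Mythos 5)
Your approach is essentially the same as the paper's: you characterize $\Rscr_{\vec{F}}((u,v))$ as the arcs $C\in\Cscr_G$ with $v\in C$ contained in the $v$-bad interval (this is \cref{lem:characterize_responsible}), and then argue each direction by relating chains in the link intersection graph to links crossing such arcs. Both directions are sound where they apply, and your $(\Rightarrow)$-dichotomy (either some link of $K_W$ straddles $x$, in which case $\link{l}$ interleaves with it, or $H[K_W]$ would be disconnected) is a valid alternative to the paper's ``first-link'' scan along a connecting path.

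There is, however, a genuine gap in the $(\Rightarrow)$-direction: you tacitly assume $\delta_K(v)\neq\emptyset$. When no link of $K$ is incident to $v$, the set $W$ of vertices connected to $v$ in $H[K]$ is empty, the component $K_W$ is undefined, and the ``smallest subarc of the $v$-bad interval containing $W$'' would be the empty set, which is not an element of $\Cscr_G$; your choice of $C$ therefore breaks down precisely in this case. The paper handles it separately by observing that $\{v\}\in\Cscr_G$ is a $2$-cut for which $(u,v)$ is responsible and which $K$ cannot cover, so $(u,v)\notin\Drop_{\vec{F}}(K)$; you should add this line. A smaller inaccuracy: your justification that the $v$-bad set is an arc --- that the parent link $(u,v)$ ``acts as a chord whose two endpoints delimit'' it --- is false ($u$ is not $v$-bad, and $v$ is in general an interior vertex of the interval, not a boundary point). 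The interval claim is true, but it is \cref{lem:furtherNonshortProps} and requires the planarity argument given there; you should cite that lemma rather than re-derive it.
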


The structure of non-shortenable directed solutions, as guaranteed by \cref{thm:ov_StructureNonShortenable}, allows for showing that sets of $v$-bad vertices are consecutive on the ring, i.e., they form an interval.

From \cref{lem:ov_characterize_drop}, we are able to derive the following central characterization of which directed links can be dropped after adding an undirected link set $S$ that is connected in the link intersection graph, i.e., where the subgraph $H[S]$ of $H$ induced by $S$ is connected.
The characterization uses the notion of \emph{least common ancestor $\lca(U)$} of a set $U\subseteq V$ with respect to a non-shortenable directed WRAP solution $\vec{F}$, which is the vertex farthest away from the root $r$ in the arborescence $(V, \vec{F})$ that is a common ancestor of all vertices in $U$.
Moreover, for a link set $S$, we denote by $V(S)$ the set of vertices that are incident to a link from $S$.
\begin{restatable}{lemma}{AllExceptOneDroppable}\label{lem:ov_dropOfConnectedS}
Let $(G,L,c,r,e_r)$ be a rooted WRAP instance with non-shortenable solution $\vec{F} \subseteq \shadows(L)$, and let $S\subseteq L$ be a link set such that $H[S]$ is connected, where $H$ is the link intersection graph.
Then
\begin{equation*}
\Drop_{\vec{F}}(S) = \left(\bigcup_{v\in V(S)} \delta^-_{\vec{F}}(v)\right)\setminus \delta^-_{\vec{F}}(\lca(V(S))).
\end{equation*}
\end{restatable}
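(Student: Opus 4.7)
The plan is to derive the identity directly from \cref{lem:ov_characterize_drop} by exploiting the connectedness of $H[S]$. By \cref{lem:ov_characterize_drop}, a directed link $(u,v)\in\vec{F}$ lies in $\Drop_{\vec{F}}(S)$ if and only if $v$ is connected in $H[S]$ to some $v$-good vertex. Unpacking \cref{def:ConnectedInLinkIntersectionGraph}, being connected in $H[S]$ to anything forces the vertex to belong to $V(S)$, and, conversely, since $H[S]$ is connected by hypothesis, any two vertices of $V(S)$ are connected in $H[S]$. Hence the membership $(u,v)\in\Drop_{\vec{F}}(S)$ reduces to the purely combinatorial condition that $v\in V(S)$ and that $V(S)$ contains at least one $v$-good vertex, i.e.\ at least one vertex of $V(S)$ that is not a descendant of $v$ in the arborescence $(V,\vec{F})$.

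The next step will be to rewrite the second condition in terms of $\lca(V(S))$. For $v \in V(S)$, I claim that some vertex of $V(S)$ is $v$-good if and only if $v \ne \lca(V(S))$. One direction is immediate from the definition of the least common ancestor: if $v = \lca(V(S))$, then every vertex of $V(S)$ is a descendant of $v$, so no vertex of $V(S)$ is $v$-good. For the converse, suppose every vertex of $V(S)$ is a descendant of $v$; then $v$ is a common ancestor of $V(S)$, which means $v$ is an ancestor of $\lca(V(S))$. On the other hand, $v\in V(S)$ forces $\lca(V(S))$ to be an ancestor of $v$. Combining the two ancestor relations yields $v=\lca(V(S))$, as desired.

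Finally, I will identify the characterization with the claimed right-hand side using the structural fact, guaranteed by \cref{thm:ov_StructureNonShortenable}, that $(V,\vec{F})$ is an $r$-arborescence. In such an arborescence, for every non-root vertex $v'$ the set $\delta^-_{\vec{F}}(v')$ consists of the single edge of $\vec{F}$ entering $v'$, while $\delta^-_{\vec{F}}(r)=\emptyset$. Therefore $(u,v)\in\vec{F}$ belongs to $\bigl(\bigcup_{v'\in V(S)}\delta^-_{\vec{F}}(v')\bigr)\setminus \delta^-_{\vec{F}}(\lca(V(S)))$ precisely when $v\in V(S)$ and $v\ne\lca(V(S))$, matching the condition derived above. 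The edge case $\lca(V(S))=r$ poses no issue: then $\delta^-_{\vec{F}}(\lca(V(S)))=\emptyset$, while any $v\in V(S)$ admitting an incoming edge in $\vec{F}$ is automatically distinct from $r$. There is no real obstacle here; the heavy lifting already sits inside \cref{lem:ov_characterize_drop} and the arborescence structure from \cref{thm:ov_StructureNonShortenable}, and the present statement unwinds cleanly from them.
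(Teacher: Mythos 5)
Your proposal is correct and follows essentially the same route as the paper's proof: invoke \cref{lem:ov_characterize_drop} together with the connectedness of $H[S]$ to reduce membership in $\Drop_{\vec{F}}(S)$ to ``$v\in V(S)$ and $V(S)$ contains a $v$-good vertex,'' and then observe that for $v\in V(S)$ the absence of a $v$-good vertex in $V(S)$ is equivalent to $v=\lca(V(S))$. The paper states the last equivalence more tersely; your explicit mutual-ancestor argument is a valid unpacking of it.
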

\cref{lem:ov_dropOfConnectedS} has important consequences, especially when it comes to finding a good component $K$ to add.
Most importantly, it implies that, after adding a link set $S\subseteq L$ for which $H[S]$ is connected, we can drop each directed link in $\vec{F}$ except for possibly the single link $\vec{\link{l}}$ in $\delta_{\vec{F}}^-(\lca(V(S)))$, if the head of $\vec{\link{l}}$ is an endpoint of a link in $S$.
(Note that, indeed, $\delta^-_{\vec{F}}(\lca(V(S)))$ contains at most one link because $\vec{F}$ is an arborescence due to \cref{thm:ov_StructureNonShortenable}.)
We crucially exploit this property in our dynamic program to find good components, because it allows us to limit the bookkeeping we have to do to combine smaller solutions into larger ones.

Note that for an arbitrary link set $K\subseteq L$ to add to a non-shortenable directed solution, we can apply the statement independently to each connected component of $H[K]$.
Hence, \cref{lem:ov_dropOfConnectedS} reveals another important link between what can be dropped and the link intersection graph, namely that link sets $S\subseteq L$ that are connected in the link intersection graph lead to an easy-to-describe set of droppable links.

\subsection{Thin components and their key properties}\label{sec:ovThinComps}

We now introduce our component class over which we optimize.
The link sets $F\subseteq L$ in this component class are characterized by admitting an inclusion-wise maximal laminar family
\footnote{
A family of sets is called \emph{laminar} if for every two sets $A$, $B$ in the family, we have $A\subseteq B$, $B\subseteq A$, or $A\cap B = \emptyset$.
}$\mathcal{L}\supseteq \mathcal{C}_G$ of $2$-cuts, such that each set in $\mathcal{L}$ is crossed by only few links of $F$. 
We call such link sets \emph{$\alpha$-thin}, as they play an analogous role to $\alpha$-thin link sets in the context of Weighted Tree Augmentation.
\begin{restatable}[$\alpha$-thin]{definition}{alphaThin}
\label{def:alpha-thin}
Let $(G,L,c,r, e_r)$ be a rooted WRAP instance and $\alpha \in \mathbb{Z}_{>0}$.
A set $K\subseteq L$ is \emph{$\alpha$-thin} if there exists an inclusion-wise maximal laminar family $\Lscr \subseteq \Cscr_G$ such that $|\delta_K(C)| \le \alpha$ for all $C\in \Lscr$.
\end{restatable}
See \cref{fig:ov_definition-width} for an example of a $3$-thin link set.
Interestingly, there is a close connection between link sets with bounded treewidth in the link intersection graph and our thinness notion, which is arguably more natural for our purposes.
In particular, any constant-thin link set has bounded treewidth in the link intersection graph and vice versa. (See \cref{sec:triangulations} for more details.)
\begin{figure}[!ht]
\begin{center}
\begin{tikzpicture}[scale=1,
ns/.style={thick,draw=black,fill=white,circle,minimum size=6,inner sep=2pt},
es/.style={thick},
lks/.style={line width=1pt, blue, densely dashed},
dlks/.style={lks, -latex},
ts/.style={every node/.append style={font=\scriptsize}}
]

\def\num{11}
\def\hd{1.2}

\begin{scope}[shift={(0,0)}]

\begin{scope}[every node/.style={ns}]
\foreach \i in {1,...,\num} {
  \node (a\i) at (\i*\hd,0) {};
}
\end{scope}

\begin{scope}
\node at ($(a1)+(-0.3,0)$)[below=3pt] {$r=v_1$};
\pgfmathtruncatemacro\n{\num-1}
\foreach \i in {2,...,\n} {
\node at (a\i)[below=3pt] {$v_{\i}$};
}
\node at ($(a\num)+(0.1,0)$)[below=3pt] {$v_{\num}$};
\end{scope}

\begin{scope}[es]
\foreach \i in {2,...,\num} {
\pgfmathtruncatemacro\j{\i-1}
\draw (a\j) -- (a\i);
}
\draw (a1) to[out=-50,in=230,looseness=0.35] node[below,pos=0.8] {$e_r$} (a\num);
\end{scope}

\def\ih{0.15}
\def\vsep{0.25}
\begin{scope}[every path/.style={draw, fill, fill opacity=0.1,very thick}]
\newcommand\Interval[3]{
  \pgfmathsetmacro{\xa}{(#1)*\hd - \hd / 3}
  \pgfmathsetmacro{\xb}{(#2)*\hd + \hd / 3}
  \pgfmathsetmacro{\ya}{#3 - \ih / 2 }
  \pgfmathsetmacro{\yb}{#3}
  \pgfmathsetmacro{\yc}{#3 + \ih / 2 }
  
  \draw (\xa,\yb) -- (\xb,\yb);
  \draw (\xa,\ya) -- (\xa, \yc);
  \draw (\xb,\ya) -- (\xb, \yc);
}
\begin{scope}[gray!90!black]
\Interval{2}{2}{7*\vsep}
\Interval{3}{3}{5*\vsep}
\Interval{4}{4}{3*\vsep}
\Interval{5}{5}{2*\vsep}
\Interval{6}{6}{\vsep}
\Interval{7}{7}{\vsep}
\Interval{8}{8}{4*\vsep}
\Interval{9}{9}{5*\vsep}
\Interval{10}{10}{4*\vsep}
\Interval{11}{11}{4*\vsep}
\Interval{6}{7}{2*\vsep}
\Interval{10}{11}{5*\vsep}
\Interval{5}{7}{3*\vsep}
\Interval{4}{8}{5*\vsep}
\Interval{3}{8}{6*\vsep}
\Interval{9}{11}{6*\vsep}
\Interval{3}{11}{7*\vsep}
\Interval{2}{11}{8*\vsep}
\end{scope}
\begin{scope}[blue]
\Interval{4}{7}{4*\vsep}
\end{scope}
\end{scope}

\end{scope}
\begin{scope}[shift={(0,-3)}]

\begin{scope}[every node/.style={ns}]
\foreach \i in {1,...,\num} {
  \node (a\i) at (\i*\hd,0) {};
}
\end{scope}

\begin{scope}
\node at ($(a1)+(-0.3,0)$)[below=3pt] {$r=v_1$};
\pgfmathtruncatemacro\n{\num-1}
\foreach \i in {2,...,\n} {
\node at (a\i)[below=3pt] {$v_{\i}$};
}
\node at ($(a\num)+(0.1,0)$)[below=3pt] {$v_{\num}$};
\end{scope}

\def\cw{0.55}
\begin{scope}[every path/.style={draw, fill, fill opacity=0.1,thick,blue}]
\newcommand\Interval[2]{
  \pgfmathsetmacro{\xa}{(#1)*\hd}
  \pgfmathsetmacro{\xb}{(#2)*\hd}
\draw (\xa,-\cw) to[bend left, in=90, out=90, looseness=1.7] (\xa,\cw)
  -- (\xb,\cw)
 to[bend left, in=90, out=90, looseness=1.7] (\xb,-\cw)
 -- (\xa,-\cw)
;
}
\Interval{4}{7}
\end{scope}

\begin{scope}[es]
\foreach \i in {2,...,\num} {
\pgfmathtruncatemacro\j{\i-1}
\draw (a\j) -- (a\i);
}
\draw (a1) to[out=-50,in=230,looseness=0.35] node[below,pos=0.8] {$e_r$} (a\num);
\end{scope}

\begin{scope}[lks, darkred]
\draw (a1) to[bend left=40] (a5);
\draw (a4) to[bend left=40] (a8);
\draw (a6) to[bend  left=40] (a10);
\draw (a9) to[bend  left=40] (a11);
\draw (a2) to[bend  left=30] (a9);
\end{scope}

\end{scope}

\end{tikzpicture}
 \end{center}
\caption{
The upper part of the figure shows a cycle $G$ with root $r$ and a maximal laminar subfamily $\Lscr$ of $\Cscr_G$.
The lower part of the figure shows the same cycle $G$ together with a $3$-thin link set $K$ (shown in red).
The fact that $K$ is $3$-thin is certified by the laminar family $\Lscr$ from the upper part of the figure because we have $|\delta_K(C)|\le 3$ for all  $C\in \Lscr$.
As an example, a cut $C\in \Lscr$ with $|\delta_K(C)|= 3$ is highlighted in blue in both the upper and the lower part of the figure.
}
\label{fig:ov_definition-width}
\end{figure}

This notion of thinness, together with the rich properties previously derived, allow us to obtain two key theorems, one about finding a good $\alpha$-thin component (an optimization theorem) and one about decomposing any WRAP solutions into $\alpha$-thin parts that can be used to show existence of a good $\alpha$-thin component (a decomposition theorem).
These statements are among our key technical contributions.
We start with the optimization theorem, where we use the convention $\frac{0}{0}\coloneqq 1$ and $\frac{x}{0}\coloneqq \infty$ for $x > 0$.
\begin{restatable}[optimization theorem]{theorem}{ovFindCompForRelGreedy}\label{thm:ov_find-component-for-relative-greedy}
Given a rooted WRAP instance with a non-shortenable directed solution $\vec{F}_0 \subseteq \shadows(L)$, a set $\vec{F} \subseteq \vec{F}_0$, and a constant $\alpha$, we can efficiently compute among all $\alpha$-thin components a component $K$ minimizing
\[
\frac{c(K)}{c(\Drop_{\vec{F}_0}(K) \cap \vec{F})}.
\]
Moreover, if $\vec{F}\neq \emptyset$, then the returned minimizer $K$ satisfies $\Drop_{\vec{F}_0}(K)\cap \vec{F}\neq \emptyset$.
\end{restatable}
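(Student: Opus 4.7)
The plan is to combine parametric search with a dynamic program over intervals of the cycle. Since the objective is a ratio of two nonnegative link-cost sums, I would first apply Megiddo-style parametric search~\cite{megiddo_1979_combinatorial}: binary-search over $\lambda \ge 0$ and, for each trial value, minimize the linear functional
\[
f_\lambda(K) \;\coloneqq\; c(K) \;-\; \lambda \cdot c\bigl(\Drop_{\vec{F}_0}(K) \cap \vec{F}\bigr)
\]
over all $\alpha$-thin $K \subseteq L$. The sign of $\min_K f_\lambda(K)$ determines whether $\lambda$ is above or below the optimum ratio, so polynomially many invocations of a linear oracle suffice; it therefore remains to design a polynomial-time linear oracle for each fixed $\lambda$.

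To design this oracle, I would linearize the cycle using $e_r$, writing the vertices as $v_1 = r, v_2, \ldots, v_n$. Every $2$-cut in $\mathcal{C}_G$ is then a contiguous interval $[v_i, v_j]$ with $2 \le i \le j \le n$, and any maximal laminar subfamily $\mathcal{L} \subseteq \mathcal{C}_G$ is a binary decomposition (a ``triangulation'') of $[v_2, v_n]$ into nested sub-intervals. A standard interval DP implicitly iterates over all such maximal laminar families: its state for an interval $I$ consists of a set $B_I \subseteq L$ of at most $\alpha$ links of the candidate $K$ crossing $\partial I$, together with a partition $\pi_I$ of $B_I$ recording which boundary links currently lie in the same connected component of $H[K_{\text{inside}} \cup B_I]$. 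Since $\alpha$ is a constant, the number of boundary states per interval is polynomial, and so is the total state space. The DP value stores the minimum of $c(K_{\text{inside}})$ minus $\lambda$ times the drop credits already locked-in from $H[K]$-components lying entirely inside $I$; by \cref{lem:ov_dropOfConnectedS}, each completed component $C$ contributes $\sum_{v \in V(C)} w(v) - w(\lca(V(C)))$, where $w(v)$ denotes the cost of the unique link of $\vec{F}_0$ entering $v$ if that link lies in $\vec{F}$, and $0$ otherwise.

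Merging two adjacent sub-intervals into their union enumerates pairs of compatible child states, updates the partition via the connectivity induced by boundary-link intersections at the shared cut, enforces the $\alpha$-thinness bound at the new boundary, and locks in drop credits for each partition class that vanishes from the new boundary. The LCA correction at closure can be evaluated from a constant-sized summary per partition class because \cref{thm:ov_StructureNonShortenable} guarantees that $(V, \vec{F}_0)$ is a $G$-planar arborescence, so descendant sets form contiguous intervals along the cycle and the LCA of any $V(C)$ is determined by the outermost positions of $V(C)$. The hardest part will be the bookkeeping at merges: each connected component of $H[K]$ must be charged its drop credit \emph{exactly once}---namely at the merge step where it first ceases to touch the active boundary---so that the DP value at the outermost interval equals $f_\lambda(K)$.

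For the moreover part, observe that $K = \emptyset$ is $\alpha$-thin with ratio $0/0 = 1$, so the overall optimum ratio is at most $1 < \infty$. If $\vec{F} \ne \emptyset$, one can directly exhibit an $\alpha$-thin $K$ with $\Drop_{\vec{F}_0}(K) \cap \vec{F} \ne \emptyset$: pick $(u, v) \in \vec{F}$, use a link of $L$ covering the descendant-cut $C_v$ of $v$, and, if needed, extend by a short chain of pairwise intersecting links until the head $v$ is connected to a $v$-good vertex in $H$ (invoking \cref{lem:ov_characterize_drop}). Running the DP with a tie-breaking rule that prefers $K$ with $\Drop_{\vec{F}_0}(K) \cap \vec{F} \ne \emptyset$---implementable by an additional bit in the DP state tracking whether any drop credit has been locked in---then returns a minimizer satisfying the claimed property.
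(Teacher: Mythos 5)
Your architecture matches the paper's: reduce ratio minimization to a linearized objective via binary/parametric search, then solve the linear problem with an interval DP whose states record the boundary links crossing the current cut, the connectivity partition they induce, and a small per-class summary. The paper's own proof does essentially this, with patterns $(C,B,\mathcal{T},\phi,\psi)$ playing the role of your states and a merge rule combining compatible patterns. There is, however, a concrete gap in your credit accounting. You write that each closed component $C$ contributes $\sum_{v\in V(C)} w(v) - w(\lca(V(C)))$, but by \cref{lem:ov_dropOfConnectedS} the LCA term should be subtracted \emph{only} when $\lca(V(C))\in V(C)$: if $\lca(V(C))\notin V(C)$, the directed link entering the LCA was never part of $\bigcup_{v\in V(C)}\delta^-_{\vec F_0}(v)$ and subtracting it undercharges the drop. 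To evaluate the correct contribution at closure it is not enough to know the LCA itself, which (as you correctly observe) is determined by the outermost positions of $V(C)$ via \cref{lem:furtherNonshortProps}; you additionally need to track whether the LCA is a vertex of the partial component and to maintain this bit correctly through merges, where LCAs of distinct classes can coalesce or jump. This is exactly the $\psi$ component of the paper's patterns, and its maintenance across merges is the substance of \cref{def:merge_patterns} and \cref{lem:valueOfMergedSolution}. Without an equivalent field, the ``constant-sized summary'' you propose cannot compute the per-class credit, so the self-identified ``hardest part'' is in fact left open.

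Two smaller remarks. Your deferred-credit scheme (charge a class only when it disappears from the boundary) is a legitimate alternative to the paper's incremental accounting, which instead maintains $\pi(S,C)=\tilde c(\Drop_{\vec F_0}(S)\cap\bigcup_{v\in C}\delta^-_{\vec F_0}(v))-c(S)$ and applies a correction term at each merge; both can be made to work once the $\psi$-information is present. For the moreover part, the witness construction is more elaborate than needed: if $\vec{\link l}\in\vec F$ is a shadow of $\link{l}\in L$, then $K=\{\link{l}\}$ already satisfies $\vec{\link l}\in\Drop_{\vec F_0}(K)$ with ratio at most $1$, and this is what the paper uses; your tie-breaking bit would then be unnecessary because \cref{lem:basicSlackRatioRel} already guarantees that the slack maximizer returned at $\rho>\rho^*$ has nonempty drop.
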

One way to leverage this theorem and illustrate its usefulness is by realizing the following natural relative greedy algorithm.
Given a WRAP instance $(G,L,c,r,e_r)$, we start with a non-shortenable directed WRAP solution $\vec{F}_0 \subseteq \shadows(L)$ with $c(\vec{F}_0)\leq 2 c(\OPT)$, as guaranteed by \cref{lem:goodNonShortenableSol}.
(We recall that $\OPT$ is an optimal solution to the WRAP instance.)
The algorithm maintains a mixed solution $\vec{F}\cup T$, which is $\vec{F}_0$ at the start, and iteratively adds an $\alpha$-thin component $K\subseteq L$ to it and removes $\Drop_{\vec{F}_0}(K)\cap \vec{F}$.
The optimization theorem then states that we can efficiently find an $\alpha$-thin component $K$ (for a constant $\alpha$) that minimizes the ratio between the cost of $K$ and what can be dropped when adding $K$.
Hence, we can efficiently maximize what we gain in terms of cost versus what we pay.

To make sure that such an algorithm will lead to a solution of approximation factor better than~$2$, we rely on the following decomposition theorem.
\begin{restatable}[decomposition theorem]{theorem}{ovDecompositionTheorem}\label{thm:ov_decomposition-theorem}
Let $(G,L,c, r, e_r)$ be a rooted WRAP instance, $S\subseteq L$ be a WRAP solution, and let $\vec{F}_0\subseteq \shadows(L)$ be a non-shortenable directed WRAP solution.
Then, for any $\epsilon >0$, there exists a partition $\mathcal{K}$ of $S$ into $4\lceil\sfrac{1}{\epsilon}\rceil$-thin sets and a set $\vec{R}\subseteq \vec{F}_0$ such that
\begin{enumerate}
\item\label{item:ov_allIsDroppableExceptForR} for every $\vec{\link{l}}\in \vec{F}_0\setminus \vec{R}$, there exists some $K \in \mathcal{K}$ such that $\vec{\link{l}}\in \Drop_{\vec{F}_0}(K)$, and  \label{item:ov_links_outside_R_covered}
\item\label{item:ov_RIsCheap} $c(\vec{R}) \le \epsilon \cdot c(\vec{F}_0)$.
\end{enumerate}
\end{restatable}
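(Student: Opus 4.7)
My plan is to construct the partition $\Kscr$ via a randomized hierarchical scheme, then bound $\mathbb{E}[c(\vec{R})]$ by $\epsilon\cdot c(\vec{F}_0)$, and extract a good deterministic partition by averaging.

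First, I would fix any maximal laminar subfamily $\Lscr\subseteq \Cscr_G$, viewed as a rooted forest of $2$-cuts. For each link $\link{l}\in S$, let $\ell(\link{l})$ denote the depth in this forest of the highest cut of $\Lscr$ that $\link{l}$ crosses. I would then draw a uniform random offset $\sigma\in\{0,1,\ldots,t-1\}$ with $t\coloneqq 4\lceil 1/\epsilon\rceil$ and define, for $i=0,\ldots,t-1$,
\[
K_i \;\coloneqq\; \{\link{l}\in S : (\ell(\link{l})+\sigma) \bmod t = i\}.
\]
This partitions $S$ into $\Kscr=\{K_0,\ldots,K_{t-1}\}$. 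The $t$-thinness of each $K_i$ is witnessed by $\Lscr$ itself: for any $C\in\Lscr$, the links of $K_i$ crossing $C$ all share the same residue of $\ell$ modulo $t$, and a counting argument at each such residue bounds $|\delta_{K_i}(C)|$ by $t$ (with any small constant overhead absorbed into the factor $4$).

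Next, I would bound $\vec{R}$. By \cref{lem:characterize_drop}, a directed link $\vec{\link{l}}=(u,v)\in \vec{F}_0$ lies in $\vec{R}$ iff for every $i$ the vertex $v$ is not connected to any $v$-good vertex in $H[K_i]$. Because $S$ is a WRAP solution and the set of descendants of $v$ in $\vec{F}_0$ forms a $2$-cut (using the interval structure of $v$-bad vertices, which follows from \cref{thm:ov_StructureNonShortenable}), there exists some link $\link{l}_v\in S$ going from a $v$-bad vertex to a $v$-good vertex. For every shift $\sigma$ except those placing $\link{l}_v$ at one specific ``bad'' level modulo $t$, the link $\link{l}_v$ lies in a class $K_i$ that connects $v$ to a $v$-good vertex, and hence drops $\vec{\link{l}}$. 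So each directed link fails to be dropped under at most a $1/t\leq \epsilon/4$ fraction of shifts, yielding $\mathbb{E}_\sigma[c(\vec{R})]\leq (\epsilon/4)\cdot c(\vec{F}_0)\leq \epsilon\cdot c(\vec{F}_0)$, from which a good $\sigma$ is extracted by derandomization or by exhaustive search over the $t$ shifts.

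The hard part will be the last step: guaranteeing that some $H[K_i]$ contains a full path from $v$ to a $v$-good vertex, and not merely the single link $\link{l}_v$, which could in principle be isolated inside $K_i$. The delicate point will be choosing the notion of ``level'' $\ell$ and the laminar family $\Lscr$ so that the $G$-planarity and arborescence properties from \cref{thm:ov_StructureNonShortenable}\ref{item:Gplanar} and \ref{item:noTwoSameDir} cooperate with the random shift: concretely, I expect to argue that whenever $\link{l}_v\in K_i$, the rest of $K_i$ near the endpoints of $\link{l}_v$ already furnishes the required intersection-graph connection, so that the ``per-link failure probability'' really is at most $1/t$. Showing this cleanly, and handling the residual directed links with missing connections by amortization against the $\epsilon\cdot c(\vec{F}_0)$ budget, is the heart of the argument.
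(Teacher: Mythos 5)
Your proposal does not go through, and the gaps are structural, not just technical.

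\textbf{Gap 1: the thinness bound does not hold.}
You define $\ell(\link{l})$ as the depth of the highest cut of $\Lscr$ that $\link{l}$ crosses, and assign $\link{l}$ to $K_i$ according to $\ell(\link{l})\bmod t$. You then assert that for any $C\in\Lscr$ one has $|\delta_{K_i}(C)|\le t$ ``by a counting argument at each residue.'' There is no such counting argument. For a fixed residue class and a fixed cut $C$, there can be arbitrarily many links $\link{l}$ with the \emph{same} value of $\ell(\link{l})$ that all cross $C$: take, e.g., a fan of links from distinct vertices inside $C$ to one vertex outside $C$, all having the same highest crossed cut. All of them land in the same $K_i$, so $|\delta_{K_i}(C)|$ is unbounded. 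The reason a laminar family can certify $\alpha$-thinness in the paper is not a modular colouring of $\Lscr$-levels; it is \cref{lem:festoon_light} (each festoon contributes at most $4$ to $|\delta_\cdot(C)|$ for every $C$) combined with the fact that, after deleting a cheap set of witness arcs, only few festoons of each component straddle any single cut (\cref{lem:few_colors_imply_thinness}). That step has no analogue in a level-colouring of individual links.

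\textbf{Gap 2: the drop condition needs a full path, and the random shift cannot supply it.}
You correctly flag this yourself, but it is fatal rather than a ``delicate point.'' By \cref{lem:ov_characterize_drop}, $\vec{\link{l}}=(u,v)\in\Drop_{\vec{F}_0}(K_i)$ requires that $v$ be connected, through a path in $H[K_i]$, to a $v$-good vertex. Placing a single boundary link $\link{l}_v$ of the $v$-bad interval into $K_i$ gives nothing: $\link{l}_v$ may not be incident to $v$, and even if it is, the rest of the needed path may use links at very different $\Lscr$-levels. In the paper's analysis this path is precisely the chain $X_1\prec\cdots\prec X_p$ of festoon intervals from \cref{lem:properties_dependency_paths}, and its length is unbounded; the whole point of the dependency-graph branching and its modular labelling is to cut such paths into pieces of bounded ``label-diameter'' while losing only an $\epsilon$-fraction of $\vec{F}_0$ in cost. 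A single uniform shift over $t$ residues cannot keep an unboundedly long chain of links, whose levels strictly increase, inside one colour class with probability $1-1/t$; the failure probability of a per-link shift is governed by the chain length, not by $1/t$.

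\textbf{Why the paper's route is different in kind.}
The paper first groups $S$ into \emph{festoons} (whose intervals are laminar, \cref{lem:festoon-intervals_laminar}) so that the minimal witness for dropping $(u,v)$ becomes a clean chain of festoons ordered by interval inclusion (\cref{lem:properties_dependency_paths}); it then shows these chains glue into a \emph{branching} (\cref{lem:dependency_graph_is_branching}), so that a label shift modulo $q$ can be applied \emph{along root-to-leaf paths of the branching}, not along $\Lscr$-levels of individual links. Thinness of the resulting components is then controlled by \cref{lem:few_colors_imply_thinness}, which bounds how many festoons of a component can straddle any fixed cut. The modular-shift idea is present in the paper, but it operates on the branching, and the laminarity of festoon intervals plus tangledness (\cref{lem:ancestry_relation}) are exactly what make the shift compatible with both the drop condition and the thinness certificate. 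Your proposal skips the aggregation into festoons and the branching, which is where all the work lies.
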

To see how the decomposition theorem allows for showing the existence of good $\alpha$-thin components, consider a mixed WRAP solution $\vec{F}\cup T$ with $\vec{F}\subseteq \vec{F}_0$ and apply the decomposition theorem to an optimal WRAP solution $S=\OPT\subseteq L$.
The partition $\mathcal{K}$ of $\OPT$, guaranteed by \cref{thm:ov_decomposition-theorem}, leads to the following relation:
\begin{equation}\label{eq:ov_existence_good_comp}
\begin{aligned}
\sum_{K\in \mathcal{K}} c(\Drop_{\vec{F}_0}(K)\cap \vec{F}) &\geq c(\vec{F}\setminus \vec{R}) \geq c(\vec{F}) - \varepsilon\cdot c(\vec{F}_0) \geq c(\vec{F}) - 2\varepsilon \cdot c(\OPT)\\
&= \left(\frac{c(\vec{F})}{c(\OPT)} - 2 \epsilon\right)\cdot c(\OPT)
= \left(\frac{c(\vec{F})}{c(\OPT)} - 2 \epsilon\right)\cdot \sum_{K\in \mathcal{K}} c(K),
\end{aligned}
\end{equation}
where the first inequality follows from \cref{thm:ov_decomposition-theorem}~\ref{item:ov_allIsDroppableExceptForR}, the second one from point~\ref{item:ov_RIsCheap}, the third one because we start with a $2$-approximation, i.e., $c(\vec{F}_0)\leq 2c(\OPT)$, and the final equality is a consequence of $\mathcal{K}$ being a partition of $\OPT$.

Hence, if $c(\vec{F}) > (1+2\varepsilon) c(\OPT)$, then~\eqref{eq:ov_existence_good_comp} implies that there is a component $K\in \mathcal{K}$ with $c(K)<c(\Drop_{\vec{F}_0}\cap \vec{F})$, i.e., a $4 \lceil \sfrac{1}{\varepsilon}\rceil$-thin component $K$ that costs less than what we can drop when adding it to the mixed solution.
Finally, \cref{thm:ov_find-component-for-relative-greedy} guarantees that we can find such a component efficiently.
More generally, if $c(\vec{F})$ is significantly more expensive than $c(\OPT)$, then~\eqref{eq:ov_existence_good_comp} implies by an averaging argument that there is a thin component of much lower cost than what can be dropped.
We formalize this reasoning in \cref{sec:thin-components} to conclude a first better-than-$2$ approximation for WCAP with approximation factor $1+\ln 2 + \epsilon < 1.7$, which we later improve in \cref{sec:local-search} through a local search approach to a $(1.5+\epsilon)$-approximation.

\subsection{Brief comments on proving the optimization and decomposition theorem}\label{sec:commentsOptThmDecThm}

We finish the overview section with some brief discussion on proving the optimization and decomposition theorem.
As mention, we only provide brief comments here to highlight some important aspects and refer for further information to \cref{sec:dp} and \cref{sec:decomposition-thm}, respectively.

\subsubsection*{Comments on optimization theorem}

Our algorithm that implies the optimization theorem is a dynamic program (DP).
Despite the fact that constant-thin link sets have bounded treewidth, as mention earlier, this does not imply that we can employ a standard DP approach common for bounded treewidth graphs.
The reason is that we do not optimize a quantity over a fixed constant-thin link set, but need to find a best constant-thin link set in a bigger graph (the link intersection graph), which is not constant-thin in general.
This is a fundamentally different task.

Our DP constructs an $\alpha$-thin component $K$ minimizing $\sfrac{c(K)}{c(\Drop_{\vec{F}_0}(K)\cap \vec{F})}$ bottom-up with respect to the unknown maximal laminar family $\mathcal{L}$ over $V\setminus \{r\}$ that certifies $\alpha$-thinness of $K$.
In particular, for a $2$-cut $C\in \mathcal{C}_G$, which is a candidate set for being in $\mathcal{L}$, we only consider link sets $S\subseteq \bigcup_{v\in C}\delta_L(v)$ with $|S\cap \delta_L(C)|\leq \alpha$, because we are interested in $\alpha$-thin sets.
Of course, for a dynamic program to be efficient, we cannot simply save all such sets $S$ but need to save only a small fingerprint of such solutions.
Having $|S\cap \delta_L(C)|\leq \alpha$ implies that only constantly many connected components of $H[S]$ have at least one link in $\delta_L(C)$.
Hence, only these components have any relevance for extending the solution beyond $C$.
This is where \cref{lem:ov_dropOfConnectedS} comes into play.
It implies that, for each connected component $T$ of $H[S]$, there is at most one directed link whose head is an endpoint of a link in $T$ and that is not in $\Drop_{\vec{F}_0}(T)$.
This is the only directed link whose head is an endpoint of a link in $T$ and which may become droppable later, depending on how we extend the solution $S$.
As we show in \cref{sec:dp}, we can formalize and exploit this idea to keep the bookkeeping low and design an efficient dynamic program to prove the optimization theorem.

\subsubsection*{Comments on decomposition theorem}

The proof of our decomposition theorem is inspired by a similar decomposition theorem used in the context of WTAP \cite{traub_2021_better}, but differs in some important ways.
In particular, in WTAP, for each link $\vec{\link{l}}\in \vec{F}_0$ of the simpler starting solution , a minimal subset $S_{\vec{\link{l}}}\subseteq S$ of links of the given solution $S\subseteq L$ with $\vec{\link{l}}\in \Drop_{\vec{F}_0}(S_{\vec{\link{l}}})$ was fixed.\footnote{These subsets $S_{\vec{\link{l}}}$ where chosen with some additional property that is not further relevant for this discussion.}
(The links in the starting solution were called \emph{up-links} and are undirected in WTAP; however, their role in our context is taken by the directed links, which is why we use the notation $\vec{\link{l}}$.)
Then, by appropriately ordering the links in $S_{\vec{\link{l}}}$, which can be thought of as creating a path with vertices $S_{\vec{\link{l}}}$, and taking the union of all those paths for all links $\vec{\link{l}}\in \vec{F}_0$, a branching was obtained with vertices given by $\vec{F}_0$.
The structure of the branching, together with further properties, then allowed for identifying a set $\vec{R}\subseteq \vec{F}_0$ and a partition $\mathcal{K}$ of $S$ into thin components fulfilling properties analogous to the ones claimed in our decomposition theorem, \cref{thm:ov_decomposition-theorem}.

It turns out that the same approach fails in the more general WRAP setting, as it does not lead to a branching, which was key in the WTAP approach.
Even though this may seem unnatural at first sight, we address this issue by not choosing minimal sets $S_{\vec{\link{l}}}$ within $S$ with $\vec{\link{l}}\in \Drop_{\vec{F}_0}(S_{\vec{\link{l}}})$ but by choosing a minimal set of larger building blocks for which $\vec{\link{l}}$ is droppable.
Due to their structure, we call these carefully selected building blocks \emph{festoons}.
More precisely, we first create a partition $\Xscr$ of the given WRAP solution $S$ into such festoons, and then show that a similar approach as for WTAP can be extended to WRAP if we exchange the role of single links by festoons.
Hence, each festoon is a link set that remains together and will be a subset of one part of $\mathcal{K}$.
(See \cref{sec:decomposition-thm} for more details.)
 \section{Reducing to the Weighted Ring Augmentation Problem}\label{sec:wrap}

In this section we prove \cref{lem:reduction-to-WRAP}, i.e., we describe a reduction showing that to approximate WCAP it suffices to approximate the Weighted Ring Augmentation Problem (WRAP), which corresponds to WCAP with the underlying graph $G$ being a ring/cycle. 
This is a well-known fact that has been previously observed in \cite{galvez_2019_cycle}.

In order to prove that WRAP and WCAP are equivalent, we rely on the cactus representation of the minimum cuts of a graph from \cite{dinitz_1976_structure}.
A \emph{cactus} is a connected undirected graph $G=(V,E)$ where every edge $e\in E$ is contained in exactly one cycle in $G$.\footnote
{
In the literature, a cactus is sometimes defined as a graph where every edge is contained in at most one cycle. 
We will only consider $2$-edge-connected cacti, in which case the two definitions coincide.
}
\cite{dinitz_1976_structure} showed that the minimum cuts of an undirected graph $G$ can be represented by a cactus as follows.

\begin{theorem}[\cite{dinitz_1976_structure}]\label{thm:cactus-representation}
For any undirected graph $G=(V,E)$, one can efficiently construct a cactus $\overline{G}=(\overline{V},\overline{E})$ together with a map $\phi: V \to \overline{V}$ such that
\begin{enumerate}
\item If $C\subseteq V$ is a minimum cut in $G$, then there is a 2-cut $\overline{C} \subseteq \overline{V}$ in the cactus $\overline{G}$ with
\[
   C = \{ v\in V: \phi(v) \in \overline{C} \}.
\]
\item  If $\overline{C} \subseteq \overline{V}$ is a 2-cut in the cactus $\overline{G}$, then
\[
   C = \{ v\in V: \phi(v) \in \overline{C} \}.
\]
is a minimum cut in $G$.
\end{enumerate}
\end{theorem}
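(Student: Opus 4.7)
The plan is to build $(\overline{G}, \phi)$ by classifying the minimum cuts of $G$ according to their crossing structure. First I would establish the standard submodularity and posimodularity of the cut function $d(A) := |\delta_E(A)|$: both $d(A)+d(B) \geq d(A\cap B)+d(A\cup B)$ and $d(A)+d(B) \geq d(A\setminus B)+d(B\setminus A)$ hold, each by a simple edge-counting argument. As an immediate consequence, whenever two minimum cuts $C, C' \subseteq V$ \emph{cross} (that is, all four of $C\cap C'$, $C\setminus C'$, $C'\setminus C$, and $V\setminus(C\cup C')$ are nonempty), each of these four sets as well as $C\cup C'$ is again a minimum cut of $G$.

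The vertex set $\overline{V}$ and the map $\phi$ are then essentially forced: define $u \sim v$ iff no minimum cut of $G$ separates them, set $\overline{V} := V/{\sim}$, and let $\phi$ be the quotient map. Every minimum cut of $G$ is then automatically a union of $\sim$-classes, so the correspondence $C \leftrightarrow \{v : \phi(v) \in \overline{C}\}$ that the theorem asks for is already well-defined at the set-theoretic level. The remaining task is to arrange edges of $\overline{G}$ so that the $2$-cuts of the resulting cactus correspond under this bijection exactly to the minimum cuts of $G$.

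To construct the edge set, I would partition the family $\mathcal{M}$ of minimum cuts into \emph{crossing classes}: equivalence classes under the transitive closure of the crossing relation. The central structural lemma to prove is that each crossing class $\mathcal{F}$ admits pairwise disjoint nonempty \emph{atoms} $U_1, \ldots, U_k \subseteq V$ arranged in a cyclic order such that the cuts in $\mathcal{F}$ are exactly the proper contiguous unions $U_i \cup U_{i+1} \cup \cdots \cup U_j$ (indices mod $k$), giving $|\mathcal{F}| = \binom{k}{2}$ when $k \geq 3$ and a single pair $\{C, V\setminus C\}$ when $k = 2$. The proof proceeds by induction on $|\mathcal{F}|$, repeatedly using the sub/posimodular closure property above to generate all implied minimum cuts from any two crossing members and then showing that no cut outside this contiguous-arc family can arise. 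Each class of size $k \geq 3$ then contributes a cycle of length $k$ to $\overline{G}$ on the supernodes $\phi(U_1), \ldots, \phi(U_k)$, while a class with $k=2$ contributes a pair of parallel edges (a $2$-cycle).

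Finally, I would glue these cycles along shared supernodes to form $\overline{G}$. Because minimum cuts drawn from distinct crossing classes do not cross, their atom partitions interact laminarly, and one can verify that any two cycles of $\overline{G}$ share at most a single supernode; hence every edge lies in exactly one cycle and $\overline{G}$ is a cactus. Both directions of the theorem then follow by direct inspection of the cycle/cut correspondence. Efficiency is easy to see: $|\mathcal{M}| = O(n^2)$ and a generating set for $\mathcal{M}$ (under crossing closure) can be produced from $O(n)$ max-flow computations between a fixed vertex and all others. The principal obstacle is the cyclic-atom lemma: proving that the closure of two crossing minimum cuts under sub/posimodularity has exactly the claimed contiguous-arc form, and that no ``unexpected'' minimum cut can sneak into a crossing class, requires a careful case analysis of how three pairwise crossing minimum cuts are mutually positioned.
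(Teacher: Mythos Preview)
The paper does not prove this theorem at all: it is stated with a citation to Dinitz, Karzanov, and Lomonosov~\cite{dinitz_1976_structure} and used as a black box in the reduction from WCAP to WRAP (\cref{lem:reduction-to-WRAP}). So there is no ``paper's own proof'' to compare against.

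On its own merits, your sketch follows the classical outline and the main structural pieces (sub\-/posi\-modularity, crossing closure, cyclic arrangement of atoms within a crossing family) are the right ones. One genuine gap to flag: setting $\overline{V} \coloneqq V/{\sim}$ and taking $\phi$ to be the quotient map is not quite sufficient. The cactus representation in general requires \emph{empty nodes} (Steiner vertices of $\overline{G}$ not in the image of $\phi$) at points where three or more cycles are glued together; these arise precisely when an atom of one crossing family is further refined by the atoms of another family nested inside it. With $\overline{V} = V/{\sim}$ you have no room to introduce such junction vertices, and without them two cycles may be forced to share more than one supernode, breaking the cactus property. The fix is standard---build $\overline{G}$ recursively along the laminar tree of crossing families and insert an extra node whenever a family's atom is not a single $\sim$-class---but it does need to be said, and your ``glue these cycles along shared supernodes'' step as written would not go through in all cases.
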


For CAP, i.e., the special case of WCAP with $w\equiv 1$, \cref{thm:cactus-representation} has previously been used to reduce the problem to the special case where the graph $G$ is a cactus; see, e.g., \cite{basavaraju_2014_parameterized, byrka_2020_breaching, nutov_2021_approximation, cecchetto_2021_bridging}.
For general WCAP, this reduction can be strengthened to reduce to the special case where the graph $G$ is a cycle/ring \cite{galvez_2019_cycle}.
We provide a proof for completeness.

\redToWRAP*
\begin{proof}
We first compute a cactus $\overline{G}=(\overline{V}, \overline{E})$ together with a map $\phi$ as in \cref{thm:cactus-representation}.
For a link $\link{l}=\{u,v\} \in L$, we define the corresponding link $\overline{\link{l}} \coloneqq \{\phi(u), \phi(v)\}$ and we define $ \overline{L} \coloneqq \{ \overline{\link{l}} : \link{l} \in L \}$.
Moreover, we set $\overline{c}(\overline{\link{l}}) \coloneqq c(\link{l})$.
Then a link set $F\subseteq L$ is a solution for the WCAP instance $(G,L,c)$ if and only if $\overline{F} \subseteq \overline{L}$ is a solution for the WCAP instance $(\overline{G}, \overline{L}, \overline{c})$ by \cref{thm:cactus-representation}.

Next, we define an instance $(\tilde{G}, \tilde{L},\tilde{c})$ of WRAP that arises from $(\overline{G}, \overline{L}, \overline{c})$ as follows.
We first observe that because $\overline{G}$ is a cactus, it is the union of edge-disjoint cycles and hence it is Eulerian and connected.
Therefore, there exists an Eulerian walk in $\overline{G}$, i.e., a closed walk that uses every edge exactly once, and such an Eulerian walk can be found efficiently.
For every vertex $v\in \overline{V}$ that is visited more than once in the Eulerian walk (where we do not count returning to the starting vertex as a separate visit), we do the following.
If the $k$th visit of the vertex $v$ for $k>1$ consists of entering $v$ by an edge $\{u,v\}$ and leaving it by an edge $\{v,w\}$, then we introduce a new vertex $v_k$ and replace the edges $\{u,v\}$ and $\{v,w\}$ by new edges $\{u,v_k\}$ and $\{v_k,w\}$.
Then we add a link $\{v,v_k\}$ and set its cost to zero.
Iteratively applying this for all vertices visited multiple times by the Eulerian walk leads to an instance of WRAP.
See \cref{fig:cactus-to-cycle}.

\begin{figure}[!ht]
\begin{center}
\begin{tikzpicture}[scale=1,
ns/.style={thick,draw=black,circle,minimum size=6,inner sep=2pt},
es/.style={thick},
lks/.style={line width=1pt, blue, densely dashed},
ts/.style={every node/.append style={font=\scriptsize}}
]

\begin{scope}[every node/.style={ns}]
\node[fill=yellow] (w1) at (2,2) {};
\node[fill=gray] (w2) at (1,1) {};
\node[fill=blue] (w3) at (2,0) {};
\node[fill=brown] (w4) at (0.7,-0.7) {};
\node[fill=magenta] (w5) at (1.7,-1.3) {};
\node[fill=red] (w7) at (3.5,-1) {};
\node[fill=green!60!black] (w9) at (3,1) {};
\node[fill=cyan] (w10) at (4,0.3) {};
\node[fill=orange] (w11) at (4,1.7) {};
\end{scope}

\begin{scope}[es]
\draw (w1) -- (w2) -- (w3) -- (w4) -- (w5) -- (w3) -- (w9) -- (w10) -- (w11) -- (w9) -- (w1);
\draw [bend right] (w3) to (w7);
\draw [bend right] (w7) to (w3);
\end{scope}

\begin{scope}[shift={(10,0)}]
\def\rad{2}
\def\num{12}

\begin{scope}[every node/.style={ns}]
\foreach \i/\c in {
1/yellow,
2/gray,
3/blue,
4/brown,
5/magenta,
6/blue,
7/red,
8/blue,
9/green!60!black,
10/cyan,
11/orange,
12/green!60!black} 
{
  \pgfmathsetmacro\r{90+(\i-1)*360/\num}
  \node[fill=\c] (v\i) at (\r:\rad) {};
}
\end{scope}

\begin{scope}[es]
\foreach \i in {1,...,\num} {
\pgfmathtruncatemacro\j{1+mod(\i,\num)}
\draw (v\i) -- (v\j);
}
\end{scope}

\begin{scope}[lks]
\draw[blue] (v3) -- (v6);
\draw[blue] (v3) -- (v8);
\draw[green!60!black] (v9) -- (v12);
\end{scope}
\end{scope}

\end{tikzpicture}
 \end{center}
\caption{Example of a cactus $\overline{G}$ (left) and the cycle $\tilde{G}$ (right) constructed from $\overline{G}$. 
For a vertex $v\in \overline{V}$, the vertices $v_k$ with $k>1$ are shown in the same color as $v$.
The links of cost zero that are added to $\overline{L}$ to obtain $\tilde{L}$ are shown as dashed lines.
}
\label{fig:cactus-to-cycle}
\end{figure}
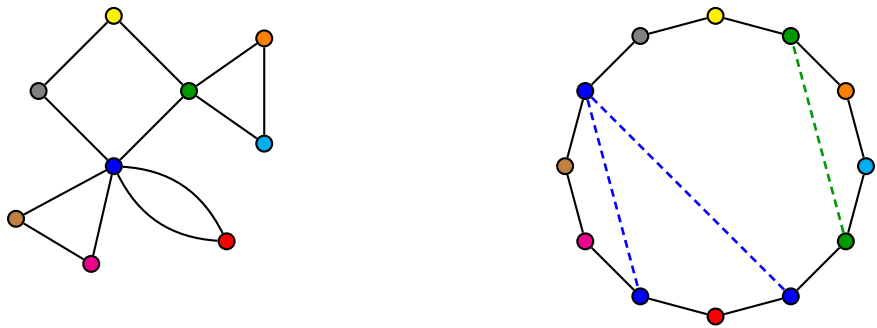

We say that a $2$-cut in $\tilde{G}$ corresponds to a $2$-cut in $\overline{G}$ if the cuts contain the same set of links.
We claim that every $2$-cut in the cycle $\tilde{G}$ either contains a link of cost zero or it corresponds to a $2$-cut in $\tilde{G}$.
Indeed, if a $2$-cut in the cycle $\tilde{G}$ contains no link of cost zero, then for every vertex $v\in \overline{V}$, the vertices $v$ and $v_k$ for $k>1$ are either all contained in $C$ or are all outside of $C$, implying that $C$ corresponds to a $2$-cut of $\overline{G}$.
Moreover, every $2$-cut $\overline{C}$ in the cactus $\overline{G}$ corresponds to the $2$-cut in the cycle $\tilde{G}$ that contains for every vertex $v\in \overline{C}$, the vertex $v$ and all vertices $v_k$ for $k>1$ that exist in $\tilde{G}$.
This shows that every WCAP solution for $(\tilde{G}, \tilde{L},\tilde{c})$ can be transformed into a WRAP solution of the same cost for $(\tilde{G}, \tilde{L},\tilde{c})$ by adding all links of weight zero and, vice versa, every WRAP solution for $(\tilde{G}, \tilde{L},\tilde{c})$ can be transformed into a WCAP solution for $(\tilde{G}, \tilde{L},\tilde{c})$ of the same cost by omitting those links of cost zero that are not contained in~$\overline{L}$.
\end{proof}

Because of \cref{lem:reduction-to-WRAP}, we will focus on WRAP in the rest of this paper.
 \section{Directed Solutions}\label{sec:directed-solutions}

In this section we provide details on our discussion in \cref{sec:ovDirSols}, i.e., the directed setting we are using and how we obtain highly structured directed solutions.

We start by recalling that we move from an undirected WCAP setting to a directed setting by bidirecting the links and also considering shortenings of the thus obtained directed links.
Such shortenings are formally defined as follows (\cref{fig:ov_shortenings}).
\begin{definition}[Shortening of directed link]\label{def:shortening}
Let $(G=(V,E), L, c, r, e_r)$ be a rooted WRAP instance, and let $\vec{\link{l}}=(u,v)\in V\times V$ with $u\neq v$.
A tuple $(s,t) \in V\times V$ with $s \neq t$ is a \emph{shortening} of $\vec{\link{l}}$ if $t = v$ and $s$ is a vertex on the unique $u$-$v$ path in $(V,E\setminus \{e_r\})$.
\end{definition}
Note that a directed link is a shortening of itself.
If we want to refer to a shortening of a directed link that differs from itself, we talk about a $\emph{strict shortening}$.
Observe that a shortening of a directed link $\vec{\link{l}}$ only covers/enters a subset of the cuts in $\mathcal{C}_G$ that $\vec{\link{l}}$ covers/enters.
Hence, a shortening of $\vec{\link{l}}$ is indeed a weakening of $\vec{\link{l}}$.
We recall that a directed WRAP solution $\vec{F}\subseteq V\times V$ is said to be \emph{non-shortenable} if deleting any of its links or replacing it by a strict shortening leads to a link set that is not a directed WRAP solution anymore.

As previously discussed, we aim at highly structured directed WRAP solutions by shortening links until a non-shortenable solution is obtained.
For convenience, we use the following notion of $\emph{shadows}$ to capture directed links obtainable from an undirected one by first directing it one way and then shortening it. (See \cref{fig:shadows}.)
\begin{definition}[Shadow of a link]
Let $(G,L,c,r,e_r)$ be a rooted WRAP instance.
A directed link is a shadow of $\{u,v\}\in \begin{psmallmatrix} V\\ 2 \end{psmallmatrix}$ if it is either a shortening of $(u,v)$ or of $(v,u)$. 
We denote by $\shadows(L)\subseteq V\times V$ the set of all shadows of links in $L$.
\end{definition}
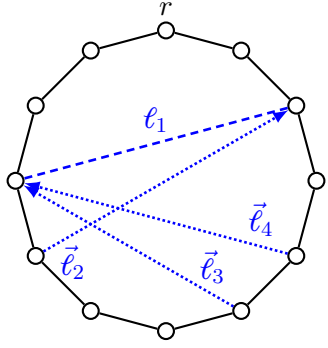
\begin{figure}[!ht]
\begin{center}
\begin{tikzpicture}[scale=1,
ns/.style={thick,draw=black,fill=white,circle,minimum size=6,inner sep=2pt},
es/.style={thick},
lks/.style={line width=1pt, blue, densely dashed},
dlks/.style={lks, -latex},
ts/.style={every node/.append style={font=\scriptsize}}
]

\def\rad{2}
\def\num{12}

\begin{scope}[every node/.style={ns}]
\foreach \i in {1,...,\num} {
  \pgfmathsetmacro\r{90+(\i-1)*360/\num}
  \node (\i) at (\r:\rad) {};
}
\end{scope}

\begin{scope}
\node at (1)[above=2pt] {$r$};
\end{scope}

\begin{scope}[es]
\foreach \i in {1,...,\num} {
\pgfmathtruncatemacro\j{1+mod(\i,\num)}
\draw (\i) -- (\j);
}
\end{scope}

\begin{scope}[lks]
\draw (11) -- node[above,pos=0.5] {$\link{l}_1$} (4);
\end{scope}

\begin{scope}[dlks, densely dotted]
\draw (5)  -- node[below=0pt,pos=0.12] {$\vec{\link{l}}_2$} (11);
\draw (8) -- node[above, pos=0.1] {$\vec{\link{l}}_3$} (4);
\draw (9) -- node[above, pos=0.1] {$\vec{\link{l}}_4$} (4);
\end{scope}

\end{tikzpicture}
 \end{center}
\caption{
Examples of shadows (dotted arrows) of a link (dashed line).
More precisely, the directed links $\vec{\link{l}_2}$, $\vec{\link{l}_3}$, and $\vec{\link{l}_4}$ are all shadows of the (undirected) link $\link{l}_1$.
}
\label{fig:shadows}
\end{figure}

Given a rooted WRAP instance $(G,L,c,r,e_r)$, whenever we take a shadow $\vec{\link{l}}$ of a link $\link{l}\in L$, we assign to it the same cost as the original undirected link, i.e., $c(\vec{\link{l}})\coloneqq c(\link{l})$.
This ensures that any directed solution built from shadows of $L$ can trivially be converted into an undirected WRAP solution of same cost.

A key advantage of the directed setting is that minimum cost directed solutions can be found efficiently.
This has been observed in~\cite{cecchetto_2021_bridging}.
Such a solution, after making it non-shortenable, will be a highly structured $2$-approximation, as we discuss in the following.
\begin{lemma}[{\cite[Lemma~14]{cecchetto_2021_bridging}}]\label{lem:dirWRAPInP}\!\!\footnote{More precisely, \cite[Lemma~14]{cecchetto_2021_bridging} states that when starting with a WCAP instance and replacing any undirected link $\{u,v\}$ by two directed links $(u,v)$ and $(v,u)$ of same cost, one can efficiently compute a minimum cost directed solution. This immediately implies \cref{lem:dirWRAPInP}, which differs only in also allowing shortenings of such links.
However, as discussed, shortenings do not lead to better solutions.
Finally, we highlight that the proof in~\cite{cecchetto_2021_bridging} shows that the natural linear program is integral and does not rely on any structure of the directed links or their costs.
}Given a rooted WRAP instance $(G,L,c,r,e_r)$.
One can efficiently find a minimum cost directed WRAP solution among all directed WRAP solutions consisting only of directed links in $\shadows(L)$.
\end{lemma}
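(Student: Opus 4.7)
The plan is to reduce the problem to solving the natural covering linear program and to prove that this LP is integral. With $x\in\mathbb{R}^{\shadows(L)}_{\geq 0}$, consider
\begin{equation*}
   \min\Bigl\{\, c^\top x \,:\, x\bigl(\delta^-_{\shadows(L)}(C)\bigr) \geq 1\ \forall\, C\in\mathcal{C}_G,\ x\geq 0\,\Bigr\}.
\end{equation*}
Its $\{0,1\}$-valued feasible points are precisely the directed WRAP solutions contained in $\shadows(L)$. Since $\mathcal{C}_G$ consists of the $O(|V|^2)$ many $r$-avoiding intervals of the ring and $\shadows(L)$ has polynomial size, the LP has polynomially many variables and constraints and can be solved in polynomial time by any standard LP solver. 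Hence, once I have shown LP integrality, an optimal extreme point solution directly yields the desired minimum-cost directed WRAP solution.

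The main step is to show that every extreme point $x^*$ of the LP is integral, which I would do by a standard uncrossing argument applied to the family $\mathcal{F}$ of tight constraints at $x^*$. For any two intersecting $C_1, C_2 \in \mathcal{F}$, both $C_1\cap C_2$ and $C_1\cup C_2$ are again $r$-avoiding intervals, and the submodularity of $C\mapsto x^*(\delta^-(C))$ combined with tightness of $C_1$ and $C_2$ forces $C_1\cap C_2$ and $C_1\cup C_2$ to be tight as well, while their characteristic vectors span the same subspace as those of $C_1$ and $C_2$. Iterating the uncrossing, I obtain a laminar subfamily $\mathcal{L}\subseteq\mathcal{F}$ whose tight constraints still uniquely determine $x^*$. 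The constraint matrix restricted to $\mathcal{L}$ is the entering-arc incidence matrix of a laminar family of intervals on the linearly ordered set obtained by cutting the cycle at $e_r$, which is a network matrix and hence totally unimodular. Therefore $x^*$ has integer entries, completing the integrality proof.

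The main obstacle I expect is carrying out the uncrossing cleanly on a cyclic structure, since two intervals on a cycle can overlap in ways that would on a general ground set produce a non-interval union. This is exactly where fixing the edge $e_r$ in the rooted formulation is used: since no $C\in\mathcal{C}_G$ contains $r$, the ground set of relevance is $V\setminus\{r\}$, which by fixing $e_r$ becomes a linearly ordered path. On this path, the intersection and union of two intersecting intervals are again intervals, so the uncrossing stays inside $\mathcal{C}_G$. Once this geometric bookkeeping is in place, the argument reduces to the classical uncrossing-and-TU recipe for cut-covering problems, which is precisely the route taken in Lemma~14 of~\cite{cecchetto_2021_bridging}; an alternative would be to invoke the submodular flow framework of Frank and Tardos~\cite{frank_1989_application} directly, with $\mathcal{C}_G$ serving as the crossing family to be covered.
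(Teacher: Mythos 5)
Your proposal is correct and takes essentially the same approach the paper relies on: the footnote to the lemma explicitly describes the cited proof of Lemma~14 in Cecchetto--Traub--Zenklusen as showing that ``the natural linear program is integral,'' which is precisely the covering-LP, uncrossing, and total-unimodularity argument you spell out. The one small imprecision is that uncrossing preserves the span of characteristic vectors only after restricting to the support of $x^*$ (the vector identity $\chi_{\delta^-(C_1)}+\chi_{\delta^-(C_2)}=\chi_{\delta^-(C_1\cap C_2)}+\chi_{\delta^-(C_1\cup C_2)}$ fails on arcs between $C_1\setminus C_2$ and $C_2\setminus C_1$, but tightness of the submodular inequality together with $x^*\geq 0$ forces $x^*$ to vanish on those arcs), which is the standard and easily supplied refinement.
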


\begin{corollary}\label{cor:dirSolTwoApprox}
Given a rooted WRAP instance $(G,L,c,r,e_r)$, one can efficiently find a directed WRAP solution $\vec{F} \subseteq \shadows(L)$ with $c(\vec{F}) \leq 2 c(\OPT)$, where $\OPT\subseteq L$ is an optimal WRAP solution.
\end{corollary}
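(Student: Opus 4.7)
The plan is to reduce \cref{cor:dirSolTwoApprox} to \cref{lem:dirWRAPInP} by exhibiting a directed WRAP solution of cost at most $2c(\OPT)$ consisting of shadows of links in $L$, and then appealing to \cref{lem:dirWRAPInP} to efficiently find one that is at least as cheap.

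First, I would run the polynomial-time algorithm from \cref{lem:dirWRAPInP} to obtain a minimum cost directed WRAP solution $\vec{F} \subseteq \shadows(L)$. It then remains to bound its cost in terms of $c(\OPT)$. For this, I would construct an explicit directed WRAP solution using shadows of $\OPT$, namely the \emph{bidirection}
\[
\vec{D} \coloneqq \bigl\{ (u,v), (v,u) : \{u,v\} \in \OPT \bigr\} \subseteq \shadows(L),
\]
where the inclusion holds because for any $\link{l}=\{u,v\}\in L$, both $(u,v)$ and $(v,u)$ are (trivial) shortenings of $\link{l}$ in the sense of \cref{def:shortening}, and hence shadows of $\link{l}$. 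Since we assigned the cost of a shadow of $\link{l}$ to be $c(\link{l})$, we get $c(\vec{D}) = 2c(\OPT)$.

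Next I would verify that $\vec{D}$ is a directed WRAP solution. Fix any $C \in \mathcal{C}_G$. Since $\OPT$ is an undirected WRAP solution, there is some link $\link{l}=\{u,v\}\in \OPT$ with $|\link{l} \cap C| = 1$; say $u \in C$ and $v \notin C$. Then the directed link $(v,u) \in \vec{D}$ enters $C$, so $C$ is covered by $\vec{D}$ in the directed sense. As $C\in \mathcal{C}_G$ was arbitrary, $\vec{D}$ covers every $2$-cut, hence is a directed WRAP solution consisting of shadows of links in $L$.

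Combining the two observations, the minimum cost directed WRAP solution $\vec{F}$ returned by \cref{lem:dirWRAPInP} satisfies $c(\vec{F}) \le c(\vec{D}) = 2c(\OPT)$, as desired. I do not anticipate any real obstacle here; the only subtlety worth stating explicitly is that the notion of ``covering'' used in the two problems differs (entering a cut versus crossing it), which is exactly why we need the bidirection $\vec{D}$ rather than a single orientation of $\OPT$.
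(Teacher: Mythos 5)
Your proof is correct and follows the same approach as the paper: invoke \cref{lem:dirWRAPInP} to obtain a minimum-cost directed WRAP solution in $\shadows(L)$, and bound its cost by comparing against the bidirection of $\OPT$, which has cost exactly $2c(\OPT)$. You add a short verification that the bidirection covers every $2$-cut, which the paper leaves implicit, but this is the same argument.
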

\begin{proof}
We claim that a minimum cost directed WRAP solution $\vec{F}$, which can be found efficiently by \cref{lem:dirWRAPInP}, has the desired properties.
Indeed, the directed WRAP solution obtained by starting with $\OPT$ and replacing each link $\ell\in \OPT$ by two antiparallel directed links has cost precisely $2 c(\OPT)$.
Because $\vec{F}$ is a minimum cost directed solution consisting of directed links in $\shadows(L)$, its cost is not more than that; hence, $c(\vec{F})\leq 2 c(\OPT)$, as desired.
\end{proof}

Note that \cref{lem:goodNonShortenableSol} asserts that one can efficiently obtain a directed link set $\vec{F}\subseteq \shadows(L)$ that, on top of fulfilling the properties of \cref{cor:dirSolTwoApprox}, is also non-shortenable.
This follows immediately from \cref{cor:dirSolTwoApprox} and the fact that we can efficiently transform any directed WRAP solution into a non-shortenable one, as stated below.
\begin{lemma}\label{lem:shortening_efficiently}
Let $(G,L,c,r,e_r)$ be a rooted WRAP instance and $\vec{F} \subseteq \shadows(L)$ be a directed solution thereof.
Then one can efficiently construct a non-shortenable directed WRAP solution obtained from $\vec{F}$ by deleting and shortening links.
\end{lemma}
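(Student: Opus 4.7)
\bigskip

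\noindent\textbf{Proof proposal.} The plan is to proceed greedily: iteratively remove any link whose removal still leaves a directed WRAP solution, and replace any link by its strictest possible shortening that still yields a directed WRAP solution, until no such operation is possible. At that point, by definition, the resulting link set is a non-shortenable directed WRAP solution consisting of links in $\shadows(L)$ (since shortenings of shadows are shadows).

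More concretely, I would repeat the following loop, starting from $\vec{F}$. First, for each link $\vec{\link{l}}\in \vec{F}$, check whether $\vec{F}\setminus\{\vec{\link{l}}\}$ is still a directed WRAP solution; if so, delete $\vec{\link{l}}$ and restart the loop. Second, for each remaining link $\vec{\link{l}}=(u,v)\in \vec{F}$, consider each vertex $s$ on the unique $u$-$v$ path in $(V,E\setminus\{e_r\})$ other than $u$, ordered from the one closest to $v$ outward; pick the strictest shortening $(s,v)$ such that $(\vec{F}\setminus\{\vec{\link{l}}\})\cup\{(s,v)\}$ is still a directed WRAP solution, and replace $\vec{\link{l}}$ by it. The loop terminates when neither operation applies, at which point $\vec{F}$ is non-shortenable.

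For termination and polynomial running time, I would use the potential function $\Phi(\vec{F})\coloneqq \sum_{\vec{\link{l}}\in\vec{F}}\bigl(1+|\{C\in\Cscr_G : \vec{\link{l}}\text{ enters }C\}|\bigr)$. A deletion strictly decreases $\Phi$ by at least $1$, and replacing a link by a strict shortening strictly decreases the number of covered $2$-cuts contributed by that link, again strictly decreasing $\Phi$. Since $|\Cscr_G|=O(|V|^2)$ and $|\vec{F}|$ is bounded by $|V|^2$ initially (we may assume $\vec{F}\subseteq \shadows(L)$ has no duplicates), the potential is polynomially bounded, so the number of iterations is polynomial. Each iteration runs in polynomial time: testing whether a link set $\vec{F}'\subseteq V\times V$ is a directed WRAP solution amounts to checking that each of the $O(|V|^2)$ cuts in $\Cscr_G$ (all of which are consecutive arcs of the cycle not containing $r$) is entered by at least one link of $\vec{F}'$, which is straightforward; and within an iteration we try at most $O(|\vec{F}|\cdot|V|)$ such tests.

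The only subtle point is verifying that the output links still lie in $\shadows(L)$, which is immediate from \cref{def:shortening}: a shortening of a shortening of $(u,v)$ is again a shortening of $(u,v)$, so after any number of shortening steps the links remain shadows of their originals. I do not anticipate a real obstacle here; the result is essentially a direct consequence of the definition of non-shortenability together with the fact that both the number of links and the total "reach" of the links form a polynomially bounded monovariant.
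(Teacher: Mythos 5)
Your proof is correct, and the core idea---greedily delete or shorten links until this is no longer possible---is the same as the paper's. The difference is in the termination argument. The paper makes a \emph{single} pass over the links of $\vec{F}$ in an arbitrary order: for each link, it first tries to delete it and otherwise replaces it by its shortest feasible shortening, never revisiting a link. This relies on the observation that a later deletion or shortening of some $\vec{\link{l}}_2$ only weakens the set of cuts covered by the other links, so it cannot create a new opportunity to shorten or delete an already-processed $\vec{\link{l}}_1$; hence one pass already yields a non-shortenable solution. Your version instead repeats the scan until nothing changes and bounds the number of rounds via the potential $\Phi(\vec{F}) = \sum_{\vec{\link{l}}\in\vec{F}}\bigl(1+|\{C\in\Cscr_G : \vec{\link{l}}\text{ enters }C\}|\bigr)$, which strictly decreases (by at least one, since the number of covered $2$-cuts strictly drops under a strict shortening) at each modification and is $O(|V|^4)$ initially. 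Your argument buys a correctness proof that is essentially definitional---when the loop halts, non-shortenability is immediate---at the cost of a somewhat larger (still polynomial) iteration bound and of introducing the monovariant; the paper's single pass is leaner but needs the monotonicity observation noted above. Both are valid; you should just state explicitly that the while-loop also restarts after a successful shortening (not only after a deletion), so the potential strictly decreases in every round.
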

\begin{proof}
Such a non-shortenable directed WRAP solution can be obtained by going one-by-one through the links $\vec{\link{l}}\in \vec{F}$, in an arbitrary order, and deleting $\vec{\link{l}}$ if the remaining links still form a directed WRAP solution, or, if this is not the case, going through the strict shortenings of $\vec{\link{l}}$ and checking if replacing $\vec{\link{l}}$ by such a shortening leads to a directed WRAP solution.
Then we replace $\vec{\link{l}}$ by a shortest feasible shortening, i.e.,  a shortening that itself does not admit a strict shortening that leads again to a directed WRAP solution.
\end{proof}

We now continue by showing \cref{thm:ov_StructureNonShortenable}, i.e., that non-shortenable directed WRAP solutions are highly structured.
One crucial property of non-shortenable solutions that we show, is that they are \emph{$G$-planar}, i.e., they do not contain crossing links.
(See \cref{fig:Gplanar}.)
\begin{definition}[Crossing edges/links, $G$-planarity]
Let $G=(V,E)$ be a cycle and $e_1, e_2\in \begin{psmallmatrix}
V\\
2
\end{psmallmatrix}$.
The two edges $e_1,e_2$ are called \emph{crossing} if they do not share an endpoint and each of the two paths in $G$ between the endpoints of $e_1$ contains an endpoint of $e_2$.
A set $S\subseteq \begin{psmallmatrix}
V\\
2
\end{psmallmatrix}$ is called \emph{$G$-planar} if no pair of edges in $S$ is crossing.\footnote{One can observe that a set $S$ is $G$-planar if and only if, when drawing the ring $G=(V,E)$ on a cycle and connecting the endpoints of each edge in $S$ by a straight line, then a planar embedding of $(V, E\cup S)$ is obtained. 
This is equivalent to $(V,E\cup S)$ admitting an outer-planar embedding.}
\end{definition}
We use the notions of \emph{crossing} and \emph{$G$-planar} analogously for directed links, where directed links are \emph{crossing} or \emph{$G$-planar} if they have these properties when disregarding their orientations.

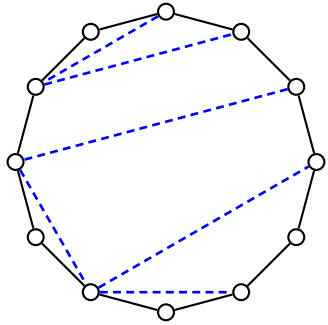
\begin{figure}[!ht]
\begin{center}
\begin{tikzpicture}[scale=1,
ns/.style={thick,draw=black,fill=white,circle,minimum size=6,inner sep=2pt},
es/.style={thick},
lks/.style={line width=1pt, blue, densely dashed},
dlks/.style={lks, -latex},
ts/.style={every node/.append style={font=\scriptsize}}
]

\def\rad{2}
\def\num{12}

\begin{scope}[every node/.style={ns}]
\foreach \i in {1,...,\num} {
  \pgfmathsetmacro\r{90+(\i-1)*360/\num}
  \node (\i) at (\r:\rad) {};
}
\end{scope}

\begin{scope}[es]
\foreach \i in {1,...,\num} {
\pgfmathtruncatemacro\j{1+mod(\i,\num)}
\draw (\i) -- (\j);
}
\end{scope}

\begin{scope}[lks]
\draw (1) -- (3);
\draw (3) -- (12);
\draw (11) -- (4);
\draw (4) -- (6);
\draw (6) -- (8);
\draw (6) -- (10);
\end{scope}

\end{tikzpicture}
 \end{center}
\caption{
Example of a $G$-planar set $S$ (drawn as dashed blue lines).
}
\label{fig:Gplanar}
\end{figure}

Apart from $G$-planarity, non-shortenable directed WRAP solutions have further properties in terms of which directions their links go.
To this end, we introduce notions of left/right and left-going/right-going for rooted WRAP.
These notions can easily be remembered by imagining that we cut open the ring by removing the arc $e_r$ and draw the resulting path from left to right with the root $r$ being the left endpoint. (\cref{fig:rightLeft} illustrates this viewpoint for the notions defined below.)
\begin{definition}[Left/right with respect to rooted WRAP instance]\label{def:leftRight}
Let $(G=(V,E),L,c,r,e_r)$ be a rooted WRAP instance and consider the consecutive numbering $V=\{r=v_0, v_1, \ldots, v_{|V|-1}\}$ of $V$ obtained by traversing the path $(V,E\setminus \{e_r\})$ starting from $r$. We use the following terminology:
\begin{itemize}
\item A vertex $v_i$ is said to be \emph{to the left} of $v_j$ if $i<j$ (and \emph{to the right} if $i>j$);
\item A directed link $\vec{\link{l}}=(v_i,v_j)\in V\times V$ is \emph{left-going} if $i>j$ (and \emph{right-going} if $i<j$);
\item For $v_i,v_j\in V$, the \emph{interval from $v_i$ to $v_j$} is the set $\{v_i,v_{i+1}, \ldots, v_j\}$, which is empty if $i>j$.
\end{itemize}
\end{definition}
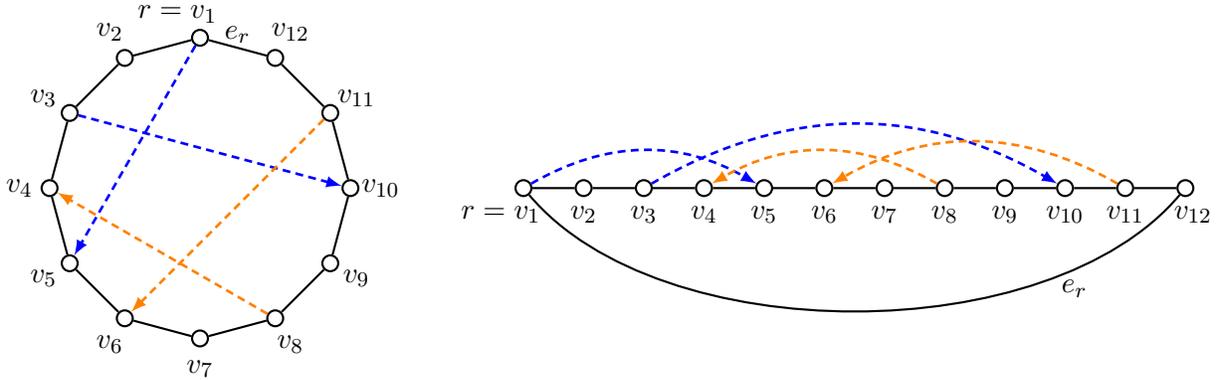
\begin{figure}[!ht]
\begin{center}
\begin{tikzpicture}[scale=1,
ns/.style={thick,draw=black,fill=white,circle,minimum size=6,inner sep=2pt},
es/.style={thick},
lks/.style={line width=1pt, blue, densely dashed},
dlks/.style={lks, -latex},
ts/.style={every node/.append style={font=\scriptsize}}
]

\def\rad{2}
\def\num{12}
\def\hd{0.8}

\colorlet{col_r}{blue}

\colorlet{col_l}{orange}

\begin{scope}

\begin{scope}[every node/.style={ns}]
\foreach \i in {1,...,\num} {
  \pgfmathsetmacro\r{90+(\i-1)*360/\num}
  \node (\i) at (\r:\rad) {};
}
\end{scope}

\begin{scope}
\node at ($(1)+(-0.3,0)$)[above=2pt] {$r=v_1$};

\foreach \i in {2,...,\num} {
  \pgfmathsetmacro\r{90+(\i-1)*360/\num}
  \pgfmathsetmacro\radex{\rad+0.4}
  \node at (\r:\radex) {$v_{\i}$};
}

\path (\num) to node[above=-2pt] {$e_r$} (1);
\end{scope}

\begin{scope}[es]
\foreach \i in {1,...,\num} {
\pgfmathtruncatemacro\j{1+mod(\i,\num)}
\draw (\i) -- (\j);
}
\end{scope}

\begin{scope}[dlks]

\begin{scope}[col_r]
\draw (1)  -- (5);
\draw (3) -- (10);
\end{scope}

\begin{scope}[col_l]
\draw  (8) -- (4);
\draw (11) -- (6);
\end{scope}

\end{scope}

\end{scope}

\begin{scope}[shift={(3.5,0)}]

\begin{scope}[every node/.style={ns}]
\foreach \i in {1,...,\num} {
  \node (a\i) at (\i*\hd,0) {};
}
\end{scope}

\begin{scope}
\node at ($(a1)+(-0.3,0)$)[below=3pt] {$r=v_1$};
\pgfmathtruncatemacro\n{\num-1}
\foreach \i in {2,...,\n} {
\node at (a\i)[below=3pt] {$v_{\i}$};
}
\node at ($(a12)+(0.1,0)$)[below=3pt] {$v_{12}$};
\end{scope}

\begin{scope}[es]
\foreach \i in {2,...,\num} {
\pgfmathtruncatemacro\j{\i-1}
\draw (a\j) -- (a\i);
}
\draw (a1) to[out=-50,in=230,looseness=0.8] node[below,pos=0.8] {$e_r$} (a12);
\end{scope}

\begin{scope}[dlks]

\begin{scope}[col_r]
\draw (a1) to[bend left] (a5);
\draw (a3) to[bend left] (a10);
\end{scope}

\begin{scope}[col_l]
\draw  (a8) to[bend right] (a4);
\draw (a11) to[bend right] (a6);
\end{scope}

\end{scope}

\end{scope}

\end{tikzpicture}
 \end{center}
\caption{A rooted WRAP instance with four directed links and a representation of it as the path $(V,E\setminus \{e_r\})$ going from $r$ as the left end of the path to the right end $v_{12}$, together with $e_r$.
Two of the four directed links are right-going (drawn in blue) and two are left-going (drawn in orange).}
\label{fig:rightLeft}
\end{figure}
For a vertex $v\in V$, we call a directed link \emph{left-outgoing} (\emph{right-outgoing}) of $v$ if its tail is $v$ and it is left-going (right-going).
Note that the root only has right-outgoing links.
Moreover, we say that a vertex $s\in V$ lies \emph{between two vertices $u,v\in V$}, if it is on the unique $u$-$v$ path in $(V,E\setminus \{e_r\})$; in particular we may have $s\in \{u,v\}$.
To exclude the possibility that $s\in \{u,v\}$, we say that $s$ lies \emph{strictly between $u$ and $v$}.

We are now ready to prove \cref{thm:ov_StructureNonShortenable}, our main structural statement about non-shortenable directed WRAP solutions, which we repeat below for convenience.
\propNonShortenable*
Note that~\ref{item:noTwoSameDir} implies $|\delta^+_{\vec{F}}(v)|\leq 2$ for all $v\in V$, as there can be at most one left-going and one right-going link in $\delta^+_{\vec{F}}(v)$.
Moreover, $|\delta^+_{\vec{F}}(r)|=1$, because any directed link with tail being the root is right-going.
\begin{proof}[Proof of \cref{thm:ov_StructureNonShortenable}]
We first show that any non-shortenable directed solution $\vec{F}\subseteq \shadows(L)$ to a rooted WRAP instance is $G$-planar.
To this end, we split the ways how two directed links $\vec{\link{l}}_1=(u_1,v_1)$ and $\vec{\link{l}}_2=(u_2,v_2)$ can cross into three different types, highlighted in \cref{fig:GplanProof}.
More formally, the types are defined with respect to the order of appearance of the four endpoints of $\vec{\link{l}}_1$ and $\vec{\link{l}}_2$ when traversing the ring from the root to the right:
\begin{enumerate}[itemsep=0.0em,leftmargin=2cm,topsep=0.3em]
\item[type~1:] $u_1$, $v_2$, $v_1$, $u_2$,
\item[type~2:] $v_2$, $u_1$, $u_2$, $v_1$,
\item[type~3:] $v_2$, $v_1$, $u_2$, $u_1$.
\end{enumerate}
There is one more order of endpoints of $\vec{\link{l}}_1$ and $\vec{\link{l}}_2$, up to exchanging their roles, that leads to $\vec{\link{l}}_1$ and $\vec{\link{l}}_2$ being crossing, namely when reversing the directions of both links in a type~3 crossing.
However, this case is symmetric to type~3 crossings, and we therefore do not list it separately.
For type~1 crossings, we show that this constellation is impossible even if $v_1=v_2$ (even though this does not lead to crossing links).
Moreover, for type~3 crossing, we show impossibility of the constellation even if $v_1=v_2$ and/or $u_1=u_2$.
Ruling out constellations with $v_1=v_2$ in type~1 and type~3 crossings implies $|\delta^-_{\vec{F}}(v)|\leq 1$ on top of $G$-planarity.
Moreover, showing that type~3 constellations with $u_1=u_2$ are impossible, immediately implies~\ref{item:noTwoSameDir} of \cref{thm:ov_StructureNonShortenable}.
Hence, whenever we refer to type~1 or type~3 crossings, we do not require $v_1\neq v_2$ and, for type~3, we also do not require $u_1\neq u_2$.

Assume for the sake of deriving a contradiction that there is a type~1 crossing.
Because $\vec{F}$ is non-shortenable, there is a $2$-cut $C_1\subseteq V$ that is covered by $\vec{\link{l}}_1$, i.e., $\vec{\link{l}}_1\in \delta_{\vec{F}}^-(C_1)$, and no other link in $\vec{F}$ or any strict shortening of $\vec{\link{l}}_1$ is covering $C_1$.
In particular,  $\vec{\link{l}}_2$ is not covering $C_1$.
Because either $v_1=v_2$ or $(v_2,v_1)$ is a strict shortening of $\vec{\link{l}}_1$ and hence also not covering $C_1$, we conclude $v_1,v_2,u_2\in C_1$ and $u_1\not\in C_1$ (see \cref{fig:GplanProof}).
\begin{figure}[!ht]
\begin{center}
\begin{tikzpicture}[scale=0.99,
ns/.style={thick,draw=black,fill=white,circle,minimum size=6,inner sep=2pt},
es/.style={thick},
lks/.style={line width=1pt, blue, densely dashed},
dlks/.style={lks, -latex},
ts/.style={every node/.append style={font=\scriptsize}}
]

\def\rad{2}
\def\vlfac{1.17}

\pgfmathsetmacro\picsep{2*\rad + 1.7}

\pgfkeyssetvalue{/tikz/pics/interval/color}{green!50!black}
\pgfkeyssetvalue{/tikz/pics/interval/radius}{0.3*\rad}
\tikzset{
    pics/interval/.style 2 args={
        code={
               \def\cw{\pgfkeysvalueof{/tikz/pics/interval/radius}}
               \colorlet{col}{\pgfkeysvalueof{/tikz/pics/interval/color}}

               \draw[col, fill=col, fill opacity=0.2] (#1:\rad+\cw) arc (#1:#2:\rad+\cw)
               arc (#2:#2+180:\cw)
               arc (#2:#1:\rad-\cw)
               arc (#1+180:#1+360:\cw);
        }
    },
}

\begin{scope}
\coordinate (c) at (0,0);
\draw[thick] (0,0) circle [radius = \rad];

\begin{scope}[every node/.style={ns}]
\node (r) at (90:\rad) {};

\node (u1) at (170:\rad) {};
\node (v1) at (290:\rad) {};

\node (u2) at (0:\rad) {};
\node (v2) at (210:\rad) {};
\end{scope}

\begin{scope}
\node at ($(c)!\vlfac!(r)$) {$r$};

\node at ($(c)!\vlfac!(u1)$) {$u_1$};
\node at ($(c)!\vlfac!(v1)$) {$v_1$};

\node at ($(c)!\vlfac!(u2)$) {$u_2$};
\node at ($(c)!\vlfac!(v2)$) {$v_2$};
\end{scope}

\begin{scope}[dlks]
\draw (u1) to node[pos = 0.34, above] {$\vec{\link{l}}_1$} (v1);
\draw (u2) to node[pos = 0.3, above] {$\vec{\link{l}}_2$} (v2);
\end{scope}

\begin{scope}
\path (0,0) pic[pics/interval/radius=0.28*\rad] {interval={195}{360}};
\path (0,0) pic[pics/interval/radius=0.34*\rad,pics/interval/color=orange] {interval={170}{330}};
\end{scope}

\begin{scope}
\node[green!50!black] at (20:\rad+0.4) {$C_1$};
\node[orange] at (147:\rad+0.4) {$C_2$};
\end{scope}

\begin{scope}
\node at (270:\rad)[below=7mm] {type 1 crossing};
\end{scope}

\end{scope}

\begin{scope}[xshift=\picsep cm]
\coordinate (c) at (0,0);
\draw[thick] (0,0) circle [radius = \rad];

\begin{scope}[every node/.style={ns}]
\node (r) at (90:\rad) {};

\node (u1) at (230:\rad) {};
\node (v1) at (40:\rad) {};

\node (u2) at (-40:\rad) {};
\node (v2) at (150:\rad) {};
\end{scope}

\begin{scope}
\node at ($(c)!\vlfac!(r)$) {$r$};

\node at ($(c)!\vlfac!(u1)$) {$u_1$};
\node at ($(c)!\vlfac!(v1)$) {$v_1$};

\node at ($(c)!\vlfac!(u2)$) {$u_2$};
\node at ($(c)!\vlfac!(v2)$) {$v_2$};
\end{scope}

\begin{scope}[dlks]
\draw (u1) to node[pos = 0.23, above left=-3pt] {$\vec{\link{l}}_1$} (v1);
\draw (u2) to node[pos = 0.15, above] {$\vec{\link{l}}_2$} (v2);
\end{scope}

\begin{scope}
\path (0,0) pic[pics/interval/radius=0.28*\rad] {interval={255}{400}};
\path (0,0) pic[pics/interval/radius=0.34*\rad,pics/interval/color=orange] {interval={150}{290}};
\end{scope}

\begin{scope}
\node[green!50!black] at (60:\rad+0.42) {$C_1$};
\node[orange] at (127:\rad+0.4) {$C_2$};
\end{scope}

\begin{scope}
\node at (270:\rad)[below=7mm] {type 2 crossing};
\end{scope}

\end{scope}

\begin{scope}[xshift=2*\picsep cm]
\coordinate (c) at (0,0);
\draw[thick] (0,0) circle [radius = \rad];

\begin{scope}[every node/.style={ns}]
\node (r) at (90:\rad) {};

\node (u1) at (30:\rad) {};
\node (v1) at (240:\rad) {};

\node (u2) at (-30:\rad) {};
\node (v2) at (190:\rad) {};
\end{scope}

\begin{scope}
\node at ($(c)!\vlfac!(r)$) {$r$};

\node at ($(c)!\vlfac!(u1)$) {$u_1$};
\node at ($(c)!\vlfac!(v1)$) {$v_1$};

\node at ($(c)!\vlfac!(u2)$) {$u_2$};
\node at ($(c)!\vlfac!(v2)$) {$v_2$};
\end{scope}

\begin{scope}[dlks]
\draw (u1) to node[pos = 0.2, above left=-2pt] {$\vec{\link{l}}_1$} (v1);
\draw (u2) to node[pos = 0.1, above] {$\vec{\link{l}}_2$} (v2);
\end{scope}

\begin{scope}
\node at (270:\rad)[below=7mm] {type 3 crossing};
\end{scope}

\end{scope}

\end{tikzpicture}
 \end{center}
\caption{Visualization of the three different types in which directed links can cross.}
\label{fig:GplanProof}
\end{figure}
An analogous reasoning for $\vec{\link{l}}_2$ leads to the existence of a $2$-cut $C_2$ with $\delta^-_{\vec{F}}(C_2)=\{\vec{\link{l}}_2\}$ satisfying $v_1,v_2,u_1\in C_2$ and $u_2\not\in C_2$ (see \cref{fig:GplanProof}).
We now consider the $2$-cut $C_1\cup C_2$.
Note that this is a $2$-cut because the intervals $C_1$ and $C_2$ overlap.
Because any directed link that enters $C_1\cup C_2$ must also enter $C_1$ or $C_2$, we obtain
\begin{equation*}
\delta_{\vec{F}}^-(C_1\cup C_2) \subseteq \delta_{\vec{F}}^-(C_1) \cup \delta_{\vec{F}}^-(C_2) = \{\vec{\link{l}}_1,\vec{\link{l}}_2\}\enspace.
\end{equation*}
However, as $\vec{\link{l}}_1,\vec{\link{l}}_2\not\in C_1\cup C_2$, the above relation implies $\delta^-_{\vec{F}}(C_1\cup C_2)=\emptyset$, which violates that $\vec{F}$ is a directed WRAP solution.

We now assume for the sake of deriving a contradiction that there is a type~2 crossing.
We follow the same reasoning as in the type~1 case.
Let $C_1$ be a $2$-cut with $\vec{\link{l}}_1\in \delta^-_{\vec{F}}(C_1)$ and such that no other link of $\vec{F}$ and neither any shortening of $\vec{\link{l}}_1$ enters $C_1$.
The interval $C_1$ thus starts at the right neighbor of $u_1$ and expands to the right until (and including) at least $v_1$ (see \cref{fig:GplanProof}).
Analogously, there is a $2$-cut $C_2$ with $\delta_{\vec{F}}^-(C_2) = \{\vec{\link{l}}_2\}$ that starts to the left of $v_2$ and ends at the left neighbor of $u_2$ (see \cref{fig:GplanProof}).
Note that $C_1\cup C_2$ is again a $2$-cut because
\begin{enumerate*}
\item $u_1$ is strictly to the left of $u_2$,
\item the left endpoint of $C_1$ is the right neighbor of $u_1$, and
\item the right endpoint of $C_2$ is the left neighbor of $u_2$.
\end{enumerate*}
As before, we get
\begin{equation*}
\delta_{\vec{F}}^-(C_1\cup C_2) \subseteq \delta_{\vec{F}}^-(C_1) \cup \delta_{\vec{F}}^-(C_2) = \{\vec{\link{l}}_1,\vec{\link{l}}_2\}\enspace,
\end{equation*}
which, because $\vec{\link{l}}_1,\vec{\link{l}}_2\not\in C_1\cup C_2$, implies $\delta^-_{\vec{F}}(C_1\cup C_2)=\emptyset$, contradicting that $\vec{F}$ is a directed WRAP solution.

Finally, a type~3 crossing is impossible, because one could shorten $\vec{\link{l}}_2$ to $(v_1,v_2)$ (or simply delete $\vec{\link{l}}_2$ if $v_1=v_2$).
Indeed, this leads to another directed WRAP solution as any $2$-cut that is covered by $\vec{\link{l}}_1$ or $\vec{\link{l}}_2$ is also covered by either $\vec{\link{l}}_1$ or $(v_1,v_2)$.
To verify that this is the case, one could first replace $\vec{\link{l}}_2$ by the two links $(u_2,v_1)$ and $(v_1,v_2)$, which cover all $2$-cuts that $\vec{\link{l}}_2$ covers because these two links form a path from $u_2$ to $v_2$.
In a second step, one can delete $(u_2,v_1)$ while maintaining a directed WRAP solution, because $(u_2,v_1)$ is a shortening of $\vec{\link{l}}_1$ and therefore only covers $2$-cuts that are also covered by $\vec{\link{l}}_1$.

Hence, we showed that $\vec{F}$ is $G$-planar, $|\delta^-_{\vec{F}}(v)|\leq 1$ for all $v\in V$, and that \ref{item:noTwoSameDir} of \cref{thm:ov_StructureNonShortenable} holds.
It remains to prove that $\vec{F}$ is an $r$-arborescence.
To this end, we start by observing that the in-degrees correspond to those of an $r$-arborescence, i.e., $|\delta_{\vec{F}}^-(v)|=1$ for $v\in V\setminus \{r\}$ and $|\delta_{\vec{F}}^-(r)|=0$.
Indeed, the upper bound of $1$ on the in-degrees together with the fact that every singleton $\{v\}$ for $v\in V \setminus \{r\}$ is a $2$-cut, implies $|\delta_{\vec{F}}^-(v)|=1$ for $v\in V\setminus\{r\}$.
Moreover, $\delta^-(r)=\emptyset$, because any directed link entering $r$ can be deleted as it does not cover any $2$-cut; thus, non-shortenable solutions do not contain such links.

To show that $\vec{F}$ is an $r$-arborescence, we are left to prove that $\vec{F}$ does not contain any cycles (when disregarding orientations).
Suppose for the sake of deriving a contradiction that $\vec{F}$ contains a cycle $\vec{Q} \subseteq \vec{F}$, and let $U\subseteq V$ be the vertices on this cycle.
Because the in-degree of each vertex is at most one, this must be a directed cycle. 
Let $u_1, u_2 \in U$ be the left-most and right-most vertex in $U$, respectively, and let $C\subseteq V$ be the interval from $u_1$ to $u_2$ (we recall that $C$ includes the endpoints $u_1$ and $u_2$). See \cref{fig:noCyclesInNonShortenableSols} for an illustration.
\begin{figure}[!ht]
\begin{center}
\begin{tikzpicture}[scale=1,
ns/.style={thick,draw=black,fill=white,circle,minimum size=6,inner sep=2pt},
es/.style={thick},
lks/.style={line width=1pt, blue, densely dashed},
dlks/.style={lks, -latex},
ts/.style={every node/.append style={font=\scriptsize}}
]

\def\rad{2}
\def\vlfac{1.17}

\pgfkeyssetvalue{/tikz/pics/interval/color}{green!50!black}
\pgfkeyssetvalue{/tikz/pics/interval/radius}{0.3*\rad}
\tikzset{
    pics/interval/.style 2 args={
        code={
               \def\cw{\pgfkeysvalueof{/tikz/pics/interval/radius}}
               \colorlet{col}{\pgfkeysvalueof{/tikz/pics/interval/color}}

               \draw[col, fill=col, fill opacity=0.2] (#1:\rad+\cw) arc (#1:#2:\rad+\cw)
               arc (#2:#2+180:\cw)
               arc (#2:#1:\rad-\cw)
               arc (#1+180:#1+360:\cw);
        }
    },
}

\begin{scope}
\coordinate (c) at (0,0);
\draw[thick] (0,0) circle [radius = \rad];

\begin{scope}[every node/.style={ns}]
\node (r) at (90:\rad) {};

\node (u1) at (130:\rad) {};
\node (u2) at (50:\rad) {};

\node (v1) at (150:\rad) {};
\node (v2) at (205:\rad) {};
\node (v3) at (300:\rad) {};
\node (v4) at (370:\rad) {};
\end{scope}

\begin{scope}
\node at ($(c)!\vlfac!(r)$) {$r$};

\node at ($(c)!\vlfac!(u1)$) {$u_1$};
\node at ($(c)!\vlfac!(u2)$) {$u_2$};
\end{scope}

\begin{scope}[dlks]
\draw (u1) to (v1);
\foreach \i in {1,...,3} {
\pgfmathtruncatemacro\j{\i+1}
\draw (v\i) to (v\j);
}
\draw (v4) to (u2);
\draw (u2) to (u1);
\end{scope}

\begin{scope}
\node[blue] at (105:\rad-0.80) {$\vec{Q}$};
\end{scope}

\begin{scope}
\path (0,0) pic {interval={130}{410}};
\end{scope}

\begin{scope}
\node[green!50!black] at (15:\rad-0.93) {$C$};
\end{scope}

\end{scope}

\end{tikzpicture}
 \end{center}
\caption{Illustration of a directed $G$-planar cycle with $u_1$ being its left-most vertex and $u_2$ its right-most vertex (the orientation of the cycle can be either way).
$C$ is the interval from $u_1$ to $u_2$, including the endpoints.
Any directed link $(s,t)$ with $t\not\in \{u_1,u_2\}$ that enters $C$ must cross the link $(u_1,u_2)$ of the cycle.
}
\label{fig:noCyclesInNonShortenableSols}
\end{figure}
Note that because $\vec{Q}$ is $G$-planar, it either contains the arc $(u_1,u_2)$ or $(u_2,u_1)$.
As $\vec{F}$ is a directed WRAP solution, it contains a directed link $\vec{\link{l}}=(s,t)\in \delta^-_{\vec{F}}(C)$.
The head $t$ of $\vec{\link{l}}$ cannot be in $U$, as each vertex of $U$ already has an incoming link from $\vec{Q}$, and the in-degree of $t$ is one.
Hence, we have in particular $t\not\in \{u_1,u_2\}$, which implies that the link $(s,t)$ is crossing the link in $\vec{Q}$ that connects $u_1$ and $u_2$, thus contradicting $G$-planarity of $\vec{F}$.
\end{proof}

From \cref{thm:ov_StructureNonShortenable} we can quite easily derive the following further basic properties of non-shortenable directed WRAP solutions.
\begin{lemma}\label{lem:furtherNonshortProps}
For any rooted WRAP instance $(G=(V,E),L,c,r, e_r)$ and non-shortenable directed solution $\vec{F} \subseteq \shadows(L)$ thereof, we have:
\begin{enumerate}
\item\label{item:descendants_interval} For $v\in V$, the set of descendants of $v$ in $(V,\vec{F})$ is an interval.
\item\label{item:lca_is_between} For $v_1,v_2\in V$, the least common ancestor of $v_1$ and $v_2$ in $(V,\vec{F})$ lies between $v_1$ and $v_2$.
\end{enumerate}
\end{lemma}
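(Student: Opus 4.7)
The plan is to prove~(i) first by a contradiction argument that exploits the $G$-planarity of $\vec{F}$ guaranteed by item~(ii) of \cref{thm:ov_StructureNonShortenable}, and then to derive~(ii) as a short consequence of~(i) combined with item~(iii) of the same theorem.

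For~(i), observe first that the case $v = r$ is trivial because the descendants of $r$ form all of $V$. So assume $v \neq r$ and let $D$ denote the set of descendants of $v$. Suppose for contradiction that $D$ is not an interval; then there exist $d_1, d_2 \in D$ and some $u \in V \setminus D$ with $u$ strictly between $d_1$ and $d_2$ on the path $(V, E \setminus \{e_r\})$. Because $D$ induces a subtree of the arborescence $(V, \vec{F})$ (item~(i) of \cref{thm:ov_StructureNonShortenable}), the undirected tree-path from $d_1$ to $d_2$ in $(V, \vec{F})$ stays inside $D$, and since its endpoints lie on opposite sides of $u$, it contains an edge $\{a, b\} \in \vec{F}$ with $a, b \in D$ such that $u$ lies strictly between $a$ and $b$. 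The crucial next observation is that no ancestor of $u$ in the arborescence can lie in $D$, so the $r$-to-$u$ tree-path in $(V, \vec{F})$ is disjoint from $D$ and in particular avoids both $a$ and $b$. This path starts at $r$ (which sits strictly to the left of both $a$ and $b$, since $r$ is the leftmost vertex and $r \notin D$) and ends at $u$ (which lies strictly between $a$ and $b$), so it must contain an edge with one endpoint strictly outside the interval from $a$ to $b$ and the other endpoint strictly between $a$ and $b$. By the definition of crossing edges, such an edge crosses $\{a, b\}$, contradicting the $G$-planarity of $\vec{F}$.

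For~(ii), let $a = \lca(v_1, v_2)$. If $a \in \{v_1, v_2\}$, then $a$ trivially lies between $v_1$ and $v_2$. Otherwise $v_1$ and $v_2$ are descendants of two distinct children $c_1, c_2$ of $a$ in $\vec{F}$. By item~(iii) of \cref{thm:ov_StructureNonShortenable}, $a$ has at most one left-outgoing and one right-outgoing link, so $c_1$ and $c_2$ must lie on opposite sides of $a$ on the path; say $c_1$ is to the left and $c_2$ is to the right of $a$. Applying~(i) to $c_1$, the descendants of $c_1$ form an interval containing $c_1$; since $a$ is a proper ancestor of $c_1$ and hence not a descendant of $c_1$, this interval omits $a$, and since it contains the point $c_1$ lying to the left of $a$, the entire interval lies to the left of $a$. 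Hence $v_1$ is to the left of $a$; symmetrically $v_2$ lies to the right of $a$, so $a$ lies (strictly) between $v_1$ and $v_2$.

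The bulk of the technical work sits in~(i): the main obstacle will be to correctly pick the straddling edge $\{a, b\}$ inside the subtree and to verify that the $r$-to-$u$ path is forced to cross it. The key point that makes this work is the disjointness of the $r$-to-$u$ path from $D$, which rules out $a$ and $b$ themselves as intermediate vertices and so forces the path to transit into the open interval between $a$ and $b$ via a genuine crossing. Once~(i) is established, (ii) follows by the short case analysis above, where item~(iii) conveniently separates the two relevant subtrees onto opposite sides of $a$.
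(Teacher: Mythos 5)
Your proposal is correct. Part~(ii) is essentially the paper's argument verbatim: reduce to the case $\lca(v_1,v_2)\notin\{v_1,v_2\}$, use item~(iii) of \cref{thm:ov_StructureNonShortenable} to place the two relevant children of the lca on opposite sides, and then apply part~(i) to their descendant sets.

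Part~(i), however, takes a genuinely different route. The paper proves~(i) by induction on the distance from the root: it shows that $\{v\}\cup I_w$ is an interval for each child $w$ of $v$ by taking a putative ``hole'' vertex $t$ closest to the root, examining its incoming arc $(s,t)$, and deriving a crossing with $(v,w)$. You instead argue directly: given a hole $u\notin D$ straddled by $d_1,d_2\in D$, the tree-path from $d_1$ to $d_2$ stays inside the subtree $D$ and yields an arc $\{a,b\}\subseteq D$ straddling $u$, while the $r$-to-$u$ tree-path is disjoint from $D$ (since no ancestor of $u$ can lie in $D$) and must therefore transition from strictly outside the $a$--$b$ interval to strictly inside it without ever touching $a$ or $b$, forcing a crossing. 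Both proofs ultimately produce the same contradiction with $G$-planarity, but yours dispenses with the induction and the careful ``closest-to-root hole vertex'' selection, replacing it with the clean disjointness observation between the descendant subtree and the root-to-hole path. The paper's induction has the small advantage of simultaneously recording that the descendant set decomposes as $I_u\cup\{v\}\cup I_w$, but that extra structure is not used elsewhere, so your shortcut loses nothing.
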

\begin{proof}
\begin{enumerate}[wide=\parindent,itemsep=0.5em]
\item We prove the statement by induction on the distance of $v$ from the root $r$ in the arborescence $(V,\vec{F})$ (going from larger to smaller distance).
If $v$ has a right-outgoing link $(v,w)\in \vec{F}$, then the set of descendants of $w$ is an interval $I_w$. 
Then, because $v\notin I_w$ and $v$ is left of $w$, $v$ is left of the interval $I_w$.
Suppose, for the sake of deriving a contradiction that there exists a vertex that is right of $v$ and left of all the vertices in $I_w$.
Among all such vertices let $t$ be one that is as close to $r$ as possible in the arborescence $(V,\vec{F})$.
We have $t\ne r$ because $v$ is left of $t$ and $r$ is the leftmost vertex in $V$.
Therefore, there exists a directed link $(s,t)\in \vec{F}$.
By \cref{thm:ov_StructureNonShortenable}~\ref{item:noTwoSameDir}, we have $s \ne v$ because $t$ and $w$ are both right of $v$ and $(v,w)\in \vec{F}$.
Moreover, $t\notin I_w$ implies $s\notin I_w$ because $I_w$ is the set of descendants of $w$ and hence does not have an outgoing link in $\vec{F}$.
Finally, we observe that our choice of $t$ implies that $s$ is not strictly between $v$ and $I_w$, i.e., it is not both right of $v$ and left of $I_w$.
We conclude that $s$ is either left of $v$ or right of $I_w$. 
In both of these cases, the links $(v,w)\in \vec{F}$ and $(s,t)\in \vec{F}$ are crossing, contradicting \cref{thm:ov_StructureNonShortenable}~\ref{item:Gplanar}.
This shows that $\{v\} \cup I_w$ is an interval.
A symmetric argument shows that if $v$ has a left-outgoing link $(v,u)$ and $I_u$ is the set of descendants of $u$, then $\{v \} \cup I_u$ is an interval.

If $v$ has no left-outgoing or no right-outgoing link in $\vec{F}$, let $I_u$ or $I_w$, respectively, be the empty set and otherwise let them be defined as above.
By \cref{thm:ov_StructureNonShortenable}~\ref{item:noTwoSameDir}, the set of descendants of $v$ is $I_u \cup \{v\} \cup I_w$.
This is an interval because both $\{v\} \cup I_u$ and $\{v\} \cup I_w$ are intervals.

\item Let $s$ be the least common ancestor of $v_1$ and $v_2$ in $(V, \vec{F})$.
We assume $s \notin \{v_1,v_2\}$; otherwise $s$ is between $v_1$ and $v_2$, as claimed.
Furthermore, let $(s,t_1)$ be the first link of the $s$-$v_1$ path in $\vec{F}$ and let $(s,t_2)$ be the first link of the $s$-$v_2$ path in $\vec{F}$.
By \cref{thm:ov_StructureNonShortenable}~\ref{item:noTwoSameDir}, the links $(s,t_1)$ and $(s,t_2)$ go in opposite directions, say $t_1$ is left of $s$ and $t_2$ is right of $s$.
Applying \cref{lem:furtherNonshortProps}~\ref{item:descendants_interval} to $t_1$ and $t_2$, we get that the descendants of $t_1$ form an interval and so do the descendants of $t_2$.
Because $s$ is a strict ancestor of both $t_1$ and $t_2$, these intervals do not contain $s$.
Hence, all descendants of $t_1$ are left of $s$, whereas all descendants of $t_2$ are right of $s$.
In particular, $v_1$ is left of $s$ and $v_2$ is right of $s$, implying that $s$ lies between $v_1$ and $v_2$, as claimed.
\end{enumerate}
\end{proof}
 \section{Dropping Links from Directed Solutions}\label{sec:drop}

In this section we characterize the set of links that we can drop from a non-shortenable directed WRAP solution when adding some undirected links.
This will be needed both in the relative greedy algorithm presented in \cref{sec:thin-components} and the local search algorithm in \cref{sec:local-search}.
Given a non-shortenable directed WRAP solution $\vec{F}$, we start by defining, for every cut $C\in \mathcal{C}_G$, a directed link $\vec{\link{l}}$ that is responsible for $C$.
In our relative greedy algorithm, we will remove $\vec{\link{l}}$ as soon as we added undirected links covering all cuts $\vec{\link{l}}$ is responsible for.

\begin{definition}\label{def:responsible}
Let $(G,L,c,r, e_r)$ be a rooted WRAP instance and $\vec{F} \subseteq \shadows(L)$ be a non-shortenable directed solution thereof.
We say that a directed link $(u,v)\in \vec{F}$ is \emph{responsible} for a 2-cut $C\in \Cscr_G$ if  $(u,v)$ covers $C$ and no link on the $r$-$u$ path in the arborescence $(V,\vec{F})$ covers $C$.
We denote the set of 2-cuts for which $\vec{\link{l}}\in \vec{F}$ is responsible by
\[
 \Rscr_{\vec{F}}(\vec{\link{l}}) \coloneqq \Bigl\{ C\in \Cscr_G : \vec{\link{l}}\text{ is responsible for }C \Bigr\}.
\]
\end{definition}

Because $\vec{F}$ is a directed solution, for every cut $C\in \Cscr_G$ there is at least  one directed link $\vec{\link{l}}\in \vec{F}$ that is responsible for $C$.
We now show the stronger statement that for every 2-cut $C\in \Cscr_G$ there is \emph{exactly one} link $\vec{\link{l}}\in \vec{F}$ that is responsible for $C$.
See \cref{fig:responsible_cut} for an illustration of the proof.

\begin{figure}[!ht]
\begin{center}
\begin{tikzpicture}[scale=1.4,
ns/.style={thick,draw=black,fill=white,circle,minimum size=6,inner sep=2pt},
es/.style={thick},
lks/.style={line width=1pt, blue, densely dashed},
dlks/.style={lks, -latex},
ts/.style={every node/.append style={font=\scriptsize}}
]

\def\rad{2}
\def\num{13}

\begin{scope}[every node/.style={ns}]
\foreach \i in {1,...,\num} {
  \pgfmathsetmacro\r{90+(\i-1)*360/\num}
  \node (\i) at (\r:\rad) {};
}
\end{scope}

\begin{scope}
\node at (1)[above=2pt] {$r$};
\path (\num) to node[above=-2pt] {$e_r$} (1);
\node at (6) [below=2pt] {$v$};
\end{scope}

\begin{scope}[es]
\foreach \i in {1,...,\num} {
\pgfmathtruncatemacro\j{1+mod(\i,\num)}
\draw (\i) -- (\j);
}
\end{scope}

\def\cw{0.45}
\begin{scope}[every path/.style={fill=green!50!black, fill opacity=0.2,draw=green!50!black, thick}]

\newcommand\Interval[2]{
  \pgfmathsetmacro{\ra}{90+(#1-1)*360/\num}
  \pgfmathsetmacro{\rb}{90+(#2-1)*360/\num}
\draw (\ra:\rad+\cw) arc (\ra:\rb:\rad+\cw)
  arc (\rb:\rb+180:\cw)
  arc (\rb:\ra:\rad-\cw)
  arc (\ra+180:\ra+360:\cw)
;
}

\Interval{5}{9}
\end{scope}
\node[green!50!black] () at (1.8,-1.9) {$C$};
\node[orange] () at (-0.7,0.7) {$\vec{\link{l}}$};

\begin{scope}[dlks]
\draw (1) to[bend left] (2);
\draw[orange] (2) to[bend left=10] (6);
\draw (6) to[bend right, in=-120, out=-10] (4);
\draw (4) to[bend right] (3);
\draw[red] (4) to[bend left] (5);
\draw (6) to[bend left, in=160, out=10] (12);
\draw (12) to[bend left] (13);
\draw (12) to[bend right] (11);
\draw[red] (11) to[bend right, out=-10] (8);
\draw (8) to[bend right] (7);
\draw (8) to[bend left] (9);
\draw (9) to[bend left] (10);
\end{scope}

\end{tikzpicture}
 \end{center}
\caption{The dashed edges show a non-shortenable directed WRAP solution.
A cut $C\in \Cscr_G$ is highlighted in green and two directed links that are both covering the cut~$C$ are highlighted in red. 
The vertex $v$ is the least common ancestor of all vertices in $C$ and is contained in $C$ itself.
Thus, directed link $\vec{\link{l}}$ entering $v$ (shown in orange) covers the cut $C$, which implies that the red links  are not responsible for~$C$.}
\label{fig:responsible_cut}
\end{figure}

\begin{lemma}\label{lem:responsibility_unique}
Let $(G,L,c,r, e_r)$ be a rooted WRAP instance and $\vec{F} \subseteq \shadows(L)$ be a non-shortenable directed solution thereof. 
Then the sets $\Rscr_{\vec{F}}(\vec{\link{l}})$ with $\vec{\link{l}}\in \vec{F}$ form a partition of $\Cscr_G$.
\end{lemma}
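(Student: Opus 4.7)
The plan is to show, for each $C\in\Cscr_G$, that exactly one directed link of $\vec{F}$ is responsible for $C$. Existence is straightforward: since $\vec{F}$ is a directed solution, some link covers $C$, and among all such $(u,v)\in\vec{F}$, choosing one that minimizes the depth of $u$ in the arborescence $(V,\vec{F})$ (guaranteed by \cref{thm:ov_StructureNonShortenable}) yields a responsible link, because any link on the $r$-$u$ path that covered $C$ would contradict this minimality.

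For uniqueness, I will argue by contradiction. Suppose $(u_1,v_1)$ and $(u_2,v_2)$ are two distinct responsible links for the same cut $C$. Covering $C$ implies $v_1,v_2\in C$ and $u_1,u_2\notin C$. By non-shortenability, $|\delta^-_{\vec{F}}(v)|\le 1$ for every vertex $v$, which rules out $v_1=v_2$. Moreover, if $v_1$ were a strict ancestor of $v_2$, then $(u_1,v_1)$ would lie on the $r$-$u_2$ path while covering $C$, contradicting the responsibility of $(u_2,v_2)$; by symmetry, neither of $v_1,v_2$ is an ancestor of the other.

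The core of the argument rests on a single vertex: let $w$ be the shallowest vertex of $C$ in $(V,\vec{F})$. Using the fact that $C$ is an interval on the ring (it is a $2$-cut not containing the root) together with the between-ness property \cref{lem:furtherNonshortProps}~\ref{item:lca_is_between} of LCAs in non-shortenable solutions, I will show three things in turn. First, $w$ is uniquely defined: any two equally shallow candidates in $C$ would have an LCA that is strictly shallower and lies between them on the ring, hence inside the interval $C$, contradicting shallowest-ness. Second, $w$ is a common ancestor of every vertex of $C$: for any $v'\in C$ not a descendant of $w$, the LCA of $w$ and $v'$ would again yield a strictly shallower vertex of $C$. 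Third, the parent $u$ of $w$ (which exists because $w\in C$ forces $w\neq r$) must satisfy $u\notin C$, by the same minimality argument. Hence the link $(u,w)\in\vec{F}$ covers~$C$.

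To conclude, I combine these observations with the preliminary one: $w$ is a common ancestor of $v_1$ and $v_2$, but these two are not ancestors of each other, which forces $w\notin\{v_1,v_2\}$. Hence $w$ is a strict ancestor of, say, $v_1$, and therefore of $u_1$, so $(u,w)$ lies on the $r$-$u_1$ path while covering $C$, contradicting the responsibility of $(u_1,v_1)$. I expect the main obstacle to be the second of the three steps above---showing that the shallowest vertex $w$ dominates the entire interval $C$ in the arborescence. This is where the interval structure of $2$-cuts and the LCA-between-ness property have to be used together in a nontrivial way; once this is in hand, the rest of the argument falls out cleanly from the definitions.
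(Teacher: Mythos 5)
Your proof is correct and follows essentially the same approach as the paper's: both single out the shallowest vertex of $C$ in the arborescence $(V,\vec{F})$, use \cref{lem:furtherNonshortProps}~\ref{item:lca_is_between} together with the interval structure of a $2$-cut to show it is an ancestor of all of $C$, and conclude that its unique incoming arc is the unique responsible link. The paper's version is simply more compressed (it states the existence of the dominating vertex in $C$ directly and avoids the explicit contradiction setup), while yours spells out the intermediate steps — uniqueness of the shallowest vertex, its domination of all of $C$, and the deduction that its parent lies outside $C$.
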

\begin{proof}
We have to show that, for every $2$-cut $C\in \mathcal{C}_G$, precisely one directed link $\vec{\link{l}}$ is responsible for $C$.
By \cref{lem:furtherNonshortProps}~\ref{item:lca_is_between}, the least common ancestor of any pair of vertices in $C$ lies in $C$.
Hence, $C$ contains a vertex $v\in C$ that is ancestor of all other vertices of $C$.
As $\vec{F}$ is an $r$-arborescence, there is a unique directed link $\vec{\link{l}}\in \vec{F}$ that enters $v$, which also enters $C$.
Because $v$ is ancestor to all other vertices of $C$, any other link $(s,t)$ entering $C$ fulfills that $\vec{\link{l}}$ is on the unique $r$-$s$ path in $(V,\vec{F})$.
Hence, $\vec{\link{l}}$ is responsible for $C$ and it is the only link responsible for $C$.
\end{proof}

For a non-shortenable directed WRAP solution $\vec{F}$ and a set $K\subseteq L$, we now define the set  $\Drop_{\vec{F}}(K)$ of directed links that we will remove from $\vec{F}$ when adding $K$.

\begin{definition}\label{def:Drop}
Let $(G,L,c,r, e_r)$ be a rooted WRAP instance and $\vec{F} \subseteq \shadows(L)$ be a non-shortenable directed solution thereof.
Then, for a link set $K\subseteq L$, we define $\Drop_{\vec{F}}(K)\subseteq \vec{F}$ by
\[
   \Drop_{\vec{F}}(K) \coloneqq \Bigl\{ \vec{\link{l}}\in \vec{F}: \text{ for all }C\in \Rscr_{\vec{F}}(\vec{\link{l}})\text{ we have } \delta_K(C) \ne \emptyset \Bigr\},
\]
i.e., a directed link $\vec{\link{l}}\in \vec{F}$ belongs to $\Drop_{\vec{F}}(K)$ if all cuts $\vec{\link{l}}$ is responsible for are covered by $K$.
\end{definition}

The fact that for every cut $C\in \Cscr_G$ there is a link $\vec{\link{l}}$ that is responsible for $C$, implies that the resulting link set $(\vec{F} \cup K) \setminus \Drop_{\vec{F}}(K)$ is a mixed WRAP solution.

In the remainder of this section we provide a useful characterization for when a link $\vec{\link{l}}\in \vec{F}$ is contained in $\Drop_{\vec{F}}(K)$.
For this, we need the notion of the \emph{link intersection graph}; similar concepts have previously been used in the context of CAP and WCAP in \cite{basavaraju_2014_parameterized, byrka_2020_breaching, nutov_2021_approximation, angelidakis2021node}.
Recall that two links in $L$ are called \emph{intersecting} if they are crossing or have a common endpoint.
An example is shown in \cref{fig:linkIntersection}.
For a fixed WRAP instance $(G,L,c)$, the link intersection graph $H$ is the graph with vertex set $L$ that contains an edge between two links if and only if these links are intersecting (\cref{def:linkIntersectionGraph}).
Moreover, for a set $K\subseteq L$, the link intersection graph of $K$ is the subgraph $H[K]$ of $H$ induced by $K$.

\begin{figure}[!ht]
\begin{center}
\begin{tikzpicture}[scale=1,
ns/.style={thick,draw=black,fill=white,circle,minimum size=6,inner sep=2pt},
es/.style={thick},
lks/.style={line width=1pt, densely dashed},
dlks/.style={lks, -latex},
ts/.style={every node/.append style={font=\scriptsize}}
]

\def\rad{2}
\def\num{12}

\begin{scope}[every node/.style={ns}]
\foreach \i in {1,...,\num} {
  \pgfmathsetmacro\r{90+(\i-1)*360/\num}
  \node (\i) at (\r:\rad) {};
}
\end{scope}

\begin{scope}[es]
\foreach \i in {1,...,\num} {
\pgfmathtruncatemacro\j{1+mod(\i,\num)}
\draw (\i) -- (\j);
}
\end{scope}

\begin{scope}[lks]
\draw[blue] (11) -- node[below,pos=0.5] {$\link{l}_1$} (4);
\draw[red!70!black] (2) -- node[right,pos=0.7] {$\link{l}_2$} (6);
\draw[green!40!black](7) -- node[below right=-3pt,pos=0.5] {$\link{l}_3$} (11);
\end{scope}

\end{tikzpicture}
 \end{center}
\caption{Example illustrating the definition of intersecting links. The link $\link{l}_1$ intersects both $\link{l}_2$ and $\link{l}_3$, but $\link{l}_2$ does not intersect $\link{l}_3$.}
\label{fig:linkIntersection}
\end{figure}
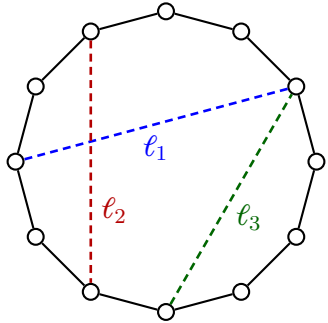

A graph that can arise as the intersection graph of segments of a circle, like link intersection graphs of WRAP instances, is called a \emph{circle graph}. 
The class of circle graphs has been intensively studied in Graph Theory under a variety of aspects, including chromatic numbers~\cite{garey_1980_complexity,unger_1988_k-colouring,gyarfas_1985_chromatic,davies_2021_circle}, cliques~\cite{gavril_1973_algorithms, tiskin_2015_fast}, independent sets~\cite{gavril_1973_algorithms,nash_2010_output}, dominating sets~\cite{keil_1993_complexity}, recognition~\cite{spinrad_1994_recognition}, and treewidth~\cite{kloks_1996_treewidth}. (Due to the extensive work in the field, the above list is highly incomplete; we suggest the interested reader to also check the work cited in the above references.)

Recall from \cref{def:ConnectedInLinkIntersectionGraph} that, for $K\subseteq L$, we say that a vertex $u\in V$ is \emph{connected to} a vertex $v\in V$ \emph{in the link intersection graph $H[K]$ of $K$} if there is a path in $H[K]$ from a link incident to $u$ to a link incident to a $v$.
With this definition we can characterize when a link set $K\subseteq L$ covers a cut $C\subseteq V\setminus \{r\}$.

\begin{lemma}\label{lem:paths_in_intersection_graphs_and_cuts}
Let $(G,L,c,r, e_r)$ be a rooted WRAP instance and $K\subseteq L$.
Moreover, let $C \subseteq V\setminus \{r\}$ be a cut.
Then $\delta_K(C) \neq \emptyset$ if and only if  there exists a vertex $v\in C$  that is connected to a vertex $w \in V\setminus C$ in the link intersection graph of $K$.
\end{lemma}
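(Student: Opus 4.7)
The plan is to prove the two directions separately, with the forward direction being essentially immediate and the backward direction being where the structure of $C$ as an interval along the ring becomes essential. Throughout, I would rely on the fact that, for a cycle $G$ and a $2$-cut $C\in \Cscr_G$, both $C$ and $V\setminus C$ are arcs of consecutive vertices along $G$.

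For the forward direction, I would observe that if $\delta_K(C)\ne\emptyset$, then any single link $\link{l}=\{v,w\}\in K$ with $v\in C$ and $w\in V\setminus C$ already yields a length-zero path $\link{l}$ in $H[K]$ from a link incident to $v$ to a link incident to $w$, giving the required connection in the link intersection graph.

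For the backward direction, I would start from a path $\link{l}_1,\link{l}_2,\dots,\link{l}_k$ in $H[K]$ such that $\link{l}_1$ is incident to some $v\in C$ and $\link{l}_k$ is incident to some $w\in V\setminus C$, and let $i$ be the smallest index for which $\link{l}_i$ has an endpoint in $V\setminus C$ (such $i$ exists, witnessed by $i=k$). If $\link{l}_i$ also has an endpoint in $C$, then $\link{l}_i\in\delta_K(C)$ and we are done. Otherwise both endpoints of $\link{l}_i$ lie in $V\setminus C$, and then $i\ge 2$ because $\link{l}_1$ is incident to $v\in C$; by minimality of $i$, both endpoints of $\link{l}_{i-1}$ lie in $C$.

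The core step—and the main obstacle—is to derive a contradiction from the fact that $\link{l}_{i-1}$ and $\link{l}_i$ must intersect, since they are adjacent in $H[K]$. They cannot share an endpoint, because their endpoints live in the disjoint sets $C$ and $V\setminus C$. They cannot cross either: since $C$ is a $2$-cut of the cycle, both $C$ and $V\setminus C$ are intervals of consecutive vertices along $G$, and two chords whose endpoints lie entirely within two disjoint arcs cannot interleave cyclically. This geometric observation is the only nontrivial ingredient, and it is exactly the place where the hypothesis that $C$ is a $2$-cut (rather than an arbitrary subset of $V\setminus\{r\}$, for which the statement would fail) is used.
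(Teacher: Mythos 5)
Your proof is correct and follows essentially the same approach as the paper: locate an edge $\{\link{l}_{i-1},\link{l}_i\}$ on the path in $H[K]$ where $\link{l}_{i-1}$ has both endpoints in $C$ and $\link{l}_i$ has both endpoints in $V\setminus C$, then note that two chords with endpoints in two disjoint arcs of the cycle cannot intersect. The only difference is cosmetic—you argue directly via the first index touching $V\setminus C$, whereas the paper assumes $\delta_K(C)=\emptyset$ and derives a contradiction—but the key observation and the place where the hypothesis that $C$ is a $2$-cut (hence an interval) enters are identical.
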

\begin{proof}
If there exists a link $\link{l}=\{v,w\}\in \delta_K(C)$, where without loss of generality $v\in C$ and $w\notin C$, then the path $H[\{\link{l}\}]$ connects $v$ to $w$ in the link intersection graph $H[K]$.

Now suppose there exists a path $P$ in $H[K]$ from a link $\link{l}$ incident to $v \in C$ to a link $\link{f}$ incident to a vertex $w\in V\setminus C$.
Suppose for the sake of deriving a contradiction that $\delta_K(C) = \emptyset$.
Then every link in $K$ has either both endpoints in $C$ or both endpoints outside of $C$.
Because $v\in C$, the link $\link{l}$ has both endpoints in $C$ and because $w\notin C$, the link $\link{f}$ has both endpoints outside $C$.
Therefore, the path $P$ has an edge $\{\link{l}', \link{f}'\}$ such that $\link{l}'$ has both endpoints inside $C$ and $\link{f}'$ has both endpoints outside $C$.
Because $C$ is a $2$-cut of $G$ and hence an interval, this implies that the links $\link{l}'$ and $\link{f}'$ do not intersect, which contradicts the fact that $\{\link{l}', \link{f}'\}$ is an edge of the intersection graph $H[K]$ of $K$.
\end{proof}

We will show that a directed link $(u,v)\in \vec{F}$ is contained in $\Drop_{\vec{F}}(K)$ if and only if $v$ is connected to a \emph{$v$-good} vertex in the link intersection graph of $K$. 
(Recall that \cref{fig:characterizeDrop} shows an example.)

\begin{definition}\label{def:v-good}
Let $(G=(V,E),L,c,r, e_r)$ be a rooted WRAP instance and $\vec{F} \subseteq \shadows(L)$ be a non-shortenable directed solution thereof.
Then a vertex $u\in V$ is called \emph{$v$-bad} if $u$ is a descendant of $v$ in the arborescence $(V,\vec{F})$.
Otherwise, $u$ is called \emph{$v$-good}.
In particular, $v$ itself is $v$-bad.
\end{definition}

Note that by \cref{lem:furtherNonshortProps}~\ref{item:descendants_interval} the set of $v$-bad vertices is an interval.
The notion of $v$-bad vertices allows us to characterize the $2$-cuts for which a directed link $(u,v)\in \vec{F}$ is responsible as follows.

\begin{lemma}\label{lem:characterize_responsible}
Let $(G=(V,E),L,c,r, e_r)$ be a rooted WRAP instance and $\vec{F} \subseteq \shadows(L)$ be a non-shortenable directed solution thereof.
A directed link $(u,v)\in \vec{F}$ is responsible for a 2-cut $C\in \Cscr_G$ if and only if $v \in C$ and all vertices in $C$ are $v$-bad.
\end{lemma}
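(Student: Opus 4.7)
The plan is to prove the two directions of the equivalence separately, leveraging the characterization of responsibility developed in the proof of \cref{lem:responsibility_unique} together with the structural properties of $\vec{F}$ from \cref{thm:ov_StructureNonShortenable} and \cref{lem:furtherNonshortProps}.

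First, I will handle the forward direction. Suppose $(u,v)\in \vec{F}$ is responsible for $C$. Since $(u,v)$ covers $C$, i.e.\ $(u,v)\in \delta^-_{\vec{F}}(C)$, we immediately obtain $v\in C$ and $u\notin C$. To show that every vertex of $C$ is $v$-bad, I will invoke the argument from \cref{lem:responsibility_unique}: by \cref{lem:furtherNonshortProps}~\ref{item:lca_is_between}, the least common ancestor of any two vertices in $C$ lies between them in $G\setminus e_r$, and because $C$ is an interval not containing $r$, this LCA therefore lies in $C$. Iterating, there is a unique $v^\star\in C$ that is an ancestor (in the arborescence $(V,\vec{F})$) of every vertex of $C$. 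The unique link in $\delta^-_{\vec{F}}(v^\star)$ is then responsible for $C$, and any other link entering $C$ fails to be responsible because the link into $v^\star$ lies on the $r$-to-tail path of such a link. Hence $v=v^\star$, so every vertex of $C$ is a descendant of $v$, i.e.\ $v$-bad.

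For the backward direction, suppose $v\in C$ and every vertex of $C$ is $v$-bad. First I show $(u,v)$ covers $C$: since $u$ is the parent of $v$ in $(V,\vec{F})$, $u$ is not a descendant of $v$, hence $v$-good, hence $u\notin C$; combined with $v\in C$ this gives $(u,v)\in \delta^-_{\vec{F}}(C)$. Next I verify that no link on the $r$-$u$ path in $(V,\vec{F})$ covers $C$. Any such link $(s,t)$ has the property that $t$ is an ancestor of $u$, and therefore a strict ancestor of $v$; similarly $s$ is a strict ancestor of $v$. So both $s$ and $t$ are $v$-good, and thus both lie outside $C$, so $(s,t)\notin \delta_{\vec{F}}(C)$. (If $u=r$, the $r$-$u$ path is empty and the condition is vacuous.) Together, these two facts give exactly the definition of $(u,v)$ being responsible for $C$.

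The main (and only) subtlety is in the forward direction, where one must argue that an entire $2$-cut $C$ consists of descendants of $v$ rather than just containing $v$. The key leverage is \cref{lem:furtherNonshortProps}~\ref{item:lca_is_between} combined with the fact that $C$ is an interval of the cycle not containing $r$; this forces the LCA of any pair in $C$ to remain in $C$ and singles out the ancestor-vertex $v$ whose incoming link must be the one responsible for $C$. Once that ancestor characterization is in hand, the rest is bookkeeping about ancestors/descendants in the arborescence.
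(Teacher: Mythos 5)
Your proof is correct and follows essentially the same approach as the paper: the backward direction is the same ancestor/descendant bookkeeping, and the forward direction reconstructs the lca-based argument from \cref{lem:responsibility_unique} (the paper phrases this slightly more compactly by observing that for every $t\in C$ the $r$-$t$ path must contain $(u,v)$, but it is the same idea). No gaps.
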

\begin{proof}
Consider a $2$-cut $C$ with $v \in C$ and suppose all vertices in $C$ are $v$-bad.
Then, because $C$ consists only of descendants of $v$ in the arborescence $(V,\vec{F})$, the directed link $(u,v)$ is entering $C$ and no other link on the $r$-$u$ path in $(V,\vec{F})$ is entering $C$. This shows that $(u,v)$ is responsible for the cut $C$.

Let us now consider a cut $C\in \mathcal{C}_G$ for which $(u,v)$ is responsible.
The definition of a link being responsible for a cut (\cref{def:responsible}) and the fact that only one link is responsible for any cut (\cref{lem:responsibility_unique}) imply the following.
For any $t\in C$, the unique $r$-$t$ path in $(V,\vec{F})$ contains the arc $(u,v)$.
Hence, all vertices of $C$ are descendants of $v$, i.e., all vertices of $C$ are $v$-bad, as claimed.
\end{proof}

We are now ready to prove the main result of this section, \cref{lem:ov_characterize_drop}, which we repeat below for convenience.
We recall that \cref{fig:characterizeDrop} provides an illustration for this statement.
\ovCharDrop*
\begin{proof}
Let us first consider the case in which $v$ is connected to a $v$-good vertex $w$ in the link intersection graph $H[K]$ of $K$.
Let $C\in \Cscr_G$ be a 2-cut for which the directed link $(u,v)$ is responsible.
By \cref{lem:characterize_responsible}, we have $v\in C$ and all vertices in $C$ are $v$-bad.
In particular, the $v$-good vertex $w$ is not contained in $C$.
By \cref{lem:paths_in_intersection_graphs_and_cuts}, this implies $\delta_K(C) \neq \emptyset$.

Let us now consider the remaining case in which $v$ is not connected to a $v$-good vertex in the link intersection graph $H[K]$ of $K$.
If $\delta_K(v)=\emptyset$, then $\{v\} \in \Cscr_G$ is a 2-cut for which the directed link $(u,v)$ is responsible and that is not covered by $K$.
Otherwise, let $I\subseteq V$ be the interval from the leftmost vertex to which $v$ is connected in $H[K]$ to the rightmost vertex to which $v$ is connected in $H[K]$.
By \cref{lem:furtherNonshortProps}~\ref{item:descendants_interval} and \cref{def:v-good}, all vertices in $I$ are $v$-bad and thus, by \cref{lem:characterize_responsible}, $I$ is a $2$-cut for which $(u,v)$ is responsible. 
We now prove $\delta_K(I) = \emptyset$.
We suppose there is a link $\{w, \bar w\} \in \delta_K(I)$ and derive a contradiction.
Without loss of generality, assume that $w\in I$ and $w$ is to the right of $v$.
(Note that $v\ne w$ because otherwise $v$ would be connected to the vertex $\bar w \notin I$ in the link intersection graph $H[K]$, contradicting the definition of the interval $I$.)
Let $P$ be a path in the link intersection graph $H[K]$ of $K$ from a link incident to $v$ to a link incident to the rightmost vertex in $I$, which exists by the definition of $I$. 
Let $\link{l}$ be the first link in $P$ for which at least one endpoint of $\link{l}$ is not left of $w$.
Because $\link{l}$ is either incident to $v$ or intersects with its predecessor  in $P$, one endpoint of $\link{l}$ is left of $w$; the other endpoint is either $w$ or a vertex in $I$ to the right of $w$.
Therefore, $\link{l}$ intersects the link $\{w, \bar w\}$, implying that $\bar w$ is connected to $v$ in the link intersection graph $H[K]$.
Because $\bar w\notin I$, this contradicts the definition of the interval $I$.
We conclude $\delta_K(I) = \emptyset$.
\end{proof}

We remark that the above characterization of $\Drop_{\vec{F}}(K)$ is closely linked to the formulation of WCAP as a node-weighted Steiner tree problem introduced in \cite{basavaraju_2014_parameterized} and later used in \cite{byrka_2020_breaching, nutov_2021_approximation, angelidakis2021node}.
To see this, we consider the graph $\tilde H$ that arises from the link intersection graph $H$ by adding every $v\in V$ to the vertex set and adding edges $\{v,\link{l}\}$ for every link $\link{l}\in K$ that has $v$ as an endpoint.
We observe that a link set $K\subseteq L$ is a WRAP solution if and only if $\Drop_{\vec{F}}(K) = \vec{F}$.
By \cref{lem:characterize_drop}, this is the case if and only if every vertex $v\in V$ is connected to a $v$-good vertex in $H[K]$, i.e., $v$ is connected to a $v$-good vertex $w\in V$ in $\tilde H[V\cup K]$. 
Unless $w=r$, the $v$-good vertex $w$ to which $v$ is connected must be connected to a $w$-good vertex itself.
Repeating this reasoning ultimately implies that all vertices in $V$ must be connected to the root $r$.
This allows for interpreting WRAP as a node-weighted Steiner tree problem in $\tilde H$ with terminal set $V$, where all vertices in $V$ have zero weight and the weight of a vertex of $\tilde{H}$ that corresponds to a link equals the cost of the link.
 \section{Thin Components and the Relative Greedy Algorithm}\label{sec:thin-components}

This section provides details on our discussion in \cref{sec:ovThinComps}.
In particular, we give a first simple approximation algorithm for WCAP with an approximation ratio that is better than $2$ and introduce several key technical components needed to prove our main result, \cref{thm:main}.
The algorithm presented in this section is a relative greedy algorithm that, on a high level, is inspired by an algorithm for WTAP from \cite{traub_2021_better}.
It achieves an approximation ratio of $1+\ln 2 + \epsilon$ for any fixed $\epsilon >0$.
In Section~\ref{sec:local-search}, we show how the local search technique from \cite{traub_2022_local} allows us to refine the relative greedy algorithm to a $(1.5+\epsilon)$-approximation algorithm.

\subsection{Thin Components}
The high-level idea of relative greedy algorithms is the following. 
We start with a highly structured but rather weak approximation, which in our setting will be a cheapest non-shortenable directed WRAP solution.
Then we iteratively replace parts of this starting solution by stronger components, chosen out of a class $\mathfrak{K}\subseteq 2^L$, which we will specify later.

Throughout our relative greedy algorithm we maintain a mixed WRAP solution $\vec{F}_i \cupp S_i$, where $\vec{F}_i \subseteq \shadows(L)$ and $S_i \subseteq L$.
Initially, we set $S_0 \coloneqq \emptyset$ and let $\vec{F}_0$ be a cheapest non-shortenable directed WRAP solution.
Then, we iteratively choose a component $K \in \mathfrak{K}$ and add $K$ to our mixed WRAP solution while removing links in $\Drop_{\vec{F}_0}(K)$.
More precisely, in the $i$th iteration of the algorithm we choose a component 
$K \in \mathfrak{K}$ and set $S_i \coloneqq S_{i-1} \cup K$ and $\vec{F}_i \coloneqq \vec{F}_{i-1} \setminus \Drop_{\vec{F}_0}(K)$.

As mentioned in \cref{sec:ovThinComps}, the class $\mathfrak{K} \subseteq 2^L$ of components that we will use in our relative greedy algorithm consists of what we call $\alpha$-thin link sets (for some constant $\alpha$).
The definition of $\alpha$-thin link sets relies on the notion of maximal laminar subfamilies $\Lscr$ of the set $\Cscr_G$ of 2-cuts of $G$.
A set $\Lscr \subseteq \Cscr_G$ is laminar if any pair of cuts $C_1,C_2\in \Lscr$ fulfills either $C_1 \cap C_2 =\emptyset$, $C_1 \subseteq C_2$, or $C_2 \subseteq C_1$.
A set $\Lscr \subseteq \Cscr_G$ is a maximal laminar subfamily of $\Cscr_G$ if there is no cut $C\in \Cscr_G \setminus \Lscr$ such that $\Lscr \cup \{C\}$ is laminar.
Alternatively, a maximal laminar subfamily of $\mathcal{C}_G$ can be described as follows.

\begin{lemma}\label{lem:maximal_laminar}
A set $\Lscr \subseteq \Cscr_G$ is a maximal laminar subfamily of $\Cscr_G$ if and only if $\Lscr$ is laminar and
\begin{enumerate}
\item\label{item:laminar_family_contains_singletons} $\{v\}\in \Lscr$ for all $v\in V\setminus \{r\}$,
\item\label{item:laminar_family_contains_full_set} $V\setminus \{r\} \in \Lscr$, and
\item\label{item:laminar_family_is_binary} for all $C\in \Lscr$ with $|C| > 1$, there are two cuts $C_1,C_2\in \Lscr$ with $C = C_1 \cupp C_2$.
\end{enumerate}
\end{lemma}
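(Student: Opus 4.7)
The plan is to prove the two directions separately, using throughout the basic fact that every $C\in \mathcal{C}_G$ is a (non-empty) sub-interval of $V\setminus\{r\}$ along the path $(V, E\setminus\{e_r\})$, so laminarity of a subfamily of $\mathcal{C}_G$ simply means that no two of its intervals cross.

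For the direction $(\Leftarrow)$, assume $\mathcal{L}$ is laminar and satisfies (i)--(iii). To show maximality, pick $C\in \mathcal{C}_G$ with $\mathcal{L}\cup\{C\}$ laminar, and argue $C\in \mathcal{L}$. By (ii), the collection of $C'\in \mathcal{L}$ with $C\subseteq C'$ is non-empty, so let $C'$ be an inclusion-minimal such element. If $C=C'$ we are done; otherwise $|C'|\geq 2$ and (iii) gives $C' = C_1\cupp C_2$ with $C_1,C_2\in \mathcal{L}$. Since $C$ is laminar with $C_1$ and with $C_2$, a short case check using $C\subseteq C_1\cupp C_2$ and $C_1\cap C_2=\emptyset$ rules out $C_1\subseteq C$, $C_2\subseteq C$, and intersecting both $C_1$ and $C_2$ non-trivially, leaving only $C\subseteq C_1$ or $C\subseteq C_2$. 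Either option contradicts minimality of $C'$.

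For the direction $(\Rightarrow)$, assume $\mathcal{L}$ is a maximal laminar subfamily of $\mathcal{C}_G$. Properties (i) and (ii) are immediate, since for every $v\in V\setminus\{r\}$ the singleton $\{v\}$ is laminar with every interval in $\mathcal{C}_G$, and so is $V\setminus\{r\}$ itself; maximality then forces both into $\mathcal{L}$. For (iii), take $C\in \mathcal{L}$ with $|C|\geq 2$ and let $v_i$ be its leftmost vertex. By (i), $\{v_i\}\in \mathcal{L}$ satisfies $v_i\in\{v_i\}\subsetneq C$, so there exists a unique inclusion-maximal $C_1\in \mathcal{L}$ with $v_i\in C_1\subsetneq C$ (uniqueness: any two such sets meet at $v_i$ and hence are nested). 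Set $C_2\coloneqq C\setminus C_1$, which is a non-empty sub-interval of $V\setminus\{r\}$ and therefore lies in $\mathcal{C}_G$. The task reduces to showing that $C_2$ is laminar with every $C'\in\mathcal{L}$, since then maximality of $\mathcal{L}$ yields $C_2\in \mathcal{L}$.

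The verification splits on the relation between $C'$ and $C$, which is itself constrained by the laminarity of $\mathcal{L}$: either $C'\cap C=\emptyset$ (immediately laminar with $C_2\subseteq C$), or $C\subseteq C'$ (then $C_2\subseteq C'$), or $C'\subsetneq C$. In the last case I further split on whether $v_i\in C'$: if yes, maximality of $C_1$ forces $C'\subseteq C_1$, so $C'\cap C_2=\emptyset$; if no, then $C'$ and $C_1$ are both contained in $C$ but $C_1$ contains $v_i\notin C'$, so if $C'$ were to cross the boundary between $C_1$ and $C_2$, i.e.\ meet both, then laminarity of $\mathcal{L}$ (applied to $C'$ and $C_1$) would give $C_1\subseteq C'$, contradicting $v_i\notin C'$. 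The main technical pitfall I anticipate is being precise about this ``no crossing at the split point'' argument and about the existence and uniqueness of the maximal $C_1$; once those are set up carefully the remaining case work is mechanical, so I expect no deeper obstacle beyond bookkeeping.
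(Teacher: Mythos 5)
Your proposal is correct, and the overall strategy—exploiting that every member of $\mathcal{C}_G$ is an interval and using maximality of $\mathcal{L}$ to force a missing set into $\mathcal{L}$—is the same as the paper's. The one point where you diverge is the derivation of property (iii). The paper takes the full list of children $C_1,\dots,C_p$ of $C$ in $\mathcal{L}$ (ordered left to right), observes that the union $C_1\cup C_2$ of the first two is an interval compatible with $\mathcal{L}$, and uses maximality plus the fact that $C_1$ is a maximal proper subset of $C$ in $\mathcal{L}$ to force $C_1\cup C_2 = C$, hence $p=2$. You instead define $C_1$ as the largest set of $\mathcal{L}$ strictly below $C$ containing $C$'s leftmost vertex (i.e., the first child), set $C_2 := C\setminus C_1$ directly, and verify that $C_2$ is compatible with $\mathcal{L}$ so that maximality puts $C_2\in\mathcal{L}$. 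These are two ways of saying the same thing: your $C_2$ is the union of the paper's $C_2,\dots,C_p$, and both arguments show a union of consecutive children is addable. Your version is perhaps marginally more self-contained, since it avoids a separate step to conclude $p=2$; the paper's version makes the children structure of the laminar family explicit. For the $(\Leftarrow)$ direction you argue the contrapositive form of what the paper does, with identical mechanics. Your parenthetical ``rules out $C_1\subseteq C$'' should really be ``rules out $C_1\subsetneq C$'' (if $C_1=C$ you are already done), but the surrounding reasoning makes clear you meant the strict case.
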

\begin{proof}
Let $\Lscr \subseteq \Cscr_G$ be a maximal laminar subfamily of $\Cscr_G$.
Because adding any of the sets $\{v\}$ for $v\in V\setminus \{r\}$ or the set $V\setminus \{r\}$ to $\Lscr$ will maintain laminarity, the maximality of $\Lscr$ implies \ref{item:laminar_family_contains_singletons} and \ref{item:laminar_family_contains_full_set}.
Now consider a cut $C\in \Lscr$ with $|C| > 1$.
Let $C_1, \dots, C_p$ be the children of $C$ in the laminar family $\Lscr$, i.e., the maximal sets in $\Lscr$ with $C_i \subseteq C$.
By \ref{item:laminar_family_contains_singletons}, every vertex in $C$ is contained in one of the sets $C_i$ with $i\in\{1,\dots,p\}$ and thus, because $\Lscr$ is laminar, the sets $C_i$ with $i\in \{1,\dots,p\}$ form a partition of $C$.
The cuts $C_1, \dots, C_p \in \Cscr_G$ are disjoint intervals and hence we may assume that they are ordered such that the vertices in $C_i$ are to the left of the vertices in $C_j$ whenever $i <j$.
Then $C_1 \cup C_2 \subseteq V \setminus \{r\}$ is an interval and hence contained in $\Cscr_G$.
Because adding $C_1 \cup C_2$ to $\Lscr$ maintains laminarity, we must have $C= C_1 \cup C_2$, which implies \ref{item:laminar_family_is_binary}.

Let us now consider a laminar family subfamily of $\Cscr_G$ with \ref{item:laminar_family_contains_singletons}, \ref{item:laminar_family_contains_full_set}, and \ref{item:laminar_family_is_binary}.
Let $\overline{C} \in \Cscr_G \setminus \Lscr$.
Then by \ref{item:laminar_family_contains_full_set}, $\overline{C}$ is a subset of one of the sets in $\Lscr$.
Let $C\in\Lscr$ be minimal with that property.
Because $\overline{C} \subsetneq C$, we have $|C| > 1$ and thus by \ref{item:laminar_family_is_binary}, $C= C_1 \cupp C_2$ for some cuts $C_1, C_2 \in \Lscr$.\
By the minimality of $C$, the cut $\overline{C}$ is neither a subset of $C_1$ nor of $C_2$.
This implies that $\Lscr \cupp \{ \overline{C} \}$ is not laminar.
Hence, $\Lscr$ is a maximal laminar subfamily of $\Cscr_G$.
\end{proof}

The definition below recalls our notion of $\alpha$-thin link sets. (See \cref{fig:ov_definition-width}.)
In our relative greedy algorithm, the component class $\mathfrak{K}$ will be all $\alpha$-thin link sets for some $\alpha = O(\sfrac{1}{\epsilon})$.
\alphaThin*

The above notion of thinness of a component $K$ has a geometric interpretation and is also closely linked to the treewidth of the link intersection graph $H[K]$ of $K$.\
More precisely, the minimum number $\alpha$ for which a link set $K$ is $\alpha$-thin has the property that the treewidth of $H[K]$ is between $\alpha$ and $\frac{3}{2}\alpha$.
We will elaborate on these connections in \cref{sec:triangulations} before discussing the key properties of $\alpha$-thin components in \cref{sec:properties-thinness}.

\subsection{A Geometric Interpretation and Connection to Treewidth}\label{sec:triangulations}

In this section we provide a geometric interpretation of \cref{def:alpha-thin} and explain how it is linked to the notion of treewidth.
The content of this section will not be needed to prove our main result, \cref{thm:main}, but is included to provide intuition for the concept of $\alpha$-thinness and to reveal connections to related concepts.

We number the vertices of the cycle $G=(V,E)$ as $r=v_1, \dots, v_n$ in the order in which they appear on the path $(V, E\setminus\{e_r\})$.
We consider a circle and place the vertices in $V$ on distinct locations along this circle in the same order as they appear in the cycle $G$.
Then we place auxiliary vertices $q_1,\dots, q_n$ along the circle such that $q_i$ is placed between $v_i$ and $v_{i+1}$ for $i\in\{1,\dots, n-1\}$ and $q_n$ is placed between $v_n$ and $r=v_1$.

The straight line segments $\overline{pq}$ between auxiliary vertices $p$ and $q$ correspond to the cuts in $\Cscr_G$ and we will therefore call them \emph{cutlines}.
More precisely, for a cutline $\overline{q_iq_j}$ with $i<j$, the interval $I$ from $v_{i+1}$ to $v_j$, which fulfills $I\in \mathcal{C}_G$, is \emph{the cut corresponding to} the cutline $\overline{q_iq_j}$.
This establishes a one-to-one correspondence between cutlines and cuts in $\Cscr_G$.
See \cref{fig:definition-cutline}.

\begin{figure}[!ht]
\begin{center}
\begin{tikzpicture}[scale=1.6,
ns/.style={thick,draw=black,fill=white,circle,minimum size=4,inner sep=2pt},
qs/.style={thick,draw=blue,fill=blue,circle,minimum size=2,inner sep=1pt},
es/.style={thick},
lks/.style={line width=1.5pt, densely dashed},
dlks/.style={lks, -latex},
ts/.style={every node/.append style={font=\scriptsize}}
]

\def\rad{1.5}
\def\num{7}

\draw[thick] (0,0) circle [radius = \rad];

\begin{scope}
\foreach \i in {2,...,\num} {
  \pgfmathsetmacro\r{90+(\i-1)*360/\num}
  \node[ns] (v\i) at (\r:\rad) {};
  \node () at (\r:\rad + 0.22) {$v_{\i}$};
}
\node[ns] (v1) at (90:\rad) {};
\node () at (90:\rad + 0.22) {$r=v_{1}$};
\end{scope}

\begin{scope}
\foreach \i in {1,...,\num} {
  \pgfmathsetmacro\r{90+(2*\i-1)*360/(2*\num)}
  \node[qs] (q\i) at (\r:\rad) {};
  \node[blue] () at (\r:\rad + 0.2) {$q_{\i}$};
}
\end{scope}

\begin{scope}[ultra thick, opacity=0.6]
\draw[blue!60!green] (q1) -- (q4);
\draw[red!80!black] (q2) -- (q5);
\draw[green!60!black] (q5) -- (q6);
\end{scope}

\begin{scope}[shift={(4.5,0)}]

\draw[thick] (0,0) circle [radius = \rad];

\begin{scope}
\foreach \i in {2,...,\num} {
  \pgfmathsetmacro\r{90+(\i-1)*360/\num}
  \node[ns] (v\i) at (\r:\rad) {};
  \node () at (\r:\rad + 0.22) {$v_{\i}$};
}
\node[ns] (v1) at (90:\rad) {};
\node () at (90:\rad + 0.22) {$r=v_{1}$};
\end{scope}

\def\cw{0.45}
\begin{scope}[every path/.style={draw, fill, fill opacity=0.2,thick}]

\newcommand\Interval[2]{
  \pgfmathsetmacro{\ra}{90+(#1-1)*360/\num}
  \pgfmathsetmacro{\rb}{90+(#2-1)*360/\num}
\draw (\ra:\rad+\cw) arc (\ra:\rb:\rad+\cw)
  arc (\rb:\rb+180:\cw)
  arc (\rb:\ra:\rad-\cw)
  arc (\ra+180:\ra+360:\cw)
;
}
\begin{scope}[blue!60!green]
\Interval{2}{4}
\end{scope}
\def\cw{0.35}
\begin{scope}[red!80!black]
\Interval{3}{5}
\end{scope}
\begin{scope}[green!60!black]
\Interval{6}{6}
\end{scope}

\end{scope}

\end{scope}

\end{tikzpicture}
 \end{center}
\caption{
The left picture shows a cycle $G$ with vertices $v_1,\dots, v_7$ and the auxiliary vertices $q_1,\dots, q_7$ with three cutlines $\overline{q_1q_4}$ (blue), $\overline{q_2q_5}$ (red), and $\overline{q_5q_6}$ (green).
The right picture shows the cuts in $\mathcal{C}_G$ that correspond to these cutlines.
Note that two cutlines cross if and only if their corresponding cuts are crossing.
For example, the cutlines $\overline{q_1q_4}$ and $\overline{q_2q_5}$ cross and also their corresponding cuts shown in blue and red in the right picture are crossing.
Moreover, the cutlines $\overline{q_2q_5}$ and $\overline{q_5q_6}$ do not cross and also their corresponding cuts shown in red and green in the right picture are not crossing.
}
\label{fig:definition-cutline}
\end{figure}

We observe that two cutlines $d_1$ and $d_2$ are crossing (in a geometric sense) if and only if their corresponding cuts $C_1, C_2 \in \mathcal{C}_G$ are crossing, i.e., $C_1 \cap C_2$, $C_1 \setminus C_2$, $C_2\setminus C_1$, and $V\setminus (C_1 \cup C_2)$ are all nonempty.
This implies that the set $D$ of cutlines is planar if and only if the family of cuts corresponding to $D$ is laminar.
We conclude that maximal planar sets of cutlines correspond to maximal laminar subfamilies of $\Cscr_G$.
Finally, we observe that maximal planar sets of cutlines correspond to triangulations of the convex hull $Q$ of the auxiliary vertices $q_1, \dots, q_n$.
See \cref{fig:geometric-definition-width} for an example.

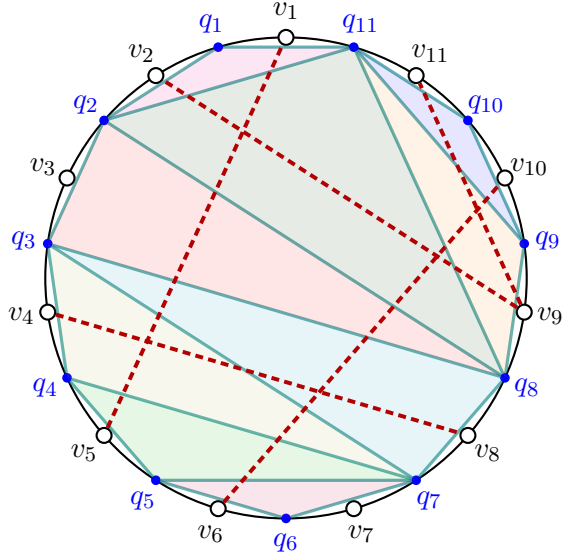
\begin{figure}[!ht]
\begin{center}
\begin{tikzpicture}[scale=1.6,
ns/.style={thick,draw=black,fill=white,circle,minimum size=4,inner sep=2pt},
qs/.style={thick,draw=blue,fill=blue,circle,minimum size=2,inner sep=1pt},
es/.style={thick},
lks/.style={line width=1.5pt, densely dashed},
dlks/.style={lks, -latex},
ts/.style={every node/.append style={font=\scriptsize}}
]

\def\rad{2}
\def\num{11}

\coordinate (c) at (0,0);
\draw[thick] (0,0) circle [radius = \rad];

\begin{scope}
\foreach \i in {1,...,\num} {
  \pgfmathsetmacro\r{90+(\i-1)*360/\num}
  \node[ns] (v\i) at (\r:\rad) {};
  \node () at (\r:\rad + 0.22) {$v_{\i}$};
}
\end{scope}

\begin{scope}\foreach \i in {1,...,\num} {
  \pgfmathsetmacro\r{90+(2*\i-1)*360/(2*\num)}
  \node[qs] (q\i) at (\r:\rad) {};
  \node[blue] () at (\r:\rad + 0.2) {$q_{\i}$};
}
\end{scope}

\begin{scope}[thick, blue]
\foreach \j in {2,...,\num} {
\pgfmathtruncatemacro\i{\j-1}
\draw[very thick, blue!50!green, opacity=0.6] (q\i) to (q\j);
}
\draw[very thick, blue!50!green, opacity=0.6] (q\num) to (q1);
\end{scope}

\begin{scope}[very thick, blue!50!green, opacity=0.6]
\draw (q2) -- (q11);
\draw (q2) -- (q8);
\draw (q11) -- (q8);
\draw (q3) -- (q8);
\draw (q3) -- (q7);
\draw (q4) -- (q7);
\draw (q5) -- (q7);
\draw (q11) -- (q9);
\end{scope}

\begin{scope}[opacity=0.1]
\fill[purple] (q5.center) -- (q6.center) -- (q7.center) -- cycle;
\fill[green!70!black] (q4.center) -- (q5.center) -- (q7.center) -- cycle;
\fill[yellow!70!black] (q3.center) -- (q4.center) -- (q7.center) -- cycle;
\fill[cyan!70!black] (q3.center) -- (q7.center) -- (q8.center) -- cycle;
\fill[red] (q2.center) -- (q3.center) -- (q8.center) -- cycle;
\fill[green!20!black] (q2.center) -- (q11.center) -- (q8.center) -- cycle;
\fill[magenta] (q1.center) -- (q2.center) -- (q11.center) -- cycle;
\fill[orange] (q11.center) -- (q8.center) -- (q9.center) -- cycle;
\fill[blue] (q9.center) -- (q11.center) -- (q10.center) -- cycle;
\end{scope}

\begin{scope}[lks,red!70!black]
\draw (v4) -- (v8);
\draw (v5) -- (v1);
\draw (v6) -- (v10);
\draw (v9) -- (v2);
\draw (v9) -- (v11);
\end{scope}

\end{tikzpicture}
 \end{center}
\caption{
The picture shows a cycle $G$ and the corresponding polygon $Q$ with a triangulation.
This triangulation certifies that the link set $K$ shown in red is $3$-wide because any triangle side is crossed by at most $3$ links.
Moreover, the triangulation satisfies that the link intersection graph $H[K]$ has treewidth at most $3$ because also every triangle is crossed by at most $3$ links.
We remark that the triangulation shown in this figure corresponds to the maximal laminar subfamily of $\Cscr_G$ that is shown in \cref{fig:ov_definition-width} and also the depicted link set $K$ is the same as in \cref{fig:ov_definition-width}.
}
\label{fig:geometric-definition-width}
\end{figure}

This allows us to rephrase \cref{def:alpha-thin} as follows.
A link set $K\subseteq L$ is $\alpha$-thin if and only if there exists a triangulation of the polygon $Q$ such that every side of a triangle is crossed by at most $\alpha$ many links (when embedding each link $\{v,w\}$ as the straight line segment between $v$ and $w$).

We will now use this geometric interpretation of $\alpha$-thinness to establish a connection between $\alpha$-thinness of $K\subseteq L$ and the treewidth of $H[K]$.
\citeauthor{kloks_1996_treewidth}~\cite{kloks_1996_treewidth} gave a characterization of the treewidth of circle graphs, which implies that for $K\subseteq L$, the treewidth of the link intersection graph $H[K]$ is at most $\gamma$  if and only if there exists a triangulation of the polygon $Q$ such that every triangle is crossed by at most $\gamma$ many links from $K$.
Using that every link that crosses a triangle crosses exactly two of its sides and that every link crossing a side of a triangle also crosses the triangle itself, we get the following relations:
\begin{itemize}
\item If $K\subseteq L$ is $\alpha$-thin, then the treewidth of $H[K]$ is at most $\tfrac{3}{2}\alpha$.
\item If, for $K\subseteq L$, the link intersection graph $H[K]$ has treewidth $\alpha$, then $K$ is $\alpha$-thin.
\end{itemize}

\subsection{Key Properties of Thin Components}\label{sec:properties-thinness}

Recall that during our relative greedy algorithm we maintain a mixed WRAP solution $\vec{F_i} \cupp S_i$.
In the $i$th iteration we choose an $\alpha$-thin component $K\subseteq L$ that minimizes
\begin{equation}\label{eq:choice-of-component}
\frac{c(K)}{c(\Drop_{\vec{F_0}}(K)\cap \vec{F}_{i-1})},
\end{equation}
where we again use the convention $\frac{0}{0}\coloneqq 1$ and $\frac{x}{0}\coloneqq \infty$ for $x > 0$.
Then we add $K$ to our current solution and remove $\Drop_{\vec{F}_0}(K)\cap \vec{F}_{i-1}$, i.e., we set $S_{i} \coloneqq S_{i-1} \cup K$ and $\vec{F}_{i} \coloneqq \vec{F}_{i-1} \setminus  \Drop_{\vec{F}_0}(K)$.
For any constant $\alpha$, we can find an $\alpha$-thin link set $K$ minimizing \eqref{eq:choice-of-component} efficiently by a dynamic program.
This is the result of our optimization theorem, which we recall below for convenience.

\ovFindCompForRelGreedy*
We defer the proof of \cref{thm:ov_find-component-for-relative-greedy} to \cref{sec:dp}.
In order to prove that our relative greedy algorithm has an approximation ratio of $2-\delta$ for some $\delta > 0$, we need that at any iteration of our algorithm the following holds.
If our mixed solution $\vec{F}_i \cup S_i$ is not already a $(2-\delta)$-approximation, then there is an $\alpha$-thin component $K$ with $\sfrac{c(K)}{c(\Drop_{\vec{F}_0}(K) \cap \vec{F}_i)}<1$. 
(More precisely, we want this ratio to be bounded away from $1$ to be able to make significant progress.)
The main statement that we use to prove this is our decomposition theorem, which is a key technical contribution of this paper.
It is recalled below for convenience.

\ovDecompositionTheorem*

We defer the proof of \cref{thm:ov_decomposition-theorem} to \cref{sec:decomposition-thm} and first provide full details on how we apply it to analyze our relative greedy algorithm.
As long as the cost $c(\vec{F}_{i-1})$ of the remaining directed links in our mixed solution is much higher than the cost $c(\OPT) = \sum_{K\in \mathcal{K}} c(K)$ of an optimal WRAP solution, \cref{thm:ov_decomposition-theorem} implies that we can find an $4\lceil\sfrac{2}{\epsilon}\rceil$-thin component $K$ with cost significantly less than the cost of the directed links we can remove, thus making our mixed solution overall cheaper.
Indeed, we have
\begin{equation}\label{eq:average_over_components}
\min_{K\in\mathcal{K}} \frac{c(K)}{c(\Drop_{\vec{F}_0}(K) \cap \vec{F}_{i-1})} 
\ \le\ \frac{\sum_{K\in\mathcal{K}}c(K)} {\sum_{K\in\mathcal{K}} c(\Drop_{\vec{F}_0}(K) \cap \vec{F}_{i-1})}\ \le\  \frac{c(\OPT) }{c(\vec{F}_{i-1})-c(\vec{R})},
\end{equation}
where  $c(\vec{R}) \le \sfrac{\epsilon}{2} \cdot c(\vec{F}_0) \le \epsilon \cdot c(\OPT)$.
This allows for quantifying the amount by which our mixed solution $\vec{F}_i \cup S_i$ gets cheaper in the course of the algorithm.
Note that as long as $\vec{F}_{i-1}\ne \emptyset$, we can always find a $1$-thin component $K$ such that $\sfrac{c(K)}{c(\Drop_{\vec{F}_0}(K) \cap \vec{F}_i)}\le 1$, because for a shadow $\vec{\link{l}} \in \vec{F}_{i-1}$ of a link $\link{l}\in L$, we can choose $K=\{\link{l}\}$, which implies $\vec{\link{l}} \in \Drop_{\vec{F}_0}(K)$.
Then our mixed solution does not become cheaper anymore, but it does not harm to run the algorithm until $\vec{F}_i$ is empty and thus $S_i$ is a WRAP solution.

\subsection{The Relative Greedy Algorithm}

In this section we formally state the relative greedy algorithm for WRAP and show that is has an approximation ratio of $1+\ln 2 +\epsilon$.
This will not be needed to prove our main result, \cref{thm:main}, but it yields a simple proof of why \cref{thm:ov_find-component-for-relative-greedy} and \cref{thm:ov_decomposition-theorem} imply the existence of a better-than-2 approximation algorithm for WRAP and hence for WCAP.

\begin{algorithm2e}[H]
\KwIn{A rooted WRAP instance $(G=(V,E),L,c,r, e_r)$.}
\KwOut{A WRAP solution $S\subseteq L$ with $c(S) \le (1+\ln(2)+\epsilon)\cdot c(\OPT)$.}
\vspace*{2mm}
\begin{enumerate}[label=\arabic*.,ref=\arabic*,rightmargin=7mm]\itemsep5pt
\item Set $i\coloneqq 0$ and $S_0 \coloneqq \emptyset$.
\item\label{item:compute_starting_solution} Compute a non-shortenable directed WRAP solution $\vec{F}_0\subseteq \shadows(L)$ with $c(\vec{F}_0) \le 2 \cdot c(\OPT)$.
\item\label{item:choose_comp} While $\vec{F_i} \neq \emptyset$:
\begin{itemize}[itemsep=0.3em]
\item Increase $i$ by $1$.
\item Compute a minimizer $\displaystyle K_i \in \argmin\left\{\frac{c(K)}{c(\Drop_{\vec{F}_0}(K) \cap \vec{F}_{i-1})}\colon K\subseteq L \text{ is $4\lceil\tfrac{2}{\epsilon}\rceil$-thin}\right\}$.

\item Set $S_{i} \coloneqq S_{i-1} \cup K$ and $\vec{F}_{i} \coloneqq \vec{F}_{i-1} \setminus \Drop_{\vec{F}_0}(K)$.
\end{itemize}
\smallskip
\item Return $S\coloneqq S_i$.
\end{enumerate}
\caption{Relative greedy algorithm for WRAP}\label{algo:relative_greedy}
\end{algorithm2e}

Algorithm~\ref{algo:relative_greedy} can be seen as the analogue for WRAP of the relative greedy algorithm from \cite{traub_2021_better}.
We can analyze it in the same way as done in \cite{traub_2021_better} and other relative greedy algorithms, see e.g., \cite{zelikovsky_1996_better,gropl_2001_approximation,cohen_2013_approximation}.
We include the analysis here for completeness.

\begin{theorem}
For every $\epsilon >0$, Algorithm~\ref{algo:relative_greedy} is a $(1+\ln 2 +\epsilon)$-approximation algorithm for WRAP.
\end{theorem}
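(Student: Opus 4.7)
The plan is to carry out the standard relative-greedy analysis, combining the two key theorems already stated. Define $D_i \coloneqq \Drop_{\vec{F}_0}(K_i)\cap \vec{F}_{i-1}$, so that $c(\vec{F}_i) = c(\vec{F}_{i-1}) - c(D_i)$ and $c(S) = \sum_{i=1}^{N} c(K_i)$, where $N$ is the last iteration. By \cref{lem:goodNonShortenableSol} we have $c(\vec{F}_0) \le 2\,c(\OPT)$. To obtain a per-iteration bound on $c(K_i)$, I would apply the decomposition theorem (\cref{thm:ov_decomposition-theorem}) to $S = \OPT$ with parameter $\epsilon/2$: this yields a partition $\mathcal{K}$ of $\OPT$ into $4\lceil 2/\epsilon\rceil$-thin sets and a set $\vec{R}\subseteq \vec{F}_0$ with $c(\vec{R}) \le \tfrac{\epsilon}{2}\,c(\vec{F}_0) \le \epsilon\,c(\OPT)$, such that every $\vec{\link{l}}\in \vec{F}_0\setminus \vec{R}$ lies in $\Drop_{\vec{F}_0}(K)$ for some $K\in \mathcal{K}$. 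The averaging argument of~\eqref{eq:average_over_components} gives, whenever $c(\vec{F}_{i-1}) > c(\vec{R})$, some $K\in \mathcal{K}$ with ratio at most $c(\OPT)/(c(\vec{F}_{i-1})-c(\vec{R}))$; and \cref{thm:ov_find-component-for-relative-greedy} ensures the algorithm actually picks a minimum-ratio $4\lceil 2/\epsilon\rceil$-thin component, so
\[
c(K_i) \;\le\; \frac{c(\OPT)}{c(\vec{F}_{i-1})-c(\vec{R})}\cdot c(D_i).
\]
I would also record the trivial bound $c(K_i)\le c(D_i)$, which holds because for any shadow $\vec{\link{l}}\in \vec{F}_{i-1}$ of a link $\link{l}\in L$, the singleton $\{\link{l}\}$ is $1$-thin and satisfies $\vec{\link{l}}\in \Drop_{\vec{F}_0}(\{\link{l}\})$ by \cref{lem:characterize_drop}.

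The second step is a telescoping argument. Set $f(i)\coloneqq c(\vec{F}_i) - c(\vec{R})$, which is non-increasing with $f(N) = -c(\vec{R})$ and $f(0) \le 2c(\OPT)$. Let $j$ be the index with $f(j-1) > c(\OPT) \ge f(j)$ (if no such $j$ exists, then $f(0)\le c(\OPT)$ and the trivial bound already gives $c(S) \le c(\vec{F}_0) \le (1+\epsilon)c(\OPT)$). For $i < j$, both $f(i-1)$ and $f(i)$ exceed $c(\OPT)$, so the ratio bound together with the elementary inequality $(a-b)/a \le \ln(a/b)$ for $0<b\le a$ telescopes to
\[
\sum_{i=1}^{j-1} c(K_i) \;\le\; c(\OPT)\,\ln\frac{f(0)}{f(j-1)}.
\]
For the transitional iteration $j$, split the interval $[f(j),f(j-1)]$ at $c(\OPT)$: on $[c(\OPT),f(j-1)]$ bound $1/f(j-1)\le 1/x$ to contribute $c(\OPT)\ln(f(j-1)/c(\OPT))$, and on $[f(j),c(\OPT)]$ use $c(\OPT)/f(j-1)\le 1$ to contribute $c(\OPT)-f(j)$. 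For $i>j$, apply the trivial bound and telescope to get $\sum_{i>j} c(K_i) \le f(j) - f(N) = f(j) + c(\vec{R})$. Adding everything, the $\pm f(j)$ terms cancel and the $\ln f(0) - \ln f(j-1) + \ln f(j-1) - \ln c(\OPT)$ sum collapses, leaving
\[
c(S) \;\le\; c(\OPT)\,\ln\frac{f(0)}{c(\OPT)} + c(\OPT) + c(\vec{R}) \;\le\; c(\OPT)\bigl(1 + \ln 2 + \epsilon\bigr).
\]

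The main obstacle is the discrete-versus-continuous mismatch at the transition iteration $j$: a single drop $c(D_j)$ can straddle the threshold $c(\OPT)$, so neither the ratio bound nor the trivial bound is uniformly tight. The fix, as above, is to split the contribution of iteration $j$ into two pieces and bound each by the appropriate regime. Polynomial running time follows because each iteration strictly decreases $|\vec{F}_i|$ (the trivial single-link component shows $D_i\neq\emptyset$ whenever $\vec{F}_{i-1}\neq\emptyset$), and each iteration is polynomial by \cref{thm:ov_find-component-for-relative-greedy} for the constant value $\alpha = 4\lceil 2/\epsilon\rceil$.
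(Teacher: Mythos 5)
Your proof is correct and follows the same relative-greedy blueprint as the paper: apply the decomposition theorem with parameter $\epsilon/2$ to get the ratio bound $c(K_i)\le c(\OPT)/(c(\vec{F}_{i-1})-c(\vec{R}))\cdot c(D_i)$ together with the trivial bound $c(K_i)\le c(D_i)$, then telescope. The only cosmetic difference is in how the two bounds are merged: the paper replaces the per-iteration bound by the integral $\int_{c(\vec F_{i})}^{c(\vec F_{i-1})}\min\{c(\OPT)/(x-c(\vec R)),1\}\,dx$ and evaluates the whole sum as one integral from $0$ to $c(\vec F_0)$, whereas you partition the iterations into the regimes $f(i-1)>c(\OPT)$, the single transitional step $j$, and $f(i)\le c(\OPT)$, handle the straddling step $j$ by an explicit split at $c(\OPT)$, and then re-telescope. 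The arithmetic is identical in the end; your version makes the discrete-to-continuous step more explicit at the cost of a case distinction. One small thing worth recording: to conclude $\Drop_{\vec F_0}(K_i)\cap \vec F_{i-1}\ne\emptyset$ (and hence $|\vec F_i|<|\vec F_{i-1}|$ and polynomial termination), you should invoke the second sentence of the optimization theorem (\cref{thm:ov_find-component-for-relative-greedy}), which guarantees the returned minimizer has nonempty drop when $\vec F_{i-1}\ne\emptyset$; exhibiting a \emph{candidate} with nonempty drop is not quite enough on its own.
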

\begin{proof}
First, we observe that for every iteration $i$ of Algorithm~\ref{algo:relative_greedy}, $\vec{F}_i \cupp S_i$ is a mixed WRAP solution.
This follows from the definition of the set $\Drop_{\vec{F}_0} (K)$ (\cref{def:Drop}) together with the fact that for every 2-cut $C\in \Cscr_G$ there is at least one link $\vec{\link{l}} \in \vec{F}_0$ that is responsible for $C$ (by \cref{lem:responsibility_unique}).
By the decomposition theorem (\cref{thm:ov_decomposition-theorem}) and \eqref{eq:average_over_components}, we get that the component $K_i$ chosen in the $i$th iteration of Algorithm~\ref{algo:relative_greedy} satisfies
\begin{equation}\label{eq:ratio_bound_from_decomposition_theorem}
\frac{c(K_i)}{c(\Drop_{\vec{F}_0}(K) \cap \vec{F}_{i-1})} \ \le\ \frac{c(\OPT)}{c(\vec{F}_{i-1})-c(\vec{R})}
\end{equation}
for some set $\vec{R} \subseteq \shadows(L)$ with $c(\vec{R}) \le \epsilon \cdot c(\OPT)$.
Consider an arbitrary link $\vec{\link{l}} \in \vec{F}_{i-1}$.
Then $\vec{\link{l}}$ is a shadow of some link $\link{l} \in L$ and we have $\vec{\link{l}} \in \Drop_{\vec{F}_0}(\{\link{l}\})$ by \cref{def:Drop}.
Because the set $\{\link{l}\}$ is $4\left\lceil\sfrac{2}{\epsilon}\right\rceil$-thin and thus the algorithm could have chosen $K_i= \{\link{l}\}$, this shows 
$ \sfrac{c(K_i)}{c(\Drop_{\vec{F}_0}(K) \cap \vec{F}_{-1})} \ \le\ 1$ and also implies $K\neq \emptyset$ as well as $\Drop_{\vec{F}_0}(K) \cap \vec{F}_{-1}) \neq \emptyset$.
This shows that \cref{algo:relative_greedy} terminates after at most $|F_0| = |V|-1$ iterations.
Moreover, combining $\sfrac{c(K_i)}{c(\Drop_{\vec{F}_0}(K) \cap \vec{F}_{i-1})} \ \le\ 1$ with \eqref{eq:ratio_bound_from_decomposition_theorem}, we obtain
\begin{equation*}
c(K_i) \ \le\ \min\left\{\frac{c(\OPT)}{c(\vec{F}_{i})- c(\vec{R})}, 1\right\} \cdot c(\vec{F}_{i} \setminus \vec{F}_{i+1})
\ \le\ \int_{c(\vec{F}_{i+1})}^{c(\vec{F}_{i})} \min\left\{\frac{c(\OPT)}{x - c(\vec{R})},1\right\}\,dx\enspace,
\end{equation*}
where we used $c(\vec{F}_{i+1}) \le c(\vec{F}_i)$ and that $\min\{\sfrac{c(\OPT)}{(x -c (\vec{R}))},1\}$ is nonincreasing in $x$.

Let $m$ denote the number of iterations of the while loop.
Then the WRAP solution we return is $S= S_m$ and, moreover, we have $\vec{F}_m=\emptyset$ and 
\begin{align*}
c(S)\ =\ \sum_{i=1}^m c(K_i) \ 
\le&\ \sum_{i=1}^m  \int_{c(\vec{F}_i)}^{c(\vec{F}_{i-1})} \min\left\{\frac{c(\OPT)}{x- c(\vec{R})},1\right\}\,dx\\[2mm]
=&\ \int_0^{c(\OPT)+c(\vec{R})} 1 \,dx + \int_{c(\OPT)+c(\vec{R})}^{c(\vec{F}_0)}\frac{c(\OPT)}{x-c(\vec{R})} \,dx \\[2mm]
=&\ c(\OPT)+c(\vec{R}) + \ln\left(\frac{c(\vec{F}_0)-c(\vec{R})}{c(\OPT)}\right) \cdot c(\OPT) \\[2mm]
\le&\ (1 +\epsilon)\cdot c(\OPT) + \ln 2 \cdot c(\OPT)\enspace,
\end{align*}
where the last inequality follows from $c(\vec{R}) \le \epsilon \cdot c(\OPT)$ and $c(\vec{F}_0) \le 2 \cdot c(\OPT)$.
\end{proof}
 \section{Proving the Decomposition Theorem}\label{sec:decomposition-thm}

In this section we prove the decomposition theorem (\cref{thm:ov_decomposition-theorem}).
To this end, we fix a rooted WRAP instance $(G,L,c,r,e_r)$.
Moreover, we fix a WRAP solution $S$ and a non-shortenable directed solution $\vec{F}$.
The notion of $v$-good vertices for $v\in V\setminus\{r\}$ is defined with respect to this fixed solution $\vec{F}$ throughout this section.
Recall that our goal is to show that, for any $\epsilon > 0$, there exists a partition $\mathcal{K}$ of the solution $S$ into $4\lceil \sfrac{1}{\epsilon} \rceil$-thin components such that for all but an $\epsilon$-fraction of the links $\vec{\link{l}}\in\vec{F}$ (in terms of cost), we have $\vec{\link{l}}\in \Drop_{\vec{F}}(K)$ for some component $K\in \mathcal{K}$.
Our proof of this statement proceeds in several steps.

First, we construct a partition $\Xscr$ of the solution $S$ into so-called festoons.
Festoons are very structured link sets which have the nice property that adding a festoon to a link set can increase its thinness by at most $4$.
These festoons will be the building blocks for the components $K\in \mathcal{K}$ that we will construct.
More precisely, every component $K\in \mathcal{K}$ will be the union of some festoons from $\Xscr$.
We formally introduce the notion of festoons and some of their basic properties in \cref{sec:festoons}.
Then, in \cref{sec:family_of_festoons}, we describe how we construct the partition $\Xscr$ of the solution $S$ into festoons.

Next, we choose for every link $(u,v)\in\vec{F}$, a minimal subset $\Xscr_{v}$ of the festoons in $\Xscr$ that allows for dropping the link $(u,v)$, i.e., such that
\[
\textstyle
 (u,v) \in \Drop_{\vec{F}}\Big(\bigcup_{X \in \Xscr_v} X \Big).
\]
(Note that every vertex $v\in V$ has at most one incoming arc $(u,v)$ in the arborescence $(V,\vec{F})$.)
We remark that it will be crucial to choose the sets $\Xscr_{v}$ of festoons carefully.
This is described in \cref{sec:dependency_graph}, where we also prove that the sets $\Xscr_v$ are highly structured.
We will show that for all but an $\epsilon$-fraction of the links $\vec{\link{l}}\in\vec{F}$ (in terms of cost), we can ensure that all festoons in $\Xscr_v$ are part of the same component in $\mathcal{K}$, which will complete the proof of the decomposition theorem.

To this end, we will model the sets $\Xscr_{v}$ of festoons, which we ideally would like to be part of the same component in $\mathcal{K}$, by a graph with vertex set being the set $\Xscr$ of festoons.
The edges of this graph, which we call the \emph{dependency graph}, model the sets $\Xscr_{v}$.
We introduce this graph in \cref{sec:dependency_graph} and show that it is a branching, i.e., every connected component is an arborescence.
We can define this graph for any possible set $\vec{R} \subseteq \vec{F}$, where the edges of the dependency graph model only the sets $\Xscr_{v}$ for links $(u,v)\in \vec{F}\setminus\vec{R}$.
For the set $\vec{R}$ that we will choose, the vertex sets of the connected components of the dependency graph will correspond to the partition $\mathcal{K}$.
More precisely, two festoons will be part of the same component $K\in \mathcal{K}$ if and only if they are part of the same connected component of the dependency graph.

At this point it remains to choose the set $\vec{R}\subseteq \vec{F}$.
To this end, we relate properties of the dependency graph to the thinness of the corresponding components (\cref{sec:thinness_dependency_graph}).
Using these properties, we can then in \cref{sec:choosing_R} complete the proof of the decomposition theorem by showing that it is possible to choose a set $\vec{R} \subseteq \vec{F}$ of sufficiently small cost such that, in the partition $\Kscr$ corresponding to the connected components of the dependency graph, all sets $K\in \mathcal{K}$ are $4\lceil \sfrac{1}{\epsilon} \rceil$-thin.

\subsection{Festoons}\label{sec:festoons}

In this section we introduce festoons, which will be the building blocks for the components into which we will decompose the WRAP solution $S$.
For a link $\link{l}\in L$, we denote by $\leftp(\link{l})$ the left endpoint of $\link{l}$ and by $\rightp(\link{l})$ the right endpoint of~$\link{l}$.
\cref{fig:festoon-example} shows an example of a festoon, which is defined as follows.

\begin{definition}[festoon, festoon interval]\label{def:festoon}
A link set $X\subseteq L$ is a \emph{festoon} if the link intersection graph $H[X]$ of $X$ is a path and we can number the links in $X$ as $\link{l}_1,\dots, \link{l}_p$ such that
\begin{itemize}
\item the links $\link{l}_1,\dots, \link{l}_p$ are visited in this order by the path $H[X]$,
\item for $i,j\in\{1,\dots,p\}$ with $i<j$, the vertex $\leftp(\link{l}_i)$ is to the left of the vertex $\leftp(\link{l}_j)$, and
\item for $i,j\in\{1,\dots,p\}$ with $i<j$, the vertex $\rightp(\link{l}_i)$ is to the right of the vertex $\rightp(\link{l}_j)$.
\end{itemize}
Then the interval from $\leftp(\link{l}_1)$ to $\rightp(\link{l}_p)$ is called the \emph{festoon interval} of the festoon $X$ and is denoted by $I_X$, and the ordering $\link{l}_1,\ldots, \link{l}_p$ of the links in $X$ is called the \emph{festoon order}.
\end{definition}
Note that the festoon order is clearly unique due to the second (or also third) condition of \cref{def:festoon}, and that the left and right endpoints of $I_X$ are $\leftp(\link{l}_1)$ and $\leftp(\link{l}_p)$, respectively.
Also, the definition of festoons is symmetric in terms of exchanging ``left'' and ``right''.
Moreover, one can observe that the second condition of \cref{def:festoon} almost implies the third one (and vice versa);
more precisely, the only additional property implied by the third point is $\rightp(\link{l}_{p-1}) \neq \rightp(\link{l}_{p})$.

\begin{figure}[!ht]
\begin{center}
\begin{tikzpicture}[scale=1,
ns/.style={thick,draw=black,fill=white,circle,minimum size=6,inner sep=2pt},
es/.style={thick},
lks/.style={line width=1.5pt, blue, densely dashed},
dlks/.style={lks, -latex},
ts/.style={every node/.append style={font=\scriptsize}}
]

\def\num{14}
\def\hd{1.1}

\begin{scope}[every node/.style={ns}]
\foreach \i in {1,...,\num} {
  \node (a\i) at (\i*\hd,0) {};
}
\end{scope}

\begin{scope}
\node at ($(a1)+(-0.1,0)$)[above=3pt] {$r$};
\pgfmathtruncatemacro\n{\num-1}
\foreach \i in {2,...,\n} {
\node at (a\i)[above=3pt] {};
}
\node at ($(a\num)+(0.1,0)$)[above=3pt] {};
\end{scope}

\begin{scope}[es]
\foreach \i in {2,...,\num} {
\pgfmathtruncatemacro\j{\i-1}
\draw (a\j) -- (a\i);
}
\draw (a1) to[out=50,in=-230,looseness=0.3] node[above,pos=0.8] {$e_r$} (a\num);
\end{scope}

\begin{scope}[lks, darkred]
\draw (a1)  to[bend right=30] node[below] {$\link{l}_1$} (a5);
\draw (a3)  to[bend right=50] node[below] {$\link{l}_2$} (a6);
\draw (a6)  to[bend right=50] node[below] {$\link{l}_3$} (a9);
\draw (a8)  to[bend right=40] node[below] {$\link{l}_4$} (a11);
\draw (a10) to[bend right=60] node[below] {$\link{l}_5$} (a12);
\end{scope}

\end{tikzpicture}
 \end{center}
\caption{
The link set shown in red is an example of a festoon with a numbering of the links according to the festoon order.
}
\label{fig:festoon-example}
\end{figure}
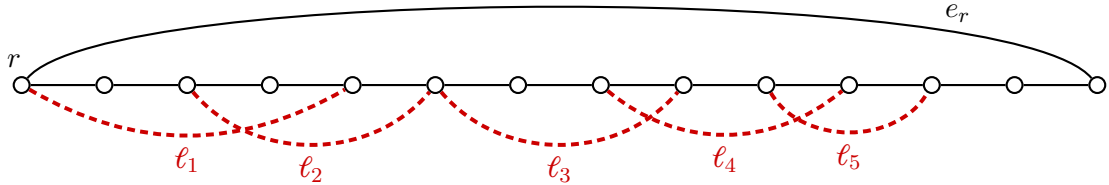

We now prove some basic properties of festoons.
\begin{lemma}\label{lem:endpoint_order_festoon}
If $X=\{\link{l}_1,\dots, \link{l}_p\}$ is a festoon with links numbered according to the festoon order, then, for $i < j -1$, the vertex $\rightp(\link{l}_i)$ is to the left of the vertex $\leftp(\link{l}_j)$.
\end{lemma}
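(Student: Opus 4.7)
The plan is to combine two ingredients: the path structure of $H[X]$ forces $\link{l}_i$ and $\link{l}_j$ to be non-intersecting whenever $|i-j|\ge 2$, and the monotonicity built into the festoon order forces non-intersecting pairs to be disjoint along the ring in exactly the claimed way.

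First, I would argue non-intersection. Because $H[X]$ is a path visiting the vertices $\link{l}_1,\dots,\link{l}_p$ in this order, and because $i<j-1$, the vertices $\link{l}_i$ and $\link{l}_j$ are at distance at least two in the path $H[X]$, and in particular no edge of $H[X]$ joins them. Since $H[X]$ is the subgraph of the link intersection graph $H$ induced by $X$, this means $\link{l}_i$ and $\link{l}_j$ do not intersect; by \cref{def:linkIntersectionGraph}, this says they neither cross nor share an endpoint.

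Next, I would use the monotonicity conditions of \cref{def:festoon}. From $i<j$ we get $\leftp(\link{l}_i)$ strictly to the left of $\leftp(\link{l}_j)$, and from the matching ordering of right endpoints in the festoon order we can control the position of $\rightp(\link{l}_i)$ relative to $\rightp(\link{l}_j)$. With these two orderings fixed, there are only four relative positions of the four endpoints along the path $(V,E\setminus\{e_r\})$: the two intervals can be (a) disjoint with $\link{l}_i$ fully to the left of $\link{l}_j$, (b) sharing a common endpoint, (c) crossing, or (d) nested. The previous paragraph rules out cases (b) and (c); combining with the ordering of the left endpoints (which rules out nesting in the wrong direction) and the ordering of the right endpoints (which rules out the remaining nested configuration), the only surviving case is (a), namely $\rightp(\link{l}_i)$ strictly to the left of $\leftp(\link{l}_j)$, which is exactly the claim.

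The main subtlety I expect is keeping the case analysis clean: I must be careful that the two monotonicity conditions of the festoon order are tight enough to exclude the nested configuration, because that is the only configuration consistent with non-intersection other than the disjoint one we want. Once that case is ruled out by the monotonicity, the conclusion is immediate.
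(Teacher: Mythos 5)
Your proposal is correct and follows essentially the same route as the paper's proof: use the path structure of $H[X]$ to conclude $\link{l}_i$ and $\link{l}_j$ do not intersect when $|i-j|\ge 2$, then combine this with the monotone ordering of left and right endpoints from \cref{def:festoon} to force the disjoint configuration with $\link{l}_i$ entirely to the left. The paper states this more tersely without spelling out the case analysis, but the underlying argument is identical. (One small note: you correctly read the third bullet of \cref{def:festoon} as asserting that right endpoints increase with the festoon-order index, consistent with \cref{fig:festoon-example} and the remark following the definition, even though the text of that bullet appears to have a left/right typo.)
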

\begin{proof}
We first recall that, for $i < j-1$, the vertex $\leftp(\link{l}_i)$ is to the left of  $\leftp(\link{l}_j)$ and $\rightp(\link{l}_i)$ is to the left of $\rightp(\link{l}_j)$.
Because by \cref{def:festoon} the links $\link{l}_i$  and $\link{l}_j$ are not adjacent in $H[X]$, i.e., they are not intersecting, this implies that $\rightp(\link{l}_i)$ is to the left of $\leftp(\link{l}_j)$.
\end{proof}

The following property of festoons will be crucial for showing thinness of the components we construct.
It implies that not only every festoon is $4$-thin, but we even have the much stronger property that adding a festoon to any link set cannot increase the thinness by more than~$4$.
\begin{lemma}\label{lem:festoon_light}
Let $X$ be a festoon and $C\in \mathcal{C}_G$ be a $2$-cut.
Then $|\delta_X(C)|\leq 4$.
\end{lemma}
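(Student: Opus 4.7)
The plan is to exploit two structural properties of festoons. First, by the festoon order in \cref{def:festoon}, both endpoint coordinates $\leftp(\link{l}_i)$ and $\rightp(\link{l}_i)$ are strictly monotone in $i$ along the path $(V, E \setminus \{e_r\})$ (as confirmed by the proof of \cref{lem:endpoint_order_festoon}). Second, \cref{lem:endpoint_order_festoon} itself states that for $i < j - 1$, the right endpoint $\rightp(\link{l}_i)$ lies strictly to the left of $\leftp(\link{l}_j)$, so non-adjacent links in the festoon are geometrically disjoint on the cycle.

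Since $C$ is a $2$-cut of $G$ not containing $r$, it is an interval $[a, b]$ of consecutive vertices on the path $(V, E \setminus \{e_r\})$. A link $\link{l}_i$ contributes to $\delta_X(C)$ exactly when one of its endpoints lies in $C$ and the other does not; because $\leftp(\link{l}_i)$ is strictly left of $\rightp(\link{l}_i)$, this splits $\delta_X(C)$ into two disjoint classes: $I_B$, consisting of links with $\leftp(\link{l}_i)$ strictly left of $a$ and $\rightp(\link{l}_i) \in C$, and $I_E$, consisting of links with $\leftp(\link{l}_i) \in C$ and $\rightp(\link{l}_i)$ strictly right of $b$. In particular, $|\delta_X(C)| = |I_B| + |I_E|$.

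By strict monotonicity of $\leftp(\link{l}_i)$ and $\rightp(\link{l}_i)$ in $i$, each of $I_B$ and $I_E$ is a contiguous range of indices, and I then plan to show $|I_B| \leq 2$ and $|I_E| \leq 2$. If $|I_B| \geq 3$, then $I_B$ contains three consecutive indices $i, i+1, i+2$; membership in $I_B$ yields $\rightp(\link{l}_i)$ weakly right of $a$ and $\leftp(\link{l}_{i+2})$ strictly left of $a$, so $\leftp(\link{l}_{i+2})$ is strictly left of $\rightp(\link{l}_i)$, contradicting \cref{lem:endpoint_order_festoon}. A symmetric argument handles $|I_E|$, giving $|\delta_X(C)| \leq 4$. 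I do not anticipate a hard step: the entire argument is a short geometric case analysis, and the two structural properties of festoons do essentially all the work; the only subtle point is keeping track of which endpoint is which under the festoon order, which is resolved once $I_B$ and $I_E$ are set up correctly.
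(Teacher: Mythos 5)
Your proof is correct and essentially mirrors the paper's own argument: the paper also partitions $\delta_X(C)$ into links whose left endpoint is left of the interval $C$ versus links whose right endpoint is right of $C$, and bounds each class by $2$ using \cref{lem:endpoint_order_festoon}. The only cosmetic differences are that you argue by contradiction from three consecutive indices (using a contiguity observation you don't strictly need — any three indices $i_1<i_2<i_3$ in $I_B$ already satisfy $i_1<i_3-1$), whereas the paper takes the minimal index of the class and rules out all $j>i+1$ directly.
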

\begin{proof}
Let $X=\{\link{l}_1,\dots, \link{l}_p\}$ be the festoon order numbering of the links in $X$.
Because $C\in\Cscr_G$, the cut $C$ is an interval, say from $u$ to $v$.
Every link in $\delta_X(C)$ either has its left endpoint to the left of $u$, or it has its right endpoint to the right of $v$.
We show that there are at most two links in $\delta_X(C)$ of each of these two types.
If there exists at least one link in $\delta_X(C)$ that has its left endpoint to the left of $u$, let $i$ be minimal such that $\link{l}_i \in \delta_X(C)$ has this property.
Because $\link{l}_i \in \delta_X(C)$, we have $\rightp(\link{l}_i)\in C$.
In particular, $\rightp(\link{l}_i)$ is not left of $u$ and hence, by \cref{lem:endpoint_order_festoon}, for every $j > i+1$ the link $\link{l}_j$ does not have its left endpoint to the left of $u$.
A symmetric argument shows that there are also at most two link in $\delta_X(C)$ that have their right endpoint to the right of $v$. 
\end{proof}

\subsection{Partitioning the solution $S$ into festoons}\label{sec:family_of_festoons}

Next, we describe how we partition the WRAP solution $S$ into festoons.
A crucial property of the resulting partition $\Xscr$ will be that the family $\{I_X : X\in \Xscr\}$ of the corresponding festoon intervals is laminar (\cref{lem:festoon-intervals_laminar}).
To obtain the partition $\Xscr$ we proceed as follows.
Starting with $\overline{S}\coloneqq S$ and $\Xscr = \emptyset$, we iteratively compute a festoon $X\subseteq \overline{S}$ such that the festoon interval $I_X$ is maximal, add $X$ to $\Xscr$, and replace $\overline{S}$ by $\overline{S}\setminus X$.
When $\overline{S}$ is empty, the set $\Xscr$ is a partition of $S$ into festoons.
We now show that we can efficiently compute a festoon with maximal festoon interval.
In fact, we prove the slightly stronger statement that we can choose $X$ such that $|I_X|$ is maximum, but we will not need this stronger property.

\begin{lemma}\label{lem:compMaxFestoon}
Given a link set $\overline{S}\subseteq L$, we can efficiently compute a festoon $X\subseteq \overline{S}$ such that $|I_X|$ is maximum.
\end{lemma}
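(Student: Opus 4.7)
The plan is a polynomial-time dynamic program over ordered pairs of consecutive links of the festoon. By \cref{def:festoon} and \cref{lem:endpoint_order_festoon}, a sequence $\link{l}_1, \ldots, \link{l}_p$ of links from $\overline{S}$ is a festoon in festoon order if and only if (a) both $\leftp(\link{l}_1), \ldots, \leftp(\link{l}_p)$ and $\rightp(\link{l}_1), \ldots, \rightp(\link{l}_p)$ are strictly increasing along the path $(V, E \setminus \{e_r\})$; (b) consecutive links intersect, i.e., $\leftp(\link{l}_{i+1})$ is not strictly to the right of $\rightp(\link{l}_i)$; (c) non-consecutive links do not intersect, i.e., $\leftp(\link{l}_{i+2})$ is strictly to the right of $\rightp(\link{l}_i)$. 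Since $|I_X|$ is determined entirely by $\leftp(\link{l}_1)$ and $\rightp(\link{l}_p)$, maximizing $|I_X|$ reduces, for each choice of the last pair $(\link{l}_{p-1}, \link{l}_p)$, to minimizing $\leftp(\link{l}_1)$.

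For each ordered pair $(\link{l}', \link{l}) \in (\overline{S} \cup \{\bot\}) \times \overline{S}$, I define
\[
M(\link{l}', \link{l}) \;\coloneqq\; \min\bigl\{\leftp(\link{l}_1) : \link{l}_1, \ldots, \link{l}_p \text{ is a festoon in } \overline{S} \text{ with } (\link{l}_{p-1}, \link{l}_p) = (\link{l}', \link{l})\bigr\},
\]
where $\link{l}' = \bot$ encodes the single-link case $p = 1$, and $M(\link{l}', \link{l}) = +\infty$ if no such festoon exists. The base case is $M(\bot, \link{l}) = \leftp(\link{l})$. For $\link{l}' \in \overline{S}$ such that $(\link{l}', \link{l})$ is an admissible consecutive pair (meaning $\leftp(\link{l}') < \leftp(\link{l}) \leq \rightp(\link{l}') < \rightp(\link{l})$ along the path $(V, E \setminus \{e_r\})$), the recursion is
\[
M(\link{l}', \link{l}) \;=\; \min_{\link{l}''} M(\link{l}'', \link{l}'),
\]
where $\link{l}''$ ranges over $\{\bot\}$ together with all $\tilde{\link{l}} \in \overline{S}$ satisfying $\rightp(\tilde{\link{l}}) < \leftp(\link{l})$; otherwise $M(\link{l}', \link{l}) = +\infty$. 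A festoon with maximum $|I_X|$ is then recovered by picking a pair $(\link{l}', \link{l})$ maximizing the length of the interval from $M(\link{l}', \link{l})$ to $\rightp(\link{l})$ and back-tracing through the DP.

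Correctness hinges on the monotonicity of right endpoints within a festoon: for every $i < p-1$, $\rightp(\link{l}_i) < \rightp(\link{l}_{p-1})$, so condition (c) for all such $i$ is automatically implied by the single constraint $\rightp(\link{l}_{p-1}) < \leftp(\link{l}_p)$ enforced in the recursion when extending past the second-to-last link. With $O(|\overline{S}|^2)$ states and an $O(|\overline{S}|)$ minimization per state, the DP runs in $O(|\overline{S}|^3)$ time. The main obstacle is getting the DP state right: tracking only the last link fails because non-intersection with the second-to-last link cannot be verified when extending the festoon, and retaining both the last and the second-to-last link—combined with the monotonicity of right endpoints in festoon order—is what makes the DP both correct and compact.
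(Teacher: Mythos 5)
Your proposal is correct but takes a genuinely different route from the paper. The paper builds a directed auxiliary DAG on vertex set $\overline{S}$ whose arcs encode the ``consecutive pair'' conditions (intersecting, with left and right endpoints both strictly increasing), then for each candidate first/last pair $(\link{l},\link{f})$ computes a \emph{shortest} $\link{l}$--$\link{f}$ path in this DAG. The key observation there is that a shortest path can have no chord arc, and by the monotonicity of endpoints along any path a chord arc would exist precisely when two non-consecutive links intersect — so the vertex set of any shortest path is automatically a festoon and no extra bookkeeping is needed. Your DP instead enlarges the state to the last two links $(\link{l}',\link{l})$ and, when appending $\link{l}=\link{l}_p$, explicitly forbids the link $\link{l}''=\link{l}_{p-2}$ from intersecting $\link{l}$ via the constraint $\rightp(\link{l}'')<\leftp(\link{l})$; combined with \cref{lem:endpoint_order_festoon} and the monotone increase of right endpoints along the prefix festoon, this enforces all of condition (c) directly. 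Both approaches are polynomial; the paper's is slicker (the festoon conditions fall out of shortest-path minimality for free, so a single-link state suffices), while yours is more self-contained and avoids the shortest-path insight altogether. One indexing slip in your correctness paragraph: the constraint the recursion enforces is $\rightp(\link{l}_{p-2})<\leftp(\link{l}_p)$, not $\rightp(\link{l}_{p-1})<\leftp(\link{l}_p)$ (the latter would contradict admissibility of the last pair, since consecutive links must intersect), and the monotonicity you want is $\rightp(\link{l}_i)\le\rightp(\link{l}_{p-2})$ for $i\le p-2$. The recursion itself is stated correctly, so this is a presentation issue rather than a gap.
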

\begin{proof}
For a fixed pair $\link{l},\link{f} \in \overline{S}$, we can efficiently check if there is a festoon for which $\link{l}$ is the leftmost link, i.e., the first link in the festoon order, and $\link{f}$ is the rightmost link, i.e., the last link in the festoon order, as follows.
We consider a directed auxiliary graph with vertex set $\overline{S}$ where we have a directed arc $(\link{l}_1,\link{l}_2)$ if
\begin{enumerate}
\item $\link{l}_1$ and $\link{l}_2$ are intersecting,
\item $\leftp(\link{l}_1)$ is to the left of $\leftp(\link{l}_2)$, and
\item $\rightp(\link{l}_1)$ is to the right of $\rightp(\link{l}_2)$.
\end{enumerate}
Then every festoon $X\subseteq \overline{S}$ corresponds to the directed path in the auxiliary graph that visits the links in $X$ in the festoon order.
Thus, if the auxiliary graph contains no $\link{l}$-$\link{f}$ path, then there is no festoon in $\overline{S}$ that contains both $\link{l}$ and $\link{f}$.
Otherwise, we can compute a shortest $\link{l}$-$\link{f}$ path in the auxiliary graph with vertices $\link{l}_1 = \link{l}, \dots, \link{l}_p = \link{f}$ visited in this order.
The vertex set $\{\link{l}_1,\dots,\link{l}_p\}$ of this path is a festoon because $P$ being a shortest path implies that $\link{l}_i$ and $\link{l}_j$ are not intersecting if $|i-j|>1$.
By enumerating over all possible choices of $\link{l}$ and $\link{f}$ we can find a desired festoon $X$ maximizing $|I_X|$.
\end{proof}
Alternatively, one can also find a festoon $X\subseteq \overline{S}$ with maximum $|I_X|$ through a dynamic program over the directed graph constructed in the proof of \cref{lem:compMaxFestoon}.
More precisely, this directed graph is acyclic by construction, and a desired festoon $X$ can be found using dynamic programming techniques analogous to how a longest path can be computed in an acyclic graph.\footnote{This dynamic programming techniques to obtain $X$ with maximum festoon interval leads to a faster procedure than the one used in the proof of \cref{lem:compMaxFestoon}.
However, in the interest of simplicity we refrain from optimizing running times and strive to present conceptually simple approaches.}

We now show that the family of festoon intervals corresponding to $\Xscr$ is indeed laminar.
\begin{lemma}\label{lem:festoon-intervals_laminar}
The family $\{ I_X: X\in\Xscr\}$ of festoon intervals is laminar.
\end{lemma}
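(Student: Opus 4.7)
The plan is to argue by contradiction: assume that two festoons $X_1, X_2 \in \mathcal{X}$ have crossing festoon intervals, and produce a festoon $Y \subseteq X_1 \cup X_2$ whose festoon interval strictly contains $I_{X_1}$, thus contradicting that $X_1$ was chosen to maximize $|I_X|$ among festoons in the residual set $\overline{S}$ at the iteration when $X_1$ was picked.

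Suppose $I_{X_1}$ and $I_{X_2}$ cross; assume without loss of generality that $X_1$ was chosen before $X_2$ (so $X_1 \cup X_2 \subseteq \overline{S}$ at the time $X_1$ was chosen) and, by a left-right symmetry, that $I_{X_2}$ extends strictly to the left of $I_{X_1}$ while still overlapping $I_{X_1}$. Write $X_1 = \{\link{l}_1^1, \ldots, \link{l}_p^1\}$ and $X_2 = \{\link{l}_1^2, \ldots, \link{l}_q^2\}$ in festoon order, and let $j^*$ be the largest index with $\leftp(\link{l}_{j^*}^2)$ strictly left of $\leftp(\link{l}_1^1)$. A short case analysis, using that consecutive links of a festoon must intersect and that some right endpoint of $X_2$ reaches into $I_{X_1}$, shows that $\rightp(\link{l}_{j^*}^2)$ lies weakly to the right of $\leftp(\link{l}_1^1)$: if it were strictly to the left, then $\link{l}_{j^*}^2$ would be disjoint from $\link{l}_{j^*+1}^2$ (which exists and has its left endpoint in $I_{X_1}$ by maximality of $j^*$), a contradiction.

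Next, let $k^*$ be the smallest index with $\rightp(\link{l}_{k^*}^1)$ weakly to the right of $\rightp(\link{l}_{j^*}^2)$; this exists because $\rightp(\link{l}_p^1)$ is the rightmost point of $I_{X_1}$, hence strictly right of $\rightp(\link{l}_{j^*}^2)$. Set $k' \coloneqq k^*$ if $\rightp(\link{l}_{k^*}^1) > \rightp(\link{l}_{j^*}^2)$, and $k' \coloneqq k^*+1$ otherwise. Using that consecutive links of the festoon $X_1$ either share an endpoint or interleave, one obtains $\leftp(\link{l}_{k'}^1) \leq \rightp(\link{l}_{j^*}^2)$, from which a direct check yields that $\link{l}_{j^*}^2$ intersects $\link{l}_{k'}^1$ and that the endpoint inequalities required for an arc in the auxiliary directed graph $G^*$ from the proof of \cref{lem:compMaxFestoon} are satisfied, i.e., $\link{l}_{j^*}^2 \to \link{l}_{k'}^1$ is an arc of $G^*$.

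Combining this crossover arc with the two festoon arc-paths $\link{l}_1^2 \to \cdots \to \link{l}_{j^*}^2$ and $\link{l}_{k'}^1 \to \cdots \to \link{l}_p^1$ in $G^*$ gives a directed $\link{l}_1^2$-$\link{l}_p^1$ path in $G^*$ restricted to $X_1 \cup X_2$. Any shortest such path corresponds, by the argument of \cref{lem:compMaxFestoon}, to a festoon $Y \subseteq X_1 \cup X_2 \subseteq \overline{S}$ whose festoon interval has leftmost vertex $\leftp(\link{l}_1^2)$ and rightmost vertex $\rightp(\link{l}_p^1)$, and therefore strictly contains $I_{X_1}$. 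This contradicts the maximality of $|I_{X_1}|$; the symmetric case where $I_{X_2}$ extends strictly to the right of $I_{X_1}$ is handled analogously. The main technical obstacle is the shared-endpoint case analysis required to define $k'$ and to verify the intersection relations, which forces the occasional shift from $k^*$ to $k^*+1$.
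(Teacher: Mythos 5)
Your proof is correct and reaches the same contradiction as the paper's---namely, a festoon inside $X_1\cup X_2\subseteq\overline S$ whose festoon interval strictly contains $I_{X_1}$, violating the maximality with which $X_1$ was chosen---but by a genuinely different technical route. The paper invokes \cref{lem:paths_in_intersection_graphs_and_cuts} twice (first to find a link $\{u,v\}\in\delta_Y(I_X)$ crossing $I_X$, then to find a link of $X$ crossing $I_{uv}$), and builds the larger festoon by appending the single link $\{u,v\}$ to a left-initial segment of $X$, so the new interval extends $I_X$ to the \emph{right}. You instead perform a direct index analysis on the festoon orders to exhibit an explicit ``crossover'' arc $\link{l}_{j^*}^2\to\link{l}_{k'}^1$ in the auxiliary digraph from the proof of \cref{lem:compMaxFestoon}, glue a left-initial segment of $X_2$ to a right-final segment of $X_1$, and invoke the shortest-path-is-a-festoon argument of that lemma; the resulting interval extends $I_{X_1}$ to the \emph{left}. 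Your route avoids \cref{lem:paths_in_intersection_graphs_and_cuts} entirely at the cost of more case bookkeeping (the shared-endpoint shift from $k^*$ to $k^*+1$ being the main culprit); the paper's route is shorter and more abstract. One small gap worth closing: the parenthetical claim that $\link{l}_{j^*+1}^2$ exists fails when $j^*=q$. In that case, instead observe directly that $\rightp(\link{l}_q^2)$ is the rightmost vertex of $I_{X_2}$, which lies weakly to the right of $\leftp(\link{l}_1^1)$ because the crossing intervals overlap, so the desired inequality $\rightp(\link{l}_{j^*}^2)\ge\leftp(\link{l}_1^1)$ holds without invoking $\link{l}_{j^*+1}^2$. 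This is a one-line fix, not a structural flaw.
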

\begin{proof}
Suppose for the sake of deriving a contradiction that there exist festoons $X,Y\in \Xscr$ such that the festoon intervals $I_X$ and $I_Y$ are crossing, i.e., we neither have $I_X \subseteq I_Y$, $I_Y \subseteq I_X$, nor $I_X \cap I_Y =\emptyset$.
Without loss of generality we assume that the festoon $X$ was constructed before $Y$ in the construction of the festoon family $\Xscr$.
In particular, when the festoon $X$ was constructed, we had $X\cup Y \subseteq \overline{S}$.
Let $\link{l}_1, \dots, \link{l}_p$ be the links in the festoon $X$, numbered according to the festoon order.

Because $I_X$ and $I_Y$ are crossing, either the leftmost or the rightmost vertex of $I_Y$ is contained in $I_X$, but not both.
Therefore, \cref{lem:paths_in_intersection_graphs_and_cuts} implies that there exists a link $\{u,v\} \in \delta_Y(I_X)$, say with $u\in I_X$.
Then the vertex $v$ is lies outside of $I_X$ and, because the definition of a festoon is symmetric, we assume without loss of generality that $v$ is to the right of $I_X$.
Thus, the rightmost vertex in $I_X$ is contained in the interval $I_{uv}$ from $u$ to $v$ and the leftmost vertex in $I_X$ is not.
Hence, by \cref{lem:paths_in_intersection_graphs_and_cuts}, there exists a link in $\delta_X(I_{uv})$ and this link intersects the link $\{u,v\}\in Y$.
Let $j$ be the smallest index in $\{1,\dots, p\}$ such that $\link{l}_j \in X$ intersects $\{u,v\}$.
See \cref{fig:proof_laminarity}.

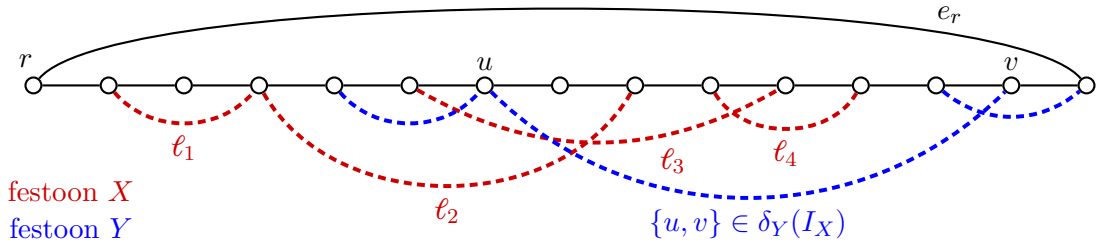
\begin{figure}[!ht]
\begin{center}
\begin{tikzpicture}[scale=1,
ns/.style={thick,draw=black,fill=white,circle,minimum size=6,inner sep=2pt},
es/.style={thick},
lks/.style={line width=1.5pt, blue, densely dashed},
dlks/.style={lks, -latex},
ts/.style={every node/.append style={font=\scriptsize}}
]

\def\num{15}
\def\hd{1}

\begin{scope}[every node/.style={ns}]
\foreach \i in {1,...,\num} {
  \node (a\i) at (\i*\hd,0) {};
}
\end{scope}

\begin{scope}
\node at ($(a1)+(-0.1,0)$)[above=3pt] {$r$};
\pgfmathtruncatemacro\n{\num-1}
\foreach \i in {2,...,\n} {
\node at (a\i)[above=3pt] {};
}
\node at ($(a\num)+(0.1,0)$)[above=3pt] {};
\end{scope}

\begin{scope}[es]
\foreach \i in {2,...,\num} {
\pgfmathtruncatemacro\j{\i-1}
\draw (a\j) -- (a\i);
}
\draw (a1) to[out=50,in=-230,looseness=0.3] node[above,pos=0.8] {$e_r$} (a\num);
\end{scope}

\begin{scope}[lks, darkred]
\draw (a2)  to[bend right=50] node[below] {$\link{l}_1$} (a4);
\draw (a4)  to[bend right=60] node[below] {$\link{l}_2$} (a9);
\draw (a6)  to[bend right=30]  (a11);
\node[darkred] () at (9.5,-1) {$\link{l}_3$};
\draw (a10) to[bend right=60] node[below] {$\link{l}_4$} (a12);
\end{scope}
\begin{scope}[lks, blue]
\draw (a5)  to[bend right=50] node[below] {} (a7);
\draw (a7)  to[bend right=45] node[below] {$\{u,v\}\in \delta_Y(I_X)$} (a14);
\draw (a13)  to[bend right=40] node[below] {} (a15);
\end{scope}

\node[darkred] (X) at (1.5,-1.4) {festoon $X$};
\node[blue] (Y) at (1.5,-1.9) {festoon $Y$};

\node[above=2pt] () at (a7) {$u$};
\node[above=2pt] () at (a14) {$v$};
\end{tikzpicture}
 \end{center}
\caption{An example illustrating the proof of \cref{lem:festoon-intervals_laminar}. 
The festoon $X$ is shown in red and the festoon $Y$ is shown in blue.
In this example $j=2$ and the links set $\{\link{l}_1,\link{l}_2, \link\{u,v\}\}$ is a festoon with a festoon interval that is strictly larger than $I_X$.
\label{fig:proof_laminarity}}
\end{figure}

Because $v$ is to the right of the festoon interval $I_X$ and the links $\{u,v\}$ and $\link{l}_j\in X$ are intersecting, the vertex $\leftp(\link{l}_j)$ must be to the left of $u = \leftp(\{u,v\})$.
Moreover, by the choice of the index $j$, the links $\link{l}_i$ and $\{u,v\}$ are intersecting for $i=j$, but not for $i <j$ and hence we conclude that $\{\link{l}_1,\dots,\link{l}_j, \{u,v\}\}$ is a festoon.
However, the festoon interval of this festoon is strictly larger than $I_X$, contradicting the choice of $X$ in the construction of $\Xscr$.
\end{proof}

Recall that in the partition of the WRAP solution $S$ into $4\lceil \sfrac{1}{\epsilon}\rceil$-thin components that we construct, every component will be the union of some festoons from $\Xscr$.
In order to use \cref{lem:characterize_drop} to analyze which links are contained in the sets $\Drop_{\vec{F}}(K)$ for the components $K$ that we construct, we need to understand which links are connected to each other in the link intersection graph $H[K]$ of these components.
For this reason it will be useful to understand when the link intersection graph $H[X\cup Y]$ of the union of two festoons $X,Y \in \Xscr$ is connected.
This is the case if the festoons $X$ and $Y$ are \emph{tangled}. 

\begin{definition}[tangled festoons]\label{def:tangled}
Two festoons $X,Y \in \Xscr$ are \emph{tangled} if there is a link in $X$ and a link in $Y$ that intersect.
\end{definition}

\begin{figure}[!ht]
\begin{center}
\begin{tikzpicture}[scale=1,
ns/.style={thick,draw=black,fill=white,circle,minimum size=6,inner sep=2pt},
es/.style={thick},
lks/.style={line width=1.5pt, blue, densely dashed},
dlks/.style={lks, -latex},
ts/.style={every node/.append style={font=\scriptsize}}
]

\def\num{17}
\def\hd{0.95}

\begin{scope}[every node/.style={ns}]
\foreach \i in {1,...,\num} {
  \node (a\i) at (\i*\hd,0) {};
}
\end{scope}

\begin{scope}
\node at ($(a1)+(-0.1,0)$)[above=3pt] {$r$};
\pgfmathtruncatemacro\n{\num-1}
\foreach \i in {2,...,\n} {
\node at (a\i)[above=3pt] {};
}
\node at ($(a\num)+(0.1,0)$)[above=3pt] {};
\end{scope}

\begin{scope}[es]
\foreach \i in {2,...,\num} {
\pgfmathtruncatemacro\j{\i-1}
\draw (a\j) -- (a\i);
}
\draw (a1) to[out=50,in=-230,looseness=0.3] node[above,pos=0.8] {$e_r$} (a\num);
\end{scope}

\begin{scope}[lks, darkred]
\draw (a2)  to[bend right=40]  (a5);
\draw (a3)  to[bend right=75]  (a7);
\draw (a7)  to[bend right=50]  (a9);
\draw (a8)  to[bend right=40]  (a12);
\draw (a11) to[bend right=40]  (a17);
\end{scope}
\begin{scope}[lks, blue]
\draw (a4)  to[bend right=40]  (a6);
\draw (a6)  to[bend right=50]  (a10);
\end{scope}
\begin{scope}[lks, green!70!black]
\draw (a13)  to[bend right=40]  (a15);
\draw (a14)  to[bend right=50]  (a16);
\end{scope}
\end{tikzpicture}
 \end{center}
\caption{
The picture shows three festoons colored in red, blue, and green, respectively.
The red and the blue festoons are tangled, while the red and green one are not.
Also the blue and the green festoons are not tangled.
}
\label{fig:tangled-example}
\end{figure}
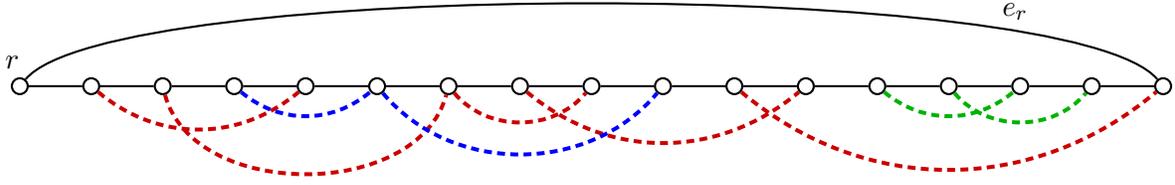

Note that festoons $X,Y\in \Xscr$ with disjoint festoon intervals are never tangled.
Otherwise, by \cref{lem:festoon-intervals_laminar}, we either have $I_X \subseteq I_Y$ or $I_Y \subseteq I_X$.
We now characterize when the festoons $X$ and $Y$ are tangled in this case.
We use the following notation.

\begin{definition}[partial order $\prec$]
For two festoons $X,Y\in \Xscr$, we write $X \prec Y$ if and only if $I_X \subsetneq I_Y$ and we write $X \preceq Y$ if and only if $I_X \subseteq I_Y$.
\end{definition}

\begin{lemma}\label{lem:characterize_tangled}
Let $X,Y$ be festoons with $X \preceq Y$. Then $X$ and $Y$ are tangled if and only if there exists a link in $Y$ that has an endpoint in the festoon interval $I_X$. 
\end{lemma}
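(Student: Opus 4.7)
For the forward direction, suppose some $\link{l}\in X$ and $\link{f}\in Y$ intersect, so $X$ and $Y$ are tangled. Both endpoints of $\link{l}$ lie in $I_X$ by definition of the festoon interval. If $\link{l}$ and $\link{f}$ share an endpoint, then that endpoint of $\link{f}$ already lies in $I_X$. Otherwise, $\link{l}$ and $\link{f}$ cross, so their four endpoints interleave on the ring; this forces an endpoint of $\link{f}$ to lie strictly between $\leftp(\link{l})$ and $\rightp(\link{l})$, and hence in $I_X$.

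For the backward direction, write $X=\{\link{l}_1,\dots,\link{l}_p\}$ and $Y=\{\link{f}_1,\dots,\link{f}_q\}$ in festoon order, and let $\link{f}\in Y$ have an endpoint $v\in I_X$. I argue by contradiction, assuming no link of $Y$ intersects any link of $X$. First, consecutive festoon links $\link{l}_i,\link{l}_{i+1}$ must either share the endpoint $\rightp(\link{l}_i)=\leftp(\link{l}_{i+1})$ or interleave, so in either case $\rightp(\link{l}_i)\ge\leftp(\link{l}_{i+1})$ and the intervals $[\leftp(\link{l}_i),\rightp(\link{l}_i)]$ cover $I_X$. Choose $i$ with $v\in[\leftp(\link{l}_i),\rightp(\link{l}_i)]$ and let $w$ be the other endpoint of $\link{f}$. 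Non-intersection of $\link{f}$ with $\link{l}_i$ forbids $v$ from being an endpoint of $\link{l}_i$ and forbids $w$ from being either an endpoint of $\link{l}_i$ or from lying outside $[\leftp(\link{l}_i),\rightp(\link{l}_i)]$ (else $\link{f}$ and $\link{l}_i$ would share an endpoint or interleave and cross). Hence both $v$ and $w$ lie strictly between $\leftp(\link{l}_i)$ and $\rightp(\link{l}_i)$, and in particular both endpoints of $\link{f}$ lie strictly inside $I_X$. The same reasoning applies to every link of $Y$ that has an endpoint in $I_X$.

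The main obstacle is closing this nested case, for which the hypothesis $I_X\subseteq I_Y$ and the festoon structure of $Y$ are both essential: a single $\link{f}$ that is nested inside some $\link{l}_i$ does not by itself intersect any link of $X$, so one has to propagate information along $Y$. Let $j^{*}$ be the smallest index such that $\link{f}_{j^{*}}$ has an endpoint in $I_X$; by the above, both endpoints of $\link{f}_{j^{*}}$ lie strictly inside $I_X$. From $I_X\subseteq I_Y$ we have $\leftp(\link{f}_1)\le\leftp(I_X)<\leftp(\link{f}_{j^{*}})$, so $j^{*}>1$, and the minimality of $j^{*}$ ensures $\link{f}_{j^{*}-1}$ has no endpoint in $I_X$. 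The intersection of the consecutive festoon links $\link{f}_{j^{*}-1}$ and $\link{f}_{j^{*}}$ yields $\rightp(\link{f}_{j^{*}-1})\ge\leftp(\link{f}_{j^{*}})>\leftp(I_X)$, while the festoon order gives $\rightp(\link{f}_{j^{*}-1})<\rightp(\link{f}_{j^{*}})<\rightp(I_X)$. Combining these puts $\rightp(\link{f}_{j^{*}-1})$ strictly inside $I_X$, contradicting the choice of $j^{*}$ and completing the backward direction.
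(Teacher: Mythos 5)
Your proof is correct, but it takes a genuinely different route from the paper's. The paper handles the backward direction in two cases ($I_X=I_Y$ versus $X\prec Y$) and in the second case invokes \cref{lem:paths_in_intersection_graphs_and_cuts} twice: once on $H[Y]$ to extract a link $\{u,v\}\in\delta_Y(I_X)$, and once on $H[X]$ to extract a link of $X$ crossing $I_{uv}$, which must then intersect $\{u,v\}$. You instead argue by contradiction and exploit the festoon structure of \emph{both} $X$ and $Y$ directly: first a covering argument along $X$ (that $\rightp(\link{l}_i)\ge\leftp(\link{l}_{i+1})$, so the link intervals of $X$ cover $I_X$) shows that any $Y$-link with an endpoint in $I_X$ must lie entirely strictly inside $I_X$; then a minimal-index propagation along the festoon order of $Y$, anchored by $I_X\subseteq I_Y$ to force $j^*>1$, yields $\rightp(\link{f}_{j^*-1})$ strictly inside $I_X$, a contradiction. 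Your argument is self-contained, sidesteps the implicit requirement in the paper's route that $I_X\in\Cscr_G$ (i.e.\ $r\notin I_X$) for \cref{lem:paths_in_intersection_graphs_and_cuts} to apply verbatim, and also absorbs the $I_X=I_Y$ case into the main argument. The paper's version is shorter given the auxiliary cut lemma; yours is more elementary and makes the ordering properties of festoons do the work. One small presentational point: when you write ``both endpoints of $\link{f}$ lie strictly inside $I_X$,'' you should make explicit that this is because they are strictly between $\leftp(\link{l}_i)$ and $\rightp(\link{l}_i)$, which in turn lie between $\leftp(I_X)=\leftp(\link{l}_1)$ and $\rightp(I_X)=\rightp(\link{l}_p)$ --- otherwise the strictness is only relative to $\link{l}_i$'s endpoints, not $I_X$'s.
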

\begin{proof}
Clearly, for $X$ and $Y$ to be tangled, $Y$ needs to have an endpoint in the festoon interval $I_X$.
Now suppose there exists a link in $Y$ that has an endpoint in $I_X$.
If $I_X = I_Y$, then both $X$ and $Y$ have a link incident to the leftmost (and also rightmost) vertex of the interval $I_X=I_Y$ and these links are intersecting, implying that $X$ and $Y$ are tangled.
Thus, let us now assume $X \prec Y$.
Then $Y$ contains a link that has an endpoint not contained in $I_X$.
Because $Y$ is a festoon and hence the link intersection graph $H[Y]$ is a path, \cref{lem:paths_in_intersection_graphs_and_cuts} implies that there exists a link $\{u,v\} \in \delta_Y(I_X)$, where without loss of generality $u$ is to the left of $v$.
Because $\{u,v\}\in \delta_Y(I_X)$, either the leftmost vertex of $I_X$ is contained in the interval $I_{uv}$ from $u$ to $v$ and the rightmost vertex of $I_X$ is not, or the other way around.
Therefore, because the leftmost and rightmost vertex of $I_X$ are incident to links in $X$ and because $H[X]$ is a path, \cref{lem:paths_in_intersection_graphs_and_cuts} implies that $X$ contains a link $\link{l}\in\delta_X(I_{uv})$.
Then, the links $\{u,v\}\in X$ and $\link{l}\in Y$ are intersecting.
Hence, $X$ and $Y$ are tangled.
\end{proof}

As a corollary of \cref{lem:characterize_tangled}, we obtain the following useful property.

\begin{corollary}\label{cor:sandwiched_festoons}
Let $X,Y, Z$ be festoons with $X \preceq Y \preceq Z$. 
If $X$ and $Z$ are tangled, then $Y$ and $Z$ are tangled.
\end{corollary}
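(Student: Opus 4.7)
The plan is to deduce the corollary by applying the characterization of tangledness from \cref{lem:characterize_tangled} twice, using the transitivity of the containment of festoon intervals. Specifically, since $X \preceq Z$, \cref{lem:characterize_tangled} tells us that $X$ and $Z$ being tangled yields a link $\link{l} \in Z$ with an endpoint $v \in I_X$. From $X \preceq Y$ we have $I_X \subseteq I_Y$, so $v \in I_Y$ as well. Hence there is a link in $Z$ with an endpoint in $I_Y$, and applying \cref{lem:characterize_tangled} in the other direction to the pair $Y \preceq Z$ gives that $Y$ and $Z$ are tangled.

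There is no real obstacle here: the argument is a two-step chase through \cref{lem:characterize_tangled}, with the only ingredient being monotonicity of interval containment, which is immediate from the definition of $\preceq$. The proof can be written in just a couple of sentences and does not require any new structural insight beyond what has already been established about festoons and their intervals.
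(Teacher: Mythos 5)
Your proof is correct and is essentially identical to the paper's: both apply \cref{lem:characterize_tangled} to the pair $X \preceq Z$ to extract a link of $Z$ with an endpoint in $I_X \subseteq I_Y$, and then apply the same lemma in the converse direction to the pair $Y \preceq Z$.
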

\begin{proof}
Suppose $X$ and $Z$ are tangled.
Then by \cref{lem:characterize_tangled}, there exists a link $\link{l} \in Z$ that has an endpoint in $I_X \subseteq I_Y$.
Therefore, \cref{lem:characterize_tangled} implies that also $Y$ and $Z$ are tangled. 
\end{proof}

\subsection{Constructing the Dependency Graph}\label{sec:dependency_graph}

Next, we analyze the structure of (inclusion-wise) minimal sets $\Xscr_v \subseteq \Xscr$ that allow for dropping the incoming arc of $v$ in the arborescence $(V,\vec{F})$.
By \cref{lem:characterize_drop}, these are the minimal sets $\Xscr_v \subseteq \Xscr$ that connect $v$ to a $v$-good vertex.

\begin{definition}[connecting set of festoons]
We say that a set $\Xscr_v \subseteq \Xscr$ of festoons \emph{connects a vertex $v$ to a vertex $w$} if $v$ is connected to $w$ in the link intersection graph of $\bigcup_{X\in \Xscr_v} X$.
\end{definition}

We first show that whether a festoon $X$ contains a link incident to a $v$-good vertex depends solely on the festoon interval $I_X$.

\begin{lemma}\label{lem:v-good_in_interval_suffices}
Let $v\in V\setminus \{r\}$ and let $X\subseteq L$ be a festoon.
Then there exists a $v$-good vertex that is incident to a link from $X$ if and only if the festoon interval $I_X$ contains a $v$-good vertex.
\end{lemma}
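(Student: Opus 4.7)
The plan is to prove both directions by a short argument that leverages two facts already in hand: the definition of the festoon interval, and the fact that the set of $v$-bad vertices is an interval (\cref{lem:furtherNonshortProps}~\ref{item:descendants_interval}).

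For the forward direction, I would simply observe that every endpoint of a link in the festoon $X = \{\link{l}_1, \ldots, \link{l}_p\}$ lies in the festoon interval $I_X$. Indeed, by the festoon order, $\leftp(\link{l}_1)$ is the leftmost among all left endpoints and $\rightp(\link{l}_p)$ is the rightmost among all right endpoints, so every $\leftp(\link{l}_i)$ and $\rightp(\link{l}_i)$ lies weakly between these two vertices, i.e., in $I_X$. Hence any $v$-good vertex incident to a link of $X$ is automatically in $I_X$.

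For the reverse direction, assume $I_X$ contains a $v$-good vertex. The two endpoints of the interval $I_X$ are $\leftp(\link{l}_1)$ and $\rightp(\link{l}_p)$, each of which is incident to a link in $X$. So it suffices to show that at least one of these two endpoints is $v$-good. Suppose for contradiction that both are $v$-bad. By \cref{lem:furtherNonshortProps}~\ref{item:descendants_interval} and \cref{def:v-good}, the set of $v$-bad vertices is an interval of $V$; hence any interval whose two endpoints are $v$-bad consists entirely of $v$-bad vertices. In particular all of $I_X$ would then be $v$-bad, contradicting our assumption that $I_X$ contains a $v$-good vertex.

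There is no real obstacle here: the statement is essentially an immediate consequence of the interval structure of the $v$-bad set together with the fact that $I_X$'s boundary vertices are themselves endpoints of links in $X$. The only subtlety is ensuring we have identified the endpoints of $I_X$ correctly; this follows directly from \cref{def:festoon}.
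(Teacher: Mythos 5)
Your proof is correct and follows essentially the same reasoning as the paper: both directions hinge on the two observations that the $v$-bad vertices form an interval (so $I_X$ contains a $v$-good vertex iff one of its endpoints does) and that the endpoints of $I_X$ are incident to links of $X$ by definition. You are slightly more explicit about the forward direction, noting that all endpoints of links in $X$ lie inside $I_X$, which the paper leaves implicit.
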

\begin{proof}
Because the set of $v$-bad vertices is an interval (\cref{lem:furtherNonshortProps}~\ref{item:descendants_interval}), we have that $I_X$ contains a $v$-good vertex if and only if one of the endpoints of $I_X$ is $v$-good.
Moreover, by the definition of festoon intervals (\cref{def:festoon}), both endpoints of $I_X$ are incident to a link in $X$, which implies the result.
\end{proof}

We are now  ready to prove our main structural result on minimal sets of festoons connecting a vertex $v$ to a $v$-good vertex.
An example is given in \cref{fig:minimal_connecting_set_of_festoons}.

\begin{figure}[!ht]
\begin{center}
\begin{tikzpicture}[yscale=1.9,
ns/.style={thick,draw=black,fill=white,circle,minimum size=4,inner sep=1.3pt},
es/.style={thick},
lks/.style={line width=1.5pt, densely dashed},
dlks/.style={lks, -latex},
ts/.style={every node/.append style={font=\scriptsize}}
]

\def\num{32}
\def\hd{0.5}

\begin{scope}[every node/.style={ns}]
\foreach \i in {1,...,\num} {
 \ifthenelse{\i < 2 \OR \i > 29}
  { 
      \node[fill=black] (a\i) at (\i*\hd,0) {};
  }{
      \node (a\i) at (\i*\hd,0) {};
  }

}
\end{scope}

\begin{scope}
\node at ($(a1)+(-0.1,0)$)[above=3pt] {$r$};
\node at ($(a15)$)[above=1pt] {$v$};
\end{scope}

\begin{scope}[es]
\foreach \i in {2,...,\num} {
\pgfmathtruncatemacro\j{\i-1}
\draw (a\j) -- (a\i);
}
\draw (a1) to[out=-80,in=-100,looseness=0.45] node[below right,pos=0.9] {$e_r$} (a\num);
\end{scope}

\begin{scope}[lks]
\begin{scope}[darkred]
\draw (a12)  to[bend right=25]  (a15);
\draw (a14)  to[bend right=40]  (a16);
\draw (a16)  to[bend right=40]  (a18);
\draw (a18) to[bend right=20] (a20);
\end{scope}
\begin{scope}[blue]
\draw (a8) to[bend right=40] (a11);
\draw (a10) to[bend right] (a13);
\draw (a13) to[bend right] (a19);
\draw (a17) to[bend right=40] (a20);
\end{scope}
\begin{scope}[orange!90!black]
\draw (a6) to[bend right] (a10);
\draw (a9) to[bend right] (a22);
\draw (a21) to [bend right] (a23);
\draw (a23) to [bend right] (a25);
\end{scope}
\begin{scope}[green!50!black]
\draw (a3) to[bend right] (a7);
\draw (a5) to [bend right] (a24);
\draw (a24) to[bend right] (a27);
\draw (a26) to [bend right] (a29);
\end{scope}
\begin{scope}[magenta]
\draw (a2) to[bend right] (a5);
\draw (a4) to[bend right] (a28);
\draw (a28) to[bend right] (a30);
\draw (a30) to[bend right] (a32);
\end{scope}
\end{scope}

\def\ih{0.07}
\begin{scope}[every path/.style={draw, fill, fill opacity=0.1,very thick}]
\newcommand\Interval[3]{
  \pgfmathsetmacro{\xa}{(#1)*\hd - \hd / 3}
  \pgfmathsetmacro{\xb}{(#2)*\hd + \hd / 3}
  \pgfmathsetmacro{\ya}{(#3) - \ih / 2 }
  \pgfmathsetmacro{\yb}{(#3)}
  \pgfmathsetmacro{\yc}{(#3) + \ih / 2 }
  
  \draw (\xa,\yb) -- (\xb,\yb);
  \draw (\xa,\ya) -- (\xa, \yc);
  \draw (\xb,\ya) -- (\xb, \yc);
}
\begin{scope}[darkred]
\Interval{12}{20}{0.25}
\end{scope}
\begin{scope}[blue]
\Interval{8}{20}{0.35}
\end{scope}
\begin{scope}[orange!90!black]
\Interval{6}{25}{0.45}
\end{scope}
\begin{scope}[green!50!black]
\Interval{3}{29}{0.55}
\end{scope}
\begin{scope}[magenta]
\Interval{2}{32}{0.65}
\end{scope}
\end{scope}

\node[right] () at (0.3,-2.3) { festoons $\textcolor{darkred}{X_1} \prec \textcolor{blue}{X_2} \prec \textcolor{orange!90!black}{X_3} \prec \textcolor{green!50!black}{X_4} \prec \textcolor{magenta}{X_5}$};

\begin{scope}[every node/.style={ns}]
      \node[fill=black] (g) at (13.5,-1.9) {};
      \node (b) at (13.5,-2.1) {};
\end{scope}
\node[right=3pt] () at (g) {$v$-good vertex};
\node[right=3pt] () at (b) {$v$-bad vertex};

\end{tikzpicture}
 \end{center}
\caption{An example of a minimal set of festoons connecting $v$ to a $v$-good vertex. 
Different festoons are drawn in different colors and the top part of the picture shows the corresponding festoon intervals.
The filled vertices are those that are $v$-good.
\label{fig:minimal_connecting_set_of_festoons}}
\end{figure}

\begin{lemma}\label{lem:properties_dependency_paths}
Let $v\in V\setminus \{r\}$ and let $\mathcal{X}_{v} \subseteq \mathcal{X}$ be a minimal set of festoons that connects $v$ to a $v$-good vertex.
Then the family $\{I_X : X \in \mathcal{X}_{v}\}$ is a chain, i.e., we can number the festoons in $\Xscr_v$ such that $X_1 \prec X_2 \prec \dots \prec X_p$.
Moreover:
\begin{enumerate}
    \item\label{item:consecutive_tangled} $X_i$ and $X_j$ are tangled if and only if $|i-j|=1$.
    \item\label{item:first_contains_v} $X_1$ is the only festoon in $\Xscr_v$ with a link incident to $v$.
    \item\label{item:last_contains_v_good} $X_p$ is the only festoon in $\Xscr_v$ with a link incident to a $v$-good vertex.
\end{enumerate}
\end{lemma}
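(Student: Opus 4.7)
Plan: My plan is to prove the four claims by leveraging the minimality of $\Xscr_v$ together with the laminarity of festoon intervals (\cref{lem:festoon-intervals_laminar}), the sandwiched-tangling corollary (\cref{cor:sandwiched_festoons}), and the interval property of $v$-bad vertices (\cref{lem:furtherNonshortProps}\ref{item:descendants_interval}). I would organize the argument around the auxiliary \emph{tangling graph} $G_{\Xscr_v}$ whose vertices are the festoons in $\Xscr_v$ and whose edges join tangled pairs.

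First I would observe that $G_{\Xscr_v}$ is connected: any festoon $X$ whose links are not in the same connected component as a link incident to $v$ in $H[\bigcup_{X'\in\Xscr_v}X']$ could be deleted without affecting any $v$-to-$v$-good connection, contradicting minimality. Next I would establish that the festoon intervals form a strict chain under $\prec$. Suppose toward a contradiction that two festoons $X, Y \in \Xscr_v$ have disjoint festoon intervals. By laminarity and the connectedness of $G_{\Xscr_v}$, the $X$-$Y$ path in $G_{\Xscr_v}$ must contain a festoon $Z \in \Xscr_v$ with $I_Z \supseteq I_X \cup I_Y$. I would then exploit the interval structure of the $v$-bad vertices: if $V(Z)$ were entirely $v$-bad, then since $I_Z$ is the interval spanned by its extremal endpoints, $I_Z$ would lie inside the $v$-bad interval, forcing $V(X)$ and $V(Y)$ to be $v$-bad as well and so forcing the $v$-good vertex reached by $\Xscr_v$ to lie in $V(W)$ for some other $W \in \Xscr_v$, against which one of $X, Y$ can be shown to be redundant via the bridging provided by $Z$; otherwise $V(Z)$ itself contains a $v$-good vertex, and then $Z$ together with the festoons needed to connect $v$ to $Z$'s component already yields a strictly smaller connecting set. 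An analogous redundancy argument rules out $I_{X_i} = I_{X_{i+1}}$.

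Once the strict chain $X_1 \prec X_2 \prec \cdots \prec X_p$ is in place, the remaining claims follow more mechanically. For the consecutive-tangledness half of \ref{item:consecutive_tangled}, if $X_i$ and $X_{i+1}$ were not tangled, then \cref{cor:sandwiched_festoons} (applied with $X_k \preceq X_i \preceq X_{i+1}$ for each $k \leq i$) would imply $X_{i+1}$ is tangled with no $X_k$, $k\leq i$, disconnecting $G_{\Xscr_v}$. For \ref{item:last_contains_v_good}, any $v$-good vertex $w$ reached by $\Xscr_v$ lies in $V(X_j)$ for some $j$, and if $j<p$ then $\Xscr_v \setminus \{X_p\}$ would still connect $v$ (reached through $X_1, \dots, X_{p-1}$ by consecutive tangling) to $w$, contradicting minimality; the symmetric statement proves \ref{item:first_contains_v}. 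Finally, for the non-consecutive-non-tangledness half of \ref{item:consecutive_tangled}, if $X_i$ and $X_j$ were tangled with $j \geq i+2$, then $\Xscr_v \setminus \{X_{i+1}\}$ would still be connected in $G_{\Xscr_v \setminus \{X_{i+1}\}}$ thanks to the direct $X_i$--$X_j$ edge, so that $v \in V(X_1)$ would still reach the $v$-good vertex in $V(X_p)$ through $X_1, \dots, X_i, X_j, X_{j+1}, \dots, X_p$, once again violating minimality.

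The hard part will be the chain-structure step. Disjoint festoon intervals cannot be ruled out by laminarity and connectedness alone, because disjoint-interval festoons can be bridged by a larger containing festoon; the crucial extra ingredient is the interval property of $v$-bad vertices, and making the resulting redundancy argument watertight requires a case analysis on whether $v$ lies inside $I_X$, inside $I_Y$, or outside both, and on which endpoints of the bridging festoon $Z$ fall inside versus outside the $v$-bad interval. Once this is done, all other parts of the lemma fall out as clean consequences of the chain together with repeated applications of minimality.
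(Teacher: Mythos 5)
Your proposal inverts the order of the paper's argument: you try to prove the chain structure first and then deduce properties \ref{item:consecutive_tangled}--\ref{item:last_contains_v_good}, whereas the paper first extracts an ordering $X_1,\ldots,X_p$ of $\Xscr_v$ directly from a path $P$ in the link intersection graph $H[\bigcup_{X\in\Xscr_v} X]$ connecting $v$ to a $v$-good vertex, then proves \ref{item:consecutive_tangled}--\ref{item:last_contains_v_good} from that ordering, and only afterwards shows the ordering coincides with the chain order $\prec$. This matters because several of your intermediate steps quietly rely on structure the paper derives from the path, which is not available to you yet at the point you invoke it.

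The clearest gap is in your consecutive-tangledness step. You argue: if $X_i$ and $X_{i+1}$ are not tangled, then \cref{cor:sandwiched_festoons} shows $X_{i+1}$ is tangled with no $X_k$ for $k\le i$, hence the tangling graph is disconnected. The first half is correct (with $X_k\preceq X_i\preceq X_{i+1}$, tangledness of $X_k,X_{i+1}$ would force tangledness of $X_i,X_{i+1}$), but the second does not follow: some $X_j$ with $j>i+1$ could still be tangled with some $X_k$ with $k\le i$, bridging the two halves without ever touching $X_{i+1}$, and \cref{cor:sandwiched_festoons} gives no contradiction to that (applying it to $X_k\preceq X_{i+1}\preceq X_j$ or $X_i\preceq X_{i+1}\preceq X_j$ only produces more tangled pairs, never a contradiction). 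So the tangling graph need not be disconnected, and the step fails as stated. To make it watertight one first needs the fact that under minimality the tangling graph of $\Xscr_v$ is an \emph{induced path} from the unique festoon carrying a $v$-link to the unique one carrying a $v$-good link --- which is essentially what the paper extracts via its path $P$ and then proves as \ref{item:first_contains_v}, \ref{item:last_contains_v_good} --- and this is exactly the structure your ordering relies on but has not yet established.

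The chain step has a similar issue. Your observation that any $X$--$Y$ walk in the tangling graph with $I_X\cap I_Y=\emptyset$ must pass through a bridging festoon $Z$ with $I_X\cup I_Y\subseteq I_Z$ is correct (a clean exercise using laminarity, \cref{lem:festoon-intervals_laminar}). But the ensuing case analysis ``one of $X,Y$ can be shown to be redundant via the bridging provided by $Z$'' and ``$Z$ together with the festoons needed to connect $v$ to $Z$'s component already yields a strictly smaller connecting set'' are not proofs: in the first case you do not yet know the tangling graph is a path, so you cannot conclude that dropping $X$ or $Y$ preserves the $v$-to-$v$-good connection, and in the second case the subset of $\Xscr_v$ connecting $v$ to $Z$ may itself contain $X$ or $Y$. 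By contrast, the paper's chain argument is run only after \ref{item:consecutive_tangled}--\ref{item:last_contains_v_good} are in hand: take the smallest index $i$ with $X_i\prec\cdots\prec X_p$, use tangledness of $X_{i-1},X_i$ and laminarity to get $X_i\preceq X_{i-1}$, use \ref{item:last_contains_v_good} plus \cref{lem:v-good_in_interval_suffices} to get $X_{i-1}\prec X_p$, locate $j>i$ with $X_{j-1}\preceq X_{i-1}\prec X_j$, and then \cref{cor:sandwiched_festoons} plus \ref{item:consecutive_tangled} gives the contradiction. Establishing \ref{item:consecutive_tangled}--\ref{item:last_contains_v_good} first is what makes each of those inferences available; your proposal has not earned them yet at the point you need them.
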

This structural result of minimal subsets $\Xscr_v$ that allow for dropping a link $(u,v)\in\vec{F}$ is one of the main motivations for considering festoons.
While it might at first sight seem more natural to consider minimal subsets of the solution $S$ that allow for dropping a link, \cref{fig:DropLinkExampleJumping} provides an example illustrating that such link sets seem to have little structure.
\begin{figure}[!ht]
\begin{center}
\begin{tikzpicture}[scale=1.5,
ns/.style={thick,draw=black,fill=white,circle,minimum size=6,inner sep=2pt},
es/.style={thick},
lks/.style={line width=1pt, blue, densely dashed},
dlks/.style={lks, -latex},
ts/.style={every node/.append style={font=\scriptsize}}
]

\def\rad{2}
\def\num{16}

\def\vlfac{1.14}

\colorlet{goodcol}{blue!70!green}
\colorlet{dacol}{red}
\colorlet{Scol}{green!20!black}
\colorlet{Kcol}{green!80!black}

\pgfkeyssetvalue{/tikz/pics/interval/color}{green!50!black}
\pgfkeyssetvalue{/tikz/pics/interval/radius}{0.15*\rad}
\tikzset{
    pics/interval/.style 2 args={
        code={
               \def\cw{\pgfkeysvalueof{/tikz/pics/interval/radius}}
               \colorlet{col}{\pgfkeysvalueof{/tikz/pics/interval/color}}

               \draw[col, fill=col, fill opacity=0.2] (#1:\rad+\cw) arc (#1:#2:\rad+\cw)
               arc (#2:#2+180:\cw)
               arc (#2:#1:\rad-\cw)
               arc (#1+180:#1+360:\cw);
        }
    },
}
\tikzset{
    pics/intervalN/.style 2 args={
        code={
               \def\cw{\pgfkeysvalueof{/tikz/pics/interval/radius}}
               \colorlet{col}{\pgfkeysvalueof{/tikz/pics/interval/color}}

               \pgfmathanglebetweenpoints{\pgfpoint{0cm}{0cm}}{\pgfpointanchor{#1}{center}}
               \edef\angA{\pgfmathresult}
               \pgfmathanglebetweenpoints{\pgfpoint{0cm}{0cm}}{\pgfpointanchor{#2}{center}}
               \edef\angB{\pgfmathresult}

               \tikzmath{
                   if \angB < \angA then {
                       \angB = \angB + 360;
                   };
               }

               \draw[col, fill=col, fill opacity=0.2] (\angA:\rad+\cw) arc (\angA:\angB:\rad+\cw)
               arc (\angB:\angB+180:\cw)
               arc (\angB:\angA:\rad-\cw)
               arc (\angA+180:\angA+360:\cw);
        }
    },
}

\coordinate (c) at (0,0);

\begin{scope}[every node/.style={ns}]
\foreach \i in {1,...,\num} {
  \pgfmathsetmacro\r{90+(\i-1)*360/\num}
  \ifthenelse{\i < 3 \OR \i > 13}
  {
    \node[fill=goodcol, fill opacity=0.6] (\i) at (\r:\rad) {};
  }{
    \node (\i) at (\r:\rad) {};
  }
}
\end{scope}

\begin{scope}
\path (\num) to node[above=-2pt] {$e_r$} (1);
\end{scope}

\begin{scope}[es]
\foreach \i in {1,...,\num} {
\pgfmathtruncatemacro\j{1+mod(\i,\num)}
\draw (\i) -- (\j);
}
\end{scope}

\begin{scope}[dlks]
\draw (1) to[bend left] (2);
\draw[dacol] (14) to[bend right, in=-170, out=-20] (6);
\draw (6) to[bend right, in=-130, out=-20] (4);
\draw (4) to[bend right] (3);
\draw (4) to[bend left] (5);
\draw (6) to[bend left, in=160, out=10] (12);
\draw (12) to[bend left] (13);
\draw (12) to[bend right] (11);
\draw (11) to[bend right, out=-10] (8);
\draw (8) to[bend right] (7);
\draw (8) to[bend left] (9);
\draw (9) to[bend left] (10);
\draw (2) to[bend right, in=-150, out=-40] (15);
\draw (15) to[bend left] (16);
\draw (15) to[bend right] (14);
\end{scope}

\begin{scope}[lks,solid,line width=3pt,relative,Scol,opacity=0.4]
\draw[Kcol] (5) to[out=-20,in=-160] (3);
\draw[Kcol] (5) to[out=50,in=140] (6);
\draw (7) to[out=20,in=160] (9);
\draw[Kcol] (4) to[out=40,in=140] (9);
\draw[Kcol] (8) to[out=30,in=150] (10);
\draw (11) to[out=30,in=150] (13);
\draw[Kcol] (10) to[out=30,in=150] (15);
\draw (12) to[out=30,in=150] (1);
\draw (2) to[out=-30,in=-150] (14);
\draw (15) to[out=30,in=150] (16);
\end{scope}

\begin{scope}
\node at ($(c)!\vlfac!(1)$) {$r$};
\node at ($(c)!\vlfac!(14)$) {$u$};
\node at ($(c)!\vlfac!(6)$) {$v$};
\end{scope}

\begin{scope}[Kcol]
\coordinate (tn) at ($(4)!0.5!(5)$);
\node at ($(c)!\vlfac!(tn)$) {$P$};
\end{scope}

\begin{scope}
\def\irad{0.22*\rad}

\begin{scope}[goodcol]
\node[align=center,right] at (40:\rad cm + \irad + 4ex) {$v$-good\\ vertices};
\end{scope}
\end{scope}

\end{tikzpicture}
 \end{center}
\caption{Example of a non-shortenable directed WRAP solution $\vec{F}$ (dashed blue and red arrows) together with an undirected WRAP solution $S$ (thick lines in light and dark green).
One arc of $\vec{F}$ is highlighted in red, with tail $u$ and head $v$.
The $v$-good vertices are colored in blue.
A smallest subset $P\subseteq S$ with $(u,v)\in \Drop_{\vec{F}}(K)$ is shown in light green.
Note that while the links in $P$ induce a path in the dependency graph $H$ and thus have a natural ordering, this ordering neither sorts the links ``from left to right'' or vice versa (as a festoon ordering does), nor is it related to the arborescence $(V,\vec{F})$.
In contrast, \cref{lem:properties_dependency_paths} shows that any minimal set $\mathcal{X}_{v} \subseteq \mathcal{X}$ of festoons that allows for dropping a link $(u,v)\in\vec{F}$ has an order related to the structure of the laminar family $\{I_X : X \in \Xscr\}$.
}
\label{fig:DropLinkExampleJumping}
\end{figure}
\begin{proof}[Proof of \cref{lem:properties_dependency_paths}]
Because $\Xscr_v$ connects $v$ to a $v$-good vertex, there exists a path $P$ in the link intersection graph of $\bigcup_{X \in \Xscr_v} X$ from a link incident to $v$ to a link incident to a $v$-good vertex.
Using that the link intersection graph of every festoon is connected, we may assume that the path $P$ visits every festoon $X\in \Xscr_v$ only once, i.e., if $P$ visits two links $\link{l},\link{f} \in X$, then all links visited by $P$ between $\link{l}$ and $\link{f}$ are contained in $X$.
If this was not the case, we could replace the subpath between $\link{l}$ and $\link{f}$ by a path in the link intersection graph of $X$ (and remove cycles if the resulting walk is not a path).
By minimality of $\Xscr_v$ we can thus conclude that the path $P$ visits every festoon in $\Xscr_v$ exactly once.
This induces an order $X_1,\dots, X_p$ on the festoons in $\Xscr_v$ where, for $i<j$, the festoon $X_i$ is visited before $X_j$ by the path $P$.
Because $P$ goes from a link incident to $v$ to one incident to a $v$-good vertex, we have that $X_1$ contains a link incident to $v$ and $X_p$ a link incident to a $v$-good vertex.
We now show that $X_1 \prec X_2 \prec \dots \prec X_p$ and \labelcref{item:consecutive_tangled,item:first_contains_v,item:last_contains_v_good} are fulfilled.

To prove \ref{item:consecutive_tangled}, we first observe that for $i\in \{1,\dots,p-1\}$ the path $P$ contains an edge from a link in $X_i$ to a link in $X_{i+1}$.
Because $P$ is a path in the link intersection graph, these links are intersecting and hence $X_i$ and $X_j$ are tangled.
Moreover, for $i,j\in \{1,\dots,p\}$ with $i< j-1$, the festoons $X_i$ and $X_j$ cannot be tangled as otherwise we could modify the path $P$ (without changing its endpoints) such that it visits the festoon $X_j$ directly after the festoon $X_i$, implying that the set $\{X_1, \dots, X_i, X_j, \dots, X_p\}$ of festoons connects $v$ to a $v$-good vertex and thus contradicting the minimality of $\Xscr_v$.
Such a modification of the path $P$ is possible because the link intersection graph of every festoon is connected.

Next, we show \ref{item:first_contains_v}.
Indeed, if there was a festoon $X_i \in \Xscr_v$ with $i>1$ that contains a link $\link{l}$ incident to $v$, then the festoons $\{X_i, \dots, X_p\}$ are such that $\cup_{j=i}^p X_j$ are connected set of links containing a link incident to $v$ (in $X_i$) one one incident to a $v$-good vertex (in $X_p$).
This contradicts minimality of $\mathcal{X}_v$.
An analogous argument implies \ref{item:last_contains_v_good}.

Finally we prove $X_1 \prec X_2 \prec \dots \prec X_p$.
Let $i$ be the smallest index such that $X_i \prec X_{i+1} \prec \dots \prec X_p$.
Suppose for the sake of deriving a contradiction that $i>1$.
Then, by \ref{item:consecutive_tangled}, the festoons $X_{i-1}$ and $X_i$ are tangled and thus the intersection $I_{X_{i-1}} \cap I_{X_i}$ of their festoon intervals is nonempty.
Because  $\{ I_X: X\in\Xscr\}$  is laminar (\cref{lem:festoon-intervals_laminar}), the minimality of the index $i$ implies $X_i \preceq X_{i-1}$.
Thus, $I_{X_i} \subseteq I_{X_p} \cap I_{X_{i-1}}$ and, because $\{ I_X: X\in\Xscr\}$  is laminar, this implies $X_{i-1} \preceq X_p$ or $X_p \preceq X_{i-1}$.
By \ref{item:last_contains_v_good} and \cref{lem:v-good_in_interval_suffices}, the festoon interval $I_{X_{p}}$ contains a $v$-good vertex but the festoon interval $I_{X_{i-1}}$ does not.
Hence, $X_{i-1} \prec X_p$.

We have shown $X_i \preceq X_{i-1} \prec X_p$.
Because $X_i \prec X_{i+1} \prec \dots \prec X_p$ and $\{ I_X: X\in\Xscr\}$  is laminar, there exists an index $j> i$ such that 
$X_{j-1} \preceq X_{i-1} \prec X_j$.
By \ref{item:consecutive_tangled}, the festoons $X_{j-1}$ and $X_j$ are tangled, which by \cref{cor:sandwiched_festoons} implies that $X_{i-1}$ and $X_j$ are tangled.
This contradicts \ref{item:consecutive_tangled}.
\end{proof}

For every vertex $v\in V\setminus \{r\}$, we will now carefully choose a minimal set $\Xscr_v \subseteq \Xscr$ of festoons connecting $v$ to a $v$-good vertex.
Such a set exists because $\Xscr$ is a partition of a WRAP solution and hence $\Xscr$ connects $v$ to a $v$-good vertex.
In order to prove the decomposition theorem (\cref{thm:ov_decomposition-theorem}), we will then show that one can partition the solution $S$ in such a way into $4\lceil\sfrac{1}{\epsilon}\rceil$-components that, for all but an $\epsilon$-fraction of the links $(u,v)\in \vec{F}$ (in terms of cost), the festoons in $\Xscr_v$ are subsets of the same component $K$, in which case $(u,v)$ belongs to $\Drop_{\vec{F}}(K)$.

By \cref{lem:properties_dependency_paths}, we can number the festoons in a minimal set $\Xscr_v \subseteq \Xscr$ of festoons connecting $v$ to a $v$-good vertex as $ X_1,\dots, X_p$ such that $X_1 \prec X_2 \prec \dots \prec X_p$.
Among all minimal subsets of $\Xscr$ that connect $v$ to a $v$-good vertex, we choose $\Xscr_v$ such that $X_p$ is minimal with respect to the order~$\prec$.
Then we can represent $\Xscr_v$ by the directed path with vertex set $\Xscr_v$ and arc set 
\[
P_v \coloneqq \Big\{ (X_{i}, X_{i-1}) : i \in \{2,3,\dots,p\} \Big\}.
\]
We call the union of these paths the \emph{dependency graph} (\cref{def:dependency_graph}).
Because we do not require all links in $\vec{F}$ to be contained in a set $\Drop_{\vec{F}}(K)$ for some component $K$, but only those in $\vec{F}\setminus \vec{R}$ for some cheap set $\vec{R}$, it will be useful to consider the $U$-dependency graph for a subset $U\subseteq V\setminus \{r\}$, which is the union of only the paths $P_v$ with $v\in U$.

\begin{definition}[$U$-dependency graph]\label{def:dependency_graph}
For $U\subseteq V\setminus \{r\}$, the \emph{$U$-dependency graph} is the directed graph with vertex set $\Xscr$ and arc set
\[
A \coloneqq\ \bigcupp_{u\in U} P_u
\]
being the disjoint union of the sets $P_u$ with $u\in U$.
\end{definition}

A similar concept has been used in the context of WTAP \cite{cohen_2013_approximation, traub_2021_better}, but there the vertices of the dependency graph were links rather than festoons.
Next, we show that the $U$-dependency graph is a branching for any $U\subseteq V\setminus\{r\}$, i.e., every connected component of the $U$-dependency graph is an arborescence.
For this we need the following simple yet crucial observation.

\begin{lemma}\label{lem:not_bad_in_both_directions}
For any distinct $v,w \in V\setminus \{r\}$, either $v$ is $w$-good or $w$ is $v$-good.
\end{lemma}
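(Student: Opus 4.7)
The plan is to derive a direct contradiction from the arborescence structure of $(V,\vec{F})$ guaranteed by \cref{thm:ov_StructureNonShortenable}. Unwinding \cref{def:v-good}, the statement ``$v$ is $w$-bad'' is exactly the statement that $v$ is a descendant of $w$ in the arborescence $(V,\vec{F})$, i.e., that there is a directed $w$-to-$v$ path in $(V,\vec{F})$. Symmetrically, ``$w$ is $v$-bad'' says there is a directed $v$-to-$w$ path in $(V,\vec{F})$. So I only need to argue that, for distinct $v,w\in V\setminus\{r\}$, at least one of these two paths fails to exist.

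Suppose, for the sake of contradiction, that both $v$ is $w$-bad and $w$ is $v$-bad. Concatenating the $w$-to-$v$ path with the $v$-to-$w$ path yields a closed directed walk in $(V,\vec{F})$ that visits the two distinct vertices $v$ and $w$, and in particular has positive length. Such a closed walk contains a directed cycle of positive length. However, by \cref{thm:ov_StructureNonShortenable}, $(V,\vec{F})$ is an $r$-arborescence and therefore acyclic, a contradiction. Hence at least one of the two descendancy relations must fail, which is precisely the claim that $v$ is $w$-good or $w$ is $v$-good.

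The ``hard part'' is really nothing more than recognizing that the seemingly asymmetric condition in the statement collapses to the antisymmetry of the descendant relation in a tree; no additional structural property of non-shortenable directed solutions beyond $(V,\vec{F})$ being an arborescence is needed. So the proof should just be a couple of lines invoking \cref{def:v-good} and \cref{thm:ov_StructureNonShortenable}.
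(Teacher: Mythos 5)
Your proof is correct and is essentially the paper's argument: both reduce the claim to antisymmetry of the descendant relation in the arborescence $(V,\vec{F})$ given by \cref{thm:ov_StructureNonShortenable}. The paper phrases it directly (if $v$ is a descendant of $w$ and $v\neq w$, then $w$ is a strict ancestor of $v$ and hence not a descendant of $v$), whereas you phrase it as a contradiction via a nontrivial closed walk, but the content is identical.
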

\begin{proof}
If $v$ is not $w$-good, then $v$ is a descendant of $w$ and thus $w$ is $v$-good.
\end{proof}

\begin{lemma}\label{lem:dependency_graph_is_branching}
For every $U\subseteq V\setminus\{r\}$, the $U$-dependency graph is a branching.
\end{lemma}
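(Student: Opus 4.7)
My plan is to establish that the $U$-dependency graph is a branching by verifying two properties: it contains no directed cycle, and every vertex has in-degree at most $1$. These two together are enough, since any directed graph with in-degree at most $1$ and no directed cycle also has no undirected cycle: an undirected cycle forces at least $|cycle|$ in-arcs among its vertices, hence exactly one per vertex, hence a directed cycle.

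Acyclicity will come essentially for free from the construction. Every arc in $A$ is of the form $(X^v_i, X^v_{i-1})$ for some $v\in U$ and some $i\ge 2$, where $X^v_1\prec X^v_2\prec\cdots\prec X^v_{p}$ is the chain associated with $v$ in \cref{lem:properties_dependency_paths}. Hence each arc $(Y,Z)\in A$ satisfies $Z\prec Y$, and since $\prec$ is a strict partial order (inherited from strict inclusion in the laminar family of festoon intervals provided by \cref{lem:festoon-intervals_laminar}), no directed cycle can exist.

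The heart of the argument will be the in-degree bound, and this is where I expect to use \cref{lem:not_bad_in_both_directions}. Suppose for contradiction that some festoon $Z$ has two incoming arcs $(Y_1,Z)\in P_v$ and $(Y_2,Z)\in P_w$. Within any single $P_u$, the head of an arc determines the arc, so we may assume $v\ne w$. The key observation is this: since $Z$ has an incoming arc in $P_v$, the festoon $Z$ is not the top element $X^v_p$ of the chain $\Xscr_v$, so \cref{lem:properties_dependency_paths}~\ref{item:last_contains_v_good} together with \cref{lem:v-good_in_interval_suffices} tells us that the festoon interval $I_Z$ contains no $v$-good vertex at all. Analogously, $I_Z$ contains no $w$-good vertex. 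On the other hand, \cref{lem:properties_dependency_paths}~\ref{item:first_contains_v} gives that $v$ is an endpoint of some link in $X^v_1$, hence $v\in I_{X^v_1}\subseteq I_Z$ (using $X^v_1\preceq Z$); symmetrically $w\in I_Z$. But then $v$ is $w$-bad and $w$ is $v$-bad, directly contradicting \cref{lem:not_bad_in_both_directions}.

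I expect the main conceptual step to be identifying the right structural fact to extract, namely that every non-top festoon in the chain $\Xscr_v$ has a festoon interval entirely free of $v$-good vertices, while the source vertex $v$ always sits inside the interval of every festoon in its chain. Once these are in hand the contradiction is a one-line dichotomy; no delicate case analysis and no use of the $\prec$-minimality choice of $X^v_p$ is needed for this particular lemma, which suggests the proof will be short.
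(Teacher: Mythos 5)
Your proposal is correct and follows essentially the same argument as the paper: both reduce to the in-degree bound via the $\prec$-ordering, show that the source vertex lies inside $I_Z$ (via item~\ref{item:first_contains_v} and $X^v_1 \preceq Z$), show that a non-top festoon interval avoids good vertices (via item~\ref{item:last_contains_v_good} and \cref{lem:v-good_in_interval_suffices}), and derive a contradiction through \cref{lem:not_bad_in_both_directions}. The only cosmetic difference is that the paper invokes \cref{lem:not_bad_in_both_directions} at the start (``WLOG $w$ is $v$-good'') and lands the contradiction on item~\ref{item:last_contains_v_good}, whereas you derive the symmetric statements ``$v$ is $w$-bad and $w$ is $v$-bad'' and land the contradiction on \cref{lem:not_bad_in_both_directions} itself.
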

\begin{proof}
Because $X \prec Y$ for every arc $(Y,X)\in A$, it suffices to show that no festoon $X\in \Xscr$ has more than one incoming arc in $A$.
Suppose for the sake of deriving a contradiction that there exists a festoon $X\in \Xscr$ with two distinct incoming arcs $(Y,X)\in A$ and $(Z,X) \in A$.
Let $v,w\in V$ such that $(Y,X)\in P_v$ and $(Z,X)\in P_w$.
Because $(\Xscr_v, P_v)$ is a path, we have $v\neq w$.

By \cref{lem:not_bad_in_both_directions}, we may assume without loss of generality that $w$ is $v$-good.
Then \cref{lem:properties_dependency_paths} implies $w\in I_X$.
Therefore, by \cref{lem:v-good_in_interval_suffices}, the festoon $X$ contains a link incident to a $v$-good vertex.
This contradicts \cref{lem:properties_dependency_paths}~\ref{item:last_contains_v_good} applied to $\Xscr_v$ because $X$ has an incoming arc in $P_v$.
\end{proof}

\subsection{Connecting Thinness to the Dependency Graph}\label{sec:thinness_dependency_graph}

Consider a set $U\subseteq V\setminus\{r\}$.
Then if we define a partition $\Kscr$ of $S$ by 
\[
 \textstyle \Kscr \coloneqq \Big\{ \bigcup_{X\in \Xscr'} X : (\Xscr',A')\text{ is a connected component of the $U$-dependency graph} \Big\},
\]
every link $(v,u)\in \vec{F}$ with $u\in U$ is contained in $\Drop_{\vec{F}}(K)$ for some component $K\in \Kscr$, because all festoons in $\Xscr_u$ are part of the same connected component of the $U$-dependency graph.
The goal of this section is to develop a sufficient criterion for when we can guarantee the components in $\Kscr$ to be $\alpha$-thin.

\begin{lemma}\label{lem:few_tangled_imply_thinness}
Let $\alpha\in \mathbb{Z}_{\geq 0}$ and $U \subseteq V\setminus \{r\}$. 
Let $(\Xscr',A')$ be a connected component of the $U$-dependency graph.
If every festoon $X\in \Xscr'$ is tangled with at most $\alpha$ festoons in 
\[
\left\{ Z \in \Xscr'\setminus\{X\}: X \preceq Z\right\},
\]
then the link set $\bigcup_{X\in \Xscr'} X$ is $4(\alpha +1)$-thin.
\end{lemma}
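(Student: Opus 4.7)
The plan is to construct a maximal laminar subfamily $\Lscr \subseteq \Cscr_G$ compatible with the festoon intervals of $\Xscr'$, and then to bound $|\delta_K(C)| \le 4(\alpha+1)$ uniformly for every $C \in \Lscr$, where $K = \bigcup_{X \in \Xscr'} X$.

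To construct $\Lscr$, I would start from the laminar family $\{I_X : X \in \Xscr'\}$ (laminar by \cref{lem:festoon-intervals_laminar}) and, for each festoon $X$ with $r \in I_X$, replace $I_X$ by its complement $V \setminus I_X \in \Cscr_G$. A short case analysis shows that the resulting family $\{\tilde I_X : X \in \Xscr'\} \subseteq \Cscr_G$ remains laminar: two festoon intervals both containing $r$ must be nested (their intersection contains $r$), and this nesting is preserved under complementation; in the mixed case, an interval $I_Y$ with $r \notin I_Y$ is either contained in $I_X$ (hence disjoint from $V \setminus I_X$) or disjoint from $I_X$ (hence contained in $V \setminus I_X$). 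I would then extend this family to a maximal laminar subfamily $\Lscr \subseteq \Cscr_G$ using \cref{lem:maximal_laminar}.

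The core step is to show, for an arbitrary $C \in \Lscr$, that at most $\alpha+1$ festoons $X \in \Xscr'$ satisfy $\delta_X(C) \ne \emptyset$; combined with \cref{lem:festoon_light}, this gives $|\delta_K(C)| \le 4(\alpha+1)$. Since all endpoints of links in $X$ lie in $I_X$, the condition $\delta_X(C) \ne \emptyset$ forces $C$ to split the endpoints of links in $X$ into two nonempty sides. Combined with the laminarity of $C$ and $\tilde I_X$ in $\Lscr$, this can only happen if $C$ sits strictly inside the arc $I_X$ (viewed on the cycle). By the laminarity of $\{I_X : X \in \Xscr'\}$, the festoons contributing to $\delta_K(C)$ are totally ordered by $\prec$, so there is a unique $\prec$-minimum $X^*$ among them.

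For any other such festoon $Y$ we then have $X^* \prec Y$, and $Y$ carries a link with an endpoint in $C$, which in particular lies in the arc $I_{X^*}$. By \cref{lem:characterize_tangled} applied to $X^* \preceq Y$, the festoons $X^*$ and $Y$ are tangled. The hypothesis of the lemma applied to $X^*$ bounds the number of such $Y$ by $\alpha$, so in total at most $\alpha+1$ festoons contribute to $\delta_K(C)$, yielding the desired $|\delta_K(C)| \le 4(\alpha+1)$ and establishing $4(\alpha+1)$-thinness of $K$. The main technical obstacle I expect is carrying out the geometric case analysis cleanly when $r \in I_X$ for some festoon $X$: the representative in $\Lscr$ is then $V \setminus I_X$ rather than $I_X$, so the statement "$C$ lies strictly inside the arc $I_X$" must be reformulated in terms of $\tilde I_X$, and one has to verify that the $\prec$-minimum $X^*$ still witnesses tangledness with every other contributing festoon through \cref{lem:characterize_tangled}, uniformly across the two subgroups of festoons with and without $r$ in their festoon interval.
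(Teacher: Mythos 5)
Your approach is essentially the paper's: extend the festoon intervals of $\Xscr'$ to a maximal laminar subfamily $\Lscr\subseteq\Cscr_G$, and for each $C\in\Lscr$ bound the number of contributing festoons by $\alpha+1$ by taking the $\prec$-minimum $X^*$ and showing every other contributor is tangled with it via \cref{lem:characterize_tangled}, then conclude with \cref{lem:festoon_light}. Your complementation step also correctly addresses a technicality the paper's exposition glosses over: the paper asserts that every $I_X\neq V$ already lies in $\Cscr_G$, but a festoon whose leftmost link is incident to $r$ has $r\in I_X$, and $I_X$ need not equal $V$ (for instance, if $S$ has two links incident to $r$, they lie in distinct festoons, and at most one of those festoon intervals can be all of $V$). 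Your laminarity argument for the complemented family $\{\tilde I_X\}$ is fine.

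However, the obstacle you flag at the end is genuine, and as written the argument does not close it. When $r\in I_X$ and $\delta_X(C)\neq\emptyset$, laminarity of $C$ with $\tilde I_X=V\setminus I_X$ leaves two possibilities: $C\cap\tilde I_X=\emptyset$, i.e.\ $C\subseteq I_X$ (your claimed case), \emph{or} $\tilde I_X\subseteq C$, i.e.\ $V\setminus C\subseteq I_X$ while $C\not\subseteq I_X$. In the latter case the statement ``$C$ sits strictly inside the arc $I_X$'' is simply false, and the tangledness step --- $Y$ has a link with an endpoint in $C$, which therefore lies in $I_{X^*}$ --- fails, since one cannot conclude $C\subseteq I_{X^*}$. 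The repair is short: in every case a contributing festoon $X$ satisfies $I_X\supseteq C$ or $I_X\supseteq V\setminus C$. This also suffices to show that contributing festoon intervals remain pairwise nested (a disjoint pair would force one interval to equal $C$, so that festoon could not cross $C$), so the $\prec$-minimum $X^*$ is still well-defined. Now for any contributing $Y\succeq X^*$, pick any link $\link{l}\in\delta_Y(C)$: it has one endpoint in $C$ and one in $V\setminus C$, so whichever of those two sets $I_{X^*}$ contains, $\link{l}$ has an endpoint in $I_{X^*}$, and \cref{lem:characterize_tangled} gives tangledness uniformly across both of the subgroups you were worried about. With this adjustment the rest of your argument and the $4(\alpha+1)$ bound go through.
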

\begin{proof}
By \cref{lem:festoon-intervals_laminar}, the family $\{ I_X : X\in \Xscr'\}$ of festoon intervals is laminar.
All intervals in this family except for the interval $V$ are 2-cuts contained in $\Cscr_G$.
We extend the family $\{ I_X : X\in \Xscr'\} \setminus \{V\}$ to a maximal laminar subfamily of $\Cscr_G$, which we denote by $\Lscr$.
Let $K \coloneqq \bigcup_{X\in \Xscr'} X$.
We will prove that, for every cut $C\in \Lscr$, we have $|K\cap \delta(C)| \le 4(\alpha +1)$, which implies that $K$ is $4(\alpha +1)$-thin.

Hence, we fix a cut $C\in \Lscr$. 
First, we observe that, for every festoon $X\in \Xscr'$ with $I_X \cap C = \emptyset$ or $I_X \subseteq C$, we have $\delta_X(C) = \emptyset$.
The construction of $\Lscr$ guarantees that $C$ does not cross any festoon interval $I_X$ with $X\in \Xscr'$.
Thus, every festoon $X\in \Xscr'$ with $\delta_X(C) \neq \emptyset$ fulfills $C\subseteq I_X$.
Let $Y \in \Xscr'$ be minimal with respect to the order $\prec$ among all festoons $X \in \Xscr'$ with $C\subseteq I_X$.
Now consider a festoon $Z\in \Xscr'\setminus\{X\}$ with $\delta_Z(C) \neq \emptyset$.
Then, because $C\subseteq I_Y \cap I_Z$, laminarity of the festoon intervals (\cref{lem:festoon-intervals_laminar}) implies $Z \preceq Y$ or $Y \preceq Z$ and hence the minimality of $Y$ with respect to $\prec$ yields $Y \preceq Z$.
Combining $\delta_Z(C) \neq \emptyset$ and $C\subseteq I_Y$, we conclude that the festoon $Z$ contains a link that is incident to a vertex in $I_Y$.
By \cref{lem:characterize_tangled}, this implies that $Z$ and $Y$ are tangled.

We have shown that every festoon $Z\in \Xscr'\setminus\{Y\}$ with $\delta_Z(C) \neq \emptyset$ is tangled with the festoon $Y$ and fulfills $Y\preceq Z$.
Thus, by assumption of the lemma, there are at most $\alpha$ such festoons $Z$ and hence there are at most $\alpha+1$ festoons $X\in \Xscr'$ with $\delta_X(C) \neq \emptyset$.
Moreover, by \cref{lem:festoon_light}, we have $|\delta_X(C)|\le 4$ for each such festoon $X$, implying $|K\cap \delta(C)| \le 4(\alpha +1)$ as desired.
\end{proof}

Consider a connected component $(\Xscr',A')$ of the dependency graph. 
In order to get a bound on the number of festoons $Z\in \Xscr'$ that are tangled with a festoon $Y\in \Xscr'$ with $Y \preceq Z$, we next show that any two festoons in $\Xscr'$ that are tangled must have an ancestry relation in the arborescence $(\Xscr', A')$.
The proof of this lemma is where we will use our particular choice of $\Xscr_v$, i.e., we will exploit that we did not choose $\Xscr_v$ as an arbitrary minimal set of festoons from $\Xscr$ that connect $v$ to a $v$-good vertex.

\begin{lemma}\label{lem:ancestry_relation}
Let $U\subseteq V\setminus \{r\}$.
If two festoons in the same connected component of the $U$-dependency graph $D=(\Xscr, A)$ are tangled, then they have an ancestry relation in $D$.
\end{lemma}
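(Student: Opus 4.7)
The plan is to take a counterexample $(X,Y)$—tangled, in the same component of $D$, no ancestry relation—that minimizes the distance between $X$ and $Y$ in the underlying undirected graph of $D$, and derive a contradiction. Since $X,Y$ are tangled, by laminarity of the festoon intervals (\cref{lem:festoon-intervals_laminar}) they are comparable, so WLOG $X \prec Y$. Let $Z$ be their least common ancestor in $D$; by assumption $Z \notin \{X,Y\}$, so $Z$ has distinct children $Z_X, Z_Y$ with $X$ in the $D$-subtree of $Z_X$ and $Y$ in that of $Z_Y$. Because $D$ is a branching (\cref{lem:dependency_graph_is_branching}) and each $P_v$ is a directed path, the arcs $(Z,Z_X)$ and $(Z,Z_Y)$ belong to distinct paths $P_u$ and $P_w$ with $u \ne w$. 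Laminarity applied to $I_X \subseteq I_{Z_X} \cap I_{Z_Y}$ (non-empty since $X \prec Y$) forces $I_{Z_X}$ and $I_{Z_Y}$ to be nested; I will treat $Z_X \prec Z_Y$, and handle the symmetric case $Z_Y \prec Z_X$ analogously.

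If $Z_X \preceq Y$, then $X \preceq Z_X \preceq Y$ and \cref{cor:sandwiched_festoons} upgrades tangledness of $X,Y$ to tangledness of $Z_X$ and $Y$. By \cref{lem:not_bad_in_both_directions}, WLOG $w$ is $u$-good, so $X_{w,1}$ contains a link incident to the $u$-good vertex $w$. Writing $Z = X_{u,i} = X_{w,j}$ (so $Z_X = X_{u,i-1}$ and $Z_Y = X_{w,j-1}$) and letting $Y = W_0, W_1,\dots, W_k = Z_Y$ be the $D$-path from $Y$ up to $Z_Y$, I form
\[
\Xscr_u' \;=\; \{X_{u,1},\dots, X_{u,i-1}=Z_X\} \,\cup\, \{W_0, W_1,\dots, W_k\} \,\cup\, \{X_{w,j-2},\dots, X_{w,1}\}.
\]
Consecutive festoons in this listing are pairwise tangled—inside $\Xscr_u$ and inside $\Xscr_w$ by \cref{lem:properties_dependency_paths}~\ref{item:consecutive_tangled}, along the $W_\ell$'s because each arc $(W_{\ell+1},W_\ell)$ lies in some $P_{v_\ell}$, and across $Z_X$ and $Y = W_0$ by the corollary step—so the link intersection graph of $\bigcup_{X\in \Xscr_u'} X$ connects $u$ (endpoint of a link in $X_{u,1}$) to the $u$-good vertex $w$ (endpoint of a link in $X_{w,1}$). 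Every festoon in $\Xscr_u'$ satisfies $\preceq Z_Y \prec Z \preceq X_{u,p_u}$, so any inclusion-minimal subset of $\Xscr_u'$ connecting $u$ to a $u$-good vertex has $\prec$-maximum strictly smaller than $X_{u,p_u}$, contradicting the $\prec$-minimality of the maximum of $\Xscr_u$. The dual subcase $u$ is $w$-good is symmetric, using a chain $\Xscr_w'$ and contradicting the minimality of $\Xscr_w$.

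If instead $Y \prec Z_X$, I walk up the $D$-path $Y = W_0, \dots, W_k = Z_Y$ and let $\ell^* \ge 1$ be minimal with $Z_X \preceq W_{\ell^*}$. Because the $D$-subtrees of $Z_X$ and $Z_Y$ are disjoint, $Z_X \ne W_{\ell^*}$, so $W_{\ell^*-1} \prec Z_X \prec W_{\ell^*}$. Since $W_{\ell^*-1}, W_{\ell^*}$ are consecutive on a $D$-path and hence tangled, \cref{cor:sandwiched_festoons} yields that $Z_X$ and $W_{\ell^*}$ are tangled. The pair $(Z_X, W_{\ell^*})$ is in the same component of $D$, has LCA equal to $Z$ (hence no ancestry), and a direct arc count shows its $D$-distance equals $k-\ell^*+2$, which is strictly less than $d_D(X,Y) = d_D(X,Z) + k + 1$ since $d_D(X,Z)\ge 1$ and $\ell^*\ge 1$. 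This contradicts the minimality of $(X,Y)$. The top-level symmetric case $Z_Y \prec Z_X$ (which automatically gives $X \prec Z_Y$) is handled analogously by walking up the $D$-path from $X$ to $Z_X$ and locating the corresponding transition index. The main obstacle is the first subcase: minimality of $(X,Y)$ alone does not finish it (it collapses when $X = Z_X$, since then the new tangled pair $(Z_X, Y) = (X, Y)$ has the same distance), so one must honestly build the replacement chain $\Xscr_u'$ and verify via the $D$-path from $Y$ to $Z_Y$ and \cref{lem:not_bad_in_both_directions} that it connects $u$ to a $u$-good vertex while remaining strictly below $X_{u,p_u}$ in the $\prec$-order.
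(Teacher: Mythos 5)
Your proof is correct but takes a genuinely different and longer route than the paper's. The paper avoids any distance-extremality argument: it takes $Z$ to be the \emph{root} of the connected component (so $Y \prec Z$ automatically since $Y \neq Z$ has an incoming arc), locates on the directed $Z$--$X$ path the unique arc $(X^{\mathrm{above}}, X^{\mathrm{below}})$ with $X^{\mathrm{below}} \preceq Y \prec X^{\mathrm{above}}$ (this transition exists by laminarity of festoon intervals), and derives a direct contradiction from just two vertices: the $v$ whose $P_v$ contains that arc and the $w$ whose $P_w$ contains the incoming arc of $Y$. Concretely, the $\prec$-minimality of $X_{v,p_v}$ (applied to the chain $\{X_{v,1},\dots,X^{\mathrm{below}},Y\}$) forces $I_Y$ to contain no $v$-good vertex, while \cref{lem:properties_dependency_paths}~\ref{item:last_contains_v_good} and \cref{lem:not_bad_in_both_directions} applied to $\Xscr_w$ force $I_Y$ to contain one. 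Your proof instead takes $Z$ to be the LCA and splits on $Z_X \preceq Y$ versus $Y \prec Z_X$: the first case builds a chain through $Z$ that directly contradicts $\prec$-minimality, but the second case (and the whole $Z_Y \preceq Z_X$ branch) only produces a strictly closer tangled non-ancestor pair, which is why the minimization over $d_D(X,Y)$ is needed at all. Both approaches work; the paper's is shorter and dispenses with the extremality machinery, whereas yours localizes the appeal to $\prec$-minimality of $\Xscr_v$ to a single sub-case. One small imprecision to fix: after invoking laminarity you write ``WLOG $Z_X \prec Z_Y$,'' but $\prec$ denotes \emph{strict} inclusion of festoon intervals and laminarity only yields one of $Z_X \preceq Z_Y$ or $Z_Y \preceq Z_X$; distinct festoons in $\Xscr$ may have equal festoon intervals. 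Your steps are robust to equality---they only use $Z_X \preceq Z_Y$ and $Z_Y \prec Z$---so the WLOG should be stated as $Z_X \preceq Z_Y$.
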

\begin{proof}
Let $X,Y \in \Xscr$ be two distinct tangled festoons that are part of the same connected component of the dependency graph $D$, which is is an arborescence by \cref{lem:dependency_graph_is_branching}.
Then $I_X \cap I_Y \ne \emptyset$ and hence we have $X \preceq Y$ or $Y \preceq X$, say without loss of generality $X \preceq Y$.
Let $Z\in \Xscr$ be the root of the connected component of $D$ containing $X$ and $Y$.
If $Z= Y$, then $Y$ is an ancestor of $X$ and we are done.
Thus, assume from now on that this is not the case, which implies $Y\prec Z$ because the dependency graph contains a $Z$-$Y$ path.

We consider the path $P$ in the dependency graph from the root $Z$ to the festoon $X$.
For every festoon $\overline{X}$ on the path $P$, we have $X \preceq \overline{X}$ by the construction of the dependency graph.
Because $I_X \subseteq I_Y \cap I_{\overline{X}}$, laminarity of the festoon intervals (\cref{lem:festoon-intervals_laminar}) implies $Y \preceq \overline{X}$ or $\overline{X} \preceq Y$ for every festoon $\overline{X}$ on the path $P$.
Hence, $X \preceq Y \prec Z$ implies that the $Z$-$X$ path $P$ contains an arc $(X^{\rm{above}}, X^{\rm{below}})$ such that $X \preceq X^{\rm{below}}\preceq Y \prec X^{\rm{above}}$.
See \cref{fig:proof_lemma_ancestry_relation}.
\begin{figure}[!ht]
\begin{center}
\begin{tikzpicture}[scale=1,
fs/.style={very thick,draw,rectangle,minimum size=4,inner sep=2pt, outer sep=1pt},
as/.style={very thick, -latex},
]

\def\num{15}
\def\hd{0.5}

\def\ih{0.25}
\begin{scope}[every path/.style={draw, fill, very thick}]
\newcommand\Interval[3]{
  \pgfmathsetmacro{\xa}{(#1)*\hd - \hd / 3}
  \pgfmathsetmacro{\xb}{(#2)*\hd + \hd / 3}
  \pgfmathsetmacro{\ya}{(#3) - \ih / 2 }
  \pgfmathsetmacro{\yb}{(#3)}
  \pgfmathsetmacro{\yc}{(#3) + \ih / 2 }
  
  \draw (\xa,\yb) -- (\xb,\yb);
  \draw (\xa,\ya) -- (\xa, \yc);
  \draw (\xb,\ya) -- (\xb, \yc);
}
\begin{scope}[darkred]
\Interval{5}{6}{0.5}
\node[right] () at (4.7,0.5) {$I_X$};
\end{scope}
\begin{scope}[black]
\Interval{4}{7}{1.5}
\node[right] () at (4.7, 1.5) {$I_{X^{\rm{below}}}$};
\end{scope}
\begin{scope}[blue]
\Interval{4}{7}{2}
\node[right] () at (4.7,2) {$I_Y$};
\end{scope}
\begin{scope}[black]
\Interval{3}{8}{3}
\node[right] () at (4.7, 3) {$I_{X^{\rm{above}}}$};
\end{scope}
\begin{scope}[orange!80!black]
\Interval{2}{9}{4}
\node[right] () at (4.7,4) {$I_Z$};
\end{scope}
\end{scope}

\begin{scope}[every node/.style={fs}, shift={(1,0)}]
\node[orange!80!black] (z) at (9.5,4) {};
\node[darkred] (x) at (11.5,0.5) {};
\node[black] (xb) at (10.77,1.5) {};
\node[black] (xa) at (10.17,3) {};
\node[blue] (y) at (8,2) {};
\end{scope}

\node[darkred, right] at (x)  {$X$};
\node[orange!70!black, above=2pt]  at (z)  {$Z$};
\node[black, right] at (xb) {$X^{\rm{below}}$};
\node[black, above right] at (xa) {$X^{\rm{above}}$};
\node[blue, left] at (y) {$Y$};

\begin{scope}[as]
\draw (xa) to node[right, pos=0.4] {$\in P_v$} (xb);
\draw (9.6,3) to node[right, pos=0.35] {$\in P_w$} (y);
\end{scope}

\draw[as, dotted, gray] (z) to[out=-90,in=110, looseness=1.8] (xa);
\draw[as, dotted, gray] (xb) to[out=-90,in=110, looseness=1.8] (x);

\end{tikzpicture}
 \end{center}
\caption{An illustration of the proof of \cref{lem:ancestry_relation}.
The left part of the figure shows the festoon intervals of the festoons $X \preceq X^{\rm{below}} \preceq Y \prec X^{\rm{above}} \preceq Z$.
The right part of the figure illustrates the connected component of the dependency graph that contains these festoons.
It contains an arc $(X^{\rm{above}}, X^{\rm{below}})$ on the $Z$-$X$ path and the festoon $Y$ has an incoming arc, which does not lie on the $Z$-$X$ path. \label{fig:proof_lemma_ancestry_relation}}
\end{figure}
Because $X$ and $Y$ are tangled, \cref{cor:sandwiched_festoons} implies that  $X^{\rm{below}}$ and $Y$ are tangled. 
Let $v\in U$ such that $(X^{\rm{above}}, X^{\rm{below}}) \in P_v$ and let $X_1 \prec X_2 \prec \dots \prec X_p$ be the festoons in $\Xscr_v$.

Next, we show that the festoon interval $I_Y$ contains no $v$-good vertex.
Let the index $i$ be such that $X_i = X^{\rm{below}}$; then $X_{i-1} = X^{\rm{above}}$.
Now suppose that $I_Y$ contains a $v$-good vertex.
Then by \cref{lem:v-good_in_interval_suffices}, $Y$ contains a link incident to a $v$-good vertex .
This implies that the set $\{X_1, \dots, X_i, Y\}$ of festoons connects $v$ to a $v$-good vertex because by \cref{lem:properties_dependency_paths}~\ref{item:first_contains_v} $X_1$ contains a link incident to $v$, by \cref{lem:properties_dependency_paths}~\ref{item:consecutive_tangled} $X_j$ and $X_{j+1}$ are tangled for $j\in \{1,\dots, i-1\}$, and also $X_i = X^{\rm{below}}$ and $Y$ are tangled.
Because $Y \prec X^{\rm{above}}$, this contradicts the choice of $\Xscr_v$.
This shows that indeed the festoon interval $I_Y$ contains no $v$-good vertex.

Because the festoon $Y$ is not the root of the connected component of the dependency graph $D$ that contains $X$ and $Y$, it has an incoming arc $a_Y \in \delta^-_A(Y)$.
Let $w\in U$ such that $a_Y \in P_w$.
Then by \cref{lem:properties_dependency_paths}, we have $w\in I_Y$ which implies that $w$ is not $v$-good.
Thus by \cref{lem:not_bad_in_both_directions}, the vertex $v$ is $w$-good.
Because $v\in I_{X^{\rm{below}}} \subseteq I_Y$, this shows that $I_Y$ contains a $w$-good vertex, which by \cref{lem:v-good_in_interval_suffices} implies that the festoon $Y$ contains a link incident to a $w$-good vertex.
By \cref{lem:properties_dependency_paths}~\ref{item:last_contains_v_good}, this contradicts the fact that $Y$ has an incoming arc in $P_w$.
\end{proof}

Using \cref{lem:ancestry_relation}, we can now give the following upper bound on the number of festoons $Y$ in a connected component $(\Xscr', A')$ of the dependency graph that are tangled with $X\in \Xscr'$ and fulfill $X\preceq Y$.

\begin{lemma}\label{lem:few_colors_imply_few_tangled_festoons}
Let $\alpha\in \mathbb{Z}_{\geq 0}$ and $U \subseteq V\setminus \{r\}$. 
Let $(\Xscr',A')$ be a connected component of the $U$-dependency graph.
If for every directed path with arc set $P\subseteq A'$ we have
\begin{equation*}
\left|\left\{ u\in U \colon P \cap P_u \ne \emptyset \right\}\right| \le \alpha,
\end{equation*}
then, for every festoon $X\in \Xscr'$, there are at most $\alpha$ festoons in $\{Y \in \Xscr'\setminus \{X\}: X \preceq Y\}$ that are tangled with $X$.
\end{lemma}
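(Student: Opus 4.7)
Let $Y_1,\ldots,Y_k$ be the festoons in $\{Y\in\Xscr'\setminus\{X\}:X\preceq Y\}$ that are tangled with $X$. I want to show $k\le\alpha$. First I will invoke \cref{lem:ancestry_relation} to conclude that each $Y_i$ has an ancestry relation with $X$ in the branching $(\Xscr',A')$; since arcs in the dependency graph go from $\prec$-larger to $\prec$-smaller festoons and $X\preceq Y_i$, the festoon $Y_i$ cannot be a descendant of $X$, so it must be a strict ancestor. Thus I can number them so that $Y_1,\ldots,Y_k$ lie on the root-to-$X$ path in $(\Xscr',A')$ in that order, with $Y_i$ an ancestor of $Y_j$ whenever $i<j$. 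Let $P$ denote the directed path from $Y_1$ to $X$; for each $i$ let $a_i\in P$ be the (existing, since $Y_i\ne X$) outgoing arc of $Y_i$ on $P$, and let $u_i\in U$ be such that $a_i\in P_{u_i}$.

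The crux of the argument — and the main obstacle — is to show that $u_1,\ldots,u_k$ are pairwise distinct: once this is established, the hypothesis of the lemma gives $k\le|\{u\in U:P\cap P_u\ne\emptyset\}|\le\alpha$. I will argue this by contradiction: assume $u_i=u_j=:u$ for some $i<j$, and write $\Xscr_u=\{Z_1\prec\cdots\prec Z_p\}$ so that $P_u$ consists of the arcs $(Z_{l+1},Z_l)$. Since $a_i,a_j\in P_u$ we have $Y_i,Y_j\in\Xscr_u$, and because $Y_i$ is an ancestor of $Y_j$ in the branching, $Y_j\prec Y_i$. Now $X\preceq Y_j\preceq Y_i$ together with tangency of $X$ and $Y_i$ lets me invoke \cref{cor:sandwiched_festoons} to conclude $Y_j$ and $Y_i$ are tangled, and then \cref{lem:properties_dependency_paths}\ref{item:consecutive_tangled} forces $Y_i$ and $Y_j$ to be consecutive in $\Xscr_u$; in particular $a_i=(Y_i,Y_j)$, which means $Y_j$ is the child of $Y_i$ on $P$ and so $j=i+1$.

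To reach a contradiction I will then examine the head $Z$ of $a_{i+1}$, writing $a_{i+1}=(Y_{i+1},Z)\in P_u$. Then $Z\in\Xscr_u$ is the $\prec$-predecessor of $Y_{i+1}$, so in the chain $\Xscr_u$ we have $Z\prec Y_{i+1}\prec Y_i$ with $Y_{i+1}$ strictly between $Z$ and $Y_i$. On the other hand $Z$ lies on $P$ as a successor of $Y_{i+1}$, so either $Z=X$ or $Z$ is a strict ancestor of $X$, and in both cases $X\preceq Z$. Applying \cref{cor:sandwiched_festoons} once more to $X\preceq Z\preceq Y_i$ with $X,Y_i$ tangled yields that $Z$ and $Y_i$ are tangled — but by \cref{lem:properties_dependency_paths}\ref{item:consecutive_tangled} applied in $\Xscr_u$, the non-consecutivity of $Z$ and $Y_i$ in $\Xscr_u$ prevents them from being tangled, a contradiction. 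This completes the distinctness of the $u_i$ and hence the proof.
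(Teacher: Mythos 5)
Your proof is correct and follows essentially the same approach as the paper: \cref{lem:ancestry_relation} places every tangled ancestor of $X$ on the root-to-$X$ path, and then \cref{cor:sandwiched_festoons} combined with \cref{lem:properties_dependency_paths}~\ref{item:consecutive_tangled} rules out two such festoons having their outgoing arcs in the same $P_u$ (the paper organizes the count dually, fixing $u$ and showing at most one tangled festoon has an outgoing arc in $P_u\cap P$, but the bookkeeping is equivalent). One minor simplification worth noting: the intermediate step deriving $j=i+1$ is unnecessary, since once you assume $u_i=u_j=u$ the head $Z$ of $a_j$ already lies in $\Xscr_u$ with $X\preceq Z\prec Y_j\prec Y_i$, so a single application of the sandwich corollary to $X\preceq Z\preceq Y_i$ forces $Z,Y_i$ tangled yet non-consecutive in $\Xscr_u$ — the same contradiction you reach, without the detour through $a_{i+1}$.
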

\begin{proof}
Let $X\in \Xscr'$.
By \cref{lem:dependency_graph_is_branching}, the connected component $(\Xscr',A')$ is an arborescence.
Moreover, by \cref{lem:ancestry_relation}, every festoon $Y\in \Xscr'$ with $X \preceq Y$ that is tangled with $X$ must be an ancestor of $X$, i.e., it must be contained in the path $P$ from the root of the arborescence $(\Xscr',A')$ to $X$.
(Here we used that, for every descendant $Z$ of $X$ in the arborescence $(\Xscr',A')$ with $Z\neq X$, we have $X \succ Z$.)

Consider a vertex $u\in U$ such that $P_u \cap P \neq \emptyset$.
Let $X_1,\dots, X_q$ be the festoons that have an outgoing arc in  $P_u \cap P$, where without loss of generality $X_1 \prec X_2 \prec \dots \prec X_q$.
We will show that the festoons $X_2,\dots,X_{q}$ are not tangled with $X$, which implies the statement of the lemma.
See \cref{fig:proof_number_tangled_festoons}.
\begin{figure}[!ht]
\begin{center}
\begin{tikzpicture}[scale=1,
fs/.style={very thick,draw,rectangle,minimum size=5,inner sep=2pt, outer sep=1pt},
as/.style={very thick, -latex},
]

\def\num{10}
\def\hd{1}
\def\vd{0.5}

\begin{scope}
\foreach \i in {1,3,6,7,8} {
  \pgfmathsetmacro\x{\i*\hd}
  \pgfmathsetmacro\y{(\num -\i)*\vd}
  \node[fs] (v\i) at (\x,\y) {};
}
\foreach \i in {2,4,5,9,10} {
  \pgfmathsetmacro\x{\i*\hd}
  \pgfmathsetmacro\y{(\num -\i)*\vd}
  \node[fs, fill=black] (v\i) at (\x,\y) {};
}
\node[fs] (v11) at (2.5, 2.5) {};
\node[fs] (v12) at (2, 1.5) {};
\node[fs] (v13) at (1.5, 0.5) {};
\node[fs] (v14) at (7,2.25) {};
\end{scope}

\node [right] () at (v10) {$X$};

\begin{scope}[as]
\begin{scope}[darkred]
\draw (v1) to (v2);
\draw (v2) to (v3);
\draw (v3) to (v11);
\end{scope}
\begin{scope}[orange]
\draw (v11) to (v12);
\draw (v12) to (v13);
\end{scope}
\begin{scope}[blue]
\draw (v3) to (v4);
\draw (v4) to (v5);
\end{scope}
\begin{scope}[green!80!black]
\draw (v5) to (v6);
\draw (v6) to (v14);
\end{scope}
\begin{scope}[magenta]
\draw (v6) to (v7);
\draw (v7) to (v8);
\draw (v8) to (v9);
\draw (v9) to (v10);
\end{scope}
\end{scope}

\end{tikzpicture}
 \end{center}
\caption{An illustration of the proof of \cref{lem:few_colors_imply_few_tangled_festoons}.
The figure shows a connected component $(\Xscr'A')$ of the dependency graph, where the sets $P_u$ for different vertices $u\in U$ are shown in different colors.
Only the festoons shown as filled squares might be tangled with the festoon $X$ (the festoon at the bottom right). \label{fig:proof_number_tangled_festoons}}
\end{figure}

Let $(X_1 , X_{0})$ be the outgoing arc of $X_1$ in $P_u \cap P$.
Then $(\{X_0, \dots, X_{q}\}, P_u \cap P)$ is a subpath of the path $(\Xscr_u, P_u)$.
Suppose that for some $i\in\{2,\dots, q\}$, the festoons $X_i$ and $X$ are tangled. 
Then, we have $X \preceq X_0 \prec X_i$, which by \cref{cor:sandwiched_festoons} implies that $X_i$ and $X_0$ are tangled, contradicting \cref{lem:properties_dependency_paths}~\ref{item:consecutive_tangled}.
\end{proof}

Combining \cref{lem:few_tangled_imply_thinness} and \cref{lem:few_colors_imply_few_tangled_festoons}, we obtain the main result of this section.

\begin{lemma}\label{lem:few_colors_imply_thinness}
Let $\alpha\in \mathbb{Z}_{\geq 0}$ and $U \subseteq V\setminus \{r\}$. 
Let $(\Xscr',A')$ be a connected component of the $U$-dependency graph.
If for every directed path with arc set $P\subseteq A'$ we have
\begin{equation*}
\left|\left\{ u\in U \colon P \cap P_u \ne \emptyset \right\}\right| \le \alpha,
\end{equation*}
then the link set $\bigcup_{X\in \Xscr'} X$ is $4(\alpha +1)$-thin.
\end{lemma}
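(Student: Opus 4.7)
The plan is to obtain this statement as an immediate corollary by chaining the two previous lemmas. The hypothesis is phrased in terms of directed paths in the $U$-dependency graph and the ``colors'' $u\in U$ whose path $P_u$ is met; the conclusion is a thinness bound on the union of links in $\Xscr'$. \cref{lem:few_colors_imply_few_tangled_festoons} converts the former into a purely combinatorial bound on tangled ancestors of each festoon, which is exactly the hypothesis required by \cref{lem:few_tangled_imply_thinness}.

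First I would invoke \cref{lem:few_colors_imply_few_tangled_festoons} verbatim on $(\Xscr',A')$ with the same constant $\alpha$ and the same set $U$. The path-color hypothesis is given to us, so the lemma yields: for every festoon $X\in \Xscr'$, the number of festoons in $\{Y\in \Xscr'\setminus\{X\}: X\preceq Y\}$ that are tangled with $X$ is at most $\alpha$.

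Next, with this bound in hand, I would apply \cref{lem:few_tangled_imply_thinness} to $(\Xscr',A')$ with the same $\alpha$. Its hypothesis is precisely what we just obtained, so its conclusion gives that $\bigcup_{X\in\Xscr'} X$ is $4(\alpha+1)$-thin, which is what we wanted to prove. Since both cited lemmas are already established, no additional work is required and there is no technical obstacle to overcome here; the statement is essentially a packaged corollary that will be convenient in the next subsection when choosing the set $\vec{R}$ and bounding its cost.
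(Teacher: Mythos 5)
Your proposal is correct and matches the paper's own proof exactly: the paper likewise invokes \cref{lem:few_colors_imply_few_tangled_festoons} to bound the number of tangled ancestors of each festoon, and then applies \cref{lem:few_tangled_imply_thinness} to conclude $4(\alpha+1)$-thinness of $\bigcup_{X\in \Xscr'} X$.
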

\begin{proof}
By \cref{lem:few_colors_imply_few_tangled_festoons}, for every $X\in \Xscr'$ there are at most $\alpha$ festoons in $\{Y\in \Xscr'\setminus \{X\}: X \preceq Y\}$ that are tangled with $X$.
By \cref{lem:few_tangled_imply_thinness}, this implies that  $\bigcup_{X\in \Xscr'} X$ is $4(\alpha +1)$-thin.
\end{proof}

\subsection{Completing the Proof of the Decomposition Theorem}\label{sec:choosing_R}

Finally, we use \cref{lem:few_colors_imply_thinness} to prove the decomposition theorem (\cref{thm:ov_decomposition-theorem}), which we restate here for convenience.
Having established \cref{lem:few_colors_imply_thinness}, the remaining part of the proof is identical to the proof of the decomposition theorem for WTAP in \cite{traub_2021_better}.

\ovDecompositionTheorem*
\begin{proof}
We let $D\coloneqq (\Xscr, A)$ be the $(V\setminus\{r\})$-dependency graph, which is a branching by \cref{lem:dependency_graph_is_branching}.
Then we define a labeling $\lambda\colon A \to \mathbb{Z}_{>0}$ as follows.
For any arc $a=(X,Y)\in A$ for which $\delta^-_A(X) = \emptyset$, we set $\lambda(a)\coloneqq 1$.
For any other arc $a=(X,Y)$ there is a unique incoming arc $\overline{a}$ of $X$ in $A$ and we set 
\[
\lambda(a) \coloneqq 
   \begin{cases}
      \lambda(\overline{a})   & \text{ if there is a vertex }v\in V\setminus\{r\}\text{ s.t. } a,\overline{a}\in P_v, \\
      \lambda(\overline{a})+1 & \text{ otherwise}.
   \end{cases}
\]
Note that, for every vertex $v\in V\setminus \{r\}$, all arcs in $P_v$ have been assigned the same label.
Therefore, the following partition $\vec{R}_0, \vec{R}_1,  \dots , \vec{R}_{q - 1}$ of $\vec{F}$ is well defined, where $q\coloneqq \lceil \sfrac{1}{\epsilon} \rceil$.
For $i\in \{0, 1,  \dots , q - 1\}$, we define
\[
\vec{R}_i \coloneqq \Big\{ (u,v)\in \vec{F} \ : \ \lambda(a) \equiv i \pmod{q} \text{ for } a\in P_v \Big\}.
\]
Let $\vec{R} \in \{\vec{R}_0, \vec{R}_1,  \dots , \vec{R}_{q - 1}\}$ be such that $c(\vec{R})$ is minimum. 
Then $c(\vec{R}) \le \frac{1}{q} \cdot c(\vec{F}) \le \epsilon \cdot c(\vec{F})$.

Let 
\[
U \coloneqq \Big\{ v\in V: \delta^-_{\vec{F}\setminus\vec{R}}(v) \neq \emptyset \Big\}.
\]
Then the $U$-dependency graph arises from $D$ by deleting the arcs in $\bigcup_{(v,w)\in \vec{R}} P_w$.
Thus, for every path in the dependency graph $D$, we deleted the arcs of every $q$th label and hence along every remaining path in the $U$-dependency graph, there occur at most $q-1$ distinct labels.
By the construction of the labeling $\lambda$, this means that, for every directed path with arc set $P$ in the $U$-dependency graph, there are at most $q-1$ vertices $u\in U$ with $P\cap P_u \neq \emptyset$.
Hence, \cref{lem:few_colors_imply_thinness} implies that, for every connected component $(\Xscr', A')$ of the $U$-dependency graph, the component $K = \bigcup_{X\in \Xscr'} X$ is $4q$-thin. 
Because $\Xscr$ is a partition of $S$, also the components $K$ corresponding to the different connected components  $(\Xscr', A')$ of the $U$-dependency graph form a partition $\mathcal{K}$ of $S$.

Finally, we show that, for every link $\vec{\link{l}}\in \vec{F} \setminus \vec{R}$, there exists a component $(\Xscr', A')$ of the $U$-dependency graph such that 
$\vec{\link{l}} \in \Drop_{\vec{F}}(K)$ for $K = \bigcup_{X\in \Xscr'} X$.
Let $\vec{\link{l}} = (u,v)$ and consider the path $(\Xscr_v, P_v)$.
This path is contained in one of the connected components $(\Xscr', A')$ of the $U$-dependency graph.
Because $\Xscr_v \subseteq \Xscr'$ connects $v$ to a $v$-good vertex, also $K = \bigcup_{X\in \Xscr'} X$ connects $v$ to a $v$-good vertex.
Thus, by \cref{lem:characterize_drop}, we have $\vec{\link{l}} \in \Drop_{\vec{F}}(K)$.
\end{proof}
 \section{Finding Components via Dynamic Programming}\label{sec:dp}
In this section we show \cref{thm:ov_find-component-for-relative-greedy}, which is repeated below for convenience.
Throughout the section, $\alpha\in \mathbb{Z}_{\geq 1}$ is a fixed constant.
We recall that, for a given rooted WRAP instance $(G,L,c,r,e_r)$, we denote by $\mathfrak{K}$ the family of all $\alpha$-thin link sets.
Moreover, such link sets $K\in \mathfrak{K}$ are also called \emph{components}.
\ovFindCompForRelGreedy*

Recall that we interpret the ratio $\sfrac{c(K)}{c(\Drop_{\vec{F}_0}(K)\cap \vec{F})}$ as $1$ if both $c(K)=0$ and $c(\Drop_{\vec{F}_0}(K)\cap \vec{F})=0$.
This reflects the intuition that the ratio captures how much we pay through added links versus how much we gain through links of $\vec{F}$ we can remove.
If both the added links and the ones to be removed have zero cost, we neither gain nor lose anything by adding the component $K$.
If only $c(\Drop_{\vec{F}_0}(K)\cap \vec{F})=0$ but $c(K) > 0$, then we use the usual convention that $\sfrac{c(K)}{c(\Drop_{\vec{F}_0}(K)\cap \vec{F})} = \infty$.

We denote the optimal/minimal ratio by
\begin{equation}\label{eq:rhoStar}
\rho^* \coloneqq \min \left\{ \frac{c(K)}{c(\Drop_{\vec{F}_0}(K) \cap \vec{F})}\colon K \subseteq L \text{ is $\alpha$-thin}\right\}.
\end{equation}
Note that we always have $\rho^*\leq 1$ because one can choose $K=\emptyset$ in the above minimization problem.
We compute a minimizer $K\in \mathfrak{K}$ of \eqref{eq:rhoStar} through a procedure that allows for deciding whether some value $\rho \in \mathbb{R}$ is larger or smaller than $\rho^*$.
This can be achieved by maximizing another objective that is parameterized by $\rho$, which we call \emph{slack} and define in \cref{sec:slackMax}.
We then apply binary search with respect to $\rho$ to obtain the optimal ratio $\rho^*$ together with a desired minimizer.

An analogous approach to reduce the problem of finding a component minimizing a ratio as in~\eqref{eq:rhoStar} to maximizing an auxiliary slack function was already used in~\cite{traub_2021_better} in the context of WTAP.
For WCAP we can use a largely identical reduction.
For completeness, we repeat it in \cref{sec:slackMax}, adapted to our notation.
The way the method we present in this section differs compared to prior work is in how we set up a dynamic program to maximize the auxiliary slack function.
We expand on this in \cref{sec:maxSlackThroughDP}.

\subsection{Reducing to Slack Maximization}\label{sec:slackMax}

Whether some value $\rho\in \mathbb{R}$ is larger or smaller than $\rho^*$ reduces to maximizing the following slack function $\slack_\rho (K)$ over all $K \in \mathfrak{K}$.
\begin{definition}[slack function]
Let $\rho \in \mathbb{R}$ and let $(G,L,c,r,e_r)$ be a rooted WRAP instance with non-shortenable directed solution $\vec{F}_0\subseteq \shadows(L)$ and $\vec{F}\subseteq \vec{F}_0$. For any $K \subseteq L$, we define the \emph{slack} of $K$ (with respect to the WRAP instance, $\vec{F}_0$, $\vec{F}$, and $\rho$) by
\begin{equation*}
\slack_{\rho}(K) \coloneqq \rho \cdot c\bigl(\Drop_{\vec{F}_0}(K) \cap \vec{F}\bigr) - c(K).
\end{equation*}
\end{definition}

The following statement formalizes that comparing a given $\rho$ to $\rho^*$ reduces to maximizing $\slack_{\rho}(K)$ over all $\alpha$-thin components $K\in \mathfrak{K}$.
It implies that $\rho^*$ is the largest $\rho$ for which the maximum value of $\slack_\rho(K)$ among $K\in \mathfrak{K}$ is zero.
\begin{lemma}\label{lem:basicSlackRatioRel}
Let $\rho \in [0,1]$. Then
\begin{equation*}
\max\{\slack_{\rho}(K) \colon K\in \mathfrak{K}\} >0 \iff \rho > \rho^*.
\end{equation*}
\end{lemma}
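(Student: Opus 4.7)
The plan is to prove both implications by directly unpacking the definitions of $\slack_{\rho}$ and $\rho^*$ and using that for any $K\subseteq L$ with $c(\Drop_{\vec{F}_0}(K)\cap\vec{F})>0$, the condition $\slack_{\rho}(K)>0$ is equivalent to $\rho > c(K)/c(\Drop_{\vec{F}_0}(K)\cap\vec{F})$. The empty set $K=\emptyset\in\mathfrak{K}$ attains $\slack_{\rho}(\emptyset)=0$ and ratio $1$ (by the $\tfrac{0}{0}\coloneqq 1$ convention), so the maximum of $\slack_{\rho}$ over $\mathfrak{K}$ is always nonnegative and $\rho^*\leq 1$.

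For the direction ``$\Leftarrow$'', assume $\rho>\rho^*$, so in particular $\rho^*<1$. Let $K^*\in\mathfrak{K}$ be a minimizer of the ratio defining $\rho^*$. Since the ratio equals $\rho^*<1$, we cannot have $c(\Drop_{\vec{F}_0}(K^*)\cap\vec{F})=0$ (which would force the ratio to be $1$ or $\infty$); hence $c(\Drop_{\vec{F}_0}(K^*)\cap\vec{F})>0$ and $c(K^*)=\rho^*\cdot c(\Drop_{\vec{F}_0}(K^*)\cap\vec{F})$. Plugging in,
\[
\slack_{\rho}(K^*) \;=\; \rho\cdot c\bigl(\Drop_{\vec{F}_0}(K^*)\cap\vec{F}\bigr) - c(K^*) \;=\; (\rho-\rho^*)\cdot c\bigl(\Drop_{\vec{F}_0}(K^*)\cap\vec{F}\bigr) \;>\; 0,
\]
so $\max\{\slack_{\rho}(K) : K\in\mathfrak{K}\}>0$, as desired.

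For the direction ``$\Rightarrow$'', suppose $K\in\mathfrak{K}$ achieves $\slack_{\rho}(K)>0$. Then
\[
\rho\cdot c\bigl(\Drop_{\vec{F}_0}(K)\cap\vec{F}\bigr) \;>\; c(K)\;\geq\; 0,
\]
which together with $\rho\in[0,1]$ forces $c(\Drop_{\vec{F}_0}(K)\cap\vec{F})>0$. Dividing by this positive quantity yields $\rho > c(K)/c(\Drop_{\vec{F}_0}(K)\cap\vec{F}) \geq \rho^*$, completing the equivalence. The proof is entirely bookkeeping; the only subtlety is handling the $\tfrac{0}{0}$ and $\tfrac{x}{0}$ conventions correctly, which is cleanly resolved by noting that $\rho^*<1$ rules out these degenerate cases for an optimal $K^*$.
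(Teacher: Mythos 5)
Your proof is correct and takes essentially the same route as the paper's: both directions hinge on observing that $\slack_\rho(K)>0$ forces $c(\Drop_{\vec{F}_0}(K)\cap\vec{F})>0$, and that $\rho>\rho^*$ with $\rho\le 1$ forces $\rho^*<1$, hence a minimizer $K^*$ with $c(\Drop_{\vec{F}_0}(K^*)\cap\vec{F})>0$. Your handling of the $\tfrac{0}{0}$ and $\tfrac{x}{0}$ conventions matches the paper's argument, just with slightly reorganized algebra.
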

\begin{proof}
We start with the left-to-right implication.
Hence, let $\overline{K}\in \mathfrak{K}$ with $\slack_\rho(\overline{K})>0$.
This implies in particular $c(\Drop_{\vec{F_0}}(\overline{K})\cap \vec{F})>0$.
We now obtain $\rho > \rho^*$ from 
\begin{equation*}
0 < \frac{\slack_{\rho}(\overline{K})}{c(\Drop_{\vec{F}_0}(\overline{K})\cap \vec{F})} = \rho - \frac{c(\overline{K})}{c(\Drop_{\vec{F}_0}(\overline{K})\cap \vec{F})} \leq \rho - \rho^*,
\end{equation*}
where the equality follows from the definition of $\slack_{\rho}(K)$, and the final inequality from the definition of $\rho^*$.

Conversely, assume $\rho > \rho^*$ and let $K^*\in \argmin\{\sfrac{c(K)}{c(\Drop_{\vec{F}_0}(K)\cap \vec{F})} \colon K \in \mathfrak{K}\}$.
Hence,
\begin{equation}\label{eq:rhoStarRealizedByKStar}
\rho^* = \frac{c(K^*)}{c(\Drop_{\vec{F}_0}(K^*)\cap \vec{F})}.
\end{equation}
Because we assume $\rho > \rho^*$ and $\rho \leq 1$, we have $\rho^*<1$ and thus $c(\Drop_{\vec{F}_0}(K^*)\cap \vec{F})>0$.
(Having $\rho^* < 1$ excludes the case $\rho^*=1$ which could be achieved with a zero numerator and denominator in~\eqref{eq:rhoStarRealizedByKStar}.)
The desired statement now follows from
\begin{align*}
\max\{\slack_{\rho}(K) \colon K\in \mathfrak{K}\}
  &\geq \slack_{\rho}(K^*)\\
& = \rho \cdot c(\Drop_{\vec{F}_0}(K^*)\cap \vec{F}) - c(K^*)\\
&> \rho^* \cdot c(\Drop_{\vec{F}_0}(K^*)\cap \vec{F}) - c(K^*)\\
&= 0,
\end{align*}
where the strict inequality follows from $\rho > \rho^*$ and $c(\Drop_{\vec{F}_0}(K^*)\cap \vec{F})>0$, and the last equality follows from \eqref{eq:rhoStarRealizedByKStar}.
\end{proof}

Thus, to use binary search to determine $\rho^*$, it suffices to be able to solve, for a given $\rho\in [0,1]$, the slack maximization problem
\begin{equation}\label{eq:slackMaxProblem}
\max\{\slack_{\rho}(K) \colon K \in \mathfrak{K}\}.
\end{equation}
We show in \cref{sec:maxSlackThroughDP} how this slack maximization problem can be solved efficiently via a dynamic program.
Before expanding on this dynamic program, we summarize the result we get from it below and show that, indeed, it allows for determining an $\alpha$-thin component minimizing $\sfrac{c(K)}{c(\Drop_{\vec{F}_0}(K)\cap \vec{F})}$.
\begin{lemma}\label{lem:slack_maximization}
Given $\rho\in [0,1]$, a rooted WRAP instance $(G,L,c,r,e_r)$ with a non-shortenable directed solution $\vec{F}_0\subseteq \shadows(L)$, and $\vec{F}\subseteq \vec{F}_0$, we can compute in polynomial time a maximizer of $\max\{\slack_{\rho}(K) \colon K \in \mathfrak{K}\}$.
\end{lemma}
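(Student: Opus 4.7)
The plan is to realize \cref{lem:slack_maximization} through a dynamic program (DP) over the intervals $C\in \Cscr_G\cup\{\emptyset\}$ of $V\setminus\{r\}$, which jointly builds up an $\alpha$-thin component $K$ and a witnessing maximal laminar family $\Lscr$. The key structural insight enabling a compact DP state is \cref{lem:ov_dropOfConnectedS}: for each connected component $T$ of the link intersection graph $H[K]$, the contribution of $T$ to $c(\Drop_{\vec F_0}(K)\cap \vec F)$ depends on $T$ only through $V(T)$ and the single distinguished vertex $\lca(V(T))$ in the arborescence $(V,\vec F_0)$. Thus one can decompose
\[
 \slack_\rho(K) \ =\ \rho\!\sum_{v\in V(K)}\! c(\delta^-_{\vec F_0}(v)\cap \vec F) \ -\  \rho\sum_{T}\mathbbm{1}[\lca(V(T))\in V(T)]\cdot c(\delta^-_{\vec F_0}(\lca(V(T)))\cap \vec F)\ -\ c(K),
\]
where $T$ ranges over the connected components of $H[K]$.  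The first and third terms are per-vertex and per-link, so they distribute locally along a laminar family, while the second is a per-component correction that can be charged to the moment a component becomes finalized inside some $C$.

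For each interval $C$, processed in order of increasing size in parallel with the triangulation structure from \cref{sec:triangulations}, the DP records a \emph{boundary configuration} $(B,\pi,\lambda,b)$: a candidate set $B\subseteq L$ with $|B|\le \alpha$ modelling $\delta_K(C)$; a partition $\pi$ of $B$ into at most $\alpha$ groups, each representing a connected component of $H[K_C\cup B]$ that touches $B$ (where $K_C$ is the set of internal links of $K$ with both endpoints in $C$); and, for each group $T\in \pi$, a pair $(\lambda_T,b_T)\in V\times\{0,1\}$ recording the tentative $\lca(V_C(T))$ in $(V,\vec F_0)$ together with the indicator of $\lambda_T\in V_C(T)$, where $V_C(T)$ is the set of endpoints in $C$ of links in $T$. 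The DP value $\mathrm{val}(C,B,\pi,\lambda,b)$ is the maximum, over all choices of $K_C$ consistent with $(B,\pi,\lambda,b)$, of the accumulated slack: per-vertex gains $\rho\cdot c(\delta^-_{\vec F_0}(v)\cap \vec F)$ over endpoints $v\in V_C(K_C\cup B)$ already seen, minus costs of links internal to $C$, minus finalization penalties for groups already fully inside $C$. For fixed $\alpha$ this state space has size $|V|^{O(\alpha)}$ per interval and there are $O(|V|^2)$ intervals, yielding a polynomial overall running time. The recursion follows \cref{lem:maximal_laminar}: in the base case $C=\{v\}$ the partition is trivial, $\lambda_B=v$ and $b_B=\mathbbm{1}[B\neq\emptyset]$, and the accumulated slack is $\rho\cdot c(\delta^-_{\vec F_0}(v)\cap \vec F)\cdot\mathbbm{1}[B\neq\emptyset]$; for $|C|>1$ I enumerate all splits $C=C_1\cupp C_2$ and pairs of states on the two halves, observing that links in $B_1\cap B_2$ become internal (paying their cost), links in $B_1\triangle B_2$ form the new boundary $B$ (with $|B|\le\alpha$ enforcing the thinness certificate), and groups of $\pi_1,\pi_2$ sharing a link in $B_1\cap B_2$ collapse into one. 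For each merged group I update $(\lambda_T,b_T)$ using only the tree structure of $\vec F_0$: the new $\lambda_T$ is the $\lca$ of the input $\lambda$-vertices, and the new $b_T$ equals $1$ iff some input group has its $\lambda$-vertex coincide with the new $\lambda_T$ and its bit equal to $1$. Any group without a link in the new $B$ is finalized, contributing the penalty $\rho\cdot c(\delta^-_{\vec F_0}(\lambda_T)\cap \vec F)\cdot b_T$, which by \cref{lem:ov_dropOfConnectedS} is exactly the correction that turns accumulated per-vertex gains into $\rho\cdot c(\Drop_{\vec F_0}(T)\cap \vec F)$. At the top cut $C=V\setminus\{r\}$, each surviving partial component is completed by gluing on $r$ through its boundary links; the resulting final $\lca$ is $r$, whose incoming-arc penalty vanishes because $\delta^-_{\vec F_0}(r)=\emptyset$, so we simply pay $c(\link{l})$ for every $\link{l}\in B$ at the top and output the maximum DP value. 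Standard back-tracking then recovers a maximizing $K\in \mathfrak K$.

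The main technical hurdle I expect is verifying that the local update of $(\lambda_T,b_T)$ during merges correctly predicts the global quantity $c(\delta^-_{\vec F_0}(\lca(V(T)))\cap \vec F)\cdot \mathbbm{1}[\lca(V(T))\in V(T)]$ regardless of how the component is later extended, so that the finalization step faithfully realizes \cref{lem:ov_dropOfConnectedS}. This reduces to a short case analysis on the relative position of the two input $\lambda$-vertices in the arborescence $(V,\vec F_0)$---one is an ancestor of the other, or they share a strictly higher common ancestor---together with the observation that a strict ancestor of $\lca(V_C(T_i))$ cannot itself lie in $V_C(T_i)$. The fact from \cref{lem:furtherNonshortProps} that the descendants of any vertex of $\vec F_0$ form an interval then ensures that all $\lca$-manipulations required by the DP can be carried out directly on the fixed arborescence in constant time after linear preprocessing.
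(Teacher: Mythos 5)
Your proposal follows essentially the same dynamic-programming approach as the paper's proof (via \cref{thm:dp-main}): a bottom-up DP over a laminar family of cuts, whose state records the boundary links $B$, a partition of $B$ into connected components of the partial solution, and for each component the least-common-ancestor $\lambda_T$ and the indicator $b_T$ of whether $\lambda_T$ is an endpoint. Your decomposition of $\slack_\rho(K)$ and the $(\lambda_T,b_T)$ update rule match the paper's $(\phi,\psi)$ machinery. The one genuine departure in bookkeeping---accumulating per-vertex gains immediately and deferring the lca penalty to finalization, rather than the paper's ``droppable heads in $C$'' accounting with the $U_{Q_1,Q_2}$ correction term---is valid and arguably cleaner for the merge step, so I wouldn't flag it.

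There is, however, a gap in your merge rule. You collapse groups of $\pi_1$ and $\pi_2$ only when they \emph{share a link in $B_1\cap B_2$}. But two groups must also be merged when a link $\link{l}\in B_1\setminus B_2$ \emph{intersects} a distinct link $\link{f}\in B_2\setminus B_1$ (they cross, or share an endpoint outside $C_1\cup C_2$); this does occur, since $\link{l}$ has one endpoint in $C_1$ and one outside $C$, and $\link{f}$ has one endpoint in $C_2$ and one outside $C$, and these can interleave. The paper's combined partition (\cref{def:combPartition}) explicitly includes the rule ``$\link{l}$ intersects $\link{f}$'', and its proof of \cref{lem:combPartitionOk} uses exactly this base case when the connecting path in $H[S_1\cup S_2]$ has no interior links. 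If you omit it, a component can be finalized prematurely (once its shared boundary links become internal) even though it is still connected, via such a crossing, to a not-yet-internal boundary link in the other group; the resulting erroneous lca penalty makes the DP underestimate the slack of some valid $K$. This is a local fix---add the intersection-based edges before taking connected components, as the paper does---and once you do, the rest of your argument, including the observation that the $\lambda$ and $b$ updates can be computed purely from the arborescence structure, goes through.

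One further point worth tightening when you write this up: $V_C(T)$ should be the set of all endpoints in $C$ of the \emph{entire} connected component represented by $T$, not merely of the boundary links $T\subseteq B$; otherwise the tracked $\lambda_T$ would not be $\lca(V(S_i))$ as required by \cref{lem:ov_dropOfConnectedS} when the group is finalized. Your merge rule $\lambda_T = \lca$ of input $\lambda$-vertices only yields the correct global lca under the full-component interpretation.
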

We now show how \cref{lem:slack_maximization} implies \cref{thm:ov_find-component-for-relative-greedy}.
\begin{proof}[Proof of \cref{thm:ov_find-component-for-relative-greedy}]
Without loss of generality, we assume that the link costs $c$ are integral, i.e., $c\colon L \to \mathbb{Z}_{\geq 0}$, which can be achieved by scaling up the costs if necessary.
Moreover, we assume $\vec{F}\neq \emptyset$; for otherwise we can simply return $\overline{K}\coloneqq \emptyset$ as a maximizer of $\sfrac{c(K)}{c(\Drop_{\vec{F}_0}(K)\cap \vec{F})}$ among all $K\in \mathfrak{K}$.
Observe that $\vec{F}\neq \emptyset$ implies existence of trivial components $\overline{K}\in \mathfrak{K}$ with $\sfrac{c(\overline{K})}{c(\Drop_{\vec{F}_0}(K)\cap \vec{F})}\leq 1$ and $\Drop_{\vec{F}_0}(K)\cap \vec{F}\neq\emptyset$.
Indeed, let $\vec{\link{l}}\in \vec{F}$ and let $\link{l}\in L$ be the link with respect to which $\vec{\link{l}}$ is a shadow.
Hence, $c(\link{l}) = c(\vec{\link{l}})$, and therefore $\overline{K}=\{\link{l}\}$ is an $\alpha$-thin component with $\sfrac{c(\overline{K})}{c(\Drop_{\vec{F}_0}(\overline{K})\cap \vec{F})}\leq 1$ because $\vec{\link{l}}\in \Drop_{\vec{F}_0}(K)$, which also implies $\Drop_{\vec{F}_0}(K)\cap \vec{F}\neq \emptyset$.
Thus, we can furthermore assume $c(\vec{F}) > 0$; for otherwise, one can return $\overline{K}=\{\link{l}\}$.

To employ binary search, observe that, for each $\rho\in [0,1]$, \cref{lem:slack_maximization} allows for efficiently computing $\eta \coloneqq \max\{\slack_{\rho}(K)\colon K \in \mathfrak{K}\}$, and, by \cref{lem:basicSlackRatioRel}, we know that $\rho^* < \rho$ if $\eta >0$ and $\rho^*\geq \rho$ if $\eta \leq 0$.
Hence, we can use binary search to determine in polynomial time an interval $[a,b]\subseteq [0,1]$ with $\rho^*\in [a,b]$ and $b-a < \sfrac{1}{c(\vec{F})^2}$, together with an $\alpha$-thin component $\overline{K}\in \mathfrak{K}$ satisfying $\Drop_{\vec{F}_0}(\overline{K})\cap \vec{F}\neq\emptyset$ and 
\begin{equation}\label{eq:upperBoundComp}
\frac{c(\overline{K})}{c(\Drop_{\vec{F}_0}(\overline{K})\cap \vec{F})} \leq b.
\end{equation}
That we can obtain a component $\overline{K}\in \mathfrak{K}$ with $\Drop_{\vec{F}_0}(\overline{K})\cap \vec{F}\neq\emptyset$ and $\sfrac{c(\overline{K})}{c(\Drop_{\vec{F}_0}(\overline{K})\cap \vec{F})} \leq b$ follows by observing that either $b=1$, in which case we can set $\overline{K}=\{\link{l}\}$ as discussed above, or the upper bound $b$ was obtained through the computation of $\eta\coloneqq \max\{\slack_{b}(K)\colon K \in \mathfrak{K}\}$ and we had $\eta >0$.
In the latter case, we also computed a maximizer $\overline{K}$ of $\max\{\slack_{b}(K)\colon K \in \mathfrak{K}\}$.
Because $0 < \eta = \slack_{b}(\overline{K}) = b\cdot c(\Drop_{\vec{F}_0}(\overline{K})\cap \vec{F}) - c(\overline{K})$, we have that $\overline{K}$ is an $\alpha$-thin component fulfilling \eqref{eq:upperBoundComp}, as claimed.
(Note that $\eta > 0$ implies $c(\Drop_{\vec{F}_0}(\overline{K})\cap \vec{F})>0$, which also implies $\Drop_{\vec{F}_0}(\overline{K})\cap \vec{F} \neq\emptyset$.)

We claim that $\overline{K}$ minimizes $\sfrac{c(K)}{c(\Drop_{\vec{F}_0}(K)\cap \vec{F})}$ among all $\alpha$-thin components, as desired.
Suppose by the sake of deriving a contradiction that this is not the case, i.e., $\sfrac{c(\overline{K})}{c(\Drop_{\vec{F}_0}(\overline{K})\cap \vec{F})} > \rho^*$, and let $K^*\in \mathfrak{K}$ be such that $\sfrac{c(K^*)}{c(\Drop_{\vec{F}_0}(K^*)\cap \vec{F})} = \rho^*$.
The following contradiction now follows from integrality of $c$:
\begin{align*}
\frac{1}{c(\vec{F})^2}
  &> b-a \\
  &\geq \frac{c(\overline{K})}{c(\Drop_{\vec{F}_0}(\overline{K})\cap \vec{F})} - \frac{c(K^{*})}{c(\Drop_{\vec{F}_0}(K^*)\cap \vec{F})} \\
  &\geq \frac{1}{c(\Drop_{\vec{F}_0}(\overline{K})\cap \vec{F})\cdot c(\Drop_{\vec{F}_0}(K^{*})\cap \vec{F})} \\
  &\geq \frac{1}{c(\vec{F})^2},
\end{align*}
where the penultimate inequality follows from integrality of $c$.
Hence, we have $\sfrac{c(\overline{K})}{c(\Drop_{\vec{F}_0}(\overline{K})\cap \vec{F})}=\rho^*$ as desired.
\end{proof}

Instead of binary search, as used above, one can also employ \citeauthor{megiddo_1979_combinatorial}'s~\cite{megiddo_1979_combinatorial} parametric search technique to determine the value of $\rho^*$ in strongly polynomial time.
This works because the dynamic program we use to obtain a maximizer of $\max\{\slack_{\rho}(K)\colon K \in \mathfrak{K}\}$ as claimed in \cref{lem:slack_maximization} is a strongly polynomial time algorithm, where the value of $\rho$ appears linearly in each comparison that we perform during the algorithm.

\subsection{Maximizing Slack Through a Dynamic Program}\label{sec:maxSlackThroughDP}

We now prove \cref{lem:slack_maximization} by presenting a dynamic program to efficiently maximize the slack $\slack_{\rho}(K)$ among all $\alpha$-thin components $K\in \mathfrak{K}$.
In fact, we prove the following slightly stronger version of \cref{lem:slack_maximization}, which will also be useful for the local search procedure we discuss in \cref{sec:local-search}.

\begin{theorem}\label{thm:dp-main}
Given a rooted WRAP instance $(G,L,c,r,e_r)$  with a non-shortenable directed solution $\vec{F}_0 \subseteq L$ and a cost function $\tilde c \colon \vec{F}_0 \to \mathbb{R}_{\ge 0}$, we can compute in polynomial time a maximizer of
\begin{equation}\label{eq:thm-dp-objective}
   \max \Big\{ \tilde c(\Drop_{\vec{F}_0}(K))  -  c(K)  : K \in  \mathfrak{K} \Big\}.
\end{equation}
\end{theorem}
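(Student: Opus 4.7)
My plan is to prove \cref{thm:dp-main} by a bottom-up dynamic program whose states are indexed by pairs $(C, \sigma)$, where $C \in \mathcal{C}_G$ is a $2$-cut and $\sigma$ is a constant-size \emph{fingerprint} describing the relevant boundary information of a candidate $\alpha$-thin link set $K_C \subseteq L$ supported on $C$. The DP implicitly searches over all maximal laminar subfamilies $\mathcal{L} \supseteq \mathcal{C}_G$ that could witness $\alpha$-thinness of the optimal $K$: by \cref{lem:maximal_laminar}, every $C \in \mathcal{L}$ with $|C| > 1$ splits as $C = C_1 \cupp C_2$ with $C_1, C_2 \in \mathcal{L}$, so a standard laminar-style DP combines sub-solutions on $C_1$ and $C_2$ into sub-solutions on $C$, taking a maximum over all valid binary splits of $C$ at the end. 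For each $C$, the DP value $\mathrm{DP}[C, \sigma]$ records the maximum, over all link sets $K_C$ with boundary fingerprint $\sigma$ that respect $\alpha$-thinness on all $2$-cuts inside $C$, of the partial objective $\tilde{c}(\text{links of } \vec{F}_0 \text{ already fully collected}) - c(K_C)$.

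The fingerprint $\sigma$ must compactly encode everything needed to continue the computation outside $C$. Thinness on $C$ enforces $|\delta_{K_C}(C)| \le \alpha$, so $\sigma$ first records the at most $\alpha$ boundary links (there are $O(|L|^{\alpha})$ choices, polynomial for constant $\alpha$). Next, since only links incident to the boundary can affect what happens outside $C$, $\sigma$ records the partition of these boundary links into the connected components of the link intersection graph $H[K_C]$ that touch $\delta_{K_C}(C)$; by the $\alpha$-thinness bound, the number of such components is at most $\alpha$, so the number of partitions is a constant. Finally, and most importantly, $\sigma$ records, for each such component $T$, a single vertex: the vertex $\mathrm{lca}(V(T))$ in the arborescence $(V, \vec{F}_0)$, which by \cref{lem:ov_dropOfConnectedS} pinpoints the unique directed link whose head is incident to $T$ and which is not yet guaranteed to be in $\Drop_{\vec{F}_0}(T)$. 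All other incoming directed links at endpoints of $T$ are already dropped, and their $\tilde{c}$-cost can be safely charged into the DP value for $(C, \sigma)$.

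The transition step is the technical heart of the argument. Given children $C_1, C_2$ with fingerprints $\sigma_1, \sigma_2$, we consider every subset $B \subseteq L$ of links with one endpoint in $C_1$ and one in $C_2$ that we add at this level; together with the boundary links of $\sigma_1$ and $\sigma_2$ this determines a new multiset of boundary links for $C$. We enforce $|\delta_{K_C}(C)| \le \alpha$ (otherwise the transition is infeasible), determine the new connected-component partition by unioning components joined by links in $B$ or by a shared endpoint, and update the $\mathrm{lca}$ information using \cref{lem:furtherNonshortProps}\ref{item:lca_is_between} on the ancestors of the merged components' old $\mathrm{lca}$ vertices. Whenever a component $T$ \emph{closes} at $C$ (no longer touches $\delta_{K_C}(C)$), \cref{lem:ov_dropOfConnectedS} tells us exactly which directed links in $\vec{F}_0$ are now in $\Drop_{\vec{F}_0}(T)$: every incoming arc at a vertex of $V(T)$ except possibly the arc into $\mathrm{lca}(V(T))$, and even that arc drops if its head was an endpoint of another component we merged. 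We credit those $\tilde{c}$-values to $\mathrm{DP}[C, \sigma]$. The final answer is read off at the artificial root cut $V \setminus \{r\}$, where all components must have closed and any remaining $\mathrm{lca}$-incoming link is dropped as well because the root has no entering arc in $\vec{F}_0$.

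The main obstacle is verifying correctness of the fingerprint: that the DP value truly equals the best partial objective and that merging is sound. For this one must show, relying on \cref{lem:ov_dropOfConnectedS} and the tree-structure of $\vec{F}_0$ from \cref{thm:ov_StructureNonShortenable}, two facts. First, no droppability decision ever has to look at any feature of $K_C$ other than what is in $\sigma$; this is exactly the content of \cref{lem:ov_dropOfConnectedS}, which localizes the non-droppable link of a connected intersection component to a single $\mathrm{lca}$-incoming arc. Second, the $\mathrm{lca}$ bookkeeping is consistent under merges: if components $T_1, T_2$ merge into $T$, then $\mathrm{lca}(V(T))$ is the $\mathrm{lca}$ of $\mathrm{lca}(V(T_1))$ and $\mathrm{lca}(V(T_2))$ in $(V, \vec{F}_0)$, which can be computed from the fingerprints alone. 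Once these two structural points are in place, the running time bound is routine: the number of states is $|\mathcal{C}_G| \cdot |L|^{O(\alpha)} \cdot |V|^{O(\alpha)}$ and each transition is polynomial, giving a polynomial-time algorithm for any constant $\alpha$, which establishes \cref{thm:dp-main} and, by the reduction of \cref{sec:slackMax}, also \cref{thm:ov_find-component-for-relative-greedy}.
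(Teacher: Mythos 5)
Your overall architecture matches the paper's: a bottom-up DP over an unknown maximal laminar family certifying thinness, with table entries indexed by a $2$-cut $C$ together with a constant-size boundary fingerprint, combined by merging over binary splits of $C$ as guaranteed by \cref{lem:maximal_laminar}, and you correctly identify \cref{lem:ov_dropOfConnectedS} as the structural lever that localizes the accounting to one ``at-risk'' arc per connected component.

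However, your fingerprint is missing one bit of information per component that the paper encodes in the function $\psi$ of a pattern (\cref{def:realizingLinkSet}): whether $\lca(V(T))$ is itself an endpoint of some link in $T$, i.e.\ whether $\lca(V(T))\in V(T)$. This bit is needed exactly at merge time. By \cref{lem:ov_dropOfConnectedS}, $\Drop_{\vec{F}_0}(T) = \bigl(\bigcup_{v\in V(T)}\delta^-_{\vec{F}_0}(v)\bigr)\setminus\delta^-_{\vec{F}_0}(\lca(V(T)))$, so when a sub-component $T_i$ is absorbed into $T$ and the lca strictly moves up, the arc $\delta^-_{\vec{F}_0}(\lca(V(T_i)))$ becomes newly creditable precisely when $\lca(V(T_i))\ne\lca(V(T))$ \emph{and} $\lca(V(T_i))\in V(T_i)$. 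Two distinct components of $H[K_C]$ cannot share a vertex (any two links meeting at a common endpoint intersect and hence lie in one component), so $\lca(V(T_i))$ cannot be certified as belonging to $V(T)$ by inspecting a sibling component; the only way is via $T_i$ itself. You do note ``even that arc drops if its head was an endpoint of another component we merged,'' but this is the wrong direction: the case you actually need to detect — that $\lca(V(T_i))$ is the endpoint of a link of $T_i$ lying entirely inside $C_i$ — is invisible from the boundary links and the lca alone. This is exactly why the paper introduces $\psi$ and why the set $U_{Q_1,Q_2}$ in \cref{lem:valueOfMergedSolution}, which lists the newly credited heads in the merge formula~\eqref{eq:piOfMerger}, filters on $\psi_i(T)=1$. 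Adding this one bit per boundary component to your fingerprint (an extra factor of $2^{\alpha}$ in the state space) repairs the DP and brings your proposal into exact agreement with the paper's proof.
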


\cref{lem:slack_maximization} follows from \cref{thm:dp-main} by setting $\tilde c(\vec{\link{l}}) \coloneqq \rho \cdot c(\vec{\link{l}})$ for $\vec{\link{l}} \in \vec{F}\cap \vec{F}_0$ and $\tilde c(\vec{\link{l}}) \coloneqq 0$ for $\vec{\link{l}} \in \vec{F}_0 \setminus \vec{F}$. 
To prove \cref{thm:dp-main}, we fix, within this subsection, a rooted WRAP instance $(G,L,c,r,e_r)$ with a non-shortenable directed solution $\vec{F}_0 \subseteq L$ and a cost function $\tilde c \colon \vec{F}_0 \to \mathbb{R}_{\ge 0}$.

We start by providing some intuition before formally introducing our dynamic programming approach.
Recall that the notion of an $\alpha$-thin component $K$ (\cref{def:alpha-thin}) is defined through the existence of a maximal laminar family $\mathcal{L} \subseteq \mathcal{C}_G$ of $2$-cuts such that each of the $2$-cuts in $\mathcal{L}$ is crossed by at most $\alpha$ many links of $K$.
Our dynamic program (DP) aims at constructing an optimal solution $K^*$ bottom-up with respect to the unknown laminar family $\mathcal{L}$.
More precisely, for a $2$-cut $C\in \mathcal{C}_G$ (think of $C$ as being part of the laminar family $\mathcal{L}$) and a set $B\subseteq\delta_L(C)$ with $|B|\leq \alpha$, our DP aims at constructing ``best possible'' $\alpha$-thin link sets $S\subseteq L$ such that
\begin{enumerate}
\item each $\link{l}\in S$ has at least one endpoint in $C$, and
\item $\delta_S(C) = B$.
\end{enumerate}
In the propagation step of the DP, we then successively combine link sets $S_1,S_2\in L$, for pairs of $2$-cuts $C_1,C_2\in \mathcal{C}_G$ whose vertices are neighboring on the ring, to a link set for the $2$-cut $C_1\cup C_2$.\footnote{Formally, $C_1,C_2\in \mathcal{C}_G$ are \emph{neighboring} if $C_1\cap C_2=\emptyset$ and there is a pair of vertices $v_1\in C_1, v_2\in C_2$ connected by an edge.
Equivalently, $C_1\cap C_2=\emptyset$ and $C_1\cup C_2\in \mathcal{C}_G$.
}
When using a link set $S$ in such a propagation step to construct good solutions, we consider further parameters on top of just the crossing links $B=\delta_S(C)$.
More precisely, we will save how much in terms of objective value we gained already by $S$ (we define this formally in a moment), but also what kind of directed links in $\vec{F}_0$ with heads in $C$ could potentially be dropped later depending on how we extend the link set $S$.

Before discussing more formally the key parameters that we consider in the DP, it is helpful to better understand the structure of the droppable links $\Drop_{\vec{F}_0}(S)$ for a link set $S\subseteq L$.
To this end, we need the notion of \emph{least common ancestor} $\lca(U)$ of a vertex set $U\subseteq V$ (with respect to the arborescence $\vec{F}_0$), which, we recall, is the vertex $v\in V$ lowest in $\vec{F}_0$, i.e., farthest away from the root $r$ in $\vec{F}_0$, such that all vertices of $U$ are descendants of $v$.
See \cref{fig:least_common_ancestor}.

\begin{figure}[H]
\begin{center}
\begin{tikzpicture}[scale=1,
ns/.style={thick,draw=black,fill=white,circle,minimum size=6,inner sep=2pt},
es/.style={thick},
lks/.style={line width=1pt, blue, densely dashed},
dlks/.style={lks, -latex},
ts/.style={every node/.append style={font=\scriptsize}}
]

\def\num{13}
\def\rad{2.5}
\def\vlfac{1.14}

\pgfmathsetmacro\picsep{2*\rad + 1.7}

\pgfkeyssetvalue{/tikz/pics/interval/color}{green!50!black}
\pgfkeyssetvalue{/tikz/pics/interval/radius}{0.15*\rad}
\tikzset{
    pics/interval/.style 2 args={
        code={
               \def\cw{\pgfkeysvalueof{/tikz/pics/interval/radius}}
               \colorlet{col}{\pgfkeysvalueof{/tikz/pics/interval/color}}

               \draw[col, fill=col, fill opacity=0.2] (#1:\rad+\cw) arc (#1:#2:\rad+\cw)
               arc (#2:#2+180:\cw)
               arc (#2:#1:\rad-\cw)
               arc (#1+180:#1+360:\cw);
        }
    },
}
\tikzset{
    pics/intervalN/.style 2 args={
        code={
               \def\cw{\pgfkeysvalueof{/tikz/pics/interval/radius}}
               \colorlet{col}{\pgfkeysvalueof{/tikz/pics/interval/color}}

               \pgfmathanglebetweenpoints{\pgfpoint{0cm}{0cm}}{\pgfpointanchor{#1}{center}}
               \edef\angA{\pgfmathresult}
               \pgfmathanglebetweenpoints{\pgfpoint{0cm}{0cm}}{\pgfpointanchor{#2}{center}}
               \edef\angB{\pgfmathresult}

               \draw[col, fill=col, fill opacity=0.2] (\angA:\rad+\cw) arc (\angA:\angB:\rad+\cw)
               arc (\angB:\angB+180:\cw)
               arc (\angB:\angA:\rad-\cw)
               arc (\angA+180:\angA+360:\cw);
        }
    },
}

\coordinate (c) at (0,0);

\begin{scope}[every node/.style={ns}]
\foreach \i in {1,...,\num} {
  \pgfmathsetmacro\r{90+(\i-1)*360/\num}
  \node (\i) at (\r:\rad) {};
}
\end{scope}

\begin{scope}[es]
\foreach \i in {1,...,\num} {
\pgfmathtruncatemacro\j{1+mod(\i,\num)}
\draw (\i) -- (\j);
}
\end{scope}

\begin{scope}
\node at ($(c)!\vlfac!(1)$) {$r$};
\end{scope}

\begin{scope}[dlks]
\draw (1) to[bend left] (2);
\draw (2) to[bend left=10] (6);
\draw (6) to[bend right, in=-120, out=-10] (4);
\draw (4) to[bend right] (3);
\draw (4) to[bend left] (5);
\draw (6) to[bend left, in=160, out=10] (12);
\draw (12) to[bend left] (13);
\draw (12) to[bend right] (11);
\draw (11) to[bend right, out=-10] (8);
\draw (8) to[bend right] (7);
\draw (8) to[bend left] (9);
\draw (9) to[bend left] (10);
\end{scope}

\begin{scope}
\path (c) pic {intervalN={3}{4}};
\path (c) pic {intervalN={7}{9}};
\path (c) pic {intervalN={11}{12}};
\end{scope}

\begin{scope}[red!50!black]
\node[ns,fill=red!50!black] at (6) {};
\node at (6)[below left] {$\mathrm{lca}(U)$};
\end{scope}

\begin{scope}
\node[green!50!black] at (50:\rad+0.7) {$U$};
\end{scope}

\end{tikzpicture}
 \end{center}
\caption{Example of the least common ancestor $\lca(U)$ of a vertex set $U\subseteq V$.}\label{fig:least_common_ancestor}
\end{figure}

We now recall \cref{lem:ov_dropOfConnectedS}, which provides a clean description of which links of $\vec{F}_0$ are droppable with respect to a link set $S\subseteq L$ that is connected in the intersection graph.
More precisely, all directed links of $\vec{F}_0$ whose heads are endpoints of links in $S$ are droppable, except for the directed link entering the least common ancestor $\lca(V(S))$ of the endpoints $V(S)$ of $S$.
\AllExceptOneDroppable*
\begin{proof}
We recall that, by \cref{lem:characterize_drop} and the fact that $H[S]$ is connected, a directed link $(s,t)\in\vec{F}$ is part of $\Drop_{\vec{F}}(S)$ if and only if $t\in V(S)$ and $V(S)$ contains a vertex that is $t$-good.
Hence, in particular, we have $\Drop_{\vec{F}}(S) \subseteq \bigcup_{v\in V(S)} \delta^-_{\vec{F}}(v)$.

Consider a link $(s,t)\in \bigcup_{v\in V(S)} \delta^-_{\vec{F}}(v)$; hence, $t\in V(S)$.
We have that $(s,t)\not\in \Drop_{\vec{F}}(S)$ if and only if all vertices in $V(S)$ are $t$-bad, i.e., all vertices in $V(S)$ are descendants of $t$ in $\vec{F}$.
However, as $t\in V(S)$, this is equivalent to $t=\lca(V(S))$, as desired.
\end{proof}
That only the single link $\vec{\link{l}}$ in $\delta^-_{\vec{F}_0}(\lca(V(S)))$ is not droppable is a key fact that we exploit in our dynamic program, as it heavily limits the number of configuration for which we need to do bookkeeping.

Before expanding on the table entries for our DP, we briefly highlight a property of the least common ancestor $\lca(U)$ for $U\subseteq V$, which is simple and helpful for intuition, and we will use it later on.
More precisely, $\lca(U)$ depends only on the leftmost and rightmost vertex of $U$, and it always lies between them.
\begin{lemma}\label{lem:lca_from_leftmost_rightmost}
Let $(G=(V,E),L,c,r,e_r)$ be a rooted WRAP instance with non-shortenable directed solution $\vec{F}\subseteq \shadows(L)$, and let $U\subseteq V$. Then $\lca(U)$ is the least common ancestor of the leftmost vertex $s$ and rightmost vertex $t$ in $U$ and lies between $s$ and $t$.
\end{lemma}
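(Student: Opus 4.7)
The plan is to reduce the statement to \cref{lem:furtherNonshortProps}~\ref{item:descendants_interval} and \cref{lem:furtherNonshortProps}~\ref{item:lca_is_between}, which already handle the one-variable descendant structure and the two-vertex version of the least common ancestor claim, respectively. Let $s$ and $t$ denote the leftmost and rightmost vertices of $U$, and set $a \coloneqq \lca(U)$ and $a' \coloneqq \lca(\{s,t\})$. The goal will be to show $a = a'$, after which the ``lies between'' conclusion is immediate from \cref{lem:furtherNonshortProps}~\ref{item:lca_is_between} applied to $\{s,t\}$.

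The direction $a'$ is a descendant of $a$ is easy: since $s,t \in U$, the vertex $a$ is a common ancestor of $s$ and $t$, so by the minimality defining $a'$, the vertex $a'$ is a descendant of $a$ (possibly equal). For the reverse direction I would argue that $a'$ is a common ancestor of \emph{all} of $U$, not just of $\{s,t\}$. Here is where I invoke \cref{lem:furtherNonshortProps}~\ref{item:descendants_interval}: the set of descendants of $a'$ forms an interval $I$ in the cyclic order, and this interval contains $s$ and $t$. Since $U$ is sandwiched between its leftmost vertex $s$ and its rightmost vertex $t$, every vertex of $U$ lies in the sub-interval between $s$ and $t$, which is in turn contained in $I$. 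Hence every vertex of $U$ is a descendant of $a'$, so $a'$ is a common ancestor of $U$, and the minimality defining $a$ yields that $a$ is a descendant of $a'$. Combined with the first direction this gives $a = a'$.

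Finally, by \cref{lem:furtherNonshortProps}~\ref{item:lca_is_between}, $a' = \lca(\{s,t\})$ lies between $s$ and $t$, and so does $a$.

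There is essentially no obstacle here; the only subtle point is being careful about the statement that ``the interval between $s$ and $t$'' is the correct interval (\cref{def:leftRight} gives it unambiguously, as $s$ is the leftmost and $t$ the rightmost vertex of $U$, so $U$ is contained in the interval from $s$ to $t$ along the path $(V,E\setminus\{e_r\})$). One should make explicit that the descendant interval of $a'$, by \cref{lem:furtherNonshortProps}~\ref{item:descendants_interval}, is an interval in the path $(V,E\setminus\{e_r\})$ which contains $\{s,t\}$ and hence the whole interval from $s$ to $t$.
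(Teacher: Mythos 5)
Your proof is correct and follows essentially the same route as the paper: both arguments reduce the claim to \cref{lem:furtherNonshortProps}~\ref{item:descendants_interval} (the descendant set of a vertex is an interval) and \cref{lem:furtherNonshortProps}~\ref{item:lca_is_between}, and both deduce $\lca(U)=\lca(\{s,t\})$ from the fact that the interval of descendants of $\lca(\{s,t\})$ contains $s$ and $t$ and hence all of $U$. The paper is just more terse, collapsing your two directions into the remark that $s,t\in U$ together with the descendant-containment gives the equality.
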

\begin{proof}
The fact that $\lca(\{s,t\})$ lies between $s$ and $t$ is an immediate consequence of \cref{lem:furtherNonshortProps}~\ref{item:lca_is_between}.
Hence, it remains to show that $\lca(U)=\lca(\{s,t\})$.
This holds because the set of descendants of $\lca(\{s,t\})$ contains $s$ and $t$, and is an interval by \cref{lem:furtherNonshortProps}~\ref{item:descendants_interval}.
Hence, all vertices of $U$ are descendants of $\lca(\{s,t\})$, because $s$ and $t$ are the leftmost and rightmost vertex of $U$, respectively.
Together with the fact that $s,t\in U$, we have $\lca(\{s,t\})=\lca(U)$, as desired.
\end{proof}

We now formalize the precise parameters for which we create a DP table entry.
We recall that a DP table entry captures key properties of how an $\alpha$-thin link set $S\subseteq L$ interacts with a $2$-cut $C\in \mathcal{C}_G$ in a laminar family certifying $\alpha$-thinness of $S$ (as described in \cref{def:alpha-thin}).
More precisely, we are interested in describing the relevant properties of the links of $S$ with at least one endpoint in $C$, i.e., the link set
\begin{equation*}
S_C\coloneqq \{\{u,v\}\in S \colon \{u,v\}\cap C \neq \emptyset\}.
\end{equation*}
Moreover, when considering a $2$-cut $C\in \mathcal{C}_G$ with a link set $S_C$, we do not only want $S_C$ to be $\alpha$-thin, but that there is a maximal laminar family $\mathcal{L}$ with $C\in \mathcal{L}$ that certifies $\alpha$-thinness of $S_C$, because we want the DP to consider the set $C$ as being part of an inclusion-wise maximal laminar family certifying $\alpha$-thinness of $S$.
As the DP will consider at each step only a partial solution, we use the following notion of $(\alpha,C)$-thinness to captures $\alpha$-thinness on the vertices considered so far.
\begin{definition}[$(\alpha,C)$-thin link set]\label{def:alphaC-thin}
Let $(G=(V,E),L,c,r,e_r)$ be a rooted WRAP instance, $C\in \mathcal{C}_G$, and $\alpha\in \mathbb{Z}_{\geq 1}$.
A set $S\subseteq \bigcup_{v\in C}\delta_L(v)$ is called $(\alpha,C)$-thin if there is an inclusion-wise maximal laminar family $\mathcal{L}\subseteq \mathcal{C}_G$ over the ground set $C$ such that $|\delta_S(\overline{C})|\leq \alpha$ for all $\overline{C}\in \mathcal{L}$.
\end{definition}
Note that $(\alpha,V\setminus \{r\})$-thinness is the same as $\alpha$-thinness.

The parameters of $S_C$ that we use as dynamic programming table entries are described by a quintuple $(C,B,\mathcal{T},\phi,\psi)$, where
\begin{itemize}
\item $C\in \mathcal{C}_G$;
\item $B\subseteq \delta_L(C)$ with $|B|\leq \alpha$;
\item $\mathcal{T}\subseteq 2^B$ is a partition of $B$;
\item $\phi \colon \mathcal{T}\to C$; and
\item $\psi\colon \mathcal{T} \to \{0,1\}$,
\end{itemize}
with the following interpretation, highlighted in \cref{def:realizingLinkSet}, where we denote by $\mathcal{Q}$ the family of all quintuples of the above type, which we also call \emph{patterns}.
(Note that the number $|\mathcal{Q}|$ of patterns is polynomially bounded.)
\begin{definition}[Link set realizing $(C,B,\mathcal{T},\phi,\psi)$]\label{def:realizingLinkSet}
Let $(C,B,\mathcal{T},\phi,\psi)\in \mathcal{Q}$ and $S\subseteq L$ be $(\alpha,C)$-thin.
We denote by $S_1,\ldots, S_q\subseteq S$ the partition of $S$ according to the connected components of $H[S]$.
We say tat $S$ \emph{realizes} the pattern $(C,B,\mathcal{T},\phi,\psi)$ if all of the following conditions hold:
\begin{enumerate}
\item\label{item:realizerOneEndpointInC} All links in $S$ have at least one endpoint in $C$.
\item\label{item:realizerBOk} $B=\delta_S(C)$.
\item\label{item:realizerTOk} $\mathcal{T}$ is the family of all nonempty sets among $S_1\cap B, \ldots, S_q\cap B$.
\item\label{item:realizerPhiOk} For any set $S_i\cap B \in \mathcal{T}$, we have $\phi(S_i\cap B)=\lca(V(S_i))$.
\item\label{item:realizerPsiOk} For any set $S_i\cap B \in \mathcal{T}$, we have $\psi(S_i\cap B)=1$ if $\phi(S_i\cap B)\in V(S_i)$; otherwise $\psi(S_i\cap B)=0$.
\end{enumerate}
\end{definition}
See \cref{fig:patternRealization} for an example of a link set realizing a pattern $Q\in \mathcal{Q}$.
Note that only $(\alpha,C)$-thin link sets can realize a pattern in $(C,B,\mathcal{T},\phi,\psi)\in \mathcal{Q}$.
We call a pattern $Q\in \mathcal{Q}$ \emph{realizable} if there is a link set that realizes $Q$.
Hence, one can think of realizable patterns as feasible patterns.
Because the $2$-cut $C\in \mathcal{C}_G$ plays a crucial role in the way we consider patterns $Q=(C,B,\mathcal{T},\phi,\psi)$ in our dynamic program, we call a \emph{$C$-pattern}, for $C\in \mathcal{C}_G$, a pattern in $\mathcal{Q}$ whose first entry in its quintuple is $C$.

\begin{figure}[!ht]
\begin{center}
\begin{tikzpicture}[scale=1,
ns/.style={thick,draw=black,fill=white,circle,minimum size=6,inner sep=2pt},
es/.style={thick},
lks/.style={line width=1pt, blue, densely dashed},
dlks/.style={lks, -latex},
ts/.style={every node/.append style={font=\scriptsize}}
]

\def\num{50}
\def\rad{5.8}
\def\vlfac{1.17}

\colorlet{s1c}{darkgreen!100!white}
\colorlet{s2c}{red!70!white}
\colorlet{s3c}{cyan!80!black}
\colorlet{s4c}{green!80!black!80!blue}
\colorlet{s5c}{orange}

\colorlet{Ccol}{violet!60!white}

\pgfmathsetmacro\picsep{2*\rad + 1.7}

\pgfkeyssetvalue{/tikz/pics/interval/color}{green!50!black}
\pgfkeyssetvalue{/tikz/pics/interval/radius}{0.15*\rad}
\tikzset{
    pics/interval/.style 2 args={
        code={
               \def\cw{\pgfkeysvalueof{/tikz/pics/interval/radius}}
               \colorlet{col}{\pgfkeysvalueof{/tikz/pics/interval/color}}

               \draw[col, fill=col, fill opacity=0.2] (#1:\rad+\cw) arc (#1:#2:\rad+\cw)
               arc (#2:#2+180:\cw)
               arc (#2:#1:\rad-\cw)
               arc (#1+180:#1+360:\cw);
        }
    },
}
\tikzset{
    pics/intervalN/.style 2 args={
        code={
               \def\cw{\pgfkeysvalueof{/tikz/pics/interval/radius}}
               \colorlet{col}{\pgfkeysvalueof{/tikz/pics/interval/color}}

               \pgfmathanglebetweenpoints{\pgfpoint{0cm}{0cm}}{\pgfpointanchor{#1}{center}}
               \edef\angA{\pgfmathresult}
               \pgfmathanglebetweenpoints{\pgfpoint{0cm}{0cm}}{\pgfpointanchor{#2}{center}}
               \edef\angB{\pgfmathresult}

               \tikzmath{
                   if \angB < \angA then {
                       \angB = \angB + 360;
                   };
               }

               \draw[col, fill=col, fill opacity=0.2] (\angA:\rad+\cw) arc (\angA:\angB:\rad+\cw)
               arc (\angB:\angB+180:\cw)
               arc (\angB:\angA:\rad-\cw)
               arc (\angA+180:\angA+360:\cw);
        }
    },
}

\coordinate (c) at (0,0);

\begin{scope}[every node/.style={ns}]
\foreach \i in {1,...,\num} {
  \pgfmathsetmacro\r{90+(\i-1)*(360/\num)}
  \node (\i) at (\r:\rad) {};
}
\end{scope}

\begin{scope}[es]
\foreach \i in {1,...,\num} {
\pgfmathtruncatemacro\j{1+mod(\i,\num)}
\draw (\i) -- (\j);
}
\end{scope}

\begin{scope}
\end{scope}

\begin{scope}[dlks,relative,opacity=0.5]

\tikzset{
r1/.style={out=45,in=130},
r2/.style={out=45,in=110},
r3/.style={out=45,in=110},
r4/.style={out=45,in=110},
r5/.style={out=55,in=110},
r6/.style={out=55,in=110},
r8/.style={out=35,in=120},
r11/.style={out=35,in=140},
r15/.style={out=35,in=150},
l1/.style={out=-45,in=-130},
l2/.style={out=-45,in=-110},
l3/.style={out=-45,in=-110},
l4/.style={out=-45,in=-110},
l5/.style={out=-45,in=-110},
l6/.style={out=-45,in=-110},
}

\draw (1) to[r11] (12);
\draw (12) to[r15] (27);

\draw (27) to[l6] (21);
\draw (21) to[r4] (25);
\draw (25) to[l2] (23);
\draw (23) to[r1] (24);
\draw (23) to[l1] (22);
\draw (25) to[r1] (26);

\draw (21) to[l5] (16);
\draw (16) to[l2] (14);
\draw (14) to[r1] (15);
\draw (14) to[l1] (13);

\draw (16) to[r2] (18);
\draw (18) to[l1] (17);
\draw (18) to[r1] (19);
\draw (19) to[r1] (20);

\draw (27) to[r4] (31);
\draw (31) to[l2] (29);
\draw (29) to[r1] (30);
\draw (29) to[l1] (28);

\draw (31) to[r6] (37);
\draw (37) to[l5] (32);
\draw (32) to[r3] (35);
\draw (35) to[r1] (36);
\draw (35) to[l2] (33);
\draw (33) to[r1] (34);

\draw (37) to[r8] (45);
\draw (45) to[l4] (41);
\draw (41) to[r2] (43);
\draw (43) to[r1] (44);
\draw (43) to[l1] (42);

\draw (41) to[l2] (39);
\draw (39) to[r1] (40);
\draw (39) to[l1] (38);

\draw (12) to[l6] (6);
\draw (6) to[l2] (4);
\draw (4) to[l1] (3);
\draw (3) to[l1] (2);
\draw (4) to[r1] (5);

\draw (6) to[r1] (7);
\draw (7) to[r1] (8);
\draw (8) to[r2] (10);
\draw (10) to[r1] (11);
\draw (10) to[l1] (9);

\draw (45) to[r5] (50);
\draw (50) to[l4] (46);
\draw (46) to[r2] (48);
\draw (48) to[r1] (49);
\draw (48) to[l1] (47);

\end{scope}

\begin{scope}[lks,solid,line width=2pt,relative]
\begin{scope}[s1c]
\draw (13) to[out=55,in=125] node[right] {$S_1$} (19);
\draw (15) to[out=55,in=135] (18);
\draw (17) to[out=55,in=135] (22);
\end{scope}

\begin{scope}[s2c]
\draw (35) to[out=55,in=125] (37);
\draw (36) to[out=55,in=125] (39);
\draw (38) to[out=55,in=125] (47);
\draw (40) to[out=55,in=125] node[left] {$S_2$} (49);
\end{scope}

\begin{scope}[s3c]
\draw (6) to[out=40,in=155] node[right] {$S_3$} (24);
\draw (23) to[out=40,in=145] (28);
\draw (24) to[out=40,in=145] (27);
\draw (25) to[out=45,in=135] (34);
\end{scope}

\begin{scope}[s4c]
\draw (41) to[out=55,in=115] node[below left=0.18em,pos=0.40] {$S_4$} (44);
\draw (43) to[out=55,in=145] (45);
\end{scope}

\begin{scope}[s5c]
\draw (29) to[out=45,in=125] node[pos=0.55,above] {$S_5$} (32);
\draw (31) to[out=55,in=145] (33);
\end{scope}

\end{scope}

\begin{scope}[every node/.style={ns}]
\node[fill=s1c] (ps1) at (21) {};
\node[fill=s2c] (ps2) at (37) {};
\node[fill=s3c] (ps3) at (12) {};
\node[fill=s4c] (ps4) at (45) {};
\end{scope}

\begin{scope}

\begin{scope}[s1c]
\node[align=center] (l1) at ($(ps1)+(-0.13*\rad,-0.17*\rad)$) {$\phi(S_1\cap B)$\\[0.3ex]$\psi(S_1\cap B)\!=\!0$};
\end{scope}

\begin{scope}[s2c]
\node[align=center] (l2) at ($(ps2)+(0.19*\rad,-0.21*\rad)$) {$\phi(S_2\cap B)$\\[0.3ex]$\psi(S_2\cap B)\!=\!1$};
\end{scope}

\begin{scope}[s3c]
\node[align=center] (l3) at ($(ps3)+(-0.23*\rad,-0.17*\rad)$) {$\phi(S_3\cap B)$\\[0.3ex]$\psi(S_3\cap B)\!=\!0$};
\end{scope}

\begin{scope}[s4c]
\node[align=center] (l4) at ($(ps4)+(0.38*\rad,0.00*\rad)$) {$\phi(S_4\cap B)$\\[0.3ex]$\psi(S_4\cap B)\!=\!1$};
\end{scope}

\begin{scope}[-latex, shorten >=1pt]
\draw[s1c] (l1.north) to[out=55, in=180] (ps1);
\draw[s2c] (l2.north) to[out=115, in=0] (ps2);
\draw[s3c] (l3.north) to[out=45, in=180] (ps3);
\draw[s4c] (l4.north) to[out=135, in=45] (ps4);
\end{scope}

\end{scope}

\begin{scope}
\node at (1)[above=2pt] {$r$};
\end{scope}

\begin{scope}
\def\irad{0.06*\rad}

\begin{scope}
\path (c) pic[pics/interval/radius=\irad,
              pics/interval/color=Ccol] {intervalN={18}{44}};
\end{scope}

\begin{scope}[Ccol]
\node at (-45:\rad cm + \irad + 4ex) {$C$};
\end{scope}

\end{scope}

\end{tikzpicture}
 \end{center}
\caption{Example of a link set $S=S_1\cup S_2 \cup S_3 \cup S_4 \cup S_5$ realizing a pattern $Q=(C,B,\mathcal{T},\phi,\psi)$.
The link set $S$ is shown as solid lines and consists of five connected components in $H[S]$, which are denoted by $S_1$ (shown in dark green), $S_2$ (red), $S_3$ (light blue), $S_4$ (light green), and $S_5$ (orange).
Only the first four of these link sets contain at least one link of $B$, i.e., a link crossing the $2$-cut $C$ (shown in violet).
Hence, the partition $\mathcal{T}$ of $S\cap B$ consists of four parts, namely $\mathcal{T}=\{S_1\cap B, S_2\cap B, S_3\cap B, S_4\cap B\}$.
We recall that $B$ are all links of $S$ that cross $C$.
The vertices $\phi(S_i\cap B)$ for $i\in [4]$ are colored with the color of the set $S_i$, and the corresponding $\psi$-values are written next to them.
}\label{fig:patternRealization}
\end{figure}

Our dynamic program constructs for each realizable pattern $Q=(C,B,\mathcal{T},\phi,\psi)$ an $(\alpha,C)$-thin link set $S_Q\subseteq L$ maximizing
\begin{equation}\label{eq:objDP}
\max\{\pi(S,C)\colon S\subseteq L \text{ is an $(\alpha,C)$-thin link set realizing $Q$}\}, \text{ where}
\end{equation}
\begin{equation*}
\pi(S,C) \coloneqq \tilde c\left(\Drop_{\vec{F}_0}(S)\cap \bigcup_{v\in C}\delta^-_{\vec{F}_0}(v)\right) - c(S).
\end{equation*}
We call the maximum value of $\pi(S,C)$ among all $(\alpha,C)$-thin link sets the \emph{optimal value of $Q$} and denote it by $\pi_Q^*$.
Hence, a table entry of our DP is a pattern $Q\in \mathcal{Q}$ and its value is $\pi_Q^*$, which is saved together with a maximizer $S_Q$ of~\eqref{eq:objDP}.
(For non-realizable patterns $Q$, for which we recall that $\pi_Q^*$ is not defined, we save the value $-\infty$ instead of $\pi_Q^*$.)
Moreover, we call a realizer $S_Q$ of $Q\in \mathcal{Q}$ that maximizes~\eqref{eq:objDP} a \emph{maximizing realizer of $Q$}, or simply a \emph{$Q$-maximizing} realizer.
Note that the function $\pi(S,C)$ captures the objective value obtained from $S$ when only counting droppable links with head in $C$.

Observe that if we can compute maximizers $S_Q$ of~\eqref{eq:objDP} for all patterns $Q\in \mathcal{Q}$, then we are done, i.e., we obtain an $\alpha$-thin component $K\in \mathfrak{K}$ attaining the maximum in \eqref{eq:thm-dp-objective}, as desired.
Indeed, it suffices to consider, among all components $Q=(C,B,\mathcal{T},\phi,\psi)$ with $C=V\setminus \{r\}$, the one with highest value $\pi^*_Q$, and return $S_Q$, which leads to an optimal component because, for any $S\subseteq L$, we have $\pi(S,V\setminus \{r\})=\tilde c(\Drop_{\vec{F_0}}(S)) - c(S)$.

We compute $\pi^*_Q$ together with a maximizer $S_Q$ of~\eqref{eq:objDP} first for all patterns $(C,B,\mathcal{T},\phi,\psi)$ with $|C|=1$ and then continue with respect to increasing cardinality of $C$.
Note that finding a maximizer of~\eqref{eq:objDP} for a pattern $Q=(C,B,\mathcal{T},\phi,\psi)\in \mathcal{Q}$ with $|C|=1$ can easily be done through enumeration, because, in this case, an $(\alpha,C)$-thin link set $S\subseteq L$ realizing $Q$ fulfills $|S|\leq \alpha = O(1)$.

For patterns $Q=(C,B,\mathcal{T},\phi,\psi)\in \mathcal{Q}$ with $|C|\geq 2$, we compute $\pi^*_Q$ and $S_Q$ by combining solutions of previously computed patterns.
To this end, we use that $S_Q$ must be $(\alpha,C)$-thin, which implies that $C$ can be partitioned into two $2$-cuts $C_1,C_2\in \mathcal{C}_G$ such that, for $i\in [2]$, all links of $S_Q$ with at least one endpoint in $C_i$ are $(\alpha,C_i)$-thin.
Hence, $S_Q$ is the union of an $(\alpha,C_1)$-thin link set realizing a $C_1$-pattern $Q_1$ and an $(\alpha,C_2)$-thin link set realizing a $C_2$-pattern $Q_2$.
Moreover, one can use maximizers $S_{Q_1}$ and $S_{Q_2}$ of~\eqref{eq:objDP} for $Q_1$ and $Q_2$, respectively.
We now formally define when two patterns $Q_1$ and $Q_2$ are compatible, in the sense that separate solutions for these patterns can be interpreted as parts of a larger solution for a larger pattern $Q$.
\begin{definition}[Compatibility of pattern pairs]\label{def:comp_patterns}
Two patterns $(C_1,B_1,\mathcal{T}_1,\phi_1,\psi_1), (C_2,B_2,\mathcal{T}_2,\allowbreak\phi_2,\psi_2) \in \mathcal{Q}$ are \emph{compatible} if
\begin{enumerate}
\item $C_1,C_2$ are neighboring $2$-cuts, 
\item $\delta_{B_1}(C_2) = \delta_{B_2}(C_1)$, and
\item $|\delta_{B_1\cup B_2}(C_1\cup C_2)|\leq \alpha$.
\end{enumerate}
\end{definition}

When taking the union of two links sets $S_1, S_2 \subseteq L$ realizing compatible patterns $Q_1=(C_1,B_1,\mathcal{T}_1,\phi_1,\psi_1)$ and $Q_2=(C_2,B_2,\mathcal{T}_2,\phi_2,\psi_2)$, respectively, we obtain a link set realizing a new pattern $Q$, which we call the \emph{merger} of $Q_1$ and $Q_2$.
As we will see later, this pattern $Q$ solely depends on $Q_1$ and $Q_2$ and not the link sets $S_1$ and $S_2$.
Before we formally define the notion of merger of $Q_1$ and $Q_2$, and show its independence of specific link sets realizing $Q_1$ and $Q_2$, we discuss how to derive in which connected components the links $B_1\cup B_2$ lie and what the least common ancestors are of those components when combining a $Q_1$-realizer and a $Q_2$-realizer for a compatible pair $Q_1$ and $Q_2$.
This provides the basis for formally defining the merger of $Q_1$ and $Q_2$.
To this end, we introduce the notion of \emph{combined partition}.
\begin{definition}[Combined partition, combined least common ancestor (lca) function]
\label{def:combPartition}
Let $Q_1 = (C_1,B_1,\allowbreak\mathcal{T}_1,\phi_1,\psi_1)$ and $Q_2 = (C_2,B_2,\mathcal{T}_2,\phi_2,\psi_2)$ be a pair of compatible patterns.
The \emph{combined partition} $\overline{\mathcal{T}}\subseteq 2^{B_1\cup B_2}$ of $B_1\cup B_2$ corresponds to the vertex sets of the connected components of the graph with vertices $B_1\cup B_2$ where, for distinct $\link{l},\link{f}\in B_1\cup B_2$, there is an edge between $\link{l}$ and $\link{f}$ if either
\begin{enumerate}[label=(\alph*)]
\item\label{item:lfIntersecting} $\link{l}$ intersects $\link{f}$,
\item\label{item:lfInB1} $\link{l},\link{f}\in B_1$ and $\link{l},\link{f}$ are in the same part of the partition $\mathcal{T}_1$, or
\item\label{item:lfInB2} $\link{l},\link{f}\in B_2$ and $\link{l},\link{f}$ are in the same part of the partition $\mathcal{T}_2$.
\end{enumerate}
Moreover, the \emph{combined least common ancestor (lca) function} $\overline{\phi}\colon \overline{\mathcal{T}} \to C_1\cup C_2$ is defined as follows.
For any $\overline{T}\in \overline{\mathcal{T}}$, which, by definition of $\overline{\mathcal{T}}$, can be written as 
\begin{equation}\label{eq:decomp_T_merger}
\overline{T} = \left(T^1_1 \cup \ldots \cup T^1_{q_1}\right) \cup \left(T^2_1\cup \ldots \cup T^2_{q_2}\right)
\end{equation}
with $T^1_1,\ldots, T^1_{q_1}\in \mathcal{T}_1$ and $T^2_1,\ldots, T^2_{q_2}\in \mathcal{T}_2$, we set
\begin{equation*}
\overline{\phi}(\overline{T}) \coloneqq \lca\left(\left\{\phi_1(T^1_1),\ldots, \phi_1(T^1_{q_1}), \phi_2(T^2_1),\ldots, \phi_2(T^2_{q_2})\right\}\right).
\end{equation*}
\end{definition}
Note that, because $\mathcal{T}_1$ and $\mathcal{T}_2$ partition $B_1$ and $B_2$, respectively, there is a unique way to write $\overline{T}$ as a union of sets $T^i_j$ as shown in~\eqref{eq:decomp_T_merger}.

The following statement shows that the combined partition indeed corresponds to the partition of $B_1\cup B_2$ with respect to the connected components of $H[S_1\cup S_2]$ for $S_1$ being a $Q_1$-realizer and $S_2$ being a $Q_2$-realizer set.
As the statement of \cref{lem:combPartitionOk} holds for any realizing sets $S_1$ and $S_2$, and the combined partition $\overline{\mathcal{T}}$ only depends on $Q_1$ and $Q_2$, this shows that the partition of $B_1\cup B_2$ with respect to the connected components of $H[S_1\cup S_2]$ is independent of what $Q_1$-realizer $S_1$ and $Q_2$-realizer $S_2$ is chosen.
Furthermore, the \lcnamecref{lem:combPartitionOk} below also shows that the combined lca function $\overline{\phi}$ correctly returns the least common ancestor of any connected component of $H[S_1\cup S_2]$ that contains at least one link of $B_1\cup B_2$. 
\begin{lemma}\label{lem:combPartitionOk}
Let $Q_1=(C_1,B_1,\mathcal{T}_1,\phi_1,\psi_1)$ and $Q_2=(C_2,B_2,\mathcal{T}_2,\phi_2,\psi_2)$ be a pair of compatible and realizable patterns with combined partition $\overline{\mathcal{T}}$ and combined lca function $\overline{\phi}$.
Moreover, for $i\in [2]$, let $S_i\subseteq L$ be an $(\alpha,C_i)$-thin link set realizing $Q_i$, and let $M_1,\dots, M_q\subseteq S_1\cup S_2$ be the links corresponding to the connected components of $H[S_1\cup S_2]$.

Then, $\overline{\mathcal{T}}$ consists of all nonempty sets among $M_1\cap (B_1\cup B_2), \dots, M_q \cap (B_1\cup B_2)$.
Furthermore, for any nonempty $\overline{T}=M_i\cap (B_1\cup B_2)$, we have $\overline{\phi}(\overline{T}) = \lca(V(M_i))$.
\end{lemma}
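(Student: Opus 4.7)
The plan is to establish both claims simultaneously by first extracting a geometric statement from the compatibility condition that controls intersections between links in $S_1$ and $S_2$. Specifically, the compatibility condition $\delta_{B_1}(C_2)=\delta_{B_2}(C_1)$ implies that any link in $S_1$ that crosses $C_2$ must also lie in $S_2$ (and symmetrically). Consequently, a link in $S_1\setminus S_2$ has both endpoints on the ``$C_1$-side'' of the interval $C_1\cup C_2$, while a link in $S_2\setminus S_1$ has both endpoints on the opposite side. Because $C_1\cup C_2$ is an interval of the ring, two such links can neither share an endpoint nor cross on the ring, and therefore cannot be adjacent in $H[S_1\cup S_2]$. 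Hence every edge of $H[S_1\cup S_2]$ joins links that lie jointly in $S_1$ or jointly in $S_2$, and moreover $S_1\cap S_2$ coincides with $B_1\cap B_2$.

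Using this, I prove the first claim by establishing both inclusions of the two descriptions of the partition of $B_1\cup B_2$. The easy direction---that two links $\link{l},\link{f}\in B_1\cup B_2$ connected in the auxiliary graph of \cref{def:combPartition} are in the same component of $H[S_1\cup S_2]$---follows by checking each edge type: intersecting links are directly adjacent in $H[S_1\cup S_2]$, while two links in the same part of $\mathcal{T}_i$ share a component of $H[S_i]\subseteq H[S_1\cup S_2]$. For the converse, I take a path $\link{l}_0,\dots,\link{l}_k$ in $H[S_1\cup S_2]$ with $\link{l}_0=\link{l},\link{l}_k=\link{f}$ and, using the observation above, assign to each edge $\{\link{l}_i,\link{l}_{i+1}\}$ an index $\ell_i\in\{1,2\}$ with $\link{l}_i,\link{l}_{i+1}\in S_{\ell_i}$. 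Grouping the path into maximal subpaths of constant index, each subpath lives in a single component of $H[S_{\ell_i}]$, and the transition link between consecutive subpaths is forced to lie in $S_1\cap S_2=B_1\cap B_2$. This produces a sequence $\link{l}=\mathbf{m}_0,\mathbf{m}_1,\dots,\mathbf{m}_t=\link{f}$ of links in $B_1\cup B_2$ along which consecutive pairs share a part of $\mathcal{T}_1$ or $\mathcal{T}_2$, giving the required connection. A small final check confirms that $\link{l}_0,\link{l}_k\in B_1\cup B_2$, being in $S_{\ell_0}$ respectively $S_{\ell_{k-1}}$, in fact lie in $B_{\ell_0}$ respectively $B_{\ell_{k-1}}$---another direct consequence of the geometric observation.

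For the second claim, fix $\overline{T}=M_i\cap(B_1\cup B_2)\neq\emptyset$. The first claim lets me decompose $M_i=\bigcup_j S^1_j\cup\bigcup_{j'} S^2_{j'}$, where the $S^1_j$ are the components of $H[S_1]$ contained in $M_i$ that meet $B_1$ and contribute $T^1_j\in\mathcal{T}_1$ to $\overline{T}$, and analogously for the $S^2_{j'}$. Before using this decomposition I will verify that $M_i$ contains no component of $H[S_1]$ entirely disjoint from $B_1$ (and likewise for $S_2$): such a component would consist of links with both endpoints in $C_1$, which by the same geometric reasoning cannot intersect any link of $S_2\setminus S_1$ and cannot intersect a link of $S_1\cap S_2$ without itself meeting $B_1$, so it would form an isolated component of $H[S_1\cup S_2]$, forcing $\overline{T}=\emptyset$. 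Plugging in $\phi_i(T^i_j)=\lca(V(S^i_j))$ from the definition of realizer, the claim $\overline{\phi}(\overline{T})=\lca(V(M_i))$ reduces to the standard arborescence identity $\lca(A\cup B)=\lca(\lca(A),\lca(B))$, applied inductively across the components $S^i_j$, which holds since $(V,\vec{F}_0)$ is an $r$-arborescence by \cref{thm:ov_StructureNonShortenable}.

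The main obstacle is the hard direction of the first claim: tracking colors along a path in $H[S_1\cup S_2]$ and verifying that color transitions are forced onto links of $B_1\cap B_2$. Everything else is either immediate from the definitions or a direct consequence of the geometric observation derived from compatibility and of the standard lca-of-union identity.
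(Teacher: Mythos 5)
Your geometric observation is the crux of the argument, and it is false as stated. It is true that a link in $S_1\setminus S_2$ has no endpoint in $C_2$ (and symmetrically a link in $S_2\setminus S_1$ has no endpoint in $C_1$); this follows from compatibility exactly as you argue. But that only places its endpoints in $V\setminus C_2$, and $V\setminus C_2$ and $V\setminus C_1$ overlap in $V\setminus(C_1\cup C_2)$. A link in $B_1\setminus B_2$ lies in $S_1\setminus S_2$ and has one endpoint in $C_1$ and one outside $C_1\cup C_2$; a link in $B_2\setminus B_1$ lies in $S_2\setminus S_1$ and has one endpoint in $C_2$ and one outside $C_1\cup C_2$. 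Two such links can cross: e.g., on the ring $v_1,\dots,v_{10}$ with $C_1=\{v_3,v_4\}$ and $C_2=\{v_5,v_6\}$, the links $\{v_4,v_9\}$ and $\{v_6,v_2\}$ cross. Hence the claim that every edge of $H[S_1\cup S_2]$ joins links that are jointly in $S_1$ or jointly in $S_2$ is wrong, and the index assignment underlying your grouping into constant-index subpaths does not exist in general.

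The correct observation is that a link of $S_i\setminus B_i$ has both endpoints in $C_i$, and two such links for $i=1$ and $i=2$ cannot intersect because $C_1,C_2$ are disjoint intervals. So the right place to split the path is at each visit to a link of $B_1\cup B_2$, not at each change of $S_1/S_2$-membership. Between two consecutive such visits, every interior link lies in $(S_1\setminus B_1)\cup(S_2\setminus B_2)$ and outside $S_1\cap S_2=B_1\cap B_2$, and your argument applied to these interior links shows that they all lie in the same $S_i$; this places the two bounding links of $B_1\cup B_2$ in the same component of $H[S_i]$ and hence in the same part of $\mathcal{T}_i$. Crucially, one must also handle the degenerate subpath consisting of two directly adjacent links of $B_1\cup B_2$, which is covered by rule~\ref{item:lfIntersecting} (intersection) of \cref{def:combPartition}, a case your final conclusion (``consecutive pairs share a part of $\mathcal{T}_1$ or $\mathcal{T}_2$'') omits. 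This is exactly the structure of the paper's proof.

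The rest of your proposal---the easy direction of the first claim, the argument that no connected component of $H[S_1]$ inside $M_i$ can be disjoint from $B_1$, and the reduction of the second claim to the lca-of-union identity---is sound.
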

Before showing \cref{lem:combPartitionOk}, we prove an important auxiliary result needed in its proof.
More precisely, in the proof of~\cref{lem:combPartitionOk}, we partition link sets $M\subseteq S_1\cup S_2$ corresponding to a connected component of $H[S_1\cup S_2]$ with $M\cap (B_1\cup B_2) \neq \emptyset$ into connected components of the graphs $H[S_1]$ and $H[S_2]$.
The following statement shows that such a connected component $S^i_j$ of $H[S_i]$ has a non-empty intersection with $B_i$, which guarantees $S_i^j\cap B_i\in \mathcal{T}_i$, and will allow us to create the desired relation to the combined lca function $\overline{\phi}$.
\begin{lemma}\label{lem:MDecomposesInTParts}
Let $Q_1=(C_1,B_1,\mathcal{T}_1,\phi_1,\psi_1)$ and $Q_2=(C_2,B_2,\mathcal{T}_2,\phi_2,\psi_2)$ be a pair of compatible patterns and, for $i\in [2]$, let $S_i \subseteq L$ be an $(\alpha,C_i)$-thin link set realizing $Q_i$.
Moreover, let $M\subseteq S_1\cup S_2$ be a link set corresponding to a connected component of $H[S_1\cup S_2]$ satisfying $M\cap (B_1\cup B_2)\neq\emptyset$.
Hence, $M$ can be written as
\begin{equation*}
M = \left(S^1_1 \cup \ldots \cup S^1_{q_1}\right) \cup \left(S^2_1\cup \ldots \cup S^2_{q_2}\right),
\end{equation*}
where, for $i\in [2]$, the sets $S^i_1,\dots, S^i_{q_i}\subseteq S_i$ are link sets corresponding to connected components of $H[S_i]$.
Then
\begin{equation*}
S^i_j\cap B_i \neq \emptyset \qquad \forall i\in[2], j\in[q_i].
\end{equation*}
\end{lemma}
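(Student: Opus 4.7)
My plan is to argue by contradiction and assume some $S^i_j$ satisfies $S^i_j \cap B_i = \emptyset$; by the symmetry between the two patterns, I take $i=1$. Because $S_1$ realizes $Q_1$, conditions~\ref{item:realizerOneEndpointInC} and~\ref{item:realizerBOk} of \cref{def:realizingLinkSet} give $B_1 = \delta_{S_1}(C_1)$ while forcing every link of $S_1$ to have at least one endpoint in $C_1$; hence links of $S_1 \setminus B_1$ have both endpoints in $C_1$. The assumption $S^1_j \cap B_1 = \emptyset$ therefore says that every link of $S^1_j$ has both endpoints in $C_1$. Since $C_1$ and $C_2$ are neighboring $2$-cuts, we have $C_1 \cap C_2 = \emptyset$, and since every link of $S_2$ has an endpoint in $C_2$ (condition~\ref{item:realizerOneEndpointInC} applied to $S_2$), I conclude that $S^1_j \cap S_2 = \emptyset$; in particular $S^1_j \cap B_2 = \emptyset$, so $S^1_j \cap (B_1 \cup B_2) = \emptyset$.

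Next, I split into two cases. If $M = S^1_j$, then $M \cap (B_1 \cup B_2) = \emptyset$, directly contradicting the hypothesis on $M$. Otherwise $S^1_j$ is a proper subset of the connected vertex set $M$ of $H[S_1 \cup S_2]$, so some edge of $H[S_1 \cup S_2]$ joins a link $\link{l} \in S^1_j$ to a link $\link{f} \in M \setminus S^1_j$; in particular $\link{l} \neq \link{f}$ and the two intersect. I claim $\link{f} \notin S_1$: otherwise this same edge would lie in $H[S_1]$, forcing $\link{f}$ into the connected component $S^1_j$ of $H[S_1]$ that contains $\link{l}$. Hence $\link{f} \in S_2 \setminus S_1$, and $\link{f}$ has an endpoint in $C_2$.

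The final step is a short ring-geometry argument, which combined with pattern compatibility yields the contradiction. I claim $\link{f}$ must also have an endpoint in $C_1$. If $\link{l}$ and $\link{f}$ share an endpoint, then that endpoint lies in $C_1$ because both endpoints of $\link{l}$ do. Otherwise $\link{l}$ and $\link{f}$ cross, and I will use that $C_1 \cup C_2 \in \Cscr_G$ is an interval of the path $(V, E\setminus\{e_r\})$ with $C_1$ and $C_2$ as consecutive subintervals: the chord $\link{l}$, whose endpoints both lie in $C_1$, separates the ring into two arcs, one contained strictly inside $C_1$ and the other containing every vertex outside $C_1$ (in particular all of $C_2$); since $\link{f}$ has an endpoint in $C_2$ and crosses $\link{l}$, its other endpoint must lie on the first arc and hence in $C_1$. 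In either situation $\link{f}$ has exactly one endpoint in each of $C_1$ and $C_2$, which places it in $\delta_{S_2}(C_2) \cap \delta(C_1) = \delta_{B_2}(C_1)$. By compatibility (\cref{def:comp_patterns}), $\delta_{B_2}(C_1) = \delta_{B_1}(C_2) \subseteq B_1 \subseteq S_1$, contradicting $\link{f} \notin S_1$. The only step requiring genuine care is the arc-separation argument in this last paragraph, but it follows cleanly from the fact that $C_1 \cup C_2$ is an interval disjoint from the root $r$, so the geometry reduces to the linear picture in which $C_1$ and $C_2$ appear as two consecutive blocks of vertices on the path.
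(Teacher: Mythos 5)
Your proof is correct and follows essentially the same route as the paper's: you identify a connecting edge from a link $\link{l}\in S^i_j$ (with both endpoints in $C_i$) to a link $\link{f}$ from the other realizer, argue geometrically that $\link{f}$ must have one endpoint in each of $C_1$ and $C_2$, and then use the compatibility condition $\delta_{B_2}(C_1)=\delta_{B_1}(C_2)$ to conclude $\link{f}\in B_1\cap B_2 = S_1\cap S_2$, yielding the contradiction. The paper phrases this somewhat more tersely (observing that the connecting link lies in $B_1\cap B_2$ and therefore belongs to $S^i_j\cap B_i$ directly), while you land the same punch on $\link{f}\notin S_1$; the arc-separation argument you spell out is an explicit justification for a step the paper treats as immediate.
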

\begin{proof}
Because $M\cap (B_1\cup B_2) \neq \emptyset$, every link set $S^i_j$ for $i\in [2]$ and $j\in [q_i]$ either contains a link in $B_i$ or must be connected to another set $S^{i'}_{j'}$.
Such a set $S^{i'}_{j'}$ must satisfy $i'\neq i$ because $S^i_j$ is the link set of a connected component of $H[S_i]$.
The only way how $S^i_j$ can be connected to $S^{i'}_{j'}$ in $H[S_1\cup S_2]$ is through a link $\link{l}\in S_{i'}$ with one endpoint in $C_1$ and one in $C_2$.
Because the patterns $Q_1$ and $Q_2$ are compatible, and $S_1$ and $S_2$, respectively, are realizers for these patterns, we have that the links in $S_1$ with one endpoint in $C_1$ and one in $C_2$ are the same as the links in $S_2$ with one endpoint in $C_1$ and one in $C_2$.
Thus, $\link{l}\in B_1\cap B_2$, which implies that $\link{l}\in S^i_j$, and hence $S^i_j\cap B_i \neq \emptyset$.
\end{proof}
We are now ready to prove \cref{lem:combPartitionOk}.
\begin{proof}[Proof of \cref{lem:combPartitionOk}]
Let $\link{h}_s,\link{h}_t\in B_1\cup B_2$.
To prove the first part of the \lcnamecref{lem:combPartitionOk} we have to show that the following two statements are equivalent.
\begin{enumerate}
\item\label{item:lfInSamePartOfOvT} $\link{h}_s$ and $\link{h}_t$ are in the same part of the partition $\overline{\mathcal{T}}$.
\item\label{item:lfInSameConnCompH} $\link{h}_s$ and $\link{h}_t$ are in the same connected component of $H[S_1\cup S_2]$.
\end{enumerate}
First observe that \ref{item:lfInSamePartOfOvT} clearly implies \ref{item:lfInSameConnCompH}, because if two links $\link{l},\link{h}$ are connected by an edge in the graph defined in \cref{def:comp_patterns}, then they are in the same connected component of $H[S_1\cup S_2]$.

Hence, it remains to show that \ref{item:lfInSameConnCompH} implies \ref{item:lfInSamePartOfOvT}.
To this end, we show that if, for a pair of links $\link{h}_s,\link{h}_t\in B_1\cup B_2$, there is an $\link{h}_s$-$\link{h}_t$ path in $H[S_1\cup S_2]$ consisting only of links in $(S_1\cup S_2)\setminus (B_1\cup B_2)$, except for its endpoints, then $\link{h}_s$ and $\link{h}_t$ lie in the same part of the partition $\overline{\mathcal{T}}$.
Note that this shows the desired implication, because if two arbitrary links $\link{h}_s,\link{h}_t\in B_1\cup B_2$ are connected by a path in $H[S_1\cup S_2]$, then this path can be split into consecutive subpaths between links in $B_1\cup B_2$ with each subpath not containing any link of $B_1\cup B_2$ in its interior.
Thus, let $\link{h}_s,\link{h}_t\in B_1\cup B_2$ such that there is an $\link{h}_s$-$\link{h}_t$ path $P$ in $H[S_1\cup S_2]$---where we denote by $\link{h}_s \eqqcolon \link{l}_0, \link{l}_1,\dots,\link{l}_{p+1}\coloneqq \link{h}_t$ the traversed links---such that no link in the interior of $P$ is in $B_1\cup B_2$, i.e, $\link{l}_1,\dots, \link{l}_{p}\not\in B_1\cup B_2$.
As $P$ is a path in $H[S_1\cup S_2]$, we have that $\link{l}_i$ and $\link{l}_{i+1}$ are intersecting for $i\in \{0,\dots,p\}$.

Consider first the case of $P$ not containing any links in its interior, i.e., $p=0$.
Hence, $\link{h}_s$ and $\link{h}_t$ intersect, which, due to rule \ref{item:lfIntersecting} of \cref{def:combPartition}, implies that $\link{h}_s$ and $\link{h}_t$ are in the same part of $\overline{\mathcal{T}}$, as desired.

Assume now $p>0$.
Notice that none of the links $\link{l}_i$ for $i\in [p]$ can be in $S_1\cap S_2$, because $S_1\cap S_2 = B_1\cap B_2$ by definition of $B_1$ and $B_2$.
We now observe that either all links $\link{l}_1,\dots, \link{l}_p$ are in $S_1$ or all are in $S_2$.
Assume for the sake of deriving a contradiction that this is not the case.
Hence, there is a pair of consecutive links $\link{l}_i,\link{l}_{i+1}$ for some $i\in [p-1]$ with one link in $S_1$ and the other one in $S_2$, say $\link{l}_i\in S_1$ and $\link{l}_{i+1}\in S_2$.
(The opposite case is equivalent by symmetry of $S_1$ and $S_2$.)
Because this pair of links intersects, at least one of the links $\link{l}_i, \link{l}_{i+1}$ must have one endpoint in $C_1$ and the other endpoint in $C_2$.
However, all such links are part of $B_1\cup B_2$, which contradicts the fact that no interior link of $P$ is in $B_1\cup B_2$.
Thus, assume that all interior links of $P$ are in $S_1$.
(Again, the case of all being in $S_2$ is identical by symmetry.)
Because $\link{h}_s$ intersects $\link{l}_1$ and $\link{l}_1\in S_1\setminus S_2$, we have that $\link{h}_s$ has at least one endpoint in $C_1$, and thus $\link{h}_s\in B_1$.
Analogously, we obtain $\link{l}_p\in B_1$.
Hence, the whole path $P$ lies in $H[S_1]$, which shows that $\link{h}_s$ and $\link{h}_t$ are in the same part of the partition $\overline{\mathcal{T}}$ by rule~\ref{item:lfInB1}, as desired.
This finishes the proof of the first part of \cref{lem:combPartitionOk}.

To prove the second part of \cref{lem:combPartitionOk}, let $M\subseteq S_1 \cup S_2$ be the vertex set of a connected component of $H[S_1\cup S_2]$ for which we have $M\cap (B_1\cup B_2)\neq \emptyset$.
Hence, we can write
\begin{equation}\label{eq:MasUnionOfSij}
M = \left(S^1_1 \cup \dots \cup S^1_{q_1}\right) \cup \left(S^2_1 \cup \dots \cup S^2_{q_2}\right),
\end{equation}
where, for $i\in [2]$, we have that $S^i_1,\dots, S^i_{q_i}\subseteq S_i$ are link sets corresponding to connected components of $H[S_i]$.
By \cref{lem:MDecomposesInTParts}, we have 
\begin{equation*}
S^i_j\cap B_i \neq \emptyset \qquad \forall i\in[2], j\in[q_i].
\end{equation*}
Hence, the set
\begin{equation*}
\overline{T} \coloneqq M\cap (B_1\cup B_2)
\end{equation*}
fulfills
\begin{equation*}
\overline{T} = \left(T^1_1 \cup \dots \cup T^1_{q_1}\right) \cup \left(T^2_1 \cup \dots \cup T^2_{q_2}\right),
\end{equation*}
where
\begin{equation*}
T^i_j \coloneqq S^i_j\cap B_i \qquad \forall i\in [2], j\in [q_i]
\end{equation*}
are all nonempty sets.
Thus, $T^i_j\in \mathcal{T}_i$ for $i\in [2]$ and $j\in [q_i]$ because $S_i$ realizes the pattern $Q_i$.
The desired relation now follows from
\begin{align*}
\lca(V(M)) &= \lca\left(V\left(\left(S^1_1\cup \dots \cup S^1_{q_1}\right) \cup \left(S^2_1\cup \dots \cup S^2_{q_2}\right)\right)\right)\\
&=\lca\left(\lca\left(V\left(S^1_1\right)\right),\dots, \lca\left(V\left(S^1_{q_1}\right)\right), \lca\left(V\left(S^2_1\right)\right),\dots, \lca\left(V\left(S^2_{q_2}\right)\right)\right)\\
&=\lca\left(\phi_1(T^1_1),\dots, \phi_1(T^1_{q_1}), \phi_2(T^2_1),\dots, \phi_2(T^2_{q_2})\right)\\
&=\overline{\phi}(\overline{T}),
\end{align*}
where the first equality is due to~\eqref{eq:MasUnionOfSij},
the second one holds because the least common ancestor of a vertex set in a tree is the least common ancestor of the least common ancestors of any covering of that vertex set (above, we use this by exploiting that $V((S^1_1 \cup \dots \cup S^1_{q_1})\cup(S^2_1 \cup \dots \cup S^2_{q_2})) = V(S^1_1)\cup\dots\cup V(S^1_{q_1})\cup V(S^2_1)\cup\dots\cup V(S^2_{q_2})$),
the third one holds because, for $i\in [2]$, the set $S^i$ is a $Q_i$-realizer,
and the last one holds by definition of the combined lca function $\overline{\phi}$.
\end{proof}

Using the notion of combined partition and combined lca function, we are now ready to define the merger of two compatible patterns.
\begin{definition}[Merger of compatible patterns]\label{def:merge_patterns}
Let $Q_1=(C_1,B_1,\mathcal{T}_1,\phi_1,\psi_1)\in \mathcal{Q}$ and $Q_2=(C_2,B_2,\mathcal{T}_2,\phi_2,\psi_2) \in \mathcal{Q}$ be a pair of compatible patterns with combined partition $\overline{\mathcal{T}}$ and combined lca function $\overline{\phi}$.
Then the \emph{merger} of $Q_1$ and $Q_2$ is the pattern $Q=(C,B,\mathcal{T},\phi,\psi)$ defined as follows.
\begin{itemize}
\item $C \coloneqq C_1\cup C_2$.

\item $B \coloneqq \delta_{B_1\cup B_2}(C_1\cup C_2)$.

\item $\mathcal{T}\subseteq 2^B$ consists of all nonempty sets in $\{\overline{T}\cap B\colon \overline{T}\in \overline{\mathcal{T}}\}$.

\item $\phi(\overline{T}\cap B) = \overline{\phi}(\overline{T}) \qquad \forall\; \overline{T}\in \overline{\mathcal{T}} \text{ with } \overline{T}\cap B \neq \emptyset$.

\item To define $\psi\colon \mathcal{T}\to \{0,1\}$, let $\overline{T}\in \overline{\mathcal{T}}$ and $T\coloneqq \overline{T}\cap B$, and consider a decomposition of $\overline{T}$ as in~\eqref{eq:decomp_T_merger}.
We set
\begin{equation*}
\psi(T) \coloneqq \begin{cases}
1 &\text{if $\exists i\in [2]$ and $j\in [q_i]$ with $\phi_i(T^i_j)=\phi(T)$ and $\psi_i(T_j)=1$},\\
0 &\text{otherwise}.
\end{cases}
\end{equation*}
\end{itemize}
\end{definition}

We now show that the merger $Q$ of two compatible patterns $Q_1$ and $Q_2$ indeed has the property that the union of any realizers of $Q_1$ and $Q_2$, respectively, will be a realizer of $Q$.
Moreover, we can also relate their $\pi$-values.
To this end we define the following vertex set $U_{Q_1,Q_2}$ for any two compatible patterns $Q_1=(C_1,B_1,\mathcal{T}_1,\phi_q,\psi_1)$ and $Q_2=(C_2,B_2,\mathcal{T}_2,\phi_2,\psi_2)$ with merger $Q=(C,B,\mathcal{T},\phi,\psi)$.
As stated in the lemma that follows, this set $U_{Q_1,Q_2}$ allows for describing the difference of the $\pi$-value of a set realizing $Q$ compared to the sum of the $\pi$-values of sets realizing $Q_1$ and $Q_2$, respectively.
As before, $\overline{\mathcal{T}}$ denotes the combined partition of $B_1\cup B_2$ and $\overline{\phi}$ the combined lca function.
\begin{equation*}
U_{Q_1,Q_2} \coloneqq
\left\{
\phi_i(T) \colon i\in [2], T\in \mathcal{T}_i \text{ with } \psi_i(T)=1 \text{ and } \phi_i(T)\in C_i
\right\}
\setminus \bigcup_{\overline{T}\in \overline{\mathcal{T}}} \left\{\overline{\phi}(\overline{T})\right\}
\end{equation*}
\begin{lemma}\label{lem:valueOfMergedSolution}
Let $Q_1 = (C_1,B_1,\mathcal{T}_1,\phi_1,\psi_1)\in \mathcal{Q}$ and $Q_2=(C_2,B_2,\mathcal{T}_2,\phi_2,\psi_2) \in \mathcal{Q}$ be two compatible and realizable patterns and let $Q\in \mathcal{Q}$ be their merger.
Moreover, let $S_1\subseteq L$ be an $(\alpha,C_1)$-thin link set realizing $Q_1$ and $S_2 \subseteq L$ be an $(\alpha,C_2)$-thin link set realizing $Q_2$.
Then $S_1\cup S_2$ is an $(\alpha,C_1\cup C_2)$-thin link set realizing $Q$ and
\begin{equation}\label{eq:piOfMerger}
\pi(S_1\cup S_2, C_1\cup C_2) = \pi(S_1,C_1) + \pi(S_2,C_2) + c(B_1\cap B_2) + \tilde c\left(\bigcup_{u\in U_{Q_1,Q_2}} \delta^-_{\vec{F}_0}(u)\right).
\end{equation}
\end{lemma}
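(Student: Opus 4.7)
My plan is to prove the three assertions of the lemma in order: (i) $(\alpha, C_1 \cup C_2)$-thinness of $S_1 \cup S_2$; (ii) that $S_1 \cup S_2$ realizes the merger $Q$; and (iii) identity~\eqref{eq:piOfMerger}. For (i), I would take certifying laminar families $\mathcal{L}_1, \mathcal{L}_2 \subseteq \mathcal{C}_G$ for $(\alpha, C_i)$-thinness of $S_i$ and form $\mathcal{L} \coloneqq \mathcal{L}_1 \cup \mathcal{L}_2 \cup \{C_1 \cup C_2\}$. Since $C_1, C_2$ are disjoint neighboring $2$-cuts, $\mathcal{L}$ is laminar and maximal over $C_1 \cup C_2$ by \cref{lem:maximal_laminar}, with $C_1 \cup C_2 = C_1 \cupp C_2$ supplying the binary split at the top. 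The crossing bound at $C_1 \cup C_2$ is exactly the compatibility condition; for $\overline{C} \in \mathcal{L}_1$ (so $\overline{C} \subseteq C_1$) one has $\delta_{S_1 \cup S_2}(\overline{C}) = \delta_{S_1}(\overline{C})$, because any link in $S_2$ with an endpoint in $C_1$ lies in $\delta_{B_2}(C_1) = \delta_{B_1}(C_2) \subseteq S_1$ by compatibility; the case $\overline{C} \in \mathcal{L}_2$ is symmetric.

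For (ii), I would verify the five conditions of \cref{def:realizingLinkSet}. Conditions~(1) and~(2) are direct, the second using that a link of $S_1 \cup S_2$ crosses $C_1 \cup C_2$ iff it lies in $\delta_{B_1 \cup B_2}(C_1 \cup C_2) = B$. Conditions~(3) and~(4) follow from \cref{lem:combPartitionOk}. For condition~(5), the key point is that $\overline{\phi}(\overline{T}) \in V(M_{\overline{T}})$ iff there exist $i, j$ with $\phi_i(T^i_j) = \overline{\phi}(\overline{T})$ and $\psi_i(T^i_j) = 1$: if $\overline{\phi}(\overline{T}) \in V(S^{i^*}_{j^*})$ for some part in decomposition~\eqref{eq:decomp_T_merger}, then $\phi_{i^*}(T^{i^*}_{j^*}) = \lca(V(S^{i^*}_{j^*}))$ is simultaneously an ancestor of $\overline{\phi}(\overline{T})$ (since $\overline{\phi}(\overline{T}) \in V(S^{i^*}_{j^*})$) and a descendant of it (since $\phi_{i^*}(T^{i^*}_{j^*})$ is one of the arguments of the outer $\lca$ defining $\overline{\phi}$), so they coincide; the reverse direction is immediate from the definition of $\psi$ in the merger.

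For (iii), I would split into cost and drop contributions. The cost gives $c(S_1 \cup S_2) = c(S_1) + c(S_2) - c(B_1 \cap B_2)$, since $S_1 \cap S_2 = B_1 \cap B_2$ (by compatibility, any link in both $S_i$'s has one endpoint in each of $C_1, C_2$). For the drop part, I would apply \cref{lem:ov_dropOfConnectedS} to every connected component of $H[S_1 \cup S_2]$, $H[S_1]$, and $H[S_2]$, using that distinct components have pairwise disjoint vertex sets. A short $G$-planarity argument exploiting the interval structure of $C_1, C_2$ shows that any component of $H[S_i]$ containing no link of $B_i$ is already a component of $H[S_1 \cup S_2]$, so its contribution cancels between the two sides. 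The remaining ``merged'' components are in bijection with $\overline{\mathcal{T}}$ via \cref{lem:combPartitionOk}, and after cancellation the net difference reduces to $\tilde c(\delta^-_{\vec{F}_0}(V^+)) - \tilde c(\delta^-_{\vec{F}_0}(V^{\text{lca,in}}))$, where $V^+ \coloneqq \{\phi_i(T) : T \in \mathcal{T}_i,\ \psi_i(T) = 1,\ \phi_i(T) \in C_i\}$ are the effective lca-subtractions at the $(S_1, S_2)$ level and $V^{\text{lca,in}} \coloneqq \{\overline{\phi}(\overline{T}) : \overline{\phi}(\overline{T}) \in V(M_{\overline{T}}) \cap (C_1 \cup C_2)\}$ are those at the merger level.

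The main obstacle will be matching this net difference with $\tilde c(\delta^-_{\vec{F}_0}(U_{Q_1,Q_2}))$. Since $V^{\text{lca,in}} \subseteq \{\overline{\phi}(\overline{T}) : \overline{T} \in \overline{\mathcal{T}}\}$, it suffices to prove the two inclusions $V^{\text{lca,in}} \subseteq V^+$ and $V^+ \cap \{\overline{\phi}(\overline{T}) : \overline{T} \in \overline{\mathcal{T}}\} \subseteq V^{\text{lca,in}}$, which together yield $V^+ \setminus V^{\text{lca,in}} = V^+ \setminus \{\overline{\phi}(\overline{T}) : \overline{T} \in \overline{\mathcal{T}}\} = U_{Q_1,Q_2}$. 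The first inclusion is the ancestor-descendant argument used for condition~(5), combined with the compatibility observation that a vertex $u \in C_i$ incident to a link of $S_{3-i}$ is also incident to a link of $S_i$ (so that $u$ is witnessed from the $S_i$ side and lands in $V(S^i_{j^*})$ for some part of the decomposition). The second, more delicate inclusion requires showing that whenever a witness $N_T$ for $u \in V^+$ sits inside a merged component $M_{\overline{T}'}$, we must have $\overline{\phi}(\overline{T}') = u$, so that the merged lca $u$ is realized \emph{within} $V(M_{\overline{T}'})$ and hence $u \in V^{\text{lca,in}}$.
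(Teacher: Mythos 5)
Your plan follows the same three-step structure as the paper's proof — the same laminar family $\mathcal{L}_1 \cup \mathcal{L}_2 \cup \{C_1 \cup C_2\}$ and compatibility argument for thinness, the same verification of the five conditions via \cref{lem:combPartitionOk}, and the same strategy for \eqref{eq:piOfMerger} of expanding each $\Drop$ via \cref{lem:ov_dropOfConnectedS} componentwise and cancelling. Your $V^+$, $V^{\text{lca,in}}$ bookkeeping is an equivalent reformulation of the paper's set $U$ and Claims \ref{claim:relSiIntersectBi} and \ref{claim:onlyMTIsRelevant}: your observation that a component of $H[S_i]$ disjoint from $B_i$ is already a component of $H[S_1\cup S_2]$ plays the role of Claim \ref{claim:relSiIntersectBi}, and your ``more delicate inclusion'' ($V^+ \cap \{\overline{\phi}(\overline{T})\} \subseteq V^{\text{lca,in}}$) is exactly the contrapositive of Claim \ref{claim:onlyMTIsRelevant}, whose proof (the ancestor/interval argument via \cref{lem:lca_from_leftmost_rightmost}) you correctly flag as the remaining nontrivial step but do not carry out.
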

\begin{proof}
We first observe that $S_1\cup S_2$ is $(\alpha,C_1\cup C_2)$-thin.
For $i\in [2]$, the fact that $S_i$ is $(\alpha,C_i)$-thin implies that there is a maximal laminar family $\mathcal{L}_i \subseteq \mathcal{C}_G$ over the ground set $C_i$ such that $|\delta_{S_i}(\overline{C})|\leq \alpha$ for all $\overline{C}\in \mathcal{L}_i$.
To show $(\alpha,C_1\cup C_2)$-thinness of $S_1\cup S_2$, we define the maximal laminar family
\begin{equation*}
\mathcal{L} \coloneqq \mathcal{L}_1 \cup \mathcal{L}_2 \cup \{C_1\cup C_2\}
\end{equation*}
over the ground set $C_1\cup C_2$ and show that $\delta_{S_1\cup S_2}(\overline{C})\leq \alpha$ for all $\overline{C} \in \mathcal{L}$.
We have
\begin{equation*}
|\delta_{S_1\cup S_2}(C_1\cup C_2)| = |\delta_{B_1\cup B_2}(C_1\cup C_2)| \leq \alpha
\end{equation*}
because $Q_1$ and $Q_2$ are compatible patterns.
Consider a set $\overline{C}\in \mathcal{L}_1$.
(The case $\overline{C}\in \mathcal{L}_2$ is identical by symmetry.)
We have
\begin{equation}\label{eq:noNewLinksFromS2IntoCB}
\delta_{S_2}(\overline{C}) \subseteq \delta_{S_2}(C_1) = \delta_{B_2}(C_1) = \delta_{B_1}(C_2) \subseteq B_1 \subseteq S_1,
\end{equation}
where the first inclusion follows because each link in $S_2$ has an endpoint in $C_2$ and $C_1 \supseteq \overline{C}$ is disjoint from $C_2$,
the first equality holds because $B_2\subseteq S_2$ are precisely the links in $S_2$ with one endpoint outside of $C_2$,
and the second equality is due to compatibility of $Q_1$ and $Q_2$.
We thus obtain as desired $(\alpha,C_1\cup C_2)$-thinness of $S_1\cup S_2$ because
\begin{equation*}
|\delta_{S_1\cup S_2}(\overline{C})| = |\delta_{S_1}(\overline{C}) \cup \delta_{S_2}(\overline{C})| = |\delta_{S_1}(\overline{C})|\leq \alpha,
\end{equation*}
where the second equality follows from~\eqref{eq:noNewLinksFromS2IntoCB}, and the inequality from $S_1$ being $(\alpha,C_1)$-thin.

\smallskip

We now show that $S_1\cup S_2$ realizes $Q=(C,B,\mathcal{T},\phi,\psi)$ by checking one-by-one the five properties~\labelcref{item:realizerOneEndpointInC,item:realizerBOk,item:realizerTOk,item:realizerPhiOk,item:realizerPsiOk} of \cref{def:realizingLinkSet}.

Property~\ref{item:realizerOneEndpointInC} holds because, for $i\in [2]$, each link in $S_i$ has at least one endpoint in $C_i$; hence, each link in $S_1\cup S_2$ has at least one endpoint in $C_1\cup C_2$.

Property~\ref{item:realizerBOk} of \cref{def:realizingLinkSet} follows from
\begin{equation*}
\delta_{B_1\cup B_2}(C_1\cup C_2) = \delta_{S_1\cup S_2}(C_1\cup C_2),
\end{equation*}
because for a link in $S_1\cup S_2$ to cross $C_1\cup C_2$ it must either cross $C_1$ or $C_2$.

Both property~\ref{item:realizerTOk} and~\ref{item:realizerPhiOk} are direct consequences of \cref{lem:combPartitionOk}.

It remains to check that $\psi$ fulfills property~\ref{item:realizerPsiOk} of \cref{def:realizingLinkSet}.
(See \cref{def:merge_patterns} to recall how the function $\psi$ is defined.)
To this end, let $M\subseteq S_1\cup S_2$ be a connected component of $H[S_1\cup S_2]$ with $M\cap B \neq \emptyset$.
Hence, $M\cap B \in \mathcal{T}$.
To show property~\ref{item:realizerPsiOk} of \cref{def:realizingLinkSet}, we have to prove the following equivalence:
\begin{equation}\label{eq:equivToShowPsiOk}
\psi(M\cap B) = 1 \iff \lca(V(M)) \in V(M).
\end{equation}
We can write $M$ as
\begin{equation*}
M = \left(S^1_1 \cup \ldots \cup S^1_{q_1}\right) \cup \left(S^2_1\cup \ldots \cup S^2_{q_2}\right),
\end{equation*}
where, for $i\in [2]$, we have that $S^i_1,\dots,S^i_{q_i}\subseteq S_i$ are link sets corresponding to connected components of $H[S_i]$.
By \cref{lem:MDecomposesInTParts}, we have
\begin{equation*}
T^i_j\coloneqq S^i_j\cap B_i \neq \emptyset \qquad \forall i\in[2], j\in[q_i],
\end{equation*}
and thus
\begin{equation*}
T^i_j\in \mathcal{T}_i \qquad \forall i\in[2], j\in[q_i].
\end{equation*}
Hence, we can express $\overline{T}\coloneqq M\cap (B_1\cup B_2)$ as in~\eqref{eq:decomp_T_merger} as follows:
\begin{equation*}
M\cap (B_1\cup B_2) = \left(T^1_1 \cup \ldots \cup T^1_{q_1}\right) \cup \left(T^2_1\cup \ldots \cup T^2_{q_2}\right),
\end{equation*}
and by definition of $\phi$ (see \cref{def:combPartition}) and the fact that $\phi(M\cap B)=\lca V(M)$, which follows from property~\ref{item:realizerPhiOk}, we have
\begin{equation}\label{eq:lcaVMIsLcaOfATij}
\lca(V(M)) = \phi(M\cap B) \coloneqq \lca\left(\left\{\phi_1(T^1_1),\ldots, \phi_1(T^1_{q_1}), \phi_2(T^2_1),\ldots, \phi_2(T^2_{q_2})\right\}\right).
\end{equation}

We start by proving the left-to-right implication of~\eqref{eq:equivToShowPsiOk}.
$\psi(M\cap B)=1$ implies by \cref{def:merge_patterns} that there is a $i\in [2]$ and $j\in [q_i]$ with $\phi_i(T^i_j)=\phi(\overline{T}\cap B)$ and $\psi_i(T^i_j)=1$.
This implies as desired
\begin{equation*}
\lca(V(M)) = \phi(\overline{T}\cap B) = \phi_i(T_j^i) \in V(S_j^i) \subseteq V(M),
\end{equation*}
where the first equality holds by property~\ref{item:realizerPhiOk}, the second one by assumption as mentioned above, the relation $\phi_i(T^i_j)\in V(S^i_j)$ holds because $\psi_i(T^i_j)=1$, and the final inclusion is due to $S^i_j\subseteq M$.

Conversely, to show the right-to-left direction of~\eqref{eq:equivToShowPsiOk}, assume $\lca(V(M))\in V(M)$.
Thus, there is $i\in [2]$ and $j\in [q_i]$ with $\lca(V(M))\in V(S^i_j$).
Because $S^i_j\subseteq M$, this implies
\begin{equation*}
\phi_i(S^i_j) = \lca(V(S^i_j)) = \lca(V(M))\in V(S^i_j),
\end{equation*}
where the first equality holds because $S_i$ is a realizer of $Q_i$, and the second follows from the fact that $\lca(V(M))$ is a common ancestor of all of $V(M)$ and therefore also of $V(S^i_j)$; moreover, because $\lca(V(M))\in S^i_j$, the vertex set $V(S^i_j)$ cannot have a lower ancestor than $\lca(V(M))$.
Hence, $\psi_i(T^i_j)=1$, because $S_i$ is a realizer of $Q_i$.
We therefore obtain as desired
\begin{equation*}
\psi(M\cap B) = \psi((M\cap(B_1\cup B_2))\cap B) = \psi(\overline{T}\cap B) = 1,
\end{equation*}
by definition of $\psi$ (see \cref{def:merge_patterns}).

\smallskip

It remains to show the relation~\eqref{eq:piOfMerger}.
We start by expanding the right-hand side of~\eqref{eq:piOfMerger}:
\begin{align*}
\pi(S_1,&C_1) + \pi(S_2,C_2) + c(B_1\cap B_2) + \tilde c\bigg(\bigcup_{u\in U_{Q_1,Q_2}} \delta^-_{\vec{F}_0}(u)\bigg)\\
=\; &\tilde c\Bigg(\bigg(\Drop_{\vec{F}_0}(S_1)\cap \bigcup_{v\in C_1} \delta^-_{\vec{F}_0}(v)\bigg)\cup \bigg(\Drop_{\vec{F}_0}(S_2)\cap \bigcup_{v\in C_2} \delta^-_{\vec{F}_0}(v)\bigg)\Bigg)\\
&- c(S_1) - c(S_2) + c(B_1\cap B_2) + \tilde c\bigg(\bigcup_{u\in U_{Q_1,Q_2}} \delta^-_{\vec{F}_0}(u)\bigg)\\
=\; &\tilde c \Bigg(
 \bigg(\Drop_{\vec{F}_0}(S_1)\cap \bigcup_{v\in C_1} \delta^-_{\vec{F}_0}(v)\bigg)\cup \bigg(\Drop_{\vec{F}_0}(S_2)\cap \bigcup_{v\in C_2} \delta^-_{\vec{F}_0}(v)\bigg)\Bigg)\\
& + \tilde c\bigg(\bigcup_{u\in U_{Q_1,Q_2}} \delta^-_{\vec{F}_0}(u)\bigg) + c(B_1\cap B_2)
- c(S_1\cup S_2),
\end{align*}
where the first equality uses that $C_1$ and $C_2$ are disjoint.
Analogously, the left-hand side of~\eqref{eq:piOfMerger} equals
\begin{equation*}
\pi(S_1\cup S_2,C_1\cup C_2) = \tilde c\left(\Drop_{\vec{F}_0}(S_1\cup S_2)\cap \bigcup_{v\in C_1\cup C_2} \delta^-_{\vec{F}_0}(v)\right)  + c(B_1\cap B_2) - c(S_1\cup S_2).
\end{equation*}
Hence, it remains to show
\begin{multline*}
\left(\Drop_{\vec{F}_0}(S_1\cup S_2)\cap \bigcup_{v\in C_1\cup C_2} \delta^-_{\vec{F}_0}(v)\right)
\setminus \left(
\left(\Drop_{\vec{F}_0}(S_1)\cap \bigcup_{v\in C_1} \delta^-_{\vec{F}_0}(v)\right)\right. \cup \\
\left.\left(\Drop_{\vec{F}_0}(S_2)\cap \bigcup_{v\in C_2} \delta^-_{\vec{F}_0}(v)\right)
\right)
=
\bigcup_{u\in U_{Q_1,Q_2}} \delta^-_{\vec{F}_0}(u).
\end{multline*}
To further expand the $\Drop$-expressions above, we apply \cref{lem:ov_dropOfConnectedS}.
To this end, for $i\in [2]$, let $\mathcal{S}^i\subseteq 2^{S_i}$ be the partition of $S_i$ into the links sets corresponding to the connected components of $H[S_i]$.
Analogously, let $\mathcal{M}\subseteq 2^{S_1\cup S_2}$ be the partition of $S_1\cup S_2$ into the link sets corresponding to the connected components of $H[S_1\cup S_2]$.
\cref{lem:ov_dropOfConnectedS} now implies, for $i\in [2]$,
\begin{align*}
\Drop_{\vec{F}_0}(S_i) \cap \bigcup_{v\in C_i}\delta_{\vec{F}_0}(v) &=
\left(\bigcup_{v\in V(S_i)\cap C_i} \delta_{\vec{F}_0}^-(v)\right)\setminus
\bigcup_{\substack{S^i\in \mathcal{S}^i:\\ \lca(V(S^i))\in V(S^i)}} \delta_{\vec{F}_0}^-(\lca(V(S^i)))
\intertext{and, moreover,}
\Drop_{\vec{F}_0}(S_1\cup S_2) \cap \bigcup_{v\in C_1\cup C_2}\delta_{\vec{F}_0}(v) &=
\left(\bigcup_{v\in V(S_1\cup S_2)\cap (C_1\cup C_2)} \delta_{\vec{F}_0}^-(v)\right)\setminus
\bigcup_{\substack{M\in \mathcal{M}:\\ \lca(V(M))\in V(M)}} \delta_{\vec{F}_0}^-(\lca(V(M))).
\end{align*}

Because $C_1 \cap V(S_2) \subseteq C_1 \cap V(S_1)$ and $C_2 \cap V(S_1) \subseteq C_2 \cap V(S_2)$, we have $V(M) \cap (C_1 \cup C_2) = (V(S_1) \cap C_1) \cup (V(S_2) \cap C_2))$.
Thus, the above relations imply the following, where we denote by $M_{S^i}\in \mathcal{M}$, for $i\in [2]$ and $S^i \in \mathcal{S}^i$, the set in $\mathcal{M}$ containing $S^i$, i.e., $S^i\subseteq M_{S^i}$:
\begin{multline*}
\left(\Drop_{\vec{F}_0}(S_1\cup S_2)\cap \bigcup_{v\in C_1\cup C_2} \delta^-_{\vec{F}_0}(v)\right)
\setminus \left(
\left(\Drop_{\vec{F}_0}(S_1)\cap \bigcup_{v\in C_1} \delta^-_{\vec{F}_0}(v)\right)\right. \cup \\
\left.\left(\Drop_{\vec{F}_0}(S_2)\cap \bigcup_{v\in C_2} \delta^-_{\vec{F}_0}(v)\right)
\right)
=
\bigcup_{u\in U} \delta^-_{\vec{F}_0}(u),
\end{multline*}
where
\begin{equation*}
U \coloneqq \left\{ \lca(V(S^i)) \colon i\in [2], S^i \in \mathcal{S}^i \text{ with } \lca(V(S^i))\in V(S^i)\cap C_i \text{ and } \lca(V(S^i)) \neq \lca(V(M_{S_i}))
\right\}.
\end{equation*}
It thus suffices to show $U = U_{Q_1,Q_2}$.

The following claim shows that each of the sets $S^i$ considered in the definition of $U$ above has a nonempty intersection with $B_i$.
This implies that $S^i\cap B_i\in \mathcal{T}_i$ and will allow us in the following to write the set $U$ in terms of sets in $\mathcal{T}_1$ and $\mathcal{T}_2$.
\begin{claiminproof}\label{claim:relSiIntersectBi}
For any $i\in [2]$ and $S^i\in \mathcal{S}^i$ with $\lca(V(S^i))\neq \lca(V(M_{S^i}))$, we have $S^i\cap B_i\neq \emptyset$.
\end{claiminproof}
\begin{proof}[Proof of claim]
Because $\lca(V(S^i))\neq \lca(V(M_{S^i}))$, we have $M_{S^i} \supsetneq S^i$.
Moreover, as $S^i$ is the link set corresponding to a connected component in $H[S_i]$, and $M_{S^i}$ is the link set corresponding to the connected component of $H[S_1\cup S_2]$ that contains $S^i$, we have that $S^i$ must intersect some link set in $\mathcal{S}^{2-i}$.
Hence, it contains at least one link with one endpoint in $C_i$ and one in $C_{2-i}$.
Such a link must be in $B_i$, as desired.
\end{proof}

By \cref{claim:relSiIntersectBi}, we obtain
\begin{equation}\label{eq:UInTermsOfTSets}
U = \left\{\phi_i(T) \colon i\in [2] \text{ and } T\in \mathcal{T}_i \text{ with }
\psi_i(T) = 1\text{, }
\phi_i(T) \in C_i\text{, and }
\phi_i(T) \neq \lca(V(M_T))
\right\}.
\end{equation}

Finally, \cref{claim:onlyMTIsRelevant} below allows for observing that the description of $U$ in~\eqref{eq:UInTermsOfTSets} corresponds to the definition of $U_{Q_1,Q_2}$.
\begin{claiminproof}\label{claim:onlyMTIsRelevant}
Let $i\in [2]$, and $T\in \mathcal{T}_i$ with $\psi_i(T)=1$ and $\lca(V(M_T))\neq \phi_i(T)$.
Then, $\lca(V(M))\neq \phi_i(T)$ for all $M\in \mathcal{M}$.
\end{claiminproof}
\begin{proof}[Proof of claim]
Let $M\in \mathcal{M}$ with $M\neq M_T$, and, for the sake of deriving a contradiction, assume $\lca(V(M))=\phi_i(T)$.
Note that because $M$ and $M_T$ are link sets corresponding to different connected components of $H[S_1\cup S_2]$, we have that no link from $M$ intersects a link from $M_T$.
This implies in particular $V(M)\cap V(M_T)=\emptyset$.

Let $s_{M_T},t_{M_T}\in V(M_T)$ be the leftmost and rightmost vertex of $V(M_T)$, respectively.
Analogously, let $s_M, t_M\in V(M)$ be the leftmost and rightmost vertex of $V(M)$, respectively.
By \cref{lem:lca_from_leftmost_rightmost} we have that $s_M$ and $t_M$ lie on different sides of of $\lca(V(M))=\phi_i(T)$.
The same holds for $s_{M_T}$ and $t_{M_T}$.
Moreover, because $\psi_i(T)=1$, the vertex $\phi_i(T)$ is contained in $V(M)$.

First observe that both $s_M$ and $t_M$ must lie outside the interval from $s_{M_T}$ to $t_{M_T}$.
Indeed, if this were not the case, say $s_M$ lies within the interval from $s_{M_T}$ to $t_{M_t}$ (the case of $t_M$ lying in this interval is identical by symmetry), then $s_M$ lies strictly between $s_{M_T}$ and $\phi_i(T)$, whereas $t_M$ does not lie strictly between these points as it is to the right of $\phi_i(T)$.
However, as both $H[M]$ and $H[M_T]$ are connected, this implies that there must exist a link in $M$ intersecting a link in $M_T$, which contradicts that $M$ and $M_T$ are link sets corresponding to different connected components in $H[S_1\cup S_2]$.

Hence, $s_M$ lies to the left of $s_{M_T}$, and $t_M$ to the right of $t_{M_T}$.
As the least common ancestor of a vertex set depends solely on the leftmost and rightmost vertex, as guaranteed by \cref{lem:lca_from_leftmost_rightmost}, we have that $\lca(V(M))$ is an ancestor of $\lca(V(M_T))$.
However, as $\phi_i(T)\in V(M_T)$ (because $\psi_i(T)=1$) and $\lca(V(M_T))\neq \phi_i(T)$, this implies that $\lca(V(M_T))$ is a strict ancestor of $\phi_i(T)$.
Because $\lca(V(M))$ is an ancestor of $\lca(V(M_T))$, it is therefore also a strict ancestor of $\phi_i(T)$, and we thus obtain $\lca(V(M_T))\neq \phi_i(T)$, which contradicts the initial assumption.
\end{proof}

Indeed, we now obtain
\begin{align*}
U &= \bigg\{\phi_i(T) \colon i\in [2] \text{ and } T\in \mathcal{T}_i \text{ with }
\psi_i(T) = 1,
\phi_i(T) \in C_i,
\phi_i(T) \neq \lca(V(M_T))
\bigg\}\\
&= \left\{\phi_i(T) \colon i\in [2] \text{ and } T\in \mathcal{T}_i \text{ with }
\psi_i(T) = 1,
\phi_i(T) \in C_i,
\phi_i(T) \not\in \bigcup_{\overline{T}\in \overline{\mathcal{T}}}\left\{\lca(V(\overline{T}))\right\}
\right\}\\
&= \bigg\{\phi_i(T) \colon i\in [2] \text{ and } T\in \mathcal{T}_i \text{ with }
\psi_i(T) = 1,
\phi_i(T) \in C_i
\bigg\} \setminus 
\bigcup_{\overline{T}\in \overline{\mathcal{T}}} \left\{\overline{\phi}_i(\overline{T})\right\}\\
&= U_{Q_1,Q_2},
\end{align*}
because of the following.
The first equality is due to~\eqref{eq:UInTermsOfTSets},
the second one uses \cref{claim:onlyMTIsRelevant} and the fact that, for $i\in [2]$ and $T\in \mathcal{T}_i$, we have $M_T\cap (B_1\cap B_2)\supseteq T\cap B_i \neq \emptyset$ and thus $M_T\cap (B_1\cap B_2)\in \overline{\mathcal{T}}$.
The third equation follows from $\overline{\phi}_i(\overline{T})=\lca(V(\overline{T}))$, and the last one by the definition of $U_{Q_1,Q_2}$.
\end{proof}

Leveraging \cref{lem:valueOfMergedSolution}, we can now express how, for a pattern $Q\in \mathcal{Q}$, the optimal value $\pi^*_Q$ can be expressed in terms of patterns with respect to smaller $2$-cuts.
This is the result that will be exploited in the propagation step of the dynamic program.
\begin{lemma}\label{lem:dpPropagationRel}
Let $Q=(C,B,\mathcal{T},\phi,\psi)\in \mathcal{Q}$ be a realizable pattern with $|C|\geq 2$.
Then
\begin{multline}\label{eq:dpPropagationRel}
\pi_Q^* = \max\Bigg\{
\pi_{Q_1}^* + \pi_{Q_2}^* + c(B_1\cap B_2) + \tilde c\left(\bigcup_{u\in U_{Q_1,Q_2}} \delta^-_{\vec{F}}(u)\right)\\
\colon
\text{$Q_1=(C_1,B_1,\mathcal{T}_1,\phi_1,\psi_1),Q_2=(C_2,B_2,\mathcal{T}_2,\phi_2,\psi_2)\in \mathcal{Q}$ compatible}\\
\text{and realizable patterns with merger $Q$}
\Bigg\}.
\end{multline}
Moreover, if $Q_1,Q_2\in \mathcal{Q}$ are patterns maximizing the right-hand side above, then, for any maximizing realizers $S_{Q_1}$ and $S_{Q_2}$ of $Q_1$ and $Q_2$, respectively, $S_{Q_1}\cup S_{Q_2}$ is a $Q$-maximizing realizer.
\end{lemma}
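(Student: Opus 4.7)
The inequality ``$\geq$'' is essentially immediate from \cref{lem:valueOfMergedSolution}: given any compatible realizable pair $Q_1, Q_2$ with merger $Q$, we pick maximizing realizers $S_{Q_1}, S_{Q_2}$, and \cref{lem:valueOfMergedSolution} certifies that $S_{Q_1} \cup S_{Q_2}$ is an $(\alpha, C)$-thin realizer of $Q$ whose $\pi$-value equals the expression on the right-hand side of~\eqref{eq:dpPropagationRel}. So the whole task is to prove ``$\leq$'', i.e., to find, for any $Q$-maximizing realizer $S^*$ of $Q$, a compatible realizable pair $(Q_1, Q_2)$ merging to $Q$ whose values sum (with the correction terms) to at least $\pi(S^*, C) = \pi_Q^*$.

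\textbf{Decomposing $S^*$ along the laminar family.} Since $S^*$ is $(\alpha, C)$-thin, \cref{def:alphaC-thin} gives an inclusion-wise maximal laminar family $\mathcal{L} \subseteq \mathcal{C}_G$ over ground set $C$ with $|\delta_{S^*}(\overline{C})| \leq \alpha$ for all $\overline{C} \in \mathcal{L}$. Because $|C| \geq 2$, \cref{lem:maximal_laminar} applied to $C \in \mathcal{L}$ yields $C_1, C_2 \in \mathcal{L}$ with $C = C_1 \cupp C_2$; these are neighboring $2$-cuts in $\mathcal{C}_G$. Set
\[
S_i \coloneqq \bigl\{\link{l} \in S^* : V(\link{l}) \cap C_i \neq \emptyset\bigr\} \quad \text{for } i \in [2],
\]
so that $S_1 \cup S_2 = S^*$ (every link in $S^*$ has an endpoint in $C$), and $S_1 \cap S_2$ consists precisely of the links of $S^*$ with one endpoint in each of $C_1, C_2$. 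The family $\mathcal{L}_i \coloneqq \{\overline{C} \in \mathcal{L} : \overline{C} \subseteq C_i\}$ is a maximal laminar subfamily of $\mathcal{C}_G$ over ground set $C_i$, and since any $\overline{C} \in \mathcal{L}_i$ fulfills $\delta_{S^*}(\overline{C}) \subseteq S_i$ (an endpoint in $\overline{C} \subseteq C_i$ forces the link into $S_i$), we have $\delta_{S_i}(\overline{C}) = \delta_{S^*}(\overline{C})$, so $S_i$ is $(\alpha, C_i)$-thin.

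\textbf{Compatibility and merger.} Let $Q_i = (C_i, B_i, \mathcal{T}_i, \phi_i, \psi_i)$ be the unique $C_i$-pattern realized by $S_i$. For compatibility (\cref{def:comp_patterns}): (i) $C_1, C_2$ are neighboring by construction; (ii) a direct unfolding of definitions shows $\delta_{B_1}(C_2) = S_1 \cap S_2 = \delta_{B_2}(C_1)$; (iii) $\delta_{B_1 \cup B_2}(C_1 \cup C_2) = \delta_{S^*}(C)$, which has cardinality at most $\alpha$ because $C \in \mathcal{L}$. Now \cref{lem:valueOfMergedSolution} applied to $S_1, S_2$ shows $S_1 \cup S_2 = S^*$ realizes the merger of $Q_1, Q_2$; since the pattern realized by a link set (for a fixed $C$) is uniquely determined, this merger equals $Q$.

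\textbf{Combining the estimates.} By \cref{lem:valueOfMergedSolution},
\[
\pi_Q^* = \pi(S^*, C) = \pi(S_1, C_1) + \pi(S_2, C_2) + c(B_1 \cap B_2) + \tilde c\Bigl(\bigcup_{u \in U_{Q_1, Q_2}} \delta^-_{\vec{F}_0}(u)\Bigr),
\]
and using $\pi(S_i, C_i) \leq \pi_{Q_i}^*$ yields the ``$\leq$'' direction of~\eqref{eq:dpPropagationRel}. Combined with the ``$\geq$'' direction this gives equality, and simultaneously proves the ``moreover'' claim: in the ``$\leq$'' bound to be tight, the $S_i$ must be $Q_i$-maximizing; conversely, starting from any pair of $Q_i$-maximizing realizers, \cref{lem:valueOfMergedSolution} delivers a $Q$-realizer of value $\pi_Q^*$. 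The principal difficulty in the argument is the decomposition step: one must pick the splitting $C = C_1 \cupp C_2$ using a laminar family that simultaneously certifies $(\alpha, C)$-thinness of $S^*$ and, after restriction, certifies $(\alpha, C_i)$-thinness of each $S_i$. Invoking the maximality clause of \cref{lem:maximal_laminar} is what makes this possible.
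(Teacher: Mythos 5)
Your proof is correct and follows essentially the same route as the paper: the ``$\geq$'' direction is a direct consequence of \cref{lem:valueOfMergedSolution}, and the ``$\leq$'' direction splits $C$ into the two children $C_1, C_2$ of a certifying maximal laminar family (via \cref{lem:maximal_laminar}), defines $S_i$ as the links of the maximizing realizer with an endpoint in $C_i$, checks $(\alpha,C_i)$-thinness and compatibility, and invokes \cref{lem:valueOfMergedSolution} once more to obtain the identity and hence the ``moreover'' claim. Your final closing remark correctly isolates the key point — that the maximality clause of \cref{lem:maximal_laminar} is what makes the decomposition step work.
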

\begin{proof}
We start by observing that the left-hand side of \eqref{eq:dpPropagationRel} is no more than its right-hand side.
This is a consequence of \cref{lem:valueOfMergedSolution} due to the following.
Let $Q_1=(C_1,B_1,\mathcal{T}_1,\phi_1,\psi_1),Q_2=(C_2,B_2,\mathcal{T}_2,\phi_2,\psi_2)$ be a pair of compatible and realizable patterns with merger $Q$.
Let $S_{Q_1}^*$ and $S_{Q_2}^*$ be maximizing realizers for $Q_1$ and $Q_2$, respectively.
By \cref{lem:valueOfMergedSolution} we have that $S^*_{Q_1}\cup S^*_{Q_2}$ is a realizer of $Q$.
This leads to the desired relation because
\begin{align*}
\pi^*_{Q} &\geq \pi(S^*_{Q_1}\cup S^*_{Q_2},C)\\
 &= \pi(S^*_{Q_1}, C_1) + \pi(S^*_{Q_2}, C_2) + c(B_1\cap B_2) + \tilde c\left(\bigcup_{u\in U_{Q_1,Q_2}} \delta^-_{\vec{F}}(u)\right)\\
 &= \pi^*_{Q_1} + \pi^*_{Q_2} + c(B_1\cap B_2) + \tilde c\left(\bigcup_{u\in U_{Q_1,Q_2}} \delta^-_{\vec{F}}(u)\right),
\end{align*}
where the first inequality holds because $S^*_{Q_1}\cup S^*_{Q_2}$ is a $Q$-realizer and $\pi_Q^*$ is the value of a maximizing $Q$-realizer, the first equality is due to \cref{lem:valueOfMergedSolution}, and the last one holds because $S_{Q_1}^*$ and $S_{Q_2}^*$ are maximizing realizers for $Q_1$ and $Q_2$, respectively.

To show that the left-hand side of \eqref{eq:dpPropagationRel} is no more than its right-hand side, let $S_Q^*\subseteq L$ be a maximizing realizer of $Q$, and let $\mathcal{L}\subseteq 2^C$ be a maximal laminar subfamily of $C$ that certifies $(\alpha,C)$-thinness of $S^*_Q$, i.e., $|\delta_{S^*_Q}(\overline{C})|\leq \alpha$ for all $\overline{C} \in \mathcal{L}$.
By \cref{lem:maximal_laminar}, $C$ has exactly two children in the laminar family $\Lscr$ and we denote these children by $C_1,C_2\in \mathcal{L}$.
Let
\begin{equation*}
S_i \coloneqq \bigcup_{v\in C_i} \delta_{S^*_{Q}}(v) \qquad \forall i\in [2].
\end{equation*}
For $i\in [2]$, we denote by $Q_i=(C_i,B_i,\mathcal{T}_i,\phi_i,\psi_i)$ the $C_i$-pattern that is realized by $S_i$.
(Observe that $S_i$ is indeed $(\alpha,C_i)$-thin, which is certified by the laminar subfamily of $\mathcal{L}$ consisting of $C_i$ and all of its descendants in $\mathcal{L}$.)
Note that $Q_1$ and $Q_2$ are compatible.
Moreover, \cref{lem:valueOfMergedSolution} implies that the merger of $Q_1$ and $Q_2$ is $Q$, because $S^*_Q=S_1\cup S_2$ realizes $Q$.
Finally, we obtain as desired
\begin{align*}
\pi_Q^* &= \pi(S_Q^*,C)\\
  &= \pi(S_1,C_1) + \pi(S_2,C_2) + c(B_1\cap B_2) + \tilde c\left(\bigcup_{u\in U_{Q_1,Q_2}} \delta^-_{\vec{F}}(u)\right)\\
  &\leq \pi^*_{Q_1} + \pi^*_{Q_2} + c(B_1\cap B_2) + \tilde c\left(\bigcup_{u\in U_{Q_1,Q_2}} \delta^-_{\vec{F}}(u)\right),
\end{align*}
where the first equality holds because $S_Q^*$ is a $Q$-maximizing realizer, the second one follows from \cref{lem:valueOfMergedSolution}, and the inequality by the definition of $\pi^*_{Q_1}$ and $\pi^*_{Q_2}$.

It remains to show the last part of \cref{lem:dpPropagationRel}.
To this end, let $Q_1=(C_1,B_1,\mathcal{T}_1,\phi_1,\psi_1),Q_2=(C_2,B_2,\mathcal{T}_2,\phi_2,\psi_2)\in \mathcal{Q}$ be patterns maximizing the right-hand side of~\eqref{eq:dpPropagationRel}, and let $S_{Q_1}$ and $S_{Q_2}$ be maximizing realizers of $Q_1$ and $Q_2$, respectively.
By \cref{lem:valueOfMergedSolution}, we have that $S_{Q_1}\cup S_{Q_2}$ is a realizers of $Q$ and
\begin{align*}
\pi(S_{Q_1}\cup S_{Q_2}, C) &= \pi(S_{Q_1},C_1) + \pi(S_{Q_2},C_2) + c(B_1\cap B_2) + \tilde c\left(\bigcup_{u\in U_{Q_1,Q_2}}\delta^-_{\vec{F}}(u)\right)\\
 &= \pi^*_Q,
\end{align*}
where the second equality follows from~\eqref{eq:dpPropagationRel}, the fact that $Q_1,Q_2\in \mathcal{Q}$ maximize the right-hand side of~\eqref{eq:dpPropagationRel}, and that $S_{Q_1}$ and $S_{Q_2}$ are maximizing realizers of $Q_1$ and $Q_2$, respectively.
Hence, $S_{Q_1}\cup S_{Q_2}$ is a $Q$-maximizing realizer as claimed.
\end{proof}
\cref{lem:dpPropagationRel}, and the fact that we can compute maximizing realizers for patterns corresponding to singleton cuts through enumeration, leads to a natural way to compute the optimal values $\pi^*_Q$ for all realizable patterns $Q\in \mathcal{Q}$ together with maximizing realizers $S_Q$, by considering patterns $Q=(C,B,\mathcal{T},\phi,\psi)$ in order of increasing size of their $2$-cut $C$.
\cref{alg:dpMaxRealizers} provides a pseudocode of the dynamic programming procedure we employ.
For a realizable pattern $Q=(C,B,\mathcal{T},\phi,\psi)$, we denote in the algorithm by $\pi_Q$ the value of the computed set $S_Q$.
As we will show, the algorithm indeed computes maximizing realizers $S_Q$, which then implies $\pi_Q=\pi_Q^*$.

In the inner for-loop of the algorithm, we may consider patterns $Q\in \mathcal{Q}$ for which there is no compatible pair $Q_1,Q_2\in Q$ that realizes $Q$.
Such patterns are not realizable and do not need to be considered further.
One formal way to handle this case is by setting $\pi_Q=-\infty$ in \eqref{eq:dpPropagationRel}, whenever $Q$ does not admit a compatible pair $Q_1,Q_2$ with merger $Q$.
This will make sure that the algorithm assigns a value of $\pi_Q=-\infty$ to all patterns $Q\in \mathcal{Q}$ that are not realizable (and finite values $\pi_Q$ to all realizable patterns $Q$).

\begin{algorithm2e}[H]
\For{$Q=(C,B,\mathcal{T},\phi,\psi)\in \mathcal{Q}$ with $|C|=1$}{
Determine maximizing realizer $S_Q$ (and corresponding value $\pi_Q$) by checking all link sets of size at most $\alpha$.
If $Q$ is not realizable, set $\pi_Q=-\infty$ and $S_Q=\emptyset$.
}

\For{$i=2$ to $|V|-1$}{
  \For{$Q=(C,B,\mathcal{T},\phi,\psi)\in \mathcal{Q}$ with $|C|=i$}{
     Among all compatible $Q_1=(C_1,B_1,\mathcal{T}_1,\phi_1,\psi_1), Q_2=(C_2,B_2,\mathcal{T}_2,\phi_2,\psi_2)\in \mathcal{Q}$ with merger $Q$, determine a pair that maximizes
\begin{equation*}
\pi_{Q_1} + \pi_{Q_2} + c(B_1\cup B_2) + \tilde c\left(\bigcup_{u\in U_{Q_1,Q_2}} \delta^-_{\vec{F}_0}(u)\right).
\end{equation*}\

Set $\pi_{Q} = \pi_{Q_1} + \pi_{Q_2} + c(B_1\cup B_2) + \tilde c\left(\bigcup_{u\in U_{Q_1,Q_2}} \delta^-_{\vec{F}_0}(u)\right)$.

Set $S_{Q} = S_{Q_1} \cup S_{Q_2}$.
  }
} 

\caption{Dynamic program to compute maximizing realizers}\label{alg:dpMaxRealizers}
\end{algorithm2e}

Finally, the following statement shows that \cref{alg:dpMaxRealizers} correctly and efficiently computes the desired quantities.
\begin{theorem}\label{thm:dpIsEfficientAndCorrect}
\cref{alg:dpMaxRealizers} is a polynomial-time procedure that computes, for each realizable $Q\in \mathcal{Q}$, the optimal value $\pi_Q=\pi^*_Q$ and a $Q$-maximizing realizer $S_Q$.
Moreover, for each $Q\in \mathcal{Q}$ that is not realizable, we have $\pi_Q=-\infty$.
\end{theorem}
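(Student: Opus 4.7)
The plan is to prove both the polynomial running time and correctness of Algorithm \ref{alg:dpMaxRealizers} by induction on the cardinality $|C|$ of the $2$-cut associated with each pattern, invoking the structural results already established in Lemmas \ref{lem:combPartitionOk}, \ref{lem:valueOfMergedSolution}, and \ref{lem:dpPropagationRel}.

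First I would argue that $|\mathcal{Q}|$ is polynomially bounded. Since $\alpha$ is a fixed constant, the number of $2$-cuts of $G$ is $O(|V|^2)$, the number of link subsets $B\subseteq \delta_L(C)$ with $|B|\le \alpha$ is $O(|L|^\alpha)$, the number of partitions $\mathcal{T}$ of $B$ is at most the Bell number $B_\alpha = O(1)$, the number of maps $\phi\colon \mathcal{T}\to C$ is at most $|V|^\alpha$, and the number of maps $\psi\colon \mathcal{T}\to\{0,1\}$ is at most $2^\alpha$. Consequently the outer loop runs polynomially often, and the inner loop examines at most $|\mathcal{Q}|^2$ compatible pairs; checking compatibility (Definition \ref{def:comp_patterns}), assembling the merger (Definitions \ref{def:combPartition} and \ref{def:merge_patterns}), and evaluating the correction term $\tilde c\bigl(\bigcup_{u\in U_{Q_1,Q_2}}\delta^-_{\vec{F}_0}(u)\bigr)$ are routine polynomial-time computations, which will give the polynomial running time.

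For correctness in the base case $|C|=1$, I would observe that Definition \ref{def:alphaC-thin} forces any $(\alpha,C)$-thin realizer $S$ to satisfy $|S|\le\alpha$, because every link of $S$ is incident to the unique vertex $v$ of $C$ and thus crosses $C$, so the condition $|\delta_S(C)|\le\alpha$ specializes to $|S|\le\alpha$. Enumerating all such subsets $S\subseteq\delta_L(v)$ and computing the pattern each $S$ realizes therefore decides realizability, computes $\pi^*_Q$ together with a maximizing realizer for every realizable $C$-pattern, and leaves $\pi_Q=-\infty$ for the non-realizable ones, as required.

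For the inductive step with $|C|\ge 2$, Lemma \ref{lem:dpPropagationRel} expresses $\pi^*_Q$ as the maximum over all realizable compatible pairs $(Q_1,Q_2)$ with merger $Q$ of the quantity $\pi^*_{Q_1}+\pi^*_{Q_2}+c(B_1\cap B_2)+\tilde c\bigl(\bigcup_{u\in U_{Q_1,Q_2}}\delta^-_{\vec{F}_0}(u)\bigr)$. By the inductive hypothesis, $\pi_{Q_i}=\pi^*_{Q_i}$ with a maximizing realizer $S_{Q_i}$ whenever $Q_i$ is realizable, and $\pi_{Q_i}=-\infty$ otherwise; so the maximum the algorithm computes over all compatible $(Q_1,Q_2)\in\mathcal{Q}^2$ with merger $Q$ coincides with the maximum in Lemma \ref{lem:dpPropagationRel}, yielding $\pi_Q=\pi^*_Q$, and the concluding statement of Lemma \ref{lem:dpPropagationRel} certifies that $S_Q=S_{Q_1}\cup S_{Q_2}$ is a $Q$-maximizing realizer. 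For non-realizable $Q$, Lemma \ref{lem:valueOfMergedSolution} implies that no pair of realizable patterns can have merger $Q$ (otherwise the union of their realizers would realize $Q$), so every admissible term in the maximum involves a $Q_i$ with $\pi_{Q_i}=-\infty$ by induction, whence $\pi_Q=-\infty$. The main bookkeeping hurdle will be verifying in detail that the merger data $(C,B,\mathcal{T},\phi,\psi)$ and the vertex set $U_{Q_1,Q_2}$ can be assembled in polynomial time from the tabulated values for $Q_1$ and $Q_2$, but this reduces to routine enumeration once Definitions \ref{def:combPartition} and \ref{def:merge_patterns} are unfolded.
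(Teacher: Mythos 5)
Your proposal is correct and follows essentially the same route as the paper's own proof: induction on $|C|$, exhaustive enumeration in the base case $|C|=1$, and in the inductive step an appeal to \cref{lem:dpPropagationRel} for realizable patterns and to \cref{lem:valueOfMergedSolution} to force $\pi_Q=-\infty$ for non-realizable ones. The only differences are cosmetic: you spell out the polynomial bound on $|\mathcal{Q}|$ and the inference that $(\alpha,C)$-thinness forces $|S|\le\alpha$ when $|C|=1$, whereas the paper states these more tersely (and the paper adds a footnote observing that every pattern with $|C|\ge 2$ admits at least one compatible pair with it as merger, so the inner for-loop always has something to maximize over).
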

\begin{proof}
Efficiency of the algorithm follows from the fact that $|Q|$ is polynomially bounded because $\alpha=O(1)$.

We show by induction on the size $|C|$ of the patterns $Q=(C,B,\mathcal{T},\phi,\psi)\in \mathcal{Q}$ that \cref{alg:dpMaxRealizers} computes the correct values $\pi_Q$ and sets $S_Q$.
For a pattern $Q$ with $|C|=1$, this holds because the first for-loop of the algorithm exhaustively checks all possible realizers of $Q$.
Hence, assume that the values $\pi_Q$ and sets $S_Q$ have the claimed properties for any pattern $Q=(C,B,\mathcal{T},\phi,\psi)$ with $|C|\leq k$, for some integer $k\in \mathbb{Z}_{\geq 1}$.
In particular, for any realizable pattern $Q=(C,B,\mathcal{T},\phi,\psi)$ with $|C|\leq k$, we have $\pi_Q=\pi_Q^*$.

Let $Q=(C,B,\mathcal{T},\phi,\psi)$ be a pattern with $|C|=k+1$.
First consider the case where $Q$ is not realizable.
Hence, for any compatible pair of patterns $Q_1,Q_2\in \mathcal{Q}$ with merger $Q$, we must have that at least one of $Q_1$ or $Q_2$ is not realizable.
Indeed, if both $Q_1$ and $Q_2$ were realizable, then, by \cref{lem:valueOfMergedSolution}, also $Q$ would be realizable.
Thus, the inner for-loop of \cref{alg:dpMaxRealizers} sets $\pi_Q=-\infty$ for any such non-realizable $Q$, as desired.\footnote{Note that for any pattern $Q=(C,B,\mathcal{T},\phi,\psi)\in \mathcal{Q}$ with $|C|\geq 2$, there is a pair of compatible patterns $Q_1=(C_1,B_1,\mathcal{T}_1,\phi_1,\psi_1)\in \mathcal{Q}$ and $Q_2=(C_2,B_2,\mathcal{T}_2,\phi_2,\psi_2)\in \mathcal{Q}$ with merger $Q$, even if $Q$ is not realizable.
Such a compatible pair can be obtained by choosing an arbitrary non-trivial partition $C_1, C_2$ of $C$ into neighboring 2-cuts and by setting $B_i \coloneqq \bigcup_{v\in C_i}\delta_B(v)$ for $i\in \{1,2\}$ (and choosing $\mathcal{T}_i, \phi_i, \psi_i$ arbitrarily such that $Q_1,Q_2\in Q$). 
}

Hence, from now on, let $Q=(C,B,\mathcal{T},\phi,\psi)$ be a realizable pattern with $|C|=k+1$.
By the induction hypothesis, for every pair $Q_1,Q_2$ of compatible and realizable patterns with merger $Q$, we have $\pi_{Q_1} +\pi_{Q_2} = \pi^*_{Q_1} +\pi^*_{Q_2} $.
Therefore, \cref{lem:dpPropagationRel} implies that $\pi_Q = \pi^*_Q$ and $S_Q$ is a $Q$-maximizing realizer.
\end{proof}

We recall that being able to efficiently compute maximizing realizers $S_Q$ for all realizable $Q\in \mathcal{Q}$ readily allows for identifying an $\alpha$-thin component $K\in \mathfrak{K}$ maximizing $\tilde c(\Drop_{\vec{F}_0}(K)) - c(K)$, as desired.
This can be achieved by considering, among all patterns $Q=(C,B,\mathcal{T},\phi,\psi)$ with $C=V\setminus \{r\}$, the one with highest value $\pi_Q^*$ and returning $K=S_Q$.
Thus, this finishes the proof of \cref{thm:dp-main}.
 \section{The Local Search Algorithm}\label{sec:local-search}

In this section we prove our main result, \cref{thm:main}. 
To this end, we use the local search technique from \cite{traub_2022_local} to improve on the relative greedy algorithm (\cref{algo:relative_greedy}).
All arguments in this section are analogous to \cite{traub_2022_local}.

In contrast to the relative greedy algorithm, where we removed only links from the initial directed WRAP solution, we will now also remove links added in previous iterations. 
In order to be able to apply the decomposition theorem, which allows us to improve only on directed solutions, we reinterpret undirected links $\{u,v\}$ as the combination of the two directed links $(u,v)$ and $(v,u)$.
This leads to a local search procedure where we maintain a WRAP solution $F$ together with a non-shortenable directed WRAP solution $\vec{F}$.

More precisely, our local search algorithm works as follows.
We start by choosing an arbitrary WRAP solution $F$.
Next construct the directed WRAP solution $\vec{F}$ from $F$ by first replacing every link $\{u,v\} \in F$ by its two shadows $(u,v)$ and $(v,u)$ and then applying \cref{lem:shortening_efficiently}, i.e., shortening (or removing) directed links so that $\vec{F}$ is non-shortenable.
For a link $\link{f}= \{u,v\}\in F$, we call the (up to) two directed links in $\vec{F}$ that arose from shortening $(u,v)$ and $(v,u)$, the witness set $W_{\link{f}}$ of the link $\link{f}= \{u,v\}$.
See \cref{fig:witness_sets}.

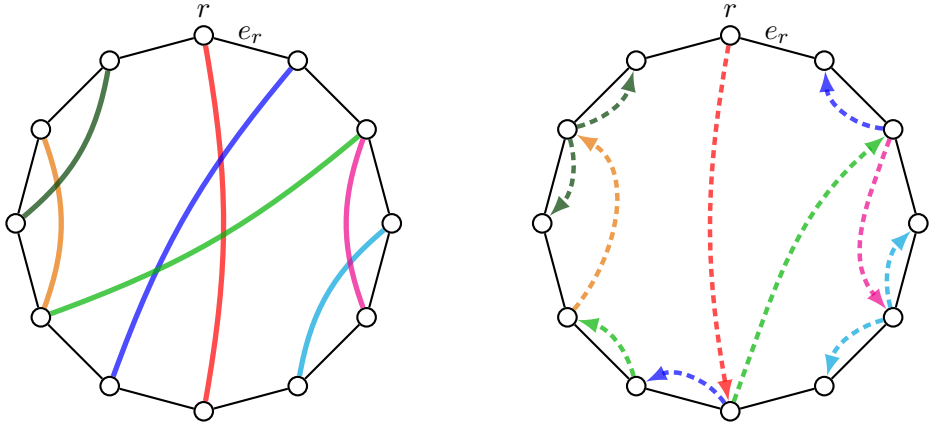
\begin{figure}[!ht]
\begin{center}
\begin{tikzpicture}[scale=1,
ns/.style={thick,draw=black,fill=white,circle,minimum size=7,inner sep=2pt},
es/.style={thick},
lks/.style={line width=1.7pt, blue, densely dashed},
dlks/.style={lks, -latex},
ts/.style={every node/.append style={font=\scriptsize}}
]

\def\rad{2.5}
\def\num{12}

\colorlet{Gcol}{green!50!black}
\colorlet{dacol}{cyan}
\colorlet{Kcol}{orange}

\coordinate (c) at (0,0);

\begin{scope}[shift={(-3.5,0)}]

\begin{scope}[every node/.style={ns}]
\foreach \i in {1,...,\num} {
  \pgfmathsetmacro\r{90+(\i-1)*360/\num}
  \node (\i) at (\r:\rad) {};
}
\end{scope}

\path (\num) to node[above=-2pt] {$e_r$} (1);
\node[above=3pt] at (1) {$r$};

\begin{scope}[es]
\foreach \i in {1,...,\num} {
\pgfmathtruncatemacro\j{1+mod(\i,\num)}
\draw (\i) -- (\j);
}
\end{scope}

\begin{scope}[lks,solid,line width=2pt,relative,black,opacity=0.7]
\draw[red] (1) to[bend left=10] (7);
\draw[blue] (6) to[bend left=10] (12);
\draw[green!70!black] (5) to[bend right=10] (11);
\draw[orange!90!black] (3) to[out=20,in=160] (5);
\draw[green!25!black] (2) to[out=20,in=160] (4);
\draw[cyan!90!black] (8) to[out=20,in=160] (10);
\draw[magenta] (9) to[out=20,in=160] (11);
\end{scope}

\end{scope}

\begin{scope}[shift={(3.5,0)}]

\begin{scope}[every node/.style={ns}]
\foreach \i in {1,...,\num} {
  \pgfmathsetmacro\r{90+(\i-1)*360/\num}
  \node (\i) at (\r:\rad) {};
}
\end{scope}

\path (\num) to node[above=-2pt] {$e_r$} (1);
\node[above=3pt] at (1) {$r$};

\begin{scope}[es]
\foreach \i in {1,...,\num} {
\pgfmathtruncatemacro\j{1+mod(\i,\num)}
\draw (\i) -- (\j);
}
\end{scope}

\begin{scope}[dlks,opacity=0.7]
\draw[red] (1) to[out=-100, in=100] (7);
\draw[blue] (7) to[bend right=40] (6);
\draw[green!70!black] (7) to[out=70,in=-140] (11);
\draw[green!70!black] (6) to[bend right=30] (5);
\draw[orange!90!black] (5) to[out=50,in=-30] (3);
\draw[green!25!black] (3) to[bend right=30] (2);
\draw[green!25!black] (3) to[bend left=30] (4);
\draw[blue] (11) to[bend left=40] (12);
\draw[magenta] (11) to[out=-110, in=130] (9);
\draw[cyan!90!black] (9) to[bend left=30] (10);
\draw[cyan!90!black] (9) to[bend right=30] (8);
\end{scope}

\end{scope}

\end{tikzpicture}
 \end{center}
\caption{ 
An example of a WRAP solution $F$ (left) and a resulting non-shortenable directed WRAP solution $\vec{F}$, which is the disjoint union of the witness sets $W_{\link{f}}$ for $\link{f}\in F$.
For every link $\link{f}\in F$, the witness set $W_{\link{f}} \subseteq \vec{F}$ is shown in the same color as the link $\link{f}$.
\label{fig:witness_sets}}
\end{figure}

If the witness set of a link $\link{f}\in F$ is empty, we remove it from $F$.
We claim that $F$ remains a WRAP solution because $\vec{F}$ remains a directed WRAP solution.
Indeed, for every directed link $\vec{\link{l}}\in \vec{F}$, there is a link $\link{f}\in F$ with $\vec{\link{l}} \in W_{\link{f}}$.
Thus because the witness set $W_{\link{f}}$ contains only shadows of the link $\link{f}$, all 2-cuts covered by $\vec{\link{l}}\in \vec{F}$ are also covered by $\link{f}\in F$. 
In other words, one can think of a link $\link{f}\in F$ being responsible for the cuts in $\mathcal{C}_G$ for which the directed links in its witness set $W_{\link{f}}$ are responsible.

Then we iteratively apply local search steps as follows.
We first carefully choose a $4\lceil \sfrac{4}{\epsilon}\rceil$-thin component $K\subseteq L$ and add it to $F$.
Next, we remove $\Drop_{\vec{F}}(K)$ from $\vec{F}$.
We also remove the links in $\Drop_{\vec{F}}(K)$ from the witness sets of links in $F$; then we again have $\vec{F} = \bigcup_{\link{f}\in F} W_{\link{f}}$.
For every link $\link{f} = \{u,v\}$ in the component $K$, we let $W_{\link{f}} \coloneqq \{(u,v),(v,u)\}$ and add the links $(u,v)$ and $(v,u)$ to $\vec{F}$.
Then $\vec{F}$ is again a directed WRAP solution by the definition of the link set $\Drop_{\vec{F}}(K)$ we removed from $\vec{F}$ (\cref{def:Drop}).
Next, we use \cref{lem:shortening_efficiently} to delete and shorten links in $\vec{F}$ to make sure that $\vec{F}$ becomes a non-shortenable directed WRAP solution.
If the witness set of a link $\link{f}\in F$ became empty, we remove it from $F$.
See \cref{fig:local_search} for an example.

\begin{figure}[!ht]
\begin{center}
\begin{tikzpicture}[scale=0.98,
ns/.style={thick,draw=black,fill=white,circle,minimum size=7,inner sep=2pt},
es/.style={thick},
lks/.style={line width=1.7pt, blue, densely dashed},
dlks/.style={lks, -latex},
ts/.style={every node/.append style={font=\scriptsize}}
]

\def\rad{2}
\def\num{12}

\node () at (0,8.2) {$F$ and $\vec{F}$ before local search step};
\node () at (5.5,8.1) {$K$ and $\Drop_{\vec{F}}(K)$};
\node () at (11,8.2) {$F$ and $\vec{F}$ after local search step};

\begin{scope}[shift={(0,5)}]
\begin{scope}[every node/.style={ns}]
\foreach \i in {1,...,\num} {
  \pgfmathsetmacro\r{90+(\i-1)*360/\num}
  \node (\i) at (\r:\rad) {};
}
\end{scope}
\path (\num) to node[above=-2pt] {$e_r$} (1);
\node[above=3pt] at (1) {$r$};
\begin{scope}[es]
\foreach \i in {1,...,\num} {
\pgfmathtruncatemacro\j{1+mod(\i,\num)}
\draw (\i) -- (\j);
}
\end{scope}
\begin{scope}[lks,solid,line width=2pt,relative,black,opacity=0.7]
\draw[red!70!black] (1) to[bend left=10] (7);
\draw[blue] (6) to[bend left=10] (12);
\draw[green!70!black] (5) to[bend right=10] (11);
\draw[orange!90!black] (3) to[out=20,in=160] (5);
\draw[green!25!black] (2) to[out=20,in=160] (4);
\draw[cyan!90!black] (8) to[out=20,in=160] (10);
\draw[magenta] (9) to[bend left] (11);
\end{scope}
\end{scope}

\begin{scope}[shift={(0,0)}]
\begin{scope}[every node/.style={ns}]
\foreach \i in {1,...,\num} {
  \pgfmathsetmacro\r{90+(\i-1)*360/\num}
  \node (\i) at (\r:\rad) {};
}
\end{scope}
\path (\num) to node[above=-2pt] {$e_r$} (1);
\node[above=3pt] at (1) {$r$};
\begin{scope}[es]
\foreach \i in {1,...,\num} {
\pgfmathtruncatemacro\j{1+mod(\i,\num)}
\draw (\i) -- (\j);
}
\end{scope}
\begin{scope}[dlks,opacity=0.7]
\draw[red!70!black] (1) to[out=-100, in=100] (7);
\draw[blue] (7) to[bend right=40] (6);
\draw[green!70!black] (7) to[out=70,in=-140] (11);
\draw[green!70!black] (6) to[bend right=30] (5);
\draw[orange!90!black] (5) to[out=50,in=-30] (3);
\draw[green!25!black] (3) to[bend right=30] (2);
\draw[green!25!black] (3) to[bend left=30] (4);
\draw[blue] (11) to[bend left=40] (12);
\draw[magenta] (11) to[out=-110, in=130] (9);
\draw[cyan!90!black] (9) to[bend left=30] (10);
\draw[cyan!90!black] (9) to[bend right=30] (8);
\end{scope}
\end{scope}

\begin{scope}[shift={(5.5,5)}]
\begin{scope}[every node/.style={ns}]
\foreach \i in {1,...,\num} {
  \pgfmathsetmacro\r{90+(\i-1)*360/\num}
  \node (\i) at (\r:\rad) {};
}
\end{scope}
\path (\num) to node[above=-2pt] {$e_r$} (1);
\node[above=3pt] at (1) {$r$};
\begin{scope}[es]
\foreach \i in {1,...,\num} {
\pgfmathtruncatemacro\j{1+mod(\i,\num)}
\draw (\i) -- (\j);
}
\end{scope}
\begin{scope}[lks,solid,line width=2pt,relative,black,opacity=0.7]
\draw[yellow!80!black] (6) to[bend left=10] (10);
\draw[violet!80!white] (3) to[bend left=10] (8);
\draw[blue!50!white] (10) to[bend left=30] (11);
\end{scope}
\end{scope}

\begin{scope}[shift={(5.5,0)}]
\begin{scope}[every node/.style={ns}]
\foreach \i in {1,...,\num} {
  \pgfmathsetmacro\r{90+(\i-1)*360/\num}
  \node (\i) at (\r:\rad) {};
}
\end{scope}
\path (\num) to node[above=-2pt] {$e_r$} (1);
\node[above=3pt] at (1) {$r$};
\begin{scope}[es]
\foreach \i in {1,...,\num} {
\pgfmathtruncatemacro\j{1+mod(\i,\num)}
\draw (\i) -- (\j);
}
\end{scope}
\begin{scope}[dlks,gray,opacity=0.6]
\draw (1) to[out=-100, in=100] (7);
\draw[red, opacity=1] (7) to[bend right=40] (6);
\draw[red, opacity=1] (7) to[out=70,in=-140] (11);
\draw (6) to[bend right=30] (5);
\draw[red, opacity=1] (5) to[out=50,in=-30] (3);
\draw (3) to[bend right=30] (2);
\draw (3) to[bend left=30] (4);
\draw (11) to[bend left=40] (12);
\draw (11) to[out=-110, in=130] (9);
\draw[red, opacity=1] (9) to[bend left=30] (10);
\draw[red, opacity=1] (9) to[bend right=30] (8);
\end{scope}
\node[red] () at (0,0.5) {$\Drop_{\vec{F}}(K)$};
\end{scope}

\begin{scope}[shift={(11,5)}]
\begin{scope}[every node/.style={ns}]
\foreach \i in {1,...,\num} {
  \pgfmathsetmacro\r{90+(\i-1)*360/\num}
  \node (\i) at (\r:\rad) {};
}
\end{scope}
\path (\num) to node[above=-2pt] {$e_r$} (1);
\node[above=3pt] at (1) {$r$};
\begin{scope}[es]
\foreach \i in {1,...,\num} {
\pgfmathtruncatemacro\j{1+mod(\i,\num)}
\draw (\i) -- (\j);
}
\end{scope}
\begin{scope}[lks,solid,line width=2pt,relative,black,opacity=0.7]
\draw[red!70!black] (1) to[bend left=10] (7);
\draw[blue] (6) to[bend left=10] (12);
\draw[green!70!black] (5) to[bend right=10] (11);
\draw[green!25!black] (2) to[out=20,in=160] (4);
\draw[magenta] (9) to[bend left] (11);

\draw[yellow!80!black] (6) to[bend left=10] (10);
\draw[violet!80!white] (3) to[bend left=10] (8);
\draw[blue!50!white] (10) to[bend left=30] (11);
\end{scope}
\end{scope}

\begin{scope}[shift={(11,0)}]
\begin{scope}[every node/.style={ns}]
\foreach \i in {1,...,\num} {
  \pgfmathsetmacro\r{90+(\i-1)*360/\num}
  \node (\i) at (\r:\rad) {};
}
\end{scope}
\path (\num) to node[above=-2pt] {$e_r$} (1);
\node[above=3pt] at (1) {$r$};
\begin{scope}[es]
\foreach \i in {1,...,\num} {
\pgfmathtruncatemacro\j{1+mod(\i,\num)}
\draw (\i) -- (\j);
}
\end{scope}
\begin{scope}[dlks,opacity=0.7]
\draw[red!70!black] (1) to[out=-100, in=100] (7);
\draw[green!70!black] (6) to[bend right=30] (5);
\draw[green!25!black] (3) to[bend right=30] (2);
\draw[green!25!black] (3) to[bend left=30] (4);
\draw[blue] (11) to[bend left=40] (12);
\draw[magenta] (10) to[bend right=30] (9);

\draw[yellow!80!black] (7) to[out=70, in=-165] (10);
\draw[yellow!80!black] (7) to[bend right=30] (6);
\draw[violet!80!white] (5) to[bend right=50] (3);
\draw[violet!80!white] (7) to[bend left=30] (8);
\draw[blue!50!white] (10) to[bend left=30] (11);
\end{scope}
\end{scope}

\end{tikzpicture}
 \end{center}
\caption{An example of a local search step.
The left column shows a WRAP solution $F$ (top) and a corresponding non-shortenable directed WRAP solution $\vec{F}$ (bottom). 
For a link $\link{f}\in F$, the directed links in the witness set $W_{\link{f}}$ are shown in the same color as $\link{f}$.
The middle column shows a component $K$ (top) and again the directed WRAP solution $\vec{F}$, where $\Drop_{\vec{F}}(K)$ is highlighted in red (bottom).
The right column shows the WRAP solution $F$ (top) and the directed WRAP solution $\vec{F}$ (bottom) after the local search step.
This directed solution arises from the original directed WRAP solution $\vec{F}$ (bottom left) by removing the links in $\Drop_{\vec{F}}(K)$, adding the directed links $(u,v)$ and $(v,u)$ for all $\{u,v\}\in K$ and applying \cref{lem:shortening_efficiently} to obtain a non-shortenable directed WRAP solution.
Because the new directed WRAP solution $\vec{F}$ contains no orange and cyan links anymore, i.e., the witness sets of the orange and the cyan link became empty, these two links are removed from the solution $F$ while the links in the component $K$ are added to $F$.
\label{fig:local_search}}
\end{figure}
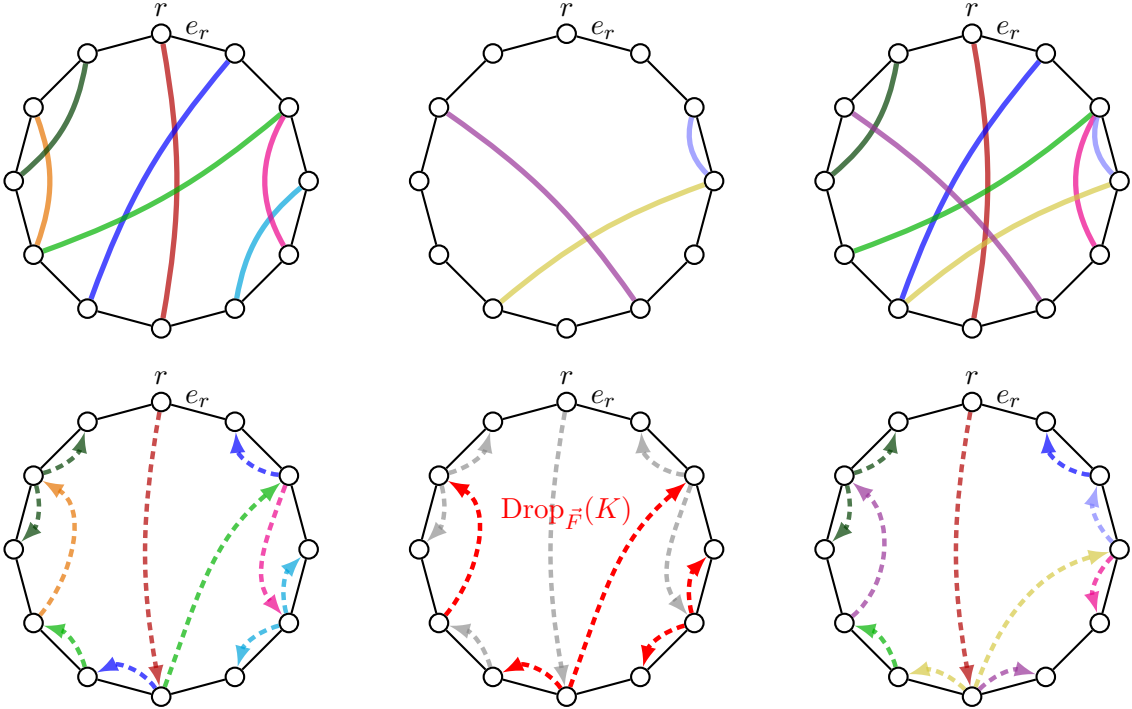

In order to measure progress of the algorithm, we use a carefully chosen potential function. 
We will not use the cost $c(F)$ of our current WRAP solution as a progress measure because it does not reflect the progress we make by dropping only one of  two links in the witness set $W_{\link{f}}$ of a link~$\link{f}$.
However, dropping one of the elements of $W_{\link{f}}$ will make it easier to remove the link $\link{f}$ from the solution $F$ in later iterations and should thus intuitively be considered as progress.
Instead of the cost $c(F)$ we work with the potential function
\[
\Phi(F) \coloneqq \sum_{\link{f}\in F: |W_{\link{f}}|=2} \frac{3}{2} \cdot c(\link{f}) + \sum_{\link{f}\in F: |W_{\link{f}}|=1} c(\link{f}).
\]
If we remove a single link $\vec{\link{l}}$ both from $\vec{F}$ and from the unique witness $W_{\link{f}}$ (with $\link{f}\in F$) that contains $\vec{\link{l}}$, then the potential $\Phi(F)$ decreases by 
\begin{equation}\label{eq:definition-auxiliary-cost}
\overline{c}\left(\vec{\link{l}}\right) \coloneqq 
\begin{cases}
\frac{1}{2}\cdot c(\link{f}) & \text{ if } |W_{\link{f}}|=2, \\
 c(\link{f}) & \text{ if } |W_{\link{f}}|=1.
\end{cases}
\end{equation}
Because the potential increases by at most $\frac{3}{2} \cdot c(\link{f})$ when we add a link $\link{f}$ to the solution $F$, in every local search step we choose a component $K$ that maximizes $\overline{c}(\Drop_{\vec{F}}(K)) - 1.5 \cdot c(K)$ among all $4\lceil \sfrac{4}{\epsilon} \rceil$-thin link sets.
In order to find such a component $K$ we will use the dynamic program from \cref{sec:dp} (\cref{thm:dp-main}).
To analyze the approximation ratio of the local search algorithm, we observe that the definition of the cost function $\overline{c}$ implies $\overline{c}(\vec{F})=c(F)$.

\begin{lemma}\label{lem:cost-distributed}
Let $\vec{F} = \bigcupp_{\link{f}} W_{\link{f}}$ be the disjoint union of the witness sets and let $\overline{c}$ be the cost function defined in \eqref{eq:definition-auxiliary-cost}.
Then $\overline{c}(\vec{F}) = c(F)$.
\end{lemma}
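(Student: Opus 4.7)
The plan is to prove this by a direct computation, partitioning the sum $\overline{c}(\vec{F}) = \sum_{\vec{\link{l}}\in \vec{F}} \overline{c}(\vec{\link{l}})$ according to the witness sets. Since $\vec{F} = \bigcupp_{\link{f}\in F} W_{\link{f}}$ is a disjoint union, and since the definition in \eqref{eq:definition-auxiliary-cost} assigns $\overline{c}(\vec{\link{l}})$ solely based on the witness set $W_{\link{f}}$ containing $\vec{\link{l}}$, we can rewrite
\begin{equation*}
\overline{c}(\vec{F}) \;=\; \sum_{\link{f}\in F}\; \sum_{\vec{\link{l}}\in W_{\link{f}}} \overline{c}(\vec{\link{l}}).
\end{equation*}

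For each $\link{f}\in F$, the inner sum is easy to evaluate using \eqref{eq:definition-auxiliary-cost}. If $|W_{\link{f}}|=2$, then each of the two directed links contributes $\tfrac{1}{2}c(\link{f})$, so the inner sum equals $c(\link{f})$. If $|W_{\link{f}}|=1$, then the single directed link contributes $c(\link{f})$, so the inner sum again equals $c(\link{f})$. Here I would invoke the observation (made just before the lemma) that the algorithm removes any link $\link{f}\in F$ whose witness set becomes empty, ensuring $|W_{\link{f}}|\in\{1,2\}$ for every $\link{f}\in F$; this rules out the possibility $|W_{\link{f}}|=0$, which would otherwise spoil the identity. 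Also, $|W_{\link{f}}|\le 2$ by construction of $\vec{F}$ (each undirected link $\{u,v\}\in F$ gives rise to at most the two shadows $(u,v)$ and $(v,u)$).

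Combining both cases yields $\sum_{\vec{\link{l}}\in W_{\link{f}}}\overline{c}(\vec{\link{l}}) = c(\link{f})$ for every $\link{f}\in F$, and summing over $\link{f}\in F$ gives $\overline{c}(\vec{F}) = \sum_{\link{f}\in F} c(\link{f}) = c(F)$, as claimed. I do not anticipate any obstacle here: the statement is essentially a bookkeeping identity that falls out immediately from the definition of $\overline{c}$ once one notes that the cases $|W_{\link{f}}|=1$ and $|W_{\link{f}}|=2$ have been calibrated precisely to make the total contribution of each $\link{f}$ equal $c(\link{f})$, and that empty witness sets cannot occur in $F$.
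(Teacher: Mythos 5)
Your proof is correct and follows essentially the same approach as the paper's, which is the same one-line chain of equalities $c(F)=\sum_{\link{f}\in F}c(\link{f})=\sum_{\link{f}\in F}\sum_{\vec{\link{l}}\in W_{\link{f}}}\overline{c}(\vec{\link{l}})=\overline{c}(\vec{F})$; you simply spell out the two cases $|W_{\link{f}}|\in\{1,2\}$ and explicitly note why $W_{\link{f}}\neq\emptyset$, both of which the paper leaves implicit.
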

\begin{proof}
\[
c(F) = \sum_{\link{f}\in F} c(\link{f}) \ =\ \sum_{\link{f}\in F} \sum_{\vec{\link{l}}\in W_{\link{f}}} \overline{c}(\vec{\link{l}}) \ =\ \sum_{\vec{\link{l}}\in \vec{F}} \overline{c}(\vec{\link{l}}) =  \overline{c}(\vec{F}).
\]
\end{proof}

The decomposition theorem (\cref{thm:ov_decomposition-theorem}) together with \cref{lem:cost-distributed} imply that there exists a partition $\Kscr$ of $\OPT$ into $4\lceil \sfrac{4}{\epsilon} \rceil$-thin components such that
\[
 \sum_{K\in \Kscr} \overline{c}(\Drop_{\vec{F}}(K)) \ \ge\ (1-\sfrac{\epsilon}{4}) \cdot \overline{c}(\vec{F}) \ =\ (1-\sfrac{\epsilon}{4}) \cdot c(F).
\]
Then, if $(1-\sfrac{\epsilon}{4})\cdot c(F) > \frac{3}{2}\cdot c(\OPT)$, we have 
\[
\sum_{K\in \Kscr} \overline{c}(\Drop_{\vec{F}}(K)) > \frac{3}{2} \cdot c(\OPT) \ =\ \sum_{K\in \Kscr} \frac{3}{2} \cdot c(K).
\]
Thus, because all components in $\Kscr$ are $4\lceil \sfrac{4}{\epsilon} \rceil$-thin, there exists a $4\lceil \sfrac{4}{\epsilon} \rceil$-thin link set $K$ with $\overline{c}(\Drop_{\vec{F}}(K)) - 1.5 \cdot c(K) > 0$.
Choosing any such component in a local search step leads to a decrease of the potential $\Phi(F)$.
In order to achieve a polynomial runtime, we will stop the algorithm as soon as the potential function $\Phi(F)$ does not decrease by a sufficiently large factor.
\cref{algo:local_search} provides a full description of the local search algorithm.

\begin{algorithm2e}[H]
\KwIn{A rooted WRAP instance $(G=(V,E),L,c, r, e_r)$.}
\KwOut{A WRAP solution $F\subseteq L$ with $c(F) \le (1.5+\epsilon)\cdot c(\OPT)$.}
\vspace*{2mm}
\begin{enumerate}[label=\arabic*.,ref=\arabic*,rightmargin=7mm]\itemsep6pt
\item\label{item:initialize} Let $F\subseteq L$ be an arbitrary solution for the given WRAP instance. \\
  Set the witness sets to be $W_{\{u,v\}} \coloneqq \{ (u,v), (v,u) \}$ for all $\{u,v\}\in F$. \\
 Apply \cref{lem:shortening_efficiently} to $\vec{F}= \bigcupp_{\link{f} \in F} W_{\link{f}}$.\\
  If, for some link $\link{l}\in F$, the witness set $W_{\link{f}}$ became empty, remove $\link{f}$ from $F$.
\item\label{item:local_step} Iterate the following while $\Phi(F)$ decreases by at least a factor of $\left( 1- \frac{\epsilon}{6\cdot|V|}\right)$ per iteration.
\begin{itemize}\itemsep1pt
\item
Compute a $4\lceil \sfrac{4}{\epsilon} \rceil$-thin link set $K\subseteq L$ maximizing
$\overline{c}(\Drop_{\vec{F}}(K)) - 1.5 \cdot c(K)$, where $\vec{F}=\ \bigcupp_{\link{f} \in F} W_{\link{f}}$ (by applying \cref{thm:dp-main}).
\item Replace the witness set $W_{\link{f}}$ by $W_{\link{f}} \setminus \Drop_{\vec{F}}(K)$ for all $\link{f} \in F$.
\item Add $K$ to $F$ and set $W_{\{u,v\}} \coloneqq \{ (u,v), (v,u) \}$ for all $\{u,v\} \in K$.
\item Apply Lemma~\ref{lem:shortening_efficiently} to $\vec{F}= \bigcupp_{\link{f} \in F} W_{\link{f}}$.
\item If, for some link $\link{f} \in F$, the witness set $W_{\link{f}}$ became empty, remove $\link{f}$ from $F$.
\end{itemize}
\item Return $F$.
\end{enumerate}
\caption{Local search algorithm for WRAP}\label{algo:local_search}
\end{algorithm2e}

To analyze \cref{algo:local_search}, we first observe that it indeed returns a WRAP solution.

\begin{lemma}\label{lem:algReturnsWRAPSol}
Before and after each iteration of \cref{algo:local_search},
the link set $F$ is a WRAP solution and $\vec{F}=\bigcupp_{\link{f} \in F} W_{\link{f}}$ is a non-shortenable directed WRAP solution.
In particular, when the algorithm terminates, it returns a WRAP solution.
\end{lemma}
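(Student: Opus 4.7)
The plan is to prove the statement by induction on the iteration count, with the following joint invariant maintained throughout the algorithm: (i) $F \subseteq L$ is a WRAP solution; (ii) $\vec{F} = \bigcupp_{\link{f} \in F} W_{\link{f}}$ is a non-shortenable directed WRAP solution; and (iii) every directed link in $W_{\link{f}}$ is a shadow of $\link{f}$ (for each $\link{f}\in F$). Invariant (iii) is the workhorse: combined with non-shortenability it will let us pass between the undirected and directed worlds and, in particular, justify removing a link $\link{f}$ once its witness set becomes empty.

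For the base case (Step~\ref{item:initialize}), the initial $F$ is a WRAP solution by choice, and the initial $\vec{F}$ consists, for each $\{u,v\}\in F$, of both shadows $(u,v)$ and $(v,u)$. For any $2$-cut $C\in\Cscr_G$, the fact that $F$ covers $C$ yields a link $\{u,v\}\in F$ with, say, $v\in C$ and $u\notin C$; hence $(u,v)$ enters $C$, so $\vec{F}$ is a directed WRAP solution. Applying \cref{lem:shortening_efficiently} replaces each $\vec{\link{l}}\in W_{\link{f}}$ either by a (strict) shortening of itself or deletes it; since a shortening of a shadow of $\link{f}$ is again a shadow of $\link{f}$, invariant (iii) is preserved when we leave each surviving link in the witness set of its original $\link{f}$. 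After this, whenever $W_{\link{f}}=\emptyset$ we drop $\link{f}$ from $F$, and the remaining $F$ is still a WRAP solution: for any $C\in\Cscr_G$, some $\vec{\link{l}}\in\vec{F}$ enters $C$; by~(iii) $\vec{\link{l}}$ is a shadow of the (surviving) link $\link{f}\in F$ to whose witness set it belongs, so $\link{f}\in\delta_F(C)$.

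For the inductive step (one pass of Step~\ref{item:local_step}), assume the invariants hold at the start. After removing $\Drop_{\vec{F}}(K)$ from all witness sets and adding $K$ to $F$ with $W_{\{u,v\}}=\{(u,v),(v,u)\}$ for $\{u,v\}\in K$, we obtain a directed link set
\[
   \vec{F}^{\mathrm{new}} \;=\; \big(\vec{F}\setminus\Drop_{\vec{F}}(K)\big) \;\cup\; \bigcup_{\{u,v\}\in K}\{(u,v),(v,u)\}.
\]
I will verify that $\vec{F}^{\mathrm{new}}$ is a directed WRAP solution by a case distinction on a $2$-cut $C\in\Cscr_G$: if $\delta_K(C)\neq\emptyset$, then one of the two shadows of a link in $K\cap\delta(C)$ enters $C$; otherwise the (unique, by \cref{lem:responsibility_unique}) link $\vec{\link{l}}\in\vec{F}$ responsible for $C$ satisfies $C\in\Rscr_{\vec{F}}(\vec{\link{l}})$ but $\delta_K(C)=\emptyset$, so $\vec{\link{l}}\notin\Drop_{\vec{F}}(K)$ and therefore still enters $C$ in $\vec{F}^{\mathrm{new}}$. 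Applying \cref{lem:shortening_efficiently} then turns $\vec{F}^{\mathrm{new}}$ into a non-shortenable directed WRAP solution while preserving invariant~(iii) exactly as in the base case (shortenings of shadows are shadows, and new links $(u,v),(v,u)$ for $\{u,v\}\in K$ are trivially shadows of $\{u,v\}\in F$). The same argument as above shows that removing any $\link{f}$ with empty witness set leaves $F$ a WRAP solution, completing the induction.

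The one subtle point---and the place where I expect the plan to need the most care---is the bookkeeping under \cref{lem:shortening_efficiently}: that lemma only guarantees a non-shortenable solution obtained by deletions and shortenings, so I must argue that ``shortenings of shortenings are shortenings of the original link'' (immediate from \cref{def:shortening}, since the tail of the shortened link lies on the $u$-$v$ path in $(V,E\setminus\{e_r\})$) to conclude that every surviving directed link can unambiguously be assigned to the same witness set in which it started. Once that identification is made, the rest of the argument is a clean induction using the responsibility calculus from \cref{sec:drop}.
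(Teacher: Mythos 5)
Your proof is correct and follows essentially the same approach as the paper: it uses the invariant that each $W_{\link{f}}$ consists of shadows of $\link{f}$ (the paper phrases this as ``shortenings of $\link{f}$''), verifies that the updated $\vec{F}$ remains a directed WRAP solution via the responsibility calculus, and deduces feasibility of $F$ from feasibility of $\vec{F}$. The only difference is that you spell out in more detail the case distinction on $2$-cuts and the stability of shadow-ness under repeated shortening, which the paper treats implicitly.
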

\begin{proof}
The directed link set $\vec{F}$ remains a directed WRAP solution by the definition of $\Drop_{\vec{F}}(K)$ and because the directed links added to $\vec{F}$ cover the same cuts as the links in the component $K$.
Moreover, the application of \cref{lem:shortening_efficiently} ensures that $\vec{F}$ is a non-shortenable directed WRAP solution at the end of each iteration.

For each link $\link{f}\in F$, we maintain the invariant that all directed links of $W_{\link{f}}$ are shortenings of $\link{f}$.
Because $\vec{F}=\bigcup_{\link{f}\in F} W_{\link{f}}$ is a directed WRAP solution, this implies that $F$ is a WRAP solution.
\end{proof}

Next, we bound the number of iterations of \cref{algo:local_search}.

\begin{lemma}\label{lem:bound_iterations_local_search}
\cref{algo:local_search} terminates after at most 
\[
\ln\!\left(\frac{1.5\cdot c(F_0)}{c(\OPT)}\right)\cdot \frac{6|V|}{\epsilon}
\]
iterations, where $F_0\subseteq L$ is the initial WRAP solution computed in Step~\ref{item:initialize} of \cref{algo:local_search}.
\end{lemma}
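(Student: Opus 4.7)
The plan is a straightforward termination argument based on sandwiching the potential $\Phi(F)$ between a constant multiple of the initial cost and the optimum cost, then using the geometric-decrease condition that governs the while-loop.

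First, I would establish two elementary sandwich bounds on the potential function. For any WRAP solution $F$ produced during the algorithm, the definition of $\Phi$ immediately gives
\begin{equation*}
c(F) \;\le\; \Phi(F) \;\le\; \tfrac{3}{2}\cdot c(F),
\end{equation*}
because each link $\link{f}\in F$ contributes either $c(\link{f})$ or $\tfrac{3}{2}c(\link{f})$ depending on the size of its witness set $W_{\link{f}}\in\{1,2\}$. Applying the upper bound to the initial solution yields $\Phi(F_0)\le \tfrac{3}{2}\cdot c(F_0)$.

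Next, I would use \cref{lem:algReturnsWRAPSol} to argue that, throughout the run of \cref{algo:local_search}, the set $F$ is a WRAP solution, so $c(F)\ge c(\OPT)$. Combined with the lower bound above, this gives the invariant
\begin{equation*}
\Phi(F)\;\ge\; c(F) \;\ge\; c(\OPT)
\end{equation*}
at every point of the algorithm's execution.

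Now for the counting step. As long as the while-loop of Step~\ref{item:local_step} continues, each iteration decreases $\Phi(F)$ by a multiplicative factor of at least $1-\tfrac{\epsilon}{6|V|}$. Hence after $k$ iterations, writing $F_k$ for the solution at that point,
\begin{equation*}
\Phi(F_k)\;\le\;\left(1-\tfrac{\epsilon}{6|V|}\right)^{k}\Phi(F_0)\;\le\;\tfrac{3}{2}\,c(F_0)\cdot \exp\!\left(-\tfrac{k\epsilon}{6|V|}\right),
\end{equation*}
where the last step uses $1-x\le e^{-x}$. Combining with $\Phi(F_k)\ge c(\OPT)$ and rearranging yields
\begin{equation*}
k\;\le\;\ln\!\left(\frac{1.5\cdot c(F_0)}{c(\OPT)}\right)\cdot \frac{6|V|}{\epsilon},
\end{equation*}
which is the claimed bound.

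I don't anticipate a real obstacle: the argument is a textbook geometric-progress bound, and all ingredients (the two sandwich inequalities for $\Phi$, and the feasibility of $F$ throughout the loop) are either immediate from the definition of $\Phi$ or supplied by \cref{lem:algReturnsWRAPSol}. The only place one has to be careful is in checking that the termination condition really does guarantee a multiplicative (as opposed to additive) decrease of $\Phi$ per iteration, but this is exactly the form in which the stopping condition of Step~\ref{item:local_step} is phrased.
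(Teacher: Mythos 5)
Your proposal is correct and follows essentially the same route as the paper's proof: both use the sandwich $c(\OPT)\le\Phi(F)$ and $\Phi(F_0)\le\tfrac{3}{2}c(F_0)$, then convert the multiplicative decrease per iteration into the stated bound via the inequality $1-x\le e^{-x}$ (the paper phrases it equivalently as $\ln(1+x)\le x$ applied to $-\ln(1-\epsilon/(6|V|))$).
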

\begin{proof}
At the beginning of \cref{algo:local_search} we have $\Phi(F)=\Phi(F_0)\le 1.5 \cdot c(F_0)$.
Moreover, we have $\Phi(F)\ge c(F)\ge c(\OPT)$ throughout the algorithm.
By the stopping criterion of the while-loop, the potential $\Phi(F)$ decreases by a factor of at least $\left(1- \frac{\epsilon}{6\cdot |V|}\right)$ in every iteration.
This implies that the number of iterations is at most
\begin{equation*}
\log_{(1-\sfrac{\epsilon}{(6|V|)})^{-1}} \left(\frac{1.5\cdot c(F_0)}{c(\OPT)}\right) \ =\ \ln\left(\frac{1.5\cdot c(F_0)}{c(\OPT)}\right)\cdot \frac{1}{-\ln(1-\sfrac{\epsilon}{(6|V|)})}\ \le\ \ln\left(\frac{1.5\cdot c(F_0)}{c(\OPT)}\right)\cdot \frac{6|V|}{\epsilon}\enspace,
\end{equation*}
where we used $\ln(1+x) \le x$ for $x> -1$.
\end{proof} 

In order to prove the desired approximation ratio, we now provide a lower bound on the decrease of the potential function in a single iteration.
This lower bound will allow us to prove that the algorithm continues to make progress until $F$ is a $(1.5+\epsilon)$-approximation.

\begin{lemma}\label{lem:potential_decrease}
If we select a component $K\subseteq L$ in Step~\ref{item:local_step} of \cref{algo:local_search}, then $\Phi(F)$ decreases by at least $\overline{c}(\Drop_{\vec{F}}(K)) - 1.5 \cdot c(K)$ in this iteration.
\end{lemma}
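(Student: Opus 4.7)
The plan is to decompose each iteration of the while-loop in Step~\ref{item:local_step} of \cref{algo:local_search} into its constituent operations and track how each one affects the potential $\Phi$. Let $D \coloneqq \Drop_{\vec{F}}(K)$, and recall that at the start of the iteration $\vec{F} = \bigcupp_{\link{f}\in F} W_{\link{f}}$, every witness set has size $1$ or $2$, and $\overline{c}(D) = \sum_{\vec{\link{l}}\in D}\overline{c}(\vec{\link{l}})$ is to be evaluated with respect to this initial state. I will show that (i)~the replacement $W_{\link{f}}\leftarrow W_{\link{f}}\setminus D$ for every $\link{f}\in F$ decreases $\Phi$ by at least $\overline{c}(D)$, (ii)~adding $K$ to $F$ and resetting the associated witness sets increases $\Phi$ by at most $\tfrac{3}{2}\, c(K)$, and (iii)~the subsequent application of \cref{lem:shortening_efficiently} and the removal from $F$ of links with empty witness sets cannot increase $\Phi$. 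Together these give the desired bound $\Phi(F)-\Phi(F')\ge\overline{c}(D)-\tfrac{3}{2}\, c(K)$, where $F'$ denotes the solution at the end of the iteration.

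For (i), I would fix any link $\link{f}\in F$ with $W_{\link{f}}\cap D\neq\emptyset$ and compute the change in its contribution to $\Phi$ by case analysis on the pair $(|W_{\link{f}}|,|W_{\link{f}}\cap D|)\in\{(1,1),(2,1),(2,2)\}$. In the cases $(1,1)$ and $(2,1)$ the contribution of $\link{f}$ drops by exactly $c(\link{f})$ and $\tfrac{1}{2}\, c(\link{f})$, respectively, which by~\eqref{eq:definition-auxiliary-cost} equals $\overline{c}(\vec{\link{l}})$ for the unique $\vec{\link{l}}\in W_{\link{f}}\cap D$. In the case $(2,2)$ the contribution drops from $\tfrac{3}{2}\, c(\link{f})$ to $0$, a decrease of $\tfrac{3}{2}\, c(\link{f})$, while each of the two removed links carries initial $\overline{c}$-value $\tfrac{1}{2}\, c(\link{f})$, summing to only $c(\link{f})$; here the decrease strictly exceeds $\sum_{\vec{\link{l}}\in W_{\link{f}}\cap D}\overline{c}(\vec{\link{l}})$. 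Thus in every case the contribution of $\link{f}$ decreases by at least $\sum_{\vec{\link{l}}\in W_{\link{f}}\cap D}\overline{c}(\vec{\link{l}})$, and summing over $\link{f}\in F$ yields~(i). The case $(2,2)$ is the only mildly subtle point and is the main obstacle to make precise: the definition of $\overline{c}$ is sensitive to the current witness-set size and does not ``see'' the second removal at an updated state, which is precisely why the lemma is phrased as an inequality rather than an equality.

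For (ii), immediately after the reset each $\{u,v\}\in K$ satisfies $|W_{\{u,v\}}|=2$ and hence contributes exactly $\tfrac{3}{2}\, c(\{u,v\})$ to $\Phi$; its contribution immediately before this step is non-negative, so the net increase per link is at most $\tfrac{3}{2}\, c(\{u,v\})$, and links in $F\setminus K$ are unaffected. Summing over $\{u,v\}\in K$ gives the bound $\tfrac{3}{2}\, c(K)$. For (iii), \cref{lem:shortening_efficiently} acts by deleting or replacing directed links of $\vec{F}$ with strict shortenings; replacing preserves each $|W_{\link{f}}|$ (since shortenings of $\link{f}$ remain shadows of $\link{f}$ and stay in the same witness set), while deleting can only lower some $|W_{\link{f}}|$, so $\Phi$ cannot increase. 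The final removal from $F$ of links with empty witness sets leaves $\Phi$ unchanged, since such links contribute $0$ to $\Phi$ anyway. Combining (i), (ii), and (iii) yields the claimed bound.
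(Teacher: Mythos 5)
Your proof is correct and follows essentially the same decomposition as the paper: bound the decrease from removing $\Drop_{\vec F}(K)$ via the case analysis on witness-set sizes, then bound the increase from adding $K$ by $\tfrac{3}{2}\,c(K)$. You are slightly more explicit than the paper's one-paragraph argument in spelling out step~(iii) (that the shortening via \cref{lem:shortening_efficiently} and the removal of links with empty witness sets can only decrease $\Phi$) and in noting that the bound in~(ii) survives the case $K\cap F\neq\emptyset$, but these are minor elaborations of the same route rather than a genuinely different argument.
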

\begin{proof}
First, we show that when removing $\Drop_{\vec{F}}(K)$ from $\vec{F}$ and from all witness sets, the potential $\Phi(F)$ decreases by at least $\overline{c}(\Drop_{\vec{F}}(K))$.
This follows from the observation that
\begin{equation*}
\Phi_{\link{f}} \coloneqq 
\begin{cases}
\frac{3}{2}\cdot c(\link{f}) & \text{ if } |W_{\link{f}}|=2, \\
 c(\link{f}) & \text{ if } |W_{\link{f}}|=1
\end{cases}
\end{equation*}
decreases for every link by at least 
\[
\overline{c}\left(W_{\link{f}} \cap \Drop_{\vec{F}}(K)\right) = \begin{cases}
c(\link{f}) & \text{ if }|W_{\link{f}}|=|W_{\link{f}} \cap \Drop_{\vec{F}}(K)|, \\
\frac{1}{2} \cdot c(\link{f}) & \text{ if } |W_{\link{f}}|=2 \text{ and }|W_{\link{f}} \cap \Drop_{\vec{F}}(K)| = 1, \\
0 & \text{ if } |W_{\link{f}} \cap \Drop_{\vec{F}}(K)| = 0,
\end{cases}
\]
 which can be seen from a case distinction over the possible sizes of $|W_{\link{f}}|$ and  $|W_{\link{f}} \cap \Drop_{\vec{F}}(K)|$.
Finally, we observe that adding the component $K$ to $F$ increases the potential $\Phi(F)$ by at most $1.5 \cdot c(K)$.
\end{proof}

\begin{lemma}\label{lem:good_improvement_exists}
In every iteration of \cref{algo:local_search}, there exists a $4\lceil\sfrac{4}{\epsilon}\rceil$-thin component $K\subseteq L$ such that
\begin{equation}\label{eq:local_search_bound_improvement}
\overline{c}(\Drop_{\vec{F}}(K)) - 1.5 \cdot c(K)\ \ge\ \frac{1}{|V|} \cdot \left(\left(1-\frac{\epsilon}{4}\right) \cdot c(F) - 1.5 \cdot c(\OPT)\right).
\end{equation}
\end{lemma}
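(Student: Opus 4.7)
The plan is to invoke the decomposition theorem (\cref{thm:ov_decomposition-theorem}) applied to $\OPT$ and the current non-shortenable directed WRAP solution $\vec{F}$, but with one crucial modification: replace the cost $c$ on the directed links of $\vec{F}$ by the auxiliary cost $\overline{c}$ from~\eqref{eq:definition-auxiliary-cost}. An inspection of the proof in \cref{sec:choosing_R} reveals that the bound on $c(\vec{R})$ is obtained purely by averaging the costs of the label classes $\vec{R}_0,\dots,\vec{R}_{q-1}$, a step that remains valid verbatim for any nonnegative cost function on $\vec{F}$. Invoking the theorem this way with parameter $\epsilon/4$ yields a partition $\mathcal{K}$ of $\OPT$ into $4\lceil 4/\epsilon\rceil$-thin components, together with a set $\vec{R}\subseteq \vec{F}$ such that every $\vec{\link{l}}\in \vec{F}\setminus \vec{R}$ lies in $\Drop_{\vec{F}}(K)$ for some $K\in \mathcal{K}$, and $\overline{c}(\vec{R})\le \tfrac{\epsilon}{4}\cdot \overline{c}(\vec{F})$.

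From \cref{lem:cost-distributed} we have $\overline{c}(\vec{F})=c(F)$, and summing the droppable links over $K\in\mathcal{K}$ gives
\begin{equation*}
\sum_{K\in \mathcal{K}}\overline{c}(\Drop_{\vec{F}}(K))\ \ge\ \overline{c}(\vec{F}\setminus \vec{R})\ \ge\ \bigl(1-\tfrac{\epsilon}{4}\bigr)\,c(F),
\end{equation*}
while $\sum_{K\in \mathcal{K}} c(K)=c(\OPT)$ since $\mathcal{K}$ partitions $\OPT$. Combining the two bounds yields
\begin{equation*}
\sum_{K\in \mathcal{K}}\bigl[\overline{c}(\Drop_{\vec{F}}(K))-1.5\,c(K)\bigr]\ \ge\ \bigl(1-\tfrac{\epsilon}{4}\bigr)\,c(F)-1.5\,c(\OPT),
\end{equation*}
and a standard averaging step then implies that some $K\in \mathcal{K}$ attains at least a $\tfrac{1}{|\mathcal{K}|}$ fraction of the right-hand side. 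This settles the lemma provided we also establish $|\mathcal{K}|\le |V|$.

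This last bound is the main hurdle. Unwinding the construction of $\mathcal{K}$ from \cref{sec:choosing_R}, each part of $\mathcal{K}$ is the union of the festoons in a single connected component of the $U$-dependency graph, so $|\mathcal{K}|$ is at most the number of festoons $|\Xscr|$ in the partition of $\OPT$. By \cref{lem:festoon-intervals_laminar} the festoon intervals form a laminar family, and each such interval has size at least two since it contains both endpoints of any one of its festoon's links. A short induction shows that a laminar family of distinct intervals of size at least two on an ordered ground set of size $n$ contains at most $n-1$ members; applied to $V$ this yields $|\Xscr|\le |V|-1$, assuming the festoon intervals are distinct. Distinctness can be arranged by a harmless preprocessing that consolidates parallel links into a single copy of minimum cost, which neither increases $c(\OPT)$ nor affects thinness. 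Plugging this bound on $|\mathcal{K}|$ into the averaging estimate above completes the proof.
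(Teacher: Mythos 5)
Your proposal follows the paper's proof almost exactly: apply the decomposition theorem with parameter $\sfrac{\epsilon}{4}$ and cost function $\overline{c}$ in place of $c$ (valid, since the proof of \cref{thm:ov_decomposition-theorem} only uses the cost function to select the cheapest label class $\vec{R}_i$), invoke $\overline{c}(\vec{F})=c(F)$ from \cref{lem:cost-distributed}, sum over the partition, and average. What you add beyond the paper is an attempted justification of $|\mathcal{K}|\le|V|$, which the paper asserts without proof; this is welcome, but the particular argument you give has a genuine gap.

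The gap is in the claim that the festoon intervals $\{I_X : X\in\Xscr\}$ are distinct. Laminarity (\cref{lem:festoon-intervals_laminar}) holds, and each interval does have size at least two, but the intervals can coincide. Two disjoint festoons built from entirely different links can span the same interval: for example, if $S\supseteq\{\{v_1,v_3\},\{v_1,v_2\},\{v_2,v_3\}\}$, the greedy construction of $\Xscr$ will, after extracting one maximal festoon with interval $[v_1,v_3]$ (say the singleton $\{\{v_1,v_3\}\}$), extract a second festoon $\{\{v_1,v_2\},\{v_2,v_3\}\}$ with the same interval $[v_1,v_3]$. Since $L\subseteq\begin{psmallmatrix}V\\2\end{psmallmatrix}$ already contains no parallel links, your proposed preprocessing does nothing here, and the bound $|\Xscr|\le|V|-1$ does not follow.

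The repair is not to bound $|\mathcal{K}|$ at all, but to restrict attention to the \emph{useful} components: let $\mathcal{K}'\subseteq\mathcal{K}$ consist of those $K\in\mathcal{K}$ that contain $\bigcup_{X\in\Xscr_v}X$ for some $(u,v)\in\vec{F}\setminus\vec{R}$. Each such $K$ arises from a connected component of the $U$-dependency graph containing some path $P_v$ with $v\in U$, and each $P_v$ lies in a single component, so $|\mathcal{K}'|\le|U|\le|V|-1$. The key lower bound $\sum_{K}\overline{c}(\Drop_{\vec{F}}(K))\ge\overline{c}(\vec{F}\setminus\vec{R})$ already holds with the sum restricted to $\mathcal{K}'$, because every $\vec{\link{l}}\in\vec{F}\setminus\vec{R}$ is placed in $\Drop_{\vec{F}}(K)$ for precisely such a $K$; and $\sum_{K\in\mathcal{K}'}c(K)\le c(\OPT)$ trivially. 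Averaging over $\mathcal{K}'$ then gives~\eqref{eq:local_search_bound_improvement}.
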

\begin{proof}
We apply \cref{thm:ov_decomposition-theorem} to $\vec{F}$ with cost function $\overline{c}$ to obtain a partition $\Kscr$ of $\OPT$ into $4\lceil\sfrac{4}{\epsilon}\rceil$-thin components $K\subseteq L$ such that
 \[
\sum_{K\in \Kscr} \overline{c}(\Drop_{\vec{F}}(K))\ \ge\ \left(1-\frac{\epsilon}{4}\right) \cdot \overline{c}(\vec{F})\ =\ \left(1-\frac{\epsilon}{4}\right) \cdot c(F),
\]
where the equation follows from \cref{lem:cost-distributed}.
Because $\sum_{K\in \Kscr} c(K) = c(\OPT)$, we obtain 
\begin{align*}
\max_{K\in\Kscr} \Big(\overline{c}(\Drop_{\vec{F}}(K))-1.5 \cdot c(K)\Big)\ \ge&\ \frac{1}{|\Kscr|}\sum_{K\in \Kscr} \Big(\overline{c}(\Drop_{\vec{F}}(K))-1.5\cdot c(K)\Big)\\
 \ge&\ \frac{1}{|V|}\cdot\left(\left(1-\frac{\epsilon}{4}\right) \cdot c(F) - 1.5\cdot c(\OPT)\right),
\end{align*}
where we used $|\Kscr| \le |V|$.
\end{proof}

Finally, we prove that \cref{algo:local_search} is indeed a $(1.5+\epsilon)$-approximation algorithm for WRAP, which together with \cref{lem:reduction-to-WRAP} implies our main result (\cref{thm:main}).

\begin{theorem} Let $0 < \epsilon \le 0.5$.
Then \cref{algo:local_search} is a $(1.5+\epsilon)$-approximation algorithm for WRAP.
\end{theorem}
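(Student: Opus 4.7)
The plan is to combine the termination criterion of the while-loop with \cref{lem:potential_decrease} and \cref{lem:good_improvement_exists} to bound $c(F)$ at termination, and to read polynomial running time off of \cref{lem:bound_iterations_local_search} together with the dynamic program of \cref{thm:dp-main}. Feasibility of the returned link set is immediate from \cref{lem:algReturnsWRAPSol}.

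For the running time I would first arrange that the initial solution $F_0$ chosen in Step~\ref{item:initialize} satisfies $c(F_0) \le 2\,c(\OPT)$; one concrete option is to compute a minimum-cost directed WRAP solution via \cref{lem:dirWRAPInP} and identify every directed link with its underlying undirected link, which by the argument of \cref{cor:dirSolTwoApprox} gives the stated bound. Then \cref{lem:bound_iterations_local_search} bounds the number of iterations by $O(|V|/\epsilon)$. Each iteration reduces to a single application of \cref{thm:dp-main} with $\tilde c := \overline c$ on the current non-shortenable $\vec F$ and with link costs on $L$ rescaled by $\tfrac{3}{2}$, so that the returned component $K$ maximizes exactly $\overline c(\Drop_{\vec F}(K)) - 1.5\,c(K)$, followed by polynomial bookkeeping to update witness sets and to invoke \cref{lem:shortening_efficiently}.

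For the approximation ratio I would look at the state $(F,\vec F)$ at termination. By the stopping condition, the DP-optimal value over all $4\lceil 4/\epsilon\rceil$-thin $K$ must be strictly less than $\tfrac{\epsilon}{6|V|}\Phi(F)$: otherwise \cref{lem:potential_decrease} would guarantee that performing the next iteration decreases $\Phi$ by a factor of at least $1-\tfrac{\epsilon}{6|V|}$, contradicting termination. Using the crude bound $\Phi(F)\le \tfrac{3}{2}c(F)$, visible from the definition of $\Phi$, this yields
\begin{equation*}
\max_{K}\;\bigl(\overline c(\Drop_{\vec F}(K)) - 1.5\,c(K)\bigr)\ <\ \frac{\epsilon}{4|V|}\,c(F).
\end{equation*}
On the other hand, \cref{lem:good_improvement_exists} produces some $4\lceil 4/\epsilon\rceil$-thin component whose objective value is at least $\tfrac{1}{|V|}\bigl((1-\tfrac{\epsilon}{4})c(F) - 1.5\,c(\OPT)\bigr)$. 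Chaining the two inequalities gives $(1-\tfrac{\epsilon}{2})\,c(F) < 1.5\,c(\OPT)$, i.e.\ $c(F) < \tfrac{1.5}{1-\epsilon/2}\,c(\OPT)$.

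The only delicate point is then an elementary algebraic check: for $\epsilon \le \tfrac12$ one has $\tfrac{1.5}{1-\epsilon/2} \le 1.5+\epsilon$, since this reduces to $\tfrac{\epsilon}{4}\ge\tfrac{\epsilon^2}{2}$, which is exactly where the hypothesis $\epsilon\le\tfrac12$ enters and pins down the constraint stated in the theorem. I anticipate this algebraic tightness to be the main obstacle to clean presentation — everything else is an immediate combination of the previously established lemmas — and the only other subtlety to address carefully is the correct packaging of the three cost scales in play, namely $c$, the rescaled witness cost $\overline c$, and the potential $\Phi$ linked by $c(F)\le \Phi(F)\le \tfrac{3}{2}c(F)$ and $\overline c(\vec F)=c(F)$ from \cref{lem:cost-distributed}.
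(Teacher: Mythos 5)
Your proposal is correct and matches the paper's proof essentially line for line: the same chain stopping criterion $\Rightarrow$ \cref{lem:potential_decrease} $\Rightarrow$ \cref{lem:good_improvement_exists}, the same bound $\Phi(F)\le\tfrac32 c(F)$, the same rearrangement to $(1-\tfrac{\epsilon}{2})c(F)<1.5\,c(\OPT)$, and the same elementary observation that $\tfrac{1.5}{1-\epsilon/2}\le 1.5+\epsilon$ holds iff $\epsilon\le\tfrac12$. The one spot where you deviate slightly — insisting on starting from an $F_0$ with $c(F_0)\le 2c(\OPT)$ rather than an arbitrary feasible solution — is a harmless precaution; the paper's bound from \cref{lem:bound_iterations_local_search} is already polynomial for any $F_0$ with polynomially-encoded costs, but your choice certainly doesn't hurt.
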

\begin{proof}
By \cref{lem:bound_iterations_local_search}, the number of iterations of \cref{algo:local_search} is polynomially bounded and, by \cref{thm:dp-main}, a single iteration takes only polynomial time.
Let $F$ be the link set returned by \cref{algo:local_search}, which is a WRAP solution by \cref{lem:algReturnsWRAPSol}.
Then the stopping criterion of the while-loop in \cref{algo:local_search} implies together with \cref{lem:potential_decrease} and \cref{lem:good_improvement_exists} that
\[
\frac{1}{|V|} \cdot \left(\left(1-\frac{\epsilon}{4}\right) \cdot c(F) - \frac{3}{2} \cdot c(\OPT)\right)\ <\ \frac{\epsilon}{6\cdot |V|} \cdot \Phi(F)\ \le\  \frac{\epsilon}{4\cdot |V|} \cdot c(F) .
\]
Hence, $(1- \sfrac{\epsilon}{2}) \cdot c(F) < 1.5 \cdot c(\OPT)$, which for $\epsilon \le 0.5$ implies $c(F) < (1.5 + \epsilon) \cdot c(\OPT)$.
\end{proof} 
 \section*{Acknowledgments}

We are thankful to Micha\l{} Pilipczuk for helpful discussions about Circle Graphs and questions related to them.
Moreover, we are grateful to the Hausdorff Institute for Mathematics (HIM) and the organizers of the 2021 HIM program ``Discrete Optimization'' during which fruitful discussions related to this work have taken place. 
\printbibliography

\end{document}